\newcommand{\CC}{{\mathbb C}}
\newcommand{\FF}{{\mathbb F}}
\newcommand{\PP}{{\mathbb P}}
\newcommand{\QQ}{{\mathbb Q}}
\newcommand{\RR}{{\mathbb R}}
\newcommand{\ZZ}{{\mathbb Z}}
\newcommand{\ee}{\mathrm{e}}
\newcommand{\ii}{\mathrm{i}}
\newcommand{\dd}{\mathrm{d}}
\font \rus= wncyr10
\newcommand{\sha}{\,\hbox{\rus x}\,}
\newcommand{\aaa}{{\overline{a}}}
\newcommand{\bb}{{\overline{b}}}
\newcommand{\kk}{{\overline{k}}}
\renewcommand{\ll}{{\overline{\ell}}}
\newcommand{\mm}{{\overline{m}}}
\newcommand{\zz}{{\overline{z}}}
\newcommand{\sB}{\mathcal{B}}
\newcommand{\sE}{\mathcal{E}}
\newcommand{\sG}{\mathcal{G}}
\newcommand{\sH}{\mathcal{H}}
\newcommand{\sL}{\mathcal{L}}
\newcommand{\sM}{\mathcal{M}}
\newcommand{\sO}{\mathcal{O}}
\newcommand{\sV}{\mathcal{V}}
\newcommand{\sR}{\mathcal{R}}
\newcommand{\sS}{\mathcal{S}}
\newcommand{\sX}{\mathcal{X}}
\newcommand{\FLT}{\mathrm{FLT}\,}
\newcommand{\Li}{\mathrm{Li}\,}
\newcommand{\eval}{\mathrm{eval}\,}
\newcommand{\res}{\mathrm{res}\,}
\renewcommand{\Im}{\mathop\mathrm{Im}}
\renewcommand{\Re}{\mathop\mathrm{Re}}
\theoremstyle{plain}
\newtheorem{thm}{Theorem}
\newtheorem{lem}[thm]{Lemma}
\newtheorem{cor}[thm]{Corollary}
\newtheorem{prop}[thm]{Proposition}
\newtheorem{quest}[thm]{Question}
\newtheorem{prob}[thm]{Problem}
\newtheorem{remark}[thm]{Remark}
\newtheorem{defn}[thm]{Definition}
\newtheorem{ex}[thm]{Example}
\title{Generalized single-valued hyperlogarithms}
\author{Oliver Schnetz}
\address{Department Mathematik\\
Cauerstra{\ss}e 11\\
91058 Erlangen, Germany}
\email{schnetz@mi.uni-erlangen.de}
\begin{document}
\begin{abstract}
Single-valued hyperlogarithms are generalized to include primitives of differential forms $\dd z/(az\zz+bz+c\zz+d)$, $a,b,c,d\in\CC$, where $\zz$ is the complex conjugate of
the variable $z\in\CC$. The construction of these generalized single-valued hyperlogarithms (GSVHs) relies on a commutative hexagon which allows one to express primitives
as anti-primitives. The article provides a proved constructive theory of GSVHs.
\end{abstract}

\maketitle

\section{Introduction}
Single-valued hyperlogarithms on the punctured complex plane have a plethora of applications in mathematics. A classical example is the Bloch-Wigner dilogarithm
(see \cite{Zagierdilog} and the references therein),
\begin{equation}\label{BWD}
D(z)=\mathrm{Im}\,(\Li_2(z)+\log(1-z)\log|z|),
\end{equation}
where $\Li_2(z)=\sum_{k=1}^\infty z^k/k^2$ is the (multivalued) dilogarithm. There exists a natural class of single-valued functions which generalizes the Block-Wigner dilogarithm
to hyperlogarithms of higher weights. This class of single-valued hyperlogarithms is well studied and understood (see, e.g., \cite{BrSVMP,BrSVMPII,gf}).

In recent years, certain calculations in perturbative Quantum Field Theory (pQFT) led to single-valued functions on the punctured complex plane (see, e.g., \cite{SYM,gf}).
It quickly became clear that the class of single-valued hyperlogarithms is too restricted to describe many of these functions (even if they are of hyperlogarithmic nature)
\cite{Duhr,gf,numfunct}.

While single-valued hyperlogarithms are iterated integrals of differential forms with poles in $\CC$,
$$
\frac{\dd z}{z-a},\quad a\in\CC,
$$
there is demand for including differential forms with denominators which are bilinear in $z$ and $\zz$ (the complex conjugate of $z$),
\begin{equation}\label{bilinform}
\frac{\dd z}{az\zz+bz+c\zz+d},\quad a,b,c,d\in\CC.
\end{equation}
One expects that (iterated) single-valued primitives of these differential forms exist if the zero locus of the denominator is empty in $\CC$.
By integrating $\dd z/(z+1/\zz)$, e.g., we get $\log(z\zz+1)$ which is single-valued on $\CC$ (Example \ref{wt1ex}).

This setup, however, still is too specific. In practice, one often has denominators which vanish on a curve in $\CC$. A frequent case is the denominator $z-\zz$ which vanishes on the
real axis. In general, these differential forms have no single-valued primitives on the punctured complex plane. The logarithm $\log(z-\zz)$ inherits the singular real axis from the
differential form.

Nonetheless, it may happen that a numerator cancels the singularity of the denominator to render the function single-valued on the punctured complex plane.
Because $D(z)$ in (\ref{BWD}) vanishes on the real line, one examples of this type is (see Example \ref{Dzzex}, Section \ref{sect01zz}, and \cite{Duhr})
\begin{equation}\label{introexD}
\frac{D(z)}{z-\zz}.
\end{equation}

Another example of a function with lifted singularity is (Example \ref{wt2ex})
\begin{equation}\label{introexlog}
\frac{\log(z\zz)}{z-1/\zz}.
\end{equation}
Both functions have single-valued primitives on the (punctured) complex plane. Generalized single-valued hyperlogarithms (GSVHs) are the most general class of single-valued functions
which emerge from iteratively integrating differential forms (\ref{bilinform}), see Definition \ref{Gdef}.
In this article we present a theory of these functions. We prove the existence of single-valued primitives (and other fundamental properties of GSVHs) in Theorem \ref{Gthm}
using the commutativity of the hexagon in Figure \ref{fig:gsvh}. The proof is constructive (Section \ref{sectimplementation}) and implemented in the Maple package {\tt HyperlogProcedures}
\cite{Shlog} (which also contains application to pQFT).

In detail, the paper is organized as follows. In Sections \ref{sectitint} and \ref{secthyp} we set up the notation and compile some results on iterated integrals and hyperlogarithms.
In Section \ref{sectgh} we generalize to hyperlogarithms in two variables $z$ and $\zz$ which can either be considered independent or complex conjugates.
We fix a quadratically closed number field $\FF\subseteq\CC$ which is closed under complex conjugation, $\overline{\FF}=\FF$, and construct the $\CC$-vector-space $\sG\sH_\Sigma^\CC$ of
(possibly multivalued) generalized hyperlogarithms with a finite singular set $z\in\Sigma(\zz)\subset\{-(c\zz+d)/(a\zz+b),\;a,b,c,d\in\FF\}$ which emerges from iteratively integrating
differential forms of type (\ref{bilinform}). In Section \ref{sectsv} we define the $\CC$-algebra $\sS\sV_{\Sigma_0}$ of single-valued functions with point-like singularities in a finite set
$\Sigma_0\subset\CC$ (which -- at this point -- is not related to $\Sigma(\zz)$). Our notion of single-valuedness does not only demand the absence of monodromies but the (stronger) existence of
single-valued log-Laurent expansions
$$
\sum_{\ell=0}^L\sum_{m,\mm\geq M_a}c_{\ell,m,\mm}^a[\log(z-a)(\zz-\aaa)]^\ell(z-a)^m(\zz-\aaa)^\mm,\quad M_a\in\ZZ,
$$
at all $a\in\Sigma$ and at infinity (Definition \ref{svdef}; these expansions are natural in the context of pQFT \cite{gfe,gf,numfunct}).

Generalized single-valued hyperlogarithms are defined in Section \ref{sectgsvh} as the intersection of both spaces
$$
\sG_\Sigma^\CC=\sG\sH_\Sigma^\CC\cap\sS\sV_\CC,\quad\text{where}\quad\sS\sV_\CC=\bigcup_{\Sigma_0\subset\CC}\sS\sV_{\Sigma_0}
$$
is the union of $\sS\sV_{\Sigma_0}$ over all finite singular loci $\Sigma_0$.

Likewise, $\sG_\FF^\CC$ is the union over all $\sG_\Sigma^\CC$. We define $\partial_z\sG_\FF^\CC$, $\partial_\zz\sG_\FF^\CC$, $\partial_z\partial_\zz\sG_\FF^\CC$
which -- by Theorem \ref{Gthm} -- become derivatives of $\sG_\FF^\CC$. The functions (\ref{introexD}) and (\ref{introexlog}), e.g., are in $\partial_z\sG_\FF^\CC$.
These function spaces are subsets of the general space of GSVHs $\sG_\FF$ where hyperlogarithms in $\sG_\FF^\CC$ may have coefficients with bilinear denominators in $z$ and $\zz$
(which are consistent with single-valuedness).

With these notations we prove the commutativity of the hexagon in Figure \ref{fig:gsvh}, where projections $\pi_{\partial_z}$ and $\pi_{\partial_\zz}$ kill (anti-)residues in
$\partial_z\sG_\FF^\CC$ and $\partial_\zz\sG_\FF^\CC$, respectively.

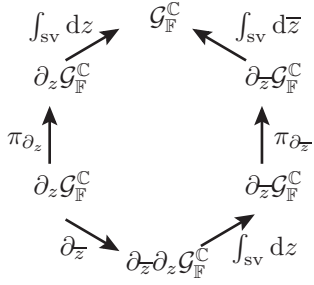
\begin{figure}
\begin{center}
\fcolorbox{white}{white}{
  \begin{picture}(160,135) (19,-11)
    \SetWidth{1.0}
    \SetColor{Black}
    \Line[arrow,arrowpos=1,arrowlength=5,arrowwidth=2,arrowinset=0.2](48,38)(48,56)
    \Line[arrow,arrowpos=1,arrowlength=5,arrowwidth=2,arrowinset=0.2](128,38)(128,56)
    \Line[arrow,arrowpos=1,arrowlength=5,arrowwidth=2,arrowinset=0.2](54,17)(71,7)
    \Line[arrow,arrowpos=1,arrowlength=5,arrowwidth=2,arrowinset=0.2](122,77)(105,87)
    \Line[arrow,arrowpos=1,arrowlength=5,arrowwidth=2,arrowinset=0.2](105,7)(122,17)
    \Line[arrow,arrowpos=1,arrowlength=5,arrowwidth=2,arrowinset=0.2](54,77)(71,87)
    \Text(42,24)[lb]{\Black{$\partial_z\sG_\FF^\CC$}}
    \Text(122,24)[lb]{\Black{$\partial_{\zz}\sG_\FF^\CC$}}
    \Text(122,65)[lb]{\Black{$\partial_{\zz}\sG_\FF^\CC$}}
    \Text(42,65)[lb]{\Black{$\partial_z\sG_\FF^\CC$}}
    \Text(32,43)[lb]{\Black{$\pi_{\partial_z}$}}
    \Text(133,43)[lb]{\Black{$\pi_{\partial_\zz}$}}
    \Text(40,84)[lb]{\Black{$\int_{\mathrm{sv}}\dd z$}}
    \Text(118,84)[lb]{\Black{$\int_{\mathrm{sv}}\dd\zz$}}
    \Text(77,-5)[lb]{\Black{$\partial_{\zz}\partial_z\sG_\FF^\CC$}}
    \Text(86,88)[lb]{\Black{$\sG_\FF^\CC$}}
    \Text(52,2)[lb]{\Black{$\partial_{\zz}$}}
    \Text(117,0)[lb]{\Black{$\int_{\mathrm{sv}}\dd z$}}
  \end{picture}
}
\end{center}
\caption{The inductive construction of GSVHs by a commutative hexagon.}
\label{fig:gsvh}
\end{figure}

We use the commutative hexagon to prove the structure theorem for GSVHs (Theorems \ref{hexthm} and \ref{Gthm}).

\begin{thm}\label{introthm}
The hexagon in Figure \ref{fig:gsvh} commutes. The space $\sG_\FF$ of GSVHs is stable under taking (anti\nobreakdash-)primitives, complex conjugation, and fractional linear transformations.
For any finite set of fractional linear transformations $\Sigma(\zz)$ and any finite set $\Sigma_0$ of singular points in $\FF$ we obtain an exact sequence
\begin{equation}\label{exseq}
0\longrightarrow\CC\longrightarrow\sG_\FF^\CC\cap\sS\sV_{\Sigma_0}\stackrel{\partial_z}{\longrightarrow}\partial_z\sG_\FF^\CC\cap\sS\sV_{\Sigma_0}\longrightarrow0.
\end{equation}
Moreover, $\sG_\Sigma^\CC=\sG\sH_\Sigma^\CC\cap\sS\sV_{\Sigma(\zz)\cap\CC}$.
\end{thm}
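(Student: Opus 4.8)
The four assertions are interlocked, so the plan is to prove them together by a single induction on the weight (the number of iterated integrations), with the hexagon of Figure~\ref{fig:gsvh} serving as the inductive step. The spaces $\partial_z\sG_\FF^\CC$, $\partial_\zz\sG_\FF^\CC$ and $\partial_\zz\partial_z\sG_\FF^\CC$ are taken to be defined a priori by explicit spanning sets of forms times lower-weight GSVHs, so that part of the work is to identify them with the actual images of the derivations; the base case is the trivial observation that in weight~$0$ all these spaces reduce to $\CC$ or $0$.

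The heart of the argument is the construction of the single-valued primitive ``$\int_{\mathrm{sv}}\dd z$'' and the verification that the hexagon commutes. Given $\phi$ in $\partial_z\sG_\FF^\CC\cap\sS\sV_{\Sigma_0}$ of weight $n$ I want $\Phi$ in $\sG_\FF^\CC\cap\sS\sV_{\Sigma_0}$ of weight $n+1$ with $\partial_z\Phi=\phi$ and no singularities outside $\Sigma_0$. Factoring each bilinear denominator as $(a\zz+b)(z-\sigma(\zz))$, $\sigma\in\Sigma(\zz)$, decomposes $\phi$ into a part with simple poles in $z$ along the fibers $z=\sigma(\zz)$ and a part that is regular in $z$; the residue-free part $\pi_{\partial_z}\phi$ can be integrated termwise, its only possible failures of single-valued log-Laurent behaviour being those cured by the classical single-valued hyperlogarithm machinery, whereas the residue part calls for the detour around the hexagon: one forms $\partial_\zz\phi$, which by the inductive hypothesis lies in $\partial_\zz\partial_z\sG_\FF^\CC$ of weight $n-1$ and hence has a single-valued $z$-primitive $\psi$ in $\partial_\zz\sG_\FF^\CC$, removes the anti-residues of $\psi$ by $\pi_{\partial_\zz}$, and integrates once more with the already-constructed ``$\int_{\mathrm{sv}}\dd\zz$''. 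Commutativity of the hexagon is precisely the assertion that this detour is order-independent and reconstructs $\Phi$ consistently; I would prove it by comparing the single-valued log-Laurent expansions of the two composites at every point of $\Sigma_0\cup\{\infty\}$ and at every singular fiber in $\Sigma(\zz)$, using the inductive hypothesis in lower weight that a GSVH all of whose expansions vanish is identically zero.

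Granting the hexagon, the remaining statements follow. Stability under complex conjugation is immediate, since conjugation is the involution $z\leftrightarrow\zz$ of the whole set-up and carries the hexagon to itself with its two halves interchanged. Stability under fractional linear transformations holds because the class of forms $\dd z/(az\zz+bz+c\zz+d)$ is closed under substituting $z\mapsto(\alpha z+\beta)/(\gamma z+\delta)$ together with the conjugate substitution in $\zz$ — here one uses $\overline\FF=\FF$ and quadratic closure, so the transformed singular locus is again admissible — and because the single-valued log-Laurent condition survives such biholomorphic coordinate changes; stability of the larger space $\sG_\FF$ under (anti\nobreakdash-)primitives follows once one checks that admitting coefficients with bilinear denominators in $z,\zz$ that are consistent with single-valuedness does not disturb the construction. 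For the exact sequence~(\ref{exseq}), the inclusion $\CC\hookrightarrow\sG_\FF^\CC\cap\sS\sV_{\Sigma_0}$ is visibly injective; $\ker\partial_z=\CC$ because a generalized hyperlogarithm is by construction a $z$-iterated integral built from $\CC$, on which $\partial_z$ is injective modulo constants (the defining linear independence of the integration kernels); and surjectivity of $\partial_z$ onto $\partial_z\sG_\FF^\CC\cap\sS\sV_{\Sigma_0}$ is exactly the existence of the single-valued primitive ``$\int_{\mathrm{sv}}\dd z$'' constructed above. Finally, $\sG_\Sigma^\CC=\sG\sH_\Sigma^\CC\cap\sS\sV_{\Sigma(\zz)\cap\CC}$ reduces, given the definition $\sG_\Sigma^\CC=\sG\sH_\Sigma^\CC\cap\sS\sV_\CC$, to showing that a single-valued element of $\sG\sH_\Sigma^\CC$ can be singular only at the finite points of the fibers $\Sigma(\zz)$ lying in $\CC$ and at infinity — a direct inspection of where an iterated integral of forms with denominators vanishing on $\Sigma(\zz)$ can develop a singularity, the curve singularities being cancelled by single-valuedness itself.

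The main obstacle is the commutativity of the hexagon. The difficulty is not the mere existence of a primitive — that is essentially a monodromy computation — but the need to choose the single-valued primitives ``$\int_{\mathrm{sv}}\dd z$'' and ``$\int_{\mathrm{sv}}\dd\zz$'' coherently across all weights so that they close up into the hexagon; the residue-killing projections $\pi_{\partial_z}$ and $\pi_{\partial_\zz}$ are introduced precisely so that the obstruction to this coherence vanishes, and checking this — controlling the interaction of the shuffle and deconcatenation structure of iterated integrals with the single-valued log-Laurent axiom simultaneously at $\Sigma_0$, at $\Sigma(\zz)$ and at infinity — is the technical core, carried out in Theorems~\ref{hexthm} and~\ref{Gthm}.
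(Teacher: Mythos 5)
Your overall architecture --- induction on the weight with the hexagon of Figure \ref{fig:gsvh} as the inductive step, and then reading (\ref{exseq}) off the existence and uniqueness of $\int_{\mathrm{sv}}\dd z$ --- is the paper's architecture. But the proposal omits the ingredient on which everything turns, and one step would fail as described. The missing idea is Theorem \ref{techthm} / Proposition \ref{techprop} / Corollary \ref{partialcor}: if a single-valued function is written in a linearly independent $\sO$-basis with generalized-hyperlogarithm coefficients, then the individual terms are themselves single-valued. This is what allows the reduction to a single term $h(z)/(z-\beta(\zz))$ with $h\in\sG_\FF^\CC$, and, inside the hexagon computation, what allows the lower-weight remainder $g$ produced by {\bf I8} and Lemma \ref{betalem1} (2) to be split into single-valued pieces $g_a(z)/(\zz-\aaa)$ to which induction applies; its proof needs the singular decomposition (Theorem \ref{decompthm}, Proposition \ref{singprop3}). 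Your sketch contains no substitute for it, and your closing sentence defers exactly this point to Theorems \ref{hexthm} and \ref{Gthm}, i.e.\ to the results being proved. Note also that ``$\partial_\zz\phi\in\partial_\zz\partial_z\sG_\FF^\CC$ of weight $n-1$, hence it has a single-valued $z$-primitive by induction'' is not available: in the paper that primitive is written down in closed form (the $\partial_z$-exact piece $\partial_z[-f\partial_\zz\beta]$ plus the $g_a$), and only the $g_a$ are handled inductively. Likewise your kernel computation for (\ref{exseq}) ignores that $\ker\partial_z$ in $\sG\sH_\FF^\CC$ is all of $\sH\sL_\FF^\CC(\zz)$, not $\CC$; one needs single-valuedness together with Proposition \ref{propseqm} (Lemma \ref{kerlem}) to cut it down to $\CC$.

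The step that would fail is your division of labour between the two routes: you send the ``residue part'' around the hexagon and integrate the residue-free part ``termwise'' with classical single-valued hyperlogarithm machinery. This inverts the roles. The residue part $\sum_a(\res_af)/(z-a)$ is annihilated by $\partial_\zz$, so the detour through the bottom of the hexagon cannot see it at all; its primitive is simply restored by adding $\sum_a(\res_af)\sL_a(z)$ (Corollary \ref{existencecor}). Conversely, it is the residue-free part $\pi_{\partial_z}f$, carrying the genuinely $\zz$-dependent letters (e.g.\ $\log(z\zz)/(z-1/\zz)$), whose single-valued primitive must come from the detour: classical single-valued hyperlogarithms do not supply the anti-holomorphic integration constant $F_2(\zz)$ needed there. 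Finally, ``comparing log-Laurent expansions of the two composites'' does not identify the actual mechanism for commutativity: the paper fixes a multivalued primitive $F_1$ of $\pi_{\partial_z}f$, determines the anti-holomorphic correction from the detour, and kills the only possible obstruction --- a term $c_{-1}\log(\zz-\aaa)$ --- using the identity $\res_a\partial_z=\overline{\res}_a\partial_\zz$ on $\sS\sV_\Sigma$, Equation (\ref{respar}), together with the fact that both projections removed the (anti-)residues; equality of the two routes then follows from Lemma \ref{kerlem} and the normalization at $0$, not from an expansion-vanishing criterion.
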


In the last two sections we investigate GSVHs more concretely. Section \ref{sectimplementation} explains how to explcitly use the commutative hexagon for the construction of GSVHs.
The results of this section are the basis for the Maple implementation of GSVHs in {\tt HyperlogProcedures} \cite{Shlog}.
In Section \ref{sectconstruction} we focus on GSVHs whose letters in $\Sigma(\zz)$ relate to involutions on the complex plane.
We conclude the article with the special case $\Sigma(\zz)=\{0,1,\zz\}$, where the involution is complex conjugation (Section \ref{sect01zz}).
The $\CC$-vector-space $\sG_{\{0,1,\zz\}}^\CC$ has finite dimensions at fixed weights (the number of iterated integrations). We conjecture these dimensions up to weight nine
(Table \ref{tab1}) with a proof up to weight five (Theorem \ref{thmdims}). In Theorem \ref{thm01zz} we prove that GSVHs in $\{0,1,\zz\}$ are defined over
the $\QQ$-algebra of multiple zeta values (MZVs, see (\ref{MZVdef})).
\section*{Acknowlegements}
The author is indebted to Dirk Kreimer for his support while being a visiting scientist at Humboldt University, Berlin, from 2011 to 2015 where the project started and many results were obtained.
He is also very grateful to F. Brown for his interest which lead to many inspiring and helpful discussions. The author is supported by DFG grant SCHN~1240.

\section{Iterated integrals}\label{sectitint}
Let
\begin{equation}\label{Sigmadef}
\Sigma=\{s_1,s_2,\ldots,s_N\}\subset\CC
\end{equation}
be a finite set of points in $\CC$. Let $\Sigma^\ast$ be the set of words with letters in $\Sigma$, with $e\in\Sigma^\ast$ being the empty word.
Let $|w|$ be the length of the word $w$ and $\widetilde{w}$ be $w$ in reversed order. With the shuffle product
\begin{equation}\label{sha}
ua\sha vb=(u\sha vb)a+(ua\sha v)b,\quad\text{for }u,v\in\Sigma^\ast,\;a,b\in\Sigma,
\end{equation}
and $e\sha u=u=u\sha e$ the $\ZZ$-span $\langle\Sigma^\ast\rangle_\ZZ$ becomes a commutative ring.

For any word $w=a_1\ldots a_n\in\Sigma^\ast$ and any path $\gamma:[0,1]\rightarrow\CC\backslash\Sigma$ with $a_0=\gamma(0)$ and $a_{n+1}=\gamma(1)$ we define the iterated integral \cite{Chen}
\begin{equation}\label{Idef}
I(a_0,a_1\ldots a_n,a_{n+1})=\int_{0<t_1<\ldots<t_n<1}\gamma^\ast\frac{\dd t_1}{t_1-a_1}\wedge\ldots\wedge\gamma^\ast\frac{\dd t_n}{t_n-a_n},
\end{equation}
where $\gamma^\ast\omega$ is the pullback of the differential form $\omega$ by $\gamma$.

Because (in this one-dimensional case) the integrand is closed, it is clear that the iterated integral is a homotopy invariant. It neither depends on the parametrization
of the curve $\gamma$ nor on the shape of the path $\gamma$ as long as the endpoints are fixed and no singularities are crossed.
It is convenient to generalize to the potentially singular case where the endpoints $a_0,a_{n+1}$ are in $\Sigma$.
In this case the value of the iterated integral may depend on the direction in which the endpoints are approached (see, e.g., Section 2.2 in \cite{gf} for an elementary approach to iterated
integrals). In the case of hyperlogarithms (Section \ref{secthyp}) the initial point is 0 and the endpoint is the variable $z$. One typically choses the initial point to be approached from the
positive real axis (this, e.g., gives $I(0,0,z)=\log z$). An equivalent alternative is to shuffle-regularize iterated integrals, see Section \ref{sectreg}.

In the case of single-valued functions (Section \ref{sectsv}) the regularization convention for $\gamma$ is insignificant as long as it is used consistently. In \cite{Shlog} the path $\gamma$ is
always a straight line between the endpoints (this gives $I(0,0,z)=\log|z|$).

In general, iterated integrals are very convenient to handle. An (incomplete) list of identities (all of which are easy to verify) is \cite{gf}
\begin{description}
\item[I0]
$I(a_0,a_1)=1$ by definition.
\item[I1]
$I(a_0,w,a_{n+1})$ is independent of the parametrization of $\gamma$.
\item[I2]
$I(a_0,w,a_{n+1})$ is a homotopy invariant.
\item[I3]
$I(a_0,w,a_0)=0$ for the constant path $\gamma=a_0$ and $|w|\geq1$.
\item[I4]
$I(a_0,w,a_{n+1})=(-1)^{|w|}I(a_{n+1},\widetilde{w},a_0)$ if the path $\gamma$ is reversed.
\item[I5]
For any $x\in\CC$,
$$
I(a_0,a_1\ldots a_n,a_{n+1})=\sum_{k=0}^nI(a_0,a_1\ldots a_k,x)I(x,a_{k+1}\ldots a_n,a_{n+1})
$$
by path composition $\gamma=\gamma_1\gamma_2$ with $\gamma_1(1)=x=\gamma_2(0)$.
\item[I6]
For any $x\in\CC$,
$$
I(a_0,u,x)I(a_0,v,x)=I(a_0,u\sha v,x)
$$
by shuffling the $t_i$ in the integration domain of (\ref{Idef}).
\item[I7]
For any $A\in\CC^\times$, $B\in\CC$ and $a_0\neq a_1$, $a_n\neq a_{n+1}$,
$$
I(a_0,a_1\ldots a_n,a_{n+1})=I(Aa_0+B,(Aa_1+B)\ldots(Aa_n+B),Aa_{n+1}+B)
$$
by substituting $t_i\mapsto (t_i-B)/A$ in (\ref{Idef}).
\item[I8]
For letters $a_0,\ldots,a_{n+1}$ which depend on a variable $x$,
$$
\partial_xI(a_0,a_1\ldots a_n,a_{n+1})=\sum_{k=1}^n\left(\partial_x\log\frac{a_{k+1}-a_k}{a_k-a_{k-1}}\right)I(a_0,a_1\ldots a_{k-1}a_{k+1}\ldots a_n,a_{n+1})
$$
(where $\partial_x\log0=0$) by using integration by parts in (\ref{Idef}).
\end{description}
Note that in {\bf I7} the alphabet $\Sigma$ changes to $A\Sigma+B$. There exists an analogous formula for any M\"obius transformation $t\mapsto\frac{at+b}{ct+d}$ \cite{gf}.
Singular iterated integrals are more subtle: The transformation $t\mapsto t/A$ formally maps $I(0,0,1)=0$ to $I(0,0,A)=\log(A)$.
If $A=1$, however, {\bf I7} holds for all iterated integrals. Likewise, identities {\bf I0}--{\bf I6} and {\bf I8} also hold for singular iterated integrals.

There also exists a formula for the Galois coaction on iterated integrals \cite{MMZ,Gon}. Although the existence of a coaction is central in the theory of iterated integrals,
we will not use it in this article.

\section{Hyperlogarithms}\label{secthyp}
\subsection{General theory}
In this section we mostly follow the first sections of \cite{BrSVMPII}. For completeness we give (somewhat independent) proofs for the results in this section.

Consider the ring of regular functions on $\CC\backslash\Sigma$,
\begin{equation}\label{OSigma}
\sO_\Sigma=\CC[z,((z-a)^{-1})_{a\in\Sigma}].
\end{equation}
(One may think of the generator $z$ as being related to a puncture at infinity.) For any word $w\in\Sigma^\ast$ we define the multivalued analytic hyperlogarithm $L_w(z)$ as iterated integral,
\begin{equation}\label{LwI}
L_w(z)=I(0^+,w,z).
\end{equation}
The $L_w(z)$ span the space of hyperlogarithms over $\sO_\Sigma$,
$$
\sH\sL_\Sigma=\langle L_w(z),\;w\in\Sigma^\ast\rangle_{\sO_\Sigma}.
$$
The homotopy dependence of $I(0^+,w,z)$ reflects the multivaluedness of $L_w(z)$. We get from {\bf I8}
\begin{equation}\label{pLw}
\partial_zL_{wa}(z)=\frac{1}{z-a}L_w(z)
\end{equation}
with (regualrized) limits $L_w(0)=0$. If $0\in\Sigma$, we get for a string $0^{\{n\}}$ of $n$ letters 0
\begin{equation}\label{0}
L_{0^{\{n\}}}(z)=\frac{1}{n!}\log^n(z).
\end{equation}
If the path of the iterated integral does not encircle singularities then $L_w(z)$ can be expressed as multiple sum.
With $w=b_10^{\{n_1-1\}}b_20^{\{n_2-1\}}\ldots b_r0^{\{n_r-1\}}$ (i.e.\ $b_1=a_1$, $b_2=a_{n_1+1}$, \ldots) we get
\begin{equation}\label{zeta}
L_w(z)=(-1)^r\sum_{0<k_1<k_2<\ldots<k_r}\frac{(\frac{b_2}{b_1})^{k_1}(\frac{b_3}{b_2})^{k_2}\cdots(\frac{z}{b_r})^{k_r}}{k_1^{n_1}k_2^{n_2}\cdots k_r^{n_r}}\quad
\text{for small }|z|.
\end{equation}

We refer to $|w|$ as the weight of the hyperlogarithm $L_w$. The number $r$ of non-zero letters in $w$ is the depth of $L_w$ (and of $w$).
It will follow from Theorem \ref{Brthm} that weight and depth of hyperlogarithms are well-defined.

Let $\CC\langle\Sigma\rangle$ be the $\CC$-vector-space freely generated by $\Sigma^\ast$.
Following F. Brown in \cite{BrSVMP,BrSVMPII} we use the formal algebra (note that in a computer implementation one would use this formal setup)
$$
\Sigma^\ast\otimes\sO_\Sigma:=\CC\langle\Sigma\rangle\otimes_\CC\sO_\Sigma
$$
as model of hyperlogarithms. We consider $\Sigma^\ast\otimes\sO_\Sigma$ as free $\sO_\Sigma$ module where multiplication is on the
right\footnote{In \cite{BrSVMPII} multiplication is on the left. Because of the connection to motivic Galois theory where the Galois group conveniently coacts to the right we
employ a left to right notation.}.

We consider hyperlogarithms as (analytic) realizations of the map
\begin{equation}\label{eval}
\eval:\begin{array}{rcl}
\Sigma^\ast\otimes\sO_\Sigma&\rightarrow&\sH\sL_\Sigma\\
w\otimes\phi(z)&\mapsto&\phi(z)L_w(z).
\end{array}
\end{equation}
The $\sO_\Sigma$ module $\Sigma^\ast\otimes\sO_\Sigma$ has a trivial Hopf-algebra structure with shuffle multiplication (\ref{sha}) and deconcatenation as
coproduct\footnote{The Galois coaction structure has no connection to this Hopf-algebra structure.}.
The antipode is given by $w\mapsto(-1)^{|w|}\widetilde{w}$ which, by {\bf I4}, corresponds to path reversal.

By {\bf I6} eval maps the shuffle product to the pointwise product of complex functions,
\begin{equation}\label{prod}
\eval[(u\sha v)\otimes\phi(z)]=\phi(z)L_u(z)L_v(z),\quad\text{for }u,v\in\Sigma^\ast,\;\phi\in\sO_\Sigma,
\end{equation}
which lifts eval to a homomorphism between the $\CC$-algebras $\Sigma^\ast\otimes\sO_\Sigma$ and $\sH\sL_\Sigma$.

Words $0^{\{k\}}aw$, $a\neq0$, $w\in\Sigma^\ast$, which begin in 0 have a unique representation (see Lemma 3 in \cite{motg2})
\begin{equation}\label{0aw}
0^{\{k\}}aw=\sum_{i=0}^k(-1)^i0^{\{k-i\}}\sha a[0^{\{i\}}\sha w].
\end{equation}
With (\ref{0}), (\ref{zeta}), and (\ref{prod}) this defines integral and sum representations for hyperlogarithms of words which begin with 0,
\begin{equation}\label{L0aw}
L_{0^{\{k\}}aw}=\sum_{i=0}^k(-1)^i\frac{(\log z)^{k-i}}{(k-i)!}L_{a[0^{\{i\}}\sha w]}(z),
\end{equation}
where the hyperlogarithms are extended to $\Sigma^\ast$ by linearity. One can use (\ref{0}), (\ref{zeta}), (\ref{L0aw}) or (\ref{pLw}) and $L_w(0)=0$ as alternative definitions of
hyperlogarithms.

We make $\Sigma^\ast\otimes\sO_\Sigma$ a differential algebra by defining a derivative $\partial_z$.
For the empty word $e$ we set
$$
\partial_z (e\otimes\phi)=e\otimes\partial_z\phi(z).
$$
For all other words we define
\begin{equation}
\partial_z (wa\otimes\phi)=wa\otimes\partial_z\phi(z)+w\otimes\frac{\phi(z)}{z-a}.
\end{equation}
For any $a\in\Sigma$ we define left and right derivatives---with respect to $\sha$, see (\ref{sha})---$\delta_a^{\mathrm{l}}$ and $\delta_a^{\mathrm{r}}$ on $\Sigma^\ast$ by
\begin{equation}\label{drdef}
\delta_a^{\mathrm{l}}w=\left\{\begin{array}{cc}u&\text{if }w=au,\\0&\text{otherwise,}\end{array}\right.\qquad\qquad
\delta_a^{\mathrm{r}}w=\left\{\begin{array}{cc}v&\text{if }w=va,\\0&\text{otherwise.}\end{array}\right.
\end{equation}
We get
\begin{equation}\label{partial}
\partial_z=\text{id}\otimes\partial_z+\sum_{a\in\Sigma}\delta_a^{\mathrm{r}}\otimes\frac{1}{z-a}.
\end{equation}
The above formula implies that $\partial_z$ is a derivation on $\Sigma^\ast\otimes\sO_\Sigma$. We find that eval is a homomorphism of differential algebras.

We also define a residue $\res_a^{\sO_\Sigma}$ for $a\in\CC$ acting on $\Sigma^\ast\otimes\sO_\Sigma$,
$$
\res_a^{\sO_\Sigma}(w\otimes\phi)=w\otimes\res_a\phi,
$$
where $\res_a$ on the right hand side is the residue at $a$ in $\sO_\Sigma$.

Because $\res_a\partial_z=0$ we find that $\res_a^{\sO_\Sigma}\circ\partial_z$ decreases the weights of the words in $\Sigma^\ast\otimes\sO_\Sigma$.
The residue in $\sH\sL_\Sigma$ needs a further evaluation at $z=a$,
$$
\res_a=\eval_a\circ\res_a^{\sO_\Sigma}:=\eval|_{z=a}\circ\res_a^{\sO_\Sigma}.
$$

Without restriction we assume that any $x=\sum_ww\otimes\phi_w\in\Sigma^\ast\otimes\sO_\Sigma$ is a finite sum over distinct words $w$ with non-zero canceled fractions $\phi_w$.
Any such representation is unique. The maximum weight (length) of words in $x$ induces a strict partial order on $\Sigma^\ast\otimes\sO_\Sigma$.
It is a standard technique in our proofs to show that a minimal counter-example does not exist. To use this technique efficiently it is convenient to refine the order induced by the weight.

\begin{defn}\label{deforder}
Let $x=\sum_ww\otimes\phi_w\in\Sigma^\ast\otimes\sO_\Sigma$ as above with weight $|x|=\max\{|w|\}$. Let $N_x=\#\{w,|w|=|x|\}$ be the number of words in the maximum weight part of $x$.
The strict partial order `$\prec$' on $\Sigma^\ast\otimes\sO_\Sigma$ is defined by
\begin{equation}
x\prec y\Leftrightarrow |x|<|y|\text{ or }(|x|=|y|\text{ and }N_x<N_y).
\end{equation}
\end{defn}
Note that $\partial_zf\not\succ f$ for all $f\in\Sigma^\ast\otimes\sO_\Sigma$.

\begin{lem}\label{lemker}
The kernel of $\partial_z$ in $\Sigma^\ast\otimes\sO_\Sigma$ is $e\otimes\CC$.
\end{lem}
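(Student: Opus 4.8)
The plan is to combine the explicit formula \eqref{partial} for $\partial_z$ with the elementary fact that a derivative has vanishing residue at every point. First, the inclusion $e\otimes\CC\subseteq\ker\partial_z$ is immediate from the definition of $\partial_z$ on the empty word together with the fact that the kernel of ordinary differentiation on $\sO_\Sigma$ (see \eqref{OSigma}) is $\CC$: inspecting a partial-fraction decomposition, $\partial_z\phi=0$ forces the polynomial part of $\phi$ to be constant and every pole part of $\phi$ to vanish. The work is therefore the reverse inclusion, which I would prove by a short descent on the weight (in the spirit of the minimal-counterexample arguments the paper relies on), in fact using only the two top weight-graded pieces of $\partial_z x=0$.

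So suppose $x\in\ker\partial_z$ with $n:=|x|\ge1$, and write $x$ in its unique reduced form, grouped by weight, as $x=x_0+x_1+\dots+x_n$ with $x_k=\sum_{|w|=k}w\otimes\phi_w$. In \eqref{partial} the operator $\mathrm{id}\otimes\partial_z$ preserves weight, while $\sum_{a\in\Sigma}\delta_a^{\mathrm{r}}\otimes\tfrac1{z-a}$ lowers it by one. Hence the weight-$n$ component of $\partial_z x$ equals $(\mathrm{id}\otimes\partial_z)x_n$, and $\partial_z x=0$ forces $(\mathrm{id}\otimes\partial_z)x_n=0$; by the weight-zero case this means $\phi_w\in\CC$ for every word $w$ of length $n$.

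Next I would extract the weight-$(n-1)$ component of $\partial_z x=0$. Writing each length-$n$ word as $va$ with $a\in\Sigma$, this component states that for every word $v$ of length $n-1$,
\[
\partial_z\phi_v+\sum_{a\in\Sigma}\frac{\phi_{va}}{z-a}=0 .
\]
Applying $\res_a^{\sO_\Sigma}$ for a fixed $a\in\Sigma$ and reading off the coefficient of $v$: the first term contributes $\res_a(\partial_z\phi_v)=0$, while --- crucially, now that each $\phi_{va}$ is known to be a constant --- the second term contributes exactly $\phi_{va}$. Therefore $\phi_{va}=0$ for all words $v$ of length $n-1$ and all $a\in\Sigma$, that is, $x_n=0$, contradicting $|x|=n$. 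Hence $n=0$, so $x=e\otimes\phi$ with $\partial_z\phi=0$, whence $\phi\in\CC$ and $x\in e\otimes\CC$.

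The one subtlety --- and the step I would be most careful about --- is the order of these two reductions: since the coefficients $\phi_w$ are arbitrary elements of $\sO_\Sigma$ and may themselves have poles at the points of $\Sigma$, the residue computation in the last step isolates $\phi_{va}$ only after the top-weight coefficients have been identified as genuine constants; without that, the terms $\res_a\big(\phi_{vb}/(z-b)\big)$ with $b\neq a$ need not vanish. With the ordering fixed, the remainder is just the weight bookkeeping in \eqref{partial} and the identity $\res_a\partial_z=0$.
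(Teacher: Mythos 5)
Your proof is correct and follows essentially the same route as the paper: the weight-$n$ graded piece of $\partial_z x=0$ forces the top coefficients to be constants, and then the residue $\res_a^{\sO_\Sigma}$ applied to the weight-$(n-1)$ piece (using $\res_a\partial_z=0$) kills those constants, giving the contradiction. Your explicit remark about the order of the two reductions is exactly the implicit point in the paper's argument, so nothing further is needed.
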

\begin{proof}
Let $x\in\Sigma^\ast\otimes\sO_\Sigma$ be of minimum weight $n$ such that $\partial_zx=0$. Modulo lower weights $\partial_z$ only acts on the right hand side of the tensor product.
For $\partial_zx=0$ we hence need that $\partial_z$ trivializes the fractions of the maximum weight part of $x$. Hence, these fractions are constant.
If the weight of $x$ is at least 1 then the maximum weight part of $x$ is of the form $\sum_{wa} wa\otimes c_{wa}$ with $0\neq c_{wa}\in\CC$, $a\in\Sigma$, and $w\in\Sigma^\ast$
of weight $n-1$. We get $0=\res_a^{\sO_\Sigma}\partial_zx=\sum_w w\otimes c_{wa}$ modulo terms of weight $\leq n-2$. Hence all $c_{wa}=0$ and the claim follows by contradiction.
\end{proof}
We get the following structure theorem for hyperlogarithms.

\begin{thm}[F. Brown 2004 \cite{BrSVMPII}]\label{Brthm}
Let $\Sigma\subset\CC$ be a finite set.
\begin{enumerate}
\item The map \rm{eval} is an isomorphism. In particular, $\sH\sL_\Sigma$ is a free $\sO_\Sigma$ module.
\item The sequence
$$
0\longrightarrow\CC\longrightarrow\sH\sL_\Sigma\stackrel{\partial_z}{\longrightarrow}\sH\sL_\Sigma\longrightarrow0
$$
is exact. I.e.\ the kernel of $\partial_z$ in $\sH\sL_\Sigma$ is $\CC$ and every $f\in\sH\sL_\Sigma$ has a primitive $F\in\sH\sL_\Sigma$ with $\partial_zF=f$.
\item $\sH\sL_\Sigma$ is differentially simple. I.e.\ for every $0\neq f\in\sH\sL_\Sigma$ there exists a differential operator $D$ such that $Df=1$.
\end{enumerate}
\end{thm}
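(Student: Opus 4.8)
\emph{Strategy.} Everything hinges on part (1): once \eval\ is known to be an isomorphism it identifies $\sH\sL_\Sigma$ with the differential algebra $\Sigma^\ast\otimes\sO_\Sigma$, and then (2) and (3) become formal facts about that model, provable from Lemma \ref{lemker} together with partial fractions. So the plan is to prove (1) first and deduce (2) and (3) from it.

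\emph{Part (1): \eval\ is an isomorphism.} Surjectivity is the definition of $\sH\sL_\Sigma$; the content is injectivity, equivalently that the $L_w$ form a free $\sO_\Sigma$-basis of $\sH\sL_\Sigma$. I would prove by induction on the weight $n$ that \eval\ is injective on elements of weight $\leq n$, for every finite alphabet, running inside the inductive step a minimal counterexample on the count $N_x$ of Definition \ref{deforder}. So let $0\neq x=\sum_w w\otimes\phi_w\in\ker(\eval)$ of weight exactly $n$ have minimal $N_x$. If $n=0$ then $x=e\otimes\phi$ and $0=\eval(x)=\phi$, a contradiction; so $n\geq1$. Since \eval\ commutes with $\partial_z$, $\partial_zx\in\ker(\eval)$, and $\partial_z$ never raises the order. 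If $\partial_zx=0$ then $x\in e\otimes\CC$ by Lemma \ref{lemker}, hence $n=0$, a contradiction; if $|\partial_zx|\leq n-1$ then $\partial_zx=0$ by the inductive hypothesis, again a contradiction; and if $|\partial_zx|=n$ with $N_{\partial_zx}<N_x$ this contradicts minimality of $N_x$. Hence $|\partial_zx|=n$ and $\partial_z\phi_w\neq0$ for every weight-$n$ word $w$ of $x$. Now for each $a\in\Sigma$ the element $\res_a^{\sO_\Sigma}\partial_zx$ has weight $\leq n-1$, since the weight-$n$ coefficients of $\partial_zx$ are exactly the derivatives $\partial_z\phi_w$, whose residues vanish. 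The crux is the claim that $\res_a^{\sO_\Sigma}\partial_zx$ still lies in $\ker(\eval)$; granting it, the inductive hypothesis forces $\res_a^{\sO_\Sigma}\partial_zx=0$ for all $a\in\Sigma$, and feeding this back — together with the analogous information at $\infty$ (the polynomial parts of the $\phi_w$) and the constraint just derived on the weight-$n$ part — forces all $\phi_w=0$, the final contradiction.

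\emph{The main obstacle} is precisely that claim: that a residue of a vanishing combination of hyperlogarithms is again a vanishing combination, i.e.\ that $\res_a^{\sO_\Sigma}$ is compatible with \eval. This is false for arbitrary elements (recall $\res_a=\eval_a\circ\res_a^{\sO_\Sigma}$ only records the value at $z=a$), and is essentially Brown's key computation. I would establish it by expanding each hyperlogarithm occurring in $\eval(\partial_zx)=0$ around $z=a$ into its unique log-Laurent form $L_v(z)=\sum_j\frac{\log^j(z-a)}{j!}A_{v,j}(z)$, where the $A_{v,j}$ are single-valued near $a$ of weight $|v|-j$ and, by change of base point (identities \textbf{I5} and \textbf{I7}) together with the analogue of (\ref{L0aw}) at base point $a$, are $\sO_\Sigma$-combinations of genuine translated hyperlogarithms of lower weight. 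Separating powers of $\log(z-a)$ — transcendental over the field generated by the $A_{v,j}$ — turns $\eval(\partial_zx)=0$ into relations $\sum_v\psi_v A_{v,j}(z)=0$; those with $j\geq1$ have weight $<n$ and are trivial by the induction (translation invariance of the set-up is used here), and reading off the simple-pole part of the remaining $j=0$ relation yields the desired $\eval(\res_a^{\sO_\Sigma}\partial_zx)=0$.

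\emph{Parts (2) and (3).} Transport everything through \eval\ to $V:=\Sigma^\ast\otimes\sO_\Sigma$. For (2): the kernel of $\partial_z$ in $\sH\sL_\Sigma$ equals $\eval(\ker\partial_z|_V)=\eval(e\otimes\CC)=\CC$ by Lemma \ref{lemker}, and $\partial_z$ is surjective on $V$ by induction on word length — given $w\otimes\psi$, write $\psi=\psi_0+\sum_{a\in\Sigma}c_a(z-a)^{-1}$ with $\psi_0$ residue-free, choose a primitive $\Psi_0\in\sO_\Sigma$ of $\psi_0$, and note that $\partial_z(w\otimes\Psi_0+\sum_a wa\otimes c_a)$ equals $w\otimes\psi$ up to terms of strictly smaller word length, which are handled by induction (the base case $|w|=0$ being exact). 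This gives the exact sequence. For (3): let $J\subseteq V$ be a nonzero differential ideal and $0\neq x\in J$ of minimal weight $n$; multiplying by a suitable unit of $\sO_\Sigma$ we may assume all coefficients of $x$ are polynomials, without changing its word support hence its weight. If some $\partial_z^kx=0$, the last nonvanishing $\partial_z^{k-1}x$ lies in $e\otimes\CC$ by Lemma \ref{lemker} and has weight $0$; otherwise, for $k$ larger than the degrees of the weight-$n$ coefficients, $\partial_z^kx$ is a nonzero element of $J$ of weight $<n$. Either way minimality forces $n=0$, so $x=e\otimes\psi$; clearing the poles of $\psi$ by a unit of $\sO_\Sigma$ gives $e\otimes p\in J$ with $p$ a nonzero polynomial, and $\partial_z^{\deg p}(e\otimes p)$ is a nonzero constant multiple of $e\otimes1$. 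Hence $J=V$, so $\sH\sL_\Sigma$ is differentially simple; unwinding, the required differential operator with $Df=1$ for a given $0\neq f$ is the composite of these steps.
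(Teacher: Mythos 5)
Your parts (2) and (3) are essentially sound (and close to the paper's own arguments, transported to the formal model $\Sigma^\ast\otimes\sO_\Sigma$: partial fractions plus induction on word length for surjectivity of $\partial_z$, and denominator-clearing units followed by repeated differentiation for differential simplicity). The genuine gap is in part (1), on which everything else rests. Your reduction arrives at: $x\in\ker\eval$ of weight $n\geq1$, minimal, with all top coefficients non-constant, and — granting your compatibility claim — $\res_a^{\sO_\Sigma}\partial_zx=0$ for all $a\in\Sigma$. From there you assert that this information, together with the polynomial parts at $\infty$, forces all $\phi_w=0$. That implication fails: $\res_a^{\sO_\Sigma}\partial_zx=0$ only says that for each word $u$ the rational function $\sum_{b\in\Sigma}\phi_{ub}(z)/(z-b)$ is residue-free, and residue-free elements of $\sO_\Sigma$ need not vanish; for instance $\phi_{ub}=(z-b)^{-1}$ contributes $(z-b)^{-2}$, which has zero residue at every finite point and at infinity. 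So the residues of $\partial_zx$ simply do not determine the coefficients $\phi_w$, and no contradiction is reached. (In Lemma \ref{lemker} the analogous step works precisely because there the top coefficients are constants.) Moreover, the compatibility claim itself — that $\res_a^{\sO_\Sigma}\partial_zx$ stays in $\ker\eval$ — is only sketched, and its last step does not parse: the coefficient of $(z-a)^{-1}$ in the log-free part of the expansion of $\eval(\partial_zx)$ at $a$ is a single number, mixing the values of the $A_{u,0}$ with all Laurent coefficients of the rational prefactors, whereas $\eval(\res_a^{\sO_\Sigma}\partial_zx)$ is a function of $z$; "reading off the simple pole" does not produce the needed functional identity, and the linear independence of powers of $\log(z-a)$ over the ring generated by the $A_{v,j}$ and $\sO_\Sigma$ is also left unproved (the paper only develops the relevant monodromy machinery later and does not need it here).

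For contrast, the paper's proof of (1) avoids expansions and residues altogether, and its key device is exactly what your argument lacks: take $x$ minimal with respect to $\prec$ over \emph{all} finite alphabets $\Sigma$, pick a maximal-weight term $w\otimes\phi$, enlarge $\Sigma$ so that $\{\phi=0\}\subseteq\Sigma$ and hence $\phi^{-1}\in\sO_\Sigma$, and multiply by $e\otimes\phi^{-1}$. This normalizes one top coefficient to $1$, so a single application of $\partial_z$ strictly decreases the order $\prec$ while remaining in $\ker\eval$; minimality gives $\partial_z\big((e\otimes\phi^{-1})x\big)=0$, Lemma \ref{lemker} gives $(e\otimes\phi^{-1})x\in e\otimes\CC$, hence $x=e\otimes c\phi$, and $\eval x=c\phi=0$ forces $c=0$. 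If you want to keep your residue-based route you would have to recover all Laurent coefficients (not just residues) of the coefficient functions, or incorporate a normalization of the kind the paper uses; as written, part (1) — and therefore the proposal as a whole — has a genuine gap.
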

\begin{proof}
We know that eval is a surjective differential homomorphism. To prove (1) we show that there exists no $0\neq x\in\Sigma^\ast\otimes\sO_\Sigma$ with $\eval x=0$.
Because the order $\prec$ is discrete we may assume that $x$ is minimal with respect to $\prec$ and all $\Sigma$. Let $w\otimes\phi$ be a term in $x$ of maximum weight.
Because $\{\phi=0\}$ is a finite set in $\CC$ we may assume without restriction that $\{\phi=0\}\subseteq\Sigma$. We have $\phi^{-1}\in\sO_\Sigma$.
Because the term $w\otimes1$ in $(e\otimes\phi^{-1})x$ loses weight upon differentiation we have $\partial_z(e\otimes\phi^{-1})x\prec x$.
Hence $\partial_z(e\otimes\phi^{-1})x=0$ by minimality of $x$. From Lemma \ref{lemker} we get $(e\otimes\phi^{-1})x\in e\otimes\CC$. Hence $x=e\otimes c\phi$ for some $c\in\CC$.
From $\eval x=c\phi=0$ we get $c=0$ which is a contradiction.

By statement (1) and Lemma \ref{lemker} the kernel of $\partial_z$ is $\CC$. To prove (2) it hence suffices to show that every $f\in\sH\sL_\Sigma$ has a primitive.
We do this by induction over the weight $n\in\{-\infty,0,1,\ldots\}$ of $f$. A primitive of $0$ is 0.
For $f\neq0$ we may assume by linearity using partial fraction decomposition that $f(z)=(z-a)^kL_w(z)$ for $w\in\Sigma^\ast$, $a\in\Sigma$, and $k\in\ZZ$.
A primitive of $f(z)=(z-a)^{-1}L_w(z)$ is $L_{wa}(z)\in\sH\sL_\Sigma$. If $k\neq-1$, integration by parts yields
$$
\int (z-a)^kL_w(z)\,\dd z=\frac{(z-a)^{k+1}}{k+1}L_w(z)-\frac{1}{k+1}\int(z-a)^{k+1}\partial_zL_w(z)\,\dd z.
$$
With (\ref{pLw}) the right hand side exists in $\sH\sL_\Sigma$ by induction.

We also prove (3) by induction over the weight $n$ of $f$. Let $0\neq f\in\sH\sL_\Sigma$ and let $Q$ be the common denominator of all coefficients of terms with weight $n$ in $f$.
Then $Qf$ has only polynomial coefficients in the maximum weight piece. Let $m$ be the maximum degree of these polynomial. With $D_0=\partial_z^mQ$
we find that $D_0f$ has weight $n$ but $\partial_zD_0f$ has weight $\leq n-1$. If $D_0f=c\in\CC^\times$ then $D=D_0/c$. Otherwise we obtain from statement (2) that $\partial_zD_0f\neq0$.
By induction there exists a differential operator $D_1$ such that $D=D_1\partial_zD_0$ fulfills $Df=1$.
\end{proof}

\subsection{Regularized evaluation}\label{sectreg}
With the definition
\begin{equation}\label{0k}
L_{0^{\{n\}}}(0)=0,\quad\text{if }n\geq1,
\end{equation}
we get $L_w(0)=0$ for all $w\neq e$.

The evaluation of a hyperlogarithm $L_w(z)$ at $z=a\neq0$ is singular if $w$ ends in $a$.
In this case we can use (\ref{0aw}) from the right (swap the order of all letters) to shuffle-regularize $w$.
With the map $t\mapsto a-t$ in (\ref{Idef}) (see {\bf I7}) and path reversal {\bf I4} it is natural to define
\begin{equation}\label{ak}
L_{a^{\{n\}}}(a)=(-1)^nL_{0^{\{n\}}}(a).
\end{equation}
Identities (\ref{0k}) and (\ref{ak}) regularize the evaluation of hyperlogarithms $L_w(a)$ for any word $w$ at any $a\in\Sigma$.

Note that the prescription is a definition. Had we used the map $t\mapsto t-a$ in {\bf I7}, the right hand side of (\ref{ak}) would have been
$(-1)^nL_{0^{\{n\}}}(-a)$ (which possibly generates a new letter $-a$). In the case of single-valued functions where $\log|a|$ replaces $\log(a)$ both prescriptions are equivalent.
We skip a detailed treatment of regularization here and use (\ref{ak}) as definition.
\begin{ex}
We want to evaluate $L_{012}(z)$ at $z=2$. Using (\ref{0aw}), (\ref{L0aw}), and (\ref{ak}) we get
\begin{eqnarray*}
L_{012}(2)&=&L_2(2)L_{01}(2)-L_{201}(2)-L_{021}(2)\\
&=&-L_0(2)[L_0(2)L_1(2)-L_{10}(2)]-L_{201}(2)-L_0(2)L_{21}(2)+L_{201}(2)+L_{210}(2)\\
&=&-(\log2)^2L_1(2)+\log2[L_{10}(2)-L_{21}(2)]+L_{210}(2).
\end{eqnarray*}
\end{ex}

\subsection{Series expansions of hyperlogarithms}
Let
$$
\sH\sL_\Sigma^\CC=\eval\Sigma^\ast\otimes\CC
$$
be the $\CC$-algebra of hyperlogarithms with constant coefficients.

With (\ref{zeta}) we obtain from (\ref{L0aw}) an expansion of $L_w(z)$ for any word $w$,
\begin{equation}\label{ex0}
L_w(z)=\sum_{\ell=0}^{|w|}\sum_{m=0}^\infty c_{\ell,m}^0(\log z)^\ell z^m,\quad c_{\ell,m}^0\in\CC,
\end{equation}
which converges in a neighborhood of 0. This generalizes to all values $a\in\Sigma$.
\begin{lem}\label{explem}
Let $f\in\sH\sL_\Sigma$ have weight $n$. For any $a\in\Sigma$ there exists an $M_a\in\ZZ$ such that
\begin{equation}\label{exa}
f(z)=\sum_{\ell=0}^n\sum_{m=M_a}^\infty c_{\ell,m}^a[\log(z-a)]^\ell (z-a)^m,\quad c_{\ell,m}^a\in\CC,
\end{equation}
in a neighborhood of $z=a$. If $f\in\sH\sL_\Sigma^\CC$ then $M_a=0$.
\end{lem}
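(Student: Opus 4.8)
The plan is to reduce to the generating hyperlogarithms $L_w$ and then induct on the weight, using $\partial_z$ together with term-by-term integration.

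Since $\sH\sL_\Sigma$ is a free $\sO_\Sigma$-module on the $L_w$ (Theorem \ref{Brthm}(1)) and every coefficient $\phi\in\sO_\Sigma=\CC[z,((z-b)^{-1})_{b\in\Sigma}]$ has, by partial fraction decomposition, a convergent Laurent expansion at $a$ with a finite pole order (and no logarithm), it is enough to prove (\ref{exa}) with $M_a=0$ for each monomial $f=L_w$, with logarithmic degree bounded by $|w|$. Writing $f=\sum_w\phi_wL_w$ (a finite sum over words $w$ with $|w|\le n$) and multiplying the Laurent expansion of $\phi_w$ at $a$ by that of $L_w$ then yields (\ref{exa}) for $f$ with $M_a=\min_w M_a(\phi_w)$ and logarithmic degree $\le\max_w|w|=n$; here one only needs that the product of a convergent Laurent series and a convergent log-Laurent series is again a convergent log-Laurent series, with orders adding and logarithmic degrees adding. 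If $f\in\sH\sL_\Sigma^\CC$, all $\phi_w$ lie in $\CC$, so each $M_a(\phi_w)=0$ and hence $M_a=0$.

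For the monomials I would induct on $n=|w|$, the base case $L_e=1$ being trivial. For $w=w'a'$ with $a'\in\Sigma$, (\ref{pLw}) gives $\partial_zL_w=(z-a')^{-1}L_{w'}$, and by induction $L_{w'}$ has on the punctured disk $U=\{0<|z-a|<\rho\}$, $\rho\le\mathrm{dist}(a,\Sigma\setminus\{a\})$, lifted to the universal cover of $U$ to absorb the multivaluedness, a convergent expansion $\sum_{\ell=0}^{n-1}\sum_{m\ge0}c_{\ell,m}[\log(z-a)]^\ell(z-a)^m$. Since near $a$ the factor $(z-a')^{-1}$ is either holomorphic (when $a'\ne a$) or equals $(z-a)^{-1}$ (when $a'=a$), $\partial_zL_w$ has a convergent log-Laurent expansion on $U$ whose only possibly negative power of $z-a$ is $(z-a)^{-1}$, present only when $a'=a$. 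Integrating term by term via $\int(z-a)^m[\log(z-a)]^\ell\,\dd z=\sum_{j=0}^\ell e_j(z-a)^{m+1}[\log(z-a)]^j$ for $m\neq-1$ and $\int(z-a)^{-1}[\log(z-a)]^\ell\,\dd z=\frac1{\ell+1}[\log(z-a)]^{\ell+1}$ produces a function $\Phi$ of the form $\sum_{\ell=0}^{n}\sum_{m\ge0}c^a_{\ell,m}[\log(z-a)]^\ell(z-a)^m$, the degree rising to $n$ only through the second formula in the case $a'=a$; the elementary bound $|e_j|\le\ell!$, uniform in $m$, shows $\Phi$ converges on $U$.

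It remains to identify $L_w$ with $\Phi$. Both are holomorphic on the simply connected universal cover of $U\subset\CC\setminus\Sigma$ with the same derivative $(z-a')^{-1}L_{w'}$, hence differ by a constant, which we absorb into $c^a_{0,0}$; this establishes (\ref{exa}) for $L_w$ with $M_a=0$ and logarithmic degree $\le|w|$ and closes the induction. The only slightly delicate points are this last matching argument and the convergence bookkeeping in the term-by-term integration; the rest is the routine observation that log-Laurent series at $a$ form a differential ring closed under the operations used. (One could instead derive the expansion at $a$ from (\ref{ex0}) at $0$ by translating with {\bf I7} and splitting paths with {\bf I5}, but the cases where a letter of $w$ vanishes or $w$ ends in $a$ then require extra regularization bookkeeping that the inductive argument avoids.)
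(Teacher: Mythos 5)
Your proof is correct, but it follows a genuinely different route from the paper's. The paper first establishes the expansion (\ref{ex0}) at $0$ from the multiple-sum representation (\ref{zeta}) together with the shuffle identity (\ref{L0aw}), and then transports it to the point $a$ using the translation identity {\bf I7} (with $A=1$, $B=-a$) and path composition {\bf I5}; the general case $f\in\sH\sL_\Sigma$ is then handled, exactly as you do, by Laurent-expanding the coefficients in $\sO_\Sigma$, with $M_a$ the maximal pole order of those coefficients. You instead prove the monomial case by induction on the weight, integrating the differential equation (\ref{pLw}) term by term against the inductively known log-Laurent expansion of $L_{w'}$ and identifying the resulting series $\Phi$ with $L_w$ on the universal cover of the punctured disk via uniqueness of primitives up to a constant. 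What your argument buys is that it is self-contained and sidesteps the regularization bookkeeping that the translation-plus-path-splitting route requires when a letter of $w$ equals $a$ or when words begin in $0$ (a point you yourself flag); its only delicate steps, the branch-consistent matching on the cover and the uniform bound $|e_j|\le\ell!$ guaranteeing convergence after term-by-term integration, are handled adequately. What the paper's route buys is brevity, given that {\bf I5}, {\bf I7} and (\ref{ex0}) are already in place, and a more direct link between the expansion coefficients and values of lower-weight hyperlogarithms at $a$. A small remark: invoking Theorem \ref{Brthm}(1) for the reduction is harmless (it precedes the lemma, so there is no circularity), but strictly speaking you only need some representation $f=\sum_w\phi_wL_w$, which holds by definition of $\sH\sL_\Sigma$, not freeness.
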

\begin{proof}
For $f\in\sH\sL_\Sigma^\CC$ the result follows from (\ref{ex0}) by {\bf I7} with $A=1,B=-a$ and path concatenation {\bf I5} at $x=0$.
To obtain the expansion for a general $f\in\sH\sL_\Sigma$ one has to Laurent expand the coefficients in $\sO_\Sigma$. In this case $M_a$ is the
maximum pole order of the coefficients at $z=a$.
\end{proof}
Note that the values of the coefficients $c_{\ell,m}^a$ depend on the sheet on which the multivalued hyperlogarithm $f$ is expanded (see next subsection).

At $z=\infty$ there exists an analogous expansion in $1/z$ (use Section \ref{sectitintpar} and (\ref{ex0}) for $f(1/z)$),
\begin{equation}\label{exinfty}
f(z)=\sum_{\ell=0}^n\sum_{m=-\infty}^{M_\infty}c_{\ell,m}^\infty(\log z)^\ell z^m,\quad c_{\ell,m}^\infty\in\CC,
\end{equation}
with $M_\infty=0$ if $f\in\sH\sL_\Sigma^\CC$. For $a\in\CC\setminus\Sigma$ the function $f\in\sH\sL_\Sigma$ is holomorphic at $z=a$.

\subsection{Monodromy}\label{sectmono}
From now on we assume that the path of integration in $L_w(z)$ is a straight line from $0^+$ to $z$ with regularizations (\ref{0k}) and (\ref{ak}) if necessary.

For $z\in\CC\backslash\Sigma$ fix a cycle basis $\{\gamma_a,a\in\Sigma\}$ of the homotopy group of $\CC\backslash\Sigma$ where a cycle $\gamma_a=\gamma_{za}\gamma_{aa}\gamma_{za}^{-1}
\subset\CC\backslash\Sigma$ goes from $z$ to a regular neighborhood of $a$ (by $\gamma_{za}$), encircles $a$ in counter-clockwise direction (by $\gamma_{aa}$), and goes back to $z$
following $\gamma_{za}$ with reversed orientation. Define
\begin{equation}
\sM_aL_w(z)=\gamma_a^\ast L_w(z),
\end{equation}
which means we analytically continue $L_w(z)$ from $z$ back to $z$ along $\gamma_a$. Note that, by construction, $\sM_a$ is an algebra homomorphism.
The operators $\sM_a$ generate a representation of the homotopy group of $\CC\backslash\Sigma$. A change of the cycle basis amounts a conjugation of the generators $\sM_a$.

Let us first study $\sM_0$. For $\gamma_{z0}$ we use the straight line from $z$ to a small positive $\epsilon$.
We assume that there are no singularities on this path (otherwise we have to move $z$ slightly). We get $\gamma_0^\ast L_0(z)=\gamma_0^\ast \log(z)=L_0(z)+2\pi\ii$.
With (\ref{0}) we find
\begin{equation}\label{mono0}
\sM_0L_{0^{\{n\}}}(z)=\sum_{j=0}^n\frac{(2\pi\ii)^j}{j!}L_{0^{\{n-j\}}}(z).
\end{equation}
With (\ref{zeta}) it is clear that $\sM_0$ is trivial on $L_w(z)$ for all words $w$ that do not begin in 0.
\begin{lem}
For any word $0^{\{k\}}w$, $w\in\Sigma^\ast$ with $\delta_0^{\mathrm{l}}w=0$, Equation (\ref{mono0}) generalizes to (Lemma 2.2 in \cite{gf})
\begin{equation}\label{mono0a}
\sM_0L_{0^{\{k\}}w}(z)=\sum_{j=0}^k\frac{(2\pi\ii)^j}{j!}L_{0^{\{k-j\}}w}(z).
\end{equation}
If we pull $\sM_0$ back to $\Sigma^\ast\otimes\sO_\Sigma$ using the isomorphism eval we get
\begin{equation}\label{mono0aw}
\sM_00^{\{k\}}w\otimes\phi=\sum_{j=0}^k0^{\{k-j\}}w\otimes\frac{(2\pi\ii)^j\phi}{j!}.
\end{equation}
\end{lem}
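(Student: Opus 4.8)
The plan is to deduce the general monodromy formula (\ref{mono0a}) from the base case (\ref{mono0}) by exploiting the decomposition (\ref{0aw}) of words that begin in $0$, together with two elementary properties of the algebra homomorphism $\sM_0$: it satisfies $\sM_0\log z=\log z+2\pi\ii$, and it acts trivially on $L_u(z)$ for every word $u$ that does not begin in $0$. If $w=e$ the claim (\ref{mono0a}) is literally (\ref{mono0}), so from now on I may take $w=aw'$ with $a\in\Sigma\setminus\{0\}$ and $w'\in\Sigma^\ast$.

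First I would write out the explicit expansion (\ref{L0aw}),
$$
L_{0^{\{k\}}aw'}(z)=\sum_{i=0}^k(-1)^i\frac{(\log z)^{k-i}}{(k-i)!}L_{a[0^{\{i\}}\sha w']}(z),
$$
and observe that, after expanding the shuffle $0^{\{i\}}\sha w'$, the right-hand side is an $\sO_\Sigma$-combination of $L_u(z)$ with words $u$ that all begin in $a\neq0$; hence $\sM_0$ fixes each such $L_u(z)$ by the remark following (\ref{mono0}), which itself rests on the convergent log-free power series (\ref{zeta}) at the origin. Applying the ring homomorphism $\sM_0$ to the displayed identity, replacing $\log z$ by $\log z+2\pi\ii$ only in the factors $\tfrac{(\log z)^{k-i}}{(k-i)!}$, expanding $(\log z+2\pi\ii)^{k-i}$ by the binomial theorem, and interchanging the two resulting finite sums, I obtain
$$
\sM_0L_{0^{\{k\}}aw'}(z)=\sum_{j=0}^k\frac{(2\pi\ii)^j}{j!}\left[\sum_{i=0}^{k-j}(-1)^i\frac{(\log z)^{k-j-i}}{(k-j-i)!}L_{a[0^{\{i\}}\sha w']}(z)\right].
$$
The bracketed sum is, by (\ref{L0aw}) with $k$ replaced by $k-j$, exactly $L_{0^{\{k-j\}}aw'}(z)$, and this proves (\ref{mono0a}).

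For (\ref{mono0aw}) I would transport (\ref{mono0a}) along the isomorphism $\eval$ of Theorem \ref{Brthm}(1), through which $\sM_0$ on $\Sigma^\ast\otimes\sO_\Sigma$ is defined. A coefficient $\phi\in\sO_\Sigma$ is a rational function, hence single-valued, so $\sM_0$ fixes it; therefore
$$
\sM_0\,\eval(0^{\{k\}}w\otimes\phi)=\phi(z)\,\sM_0L_{0^{\{k\}}w}(z)=\eval\Bigl(\textstyle\sum_{j=0}^k 0^{\{k-j\}}w\otimes\tfrac{(2\pi\ii)^j}{j!}\phi\Bigr),
$$
and applying $\eval^{-1}$ yields (\ref{mono0aw}). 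Since any word is uniquely of the form $0^{\{k\}}w$ with $k$ the number of leading zeros (so $\delta_0^{\mathrm{l}}w=0$), this in fact computes $\sM_0$ on all of $\Sigma^\ast\otimes\sO_\Sigma$.

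I expect the only point needing care to be the justification that $\sM_0$ is trivial on $L_u(z)$ when $u$ does not begin in $0$, i.e.\ that such an $L_u$ is holomorphic across $z=0$; this is precisely the content of (\ref{zeta}) (equivalently, the $M_0=0$, log-free case of Lemma \ref{explem}). Everything else is bookkeeping: the binomial expansion of $(\log z+2\pi\ii)^{k-i}$ and the interchange of the order of summation over the pairs $(i,j)$ with $i+j\le k$, where one must keep the factorials $\tfrac{1}{j!\,(k-i-j)!}$ matched against $\tfrac{1}{(k-j-i)!}$ in the second application of (\ref{L0aw}).
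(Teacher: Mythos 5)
Your proof is correct and takes essentially the same route as the paper: both rest on the decomposition (\ref{0aw}) (you use its evaluated form (\ref{L0aw})), the triviality of $\sM_0$ on words not beginning in $0$, and the binomial/monodromy action on the leading $0$-string followed by a swap of the double sum and resummation via (\ref{0aw})/(\ref{L0aw}). The only cosmetic difference is that you work on the analytic side and transport the result to $\Sigma^\ast\otimes\sO_\Sigma$ through eval at the end, whereas the paper computes directly in the shuffle algebra, proving (\ref{mono0aw}) first.
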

\begin{proof}
With (\ref{mono0}) it suffices to show (\ref{mono0aw}) for $w=au$ with $0\neq a\in\Sigma$. From (\ref{0aw}) and (\ref{mono0}) we get for the shuffle homomorphism $\sM_0$,
$$
\sM_0 0^{\{k\}}au\otimes\phi=\sM_0\sum_{i=0}^k(-1)^i0^{\{k-i\}}\sha a[0^{\{i\}}\sha w]\otimes\phi
=\sum_{i=0}^k\sum_{j=0}^{k-i}(-1)^i0^{\{k-i-j\}}\sha a[0^{\{i\}}\sha w]\otimes\frac{(2\pi\ii)^j\phi}{j!}.
$$
We swap the sums and use (\ref{0aw}) again to evaluate the sum over $i$ yielding (\ref{mono0aw}).
\end{proof}
It is convenient to write $\sM_0$ as an exponential,
\begin{equation}\label{monom0w}
\sM_0=\exp(2\pi\ii m_0),\quad\text{with } m_0=\delta_0^{\mathrm{l}}\otimes1.
\end{equation}
The {\em infinitesimal monodromy} $m_0$ is a nilpotent derivation on $\Sigma^\ast\otimes\sO_\Sigma$.
From (\ref{monom0w}) it is clear that $\sM_0$ is unipotent.

At $z=0$ the right hand side of (\ref{mono0a}) vanishes if $w\neq e$. We get for the regularized limit
\begin{equation}\label{mono00}
\eval_0\sM_0L_w(z)=\left\{\begin{array}{cl}\frac{(2\pi\ii)^n}{n!}&\text{if }w=0^{\{n\}},\\0&\text{otherwise.}\end{array}\right.
\end{equation}

For general $a$ we choose the path $\gamma_a=\gamma_{\epsilon z}^{-1}\gamma_{\epsilon a}\gamma_{aa}\gamma_{\epsilon a}^{-1}\gamma_{\epsilon z}$ to have five sections:
From $z$ straight to $\epsilon$, from $\epsilon$ straight to $a$, a tiny circle around $a$ in counter-clockwise direction, from $a$ straight back to $\epsilon$,
and from $\epsilon$ straight back to $z$. We assume that $\epsilon>0$ is small and that there are no singularities on this path.

In the limit $\epsilon\to0^+$ the first part of $\gamma_a$ cancels the path $\gamma_{0z}$ from $0^+$ to $z$ of $L_w(z)$.
By path concatenation we get $\gamma_{0z}\gamma_a\to\gamma_{0a}\gamma_{aa}\gamma_{0a}^{-1}\gamma_{0z}$.
Translating (\ref{mono00}) to $z=a$ the local monodromy operator $\gamma_{aa}$ maps sequences $a^{\{k\}}$ to $(2\pi\ii)^k/k!$ and is zero otherwise. With {\bf I4} and {\bf I5} we obtain
\begin{equation}\label{monoMaw}
\sM_a (w\otimes1)=\sum_{k\geq0}\sum_{w=ua^{\{k\}}vx}\frac{(-1)^{|v|}(2\pi\ii)^k}{k!}L_u(a)L_{\widetilde{v}}(a)\,x\otimes1
\end{equation}
for words $u,v,x\in\Sigma^\ast$. The infinitesimal monodromy $m_a$ is the $2\pi\ii$-coefficient,
\begin{equation}\label{monomaw}
\sM_a=\exp(2\pi\ii m_a),\quad\text{with}\quad m_a (w\otimes1)=\sum_{w=uavx}(-1)^{|v|}L_u(a)L_{\widetilde{v}}(a)\,x\otimes1.
\end{equation}
If $w$ contains a sequence of at least two letters $a$ then (\ref{monoMaw}) and (\ref{monomaw}) have singular terms which are defined by regularitzation, see Section \ref{sectreg}.
One can use the reversed version of (\ref{0aw}) to derive a fully regular formula for the monodromy of a regular word (i.e.\ a word that does not begin in 0). So, the monodromy
of regular words does not depend on the regularization prescription.
  
To see that $\sM_a=\exp(2\pi\ii m_a)$ one has to use that $\sum_{w=vu}(-1)^{|v|}L_{\widetilde{v}}(a)L_u(a)=\delta_{w,e}$ by path-reversal {\bf I4}, path composition {\bf I5},
and the triviality of the constant path {\bf I3} ($\delta_{a,b}=1$ if $a=b$ and zero otherwise).
The sum in $m_a^k$ collapses from $w=u_1av_1u_2av_2\ldots u_kav_kx$ to $w=u_1a^{\{k\}}v_kx$.

Likewise, we see that the evaluation at $z=a$ cuts off a letter $a$ from the right,
\begin{equation}\label{evalmonoma}
\eval_am_a=\eval_a\delta_a^{\mathrm{r}}\otimes1.
\end{equation}
With regularization (\ref{0k}) it is also clear that (\ref{monomaw}) reduces to (\ref{monom0w}) in the case $a=0$.

We see that $\partial_z$ and $m_a$ commute,
\begin{equation}\label{pzma}
[\partial_z,m_a]=0.
\end{equation}
With eval we lift (\ref{monomaw}) to a formula in $\sH\sL_\Sigma$,
\begin{equation}\label{monomaLw}
m_aL_w(z)=\sum_{w=uavx}(-1)^{|v|}L_u(a)L_{\widetilde{v}}(a)L_x(z).
\end{equation}
Modulo lower weights we have $m_a=\delta_a^{\mathrm{l}}\otimes1$, i.e.\
\begin{equation}\label{mamod}
m_aL_{bw}(z)=\delta_{a,b}L_w(z)+\text{terms of lower weight.}
\end{equation}
For later use we rewrite (\ref{evalmonoma}) using the residue operator on $\sH\sL_\Sigma$,
\begin{equation}\label{evalm}
\eval_am_a=\res_a\partial_z.
\end{equation}

\subsection{A commutative hexagon}\label{comhex1}
We have defined two derivations on $\sH\sL_\Sigma$, namely $\partial_z$ and $m_a$.
Because hyperlogarithms are analytic the derivative $\partial_\zz$ with respect to the complex conjugate of $z$ is zero. Later, when we define GSVHs,
we sacrifice analyticity for single-valuedness. This means that $m_a$ becomes trivial whereas $\partial_\zz$ becomes non-trivial.
The main tool for the construction of GSVHs will be a commutative hexagon which has a (simpler) analogue in the analytic case with $\partial_\zz$ replaced by $m_a$. We present this
analogue as introduction to GSVHs. An application of the commutative hexagon for hyperlogarithms is integration in the motivic $f$-alphabet \cite{fhlog}.

Before we prove commutativity of the hexagon we need to define a generic infinitesimal monodromy (covering all points $a\in\Sigma$). To this end we extend
$\sH\sL_\Sigma$ to $\Sigma^\ast\otimes\sH\sL_\Sigma$ where the first factor keeps track of the puncture $a$.
By $f\mapsto e\otimes f$ we embed $\sH\sL_\Sigma$ into $\Sigma^\ast\otimes\sH\sL_\Sigma$.
(A model for $\Sigma^\ast\otimes\sH\sL_\Sigma$ is $\Sigma^\ast\otimes\Sigma^\ast\otimes\sO_\Sigma$ which we do not use here.) We define
\begin{equation}
m:\begin{array}{rcl}
\Sigma^\ast\otimes\sH\sL_\Sigma&\longrightarrow&\Sigma^\ast\otimes\sH\sL_\Sigma\\
w\otimes f(z)&\mapsto&\sum_{a\in\Sigma}wa\otimes m_af(z).
\end{array}
\end{equation}
The derivative $\partial_z$ only acts on the right factor of $\Sigma^\ast\otimes\sH\sL_\Sigma$.

\begin{prop}\label{propseqm}
Let $\Sigma^\ast_>=\Sigma^\ast\backslash\{e\}$ be the set of non-empty words. The sequence
$$
0\longrightarrow\Sigma^\ast\otimes\sO_\Sigma\longrightarrow\Sigma^\ast\otimes\sH\sL_\Sigma\stackrel{m}{\longrightarrow}\Sigma^\ast_>\otimes\sH\sL_\Sigma\longrightarrow0
$$
is exact. I.e.\ the kernel of $m$ in $\Sigma^\ast\otimes\sH\sL_\Sigma$ is $\Sigma^\ast\otimes\sO_\Sigma$ and for every $x\in\Sigma^\ast_>\otimes\sH\sL_\Sigma$ there exists an
$X\in\Sigma^\ast\otimes\sH\sL_\Sigma$ such that $mX=x$.
\end{prop}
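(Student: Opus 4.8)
The plan is to mimic the structure of the proof of Theorem \ref{Brthm}: establish exactness at the left (the kernel computation) by a minimal-counterexample argument using residues, and exactness at the right (surjectivity of $m$) by an induction on the order $\prec$ exploiting the ``leading term'' behaviour $m_aL_{bw}=\delta_{a,b}L_w+\text{(lower weight)}$ from \eqref{mamod}.

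First I would verify that the maps in the sequence are well-defined and that $m$ lands in $\Sigma^\ast_>\otimes\sH\sL_\Sigma$: each summand $wa\otimes m_af$ carries a non-empty word $wa$, so the image has no $e$-component; and $m$ is linear. Injectivity of the inclusion $\Sigma^\ast\otimes\sO_\Sigma\hookrightarrow\Sigma^\ast\otimes\sH\sL_\Sigma$ is immediate from Theorem \ref{Brthm}(1), since eval is an isomorphism and $\sO_\Sigma=\eval(e\otimes\sO_\Sigma)\subseteq\sH\sL_\Sigma$. For the kernel of $m$: clearly $m$ annihilates $e\otimes\phi$ for $\phi\in\sO_\Sigma$ because $m_a\phi=0$ (the infinitesimal monodromy vanishes on $\sO_\Sigma$, as $\sO_\Sigma$ is single-valued), and more generally $m(w\otimes\phi)=\sum_a wa\otimes m_a\phi=0$ for $w\in\Sigma^\ast$ and $\phi\in\sO_\Sigma$. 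Conversely, suppose $X=\sum_w w\otimes f_w\in\Sigma^\ast\otimes\sH\sL_\Sigma$ lies in $\ker m$ but is not of the asserted form; choose $X$ of minimal weight $n=\max_w(|w|+|f_w|)$ with the fewest top-order words (i.e.\ minimal with respect to a $\prec$-type order on $\Sigma^\ast\otimes\sH\sL_\Sigma$). Pick a top-order term $w\otimes f_w$ and a letter $b$ such that $m_bf_w\ne 0$ at top weight; using \eqref{mamod}, $m(w\otimes f_w)$ contributes $wb\otimes(\delta_b\text{-leading part of }f_w)$ plus lower-weight noise, and — because distinct words $w$ produce distinct words $wb$ — these top-order contributions cannot be cancelled by other terms of $X$. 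Hence all $f_w$ must be monodromy-free at top weight, i.e.\ by Theorem \ref{Brthm} they lie in $\sO_\Sigma$ modulo lower weight; peeling this off and applying minimality finishes the kernel computation.

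Next, surjectivity. Given $x=\sum_{w}w\otimes g_w\in\Sigma^\ast_>\otimes\sH\sL_\Sigma$ I would construct a preimage by downward induction on $\prec$. Write $x$ in terms of its top-order words $w=bu$ (each non-empty, with leading letter $b$). By Theorem \ref{Brthm}(2), each $g_w$ has a primitive; more to the point, I want a hyperlogarithm $h_w$ with $m_b h_w$ equal to the top-order part of $g_w$ up to lower weight — take $h_w=\eval(bu\otimes g_w)$-style lifts, i.e.\ prepend the letter $b$: by \eqref{mamod}, $m_b(L_{b v})=L_v+\text{lower weight}$, so setting $X_0=\sum_{w=bu}u\otimes(\text{lift of }g_w\text{ with a }b\text{ prepended})$ gives $m X_0 = x + (\text{strictly }\prec x)$. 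Then $x-mX_0\prec x$, and by the induction hypothesis it has a preimage $X_1$, so $X=X_0+X_1$ works; the base case $x=0$ is trivial.

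The main obstacle I anticipate is the bookkeeping at the ``lower weight'' level: the formula \eqref{monomaLw} for $m_a$ is not simply $\delta_a^{\mathrm l}\otimes 1$ but carries a tail of terms $L_u(a)L_{\widetilde v}(a)L_x(z)$, and these tails interact across the different punctures $a\in\Sigma$ and across different words in $X$. To run the minimal-counterexample and the inductive-lift arguments cleanly I will need a precise statement that, at fixed top weight, $m$ acts as $\sum_a\delta_a^{\mathrm r}\otimes\delta_a^{\mathrm l}$ on the word factors — with the $f$-coefficients untouched — so that top-order cancellation is governed purely by combinatorics of words (the map $w\otimes f\mapsto\sum_a wa\otimes(\delta_a^{\mathrm l}\text{-part of }f)$ is ``upper triangular'' and injective on top order). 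Verifying that the lower-weight corrections never spoil this, and that the order $\prec$ genuinely decreases at each step, is the delicate part; everything else is a routine adaptation of the proof of Theorem \ref{Brthm}.
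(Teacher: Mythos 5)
Your overall strategy is the one the paper uses: exploit the leading-order formula (\ref{mamod}) for $m_a$ together with the freeness statement of Theorem \ref{Brthm}~(1) to kill the top-weight part, and induct on weight. The kernel half goes through essentially as you describe; in fact it can be streamlined, since the words $wa$ for distinct pairs $(w,a)$ are distinct, so $mX=0$ already forces $m_af_w=0$ for every word $w$ and every letter $a$ separately, and then (\ref{mamod}) plus the unique representation of the top-weight part of $f_w$ gives $f_w\in\sO_\Sigma$ with no interaction between different words.

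The surjectivity construction, however, contains a concrete error in the word bookkeeping, and as written the step fails. The map $m$ concatenates the letter on the \emph{right} of the word factor, $m(u\otimes f)=\sum_a ua\otimes m_af$, while (\ref{mamod}) strips the \emph{leftmost} letter of the hyperlogarithm word. You decompose a target word by its \emph{leading} letter, $w=bu$, and set $X_0=\sum_{w=bu}u\otimes(\text{$g_w$ with $b$ prepended})$; but then the top-order part of $mX_0$ is supported on the words $ub$, not $bu$, so $mX_0-x$ is not of lower order and the induction never starts. (Your slogan that $m$ acts at top order as $\sum_a\delta_a^{\mathrm{r}}\otimes\delta_a^{\mathrm{l}}$ reflects the same confusion: $\delta_a^{\mathrm{r}}$ deletes a trailing letter, whereas $m$ appends one.) The fix is immediate and is exactly what the paper does: write the target term as $ua\otimes\phi L_v$ with $a$ the \emph{last} letter of the word and take $u\otimes\phi L_{av}$ as the leading lift; by (\ref{monomaLw}) one has $m(u\otimes\phi L_{av})=ua\otimes\phi L_v+y$ where every term of $y$ has hyperlogarithm weight at most $|v|-1$ (this also settles the ``delicate part'' you postponed, since only the right-factor weight needs to decrease), and induction on that weight completes the proof as in the paper.
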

\begin{proof}
We first show that the kernel of $m$ is $\Sigma^\ast\otimes\sO_\Sigma$. It is clear that $m\,\Sigma^\ast\otimes\sO_\Sigma=0$.
Assume that there exists an $x\in\Sigma^\ast\otimes\sH\sL_\Sigma$ with $mx=0$ and weight $n\geq1$ (in the right factor). We uniquely write
$x=\sum_{a\in\Sigma,u,v\in\Sigma^\ast}u\otimes\phi_{uav}(z)L_{av}(z)$ with $0\neq\phi_{uav}\in\sO_\Sigma$.
By (\ref{mamod}) we have $0=mx=\sum_{a,u,v}ua\otimes\phi_{uav}(z)L_v(z)$ modulo weights $\leq n-2$. By Theorem \ref{Brthm} (1) we get
$\sum_{a,u}ua\otimes\phi_{uav}(z)=0$ for all $v$ of weight $n-1$. Hence $\phi_{uav}(z)=0$ which is a contradiction.

We prove the existence of $X$ by induction over the weight $n$ of $x\in\Sigma^\ast_>\otimes\sH\sL_\Sigma$. For $x=0$ we set $X=0$.
For $n\geq0$ we use linearity to assume that $x=ua\otimes\phi(z)L_v(z)$ with $a\in\Sigma,u,v\in\Sigma^\ast,\phi\in\sO_\Sigma,L_v\in\sH\sL_\Sigma$.
We set $X=u\otimes\phi(z)L_{av}(z)-Y$ and get from (\ref{mamod}) that $mX=x+y-mY$ with $y$ of weight $\leq n-1$. By induction there exists an $Y$ such that $mY=y$ which completes the proof.
\end{proof}

Consider the $\CC$-algebra of hyperlogarithms with constant coefficients $\sH\sL_\Sigma^\CC$.
The derivative $\partial_z$ maps $\sH\sL_\Sigma^\CC$ onto $\partial_z\sH\sL_\Sigma^\CC$, the $\CC$-vector-space of hyperlogarithms with simple pole coefficients.
Likewise $m$ maps $\sH\sL_\Sigma^\CC$ to $m\sH\sL_\Sigma^\CC=\Sigma\otimes\sH\sL_\Sigma^\CC$, the $\CC$-vector-space of hyperlogarithms with single letters as left factors.
We also define $\partial_zm\sH\sL_\Sigma^\CC=m\partial_z\sH\sL_\Sigma^\CC=\Sigma\otimes\partial_z\sH\sL_\Sigma^\CC$, see (\ref{pzma}).

On $\sH\sL_\Sigma$ we have the residue $\res_a:\sH\sL_\Sigma\rightarrow\CC$. We define the projection $\pi_{\partial_z}$ onto the residue-free part of $\sH\sL_\Sigma$,
\begin{equation}\label{pip}
\pi_{\partial_z}:\begin{array}{rcl}
\sH\sL_\Sigma&\rightarrow&\sH\sL_\Sigma\\
f(z)&\mapsto&f(z)-\sum_{a\in\Sigma}\frac{\res_af}{z-a}
\end{array}
\end{equation}
with trivial extension to $\Sigma^\ast\otimes\sH\sL_\Sigma$. The projection $\pi_m$ is the identity on $e\otimes\sH\sL_\Sigma$. On $\Sigma^\ast_>\otimes\sH\sL_\Sigma$ we define
\begin{equation}
\pi_m:\begin{array}{rcl}
\Sigma^\ast_>\otimes\sH\sL_\Sigma&\rightarrow&\Sigma^\ast_>\otimes\sH\sL_\Sigma\\
wa\otimes f(z)&\mapsto&wa\otimes (f(z)-f(a)).
\end{array}
\end{equation}

We have the obvious identities
\begin{equation}\label{mpi}
m\pi_{\partial_z}=m\quad\text{and}\quad\partial_z\pi_m=\partial_z.
\end{equation}

Finally, we define on $\partial_z\sH\sL_\Sigma^\CC$ and on $m\sH\sL_\Sigma^\CC$ an integral and an inverse of $m$ (respectively) which maps into the subspace of $\sH\sL_\Sigma^\CC$
which evaluates to 0 at $z=0$ (possibly after regularization, Section \ref{sectreg}),
\begin{eqnarray}\label{int0def}
\int_0\dd z:\partial_z\sH\sL_\Sigma^\CC\rightarrow\sH\sL_\Sigma,&&f\mapsto F(z)-F(0)\quad\text{if}\quad \partial_zF=f,\nonumber\\
m^{-1}:m\sH\sL_\Sigma\rightarrow\sH\sL_\Sigma,&& x\mapsto X(z)-X(0)\quad\text{if}\quad mX=x.
\end{eqnarray}
The operators $\int_0\dd z$ and $m^{-1}$ exist and are well-defined by Theorem \ref{Brthm} (2) and by Proposition \ref{propseqm}.
The integral $\int_0\dd z$ acts on the right factor of $\Sigma^\ast\otimes\partial_z\sH\sL_\Sigma^\CC$.

\begin{figure}
\begin{center}
\fcolorbox{white}{white}{
  \begin{picture}(160,135) (19,-11)
    \SetWidth{1.0}
    \SetColor{Black}
    \Line[arrow,arrowpos=1,arrowlength=5,arrowwidth=2,arrowinset=0.2](48,38)(48,56)
    \Line[arrow,arrowpos=1,arrowlength=5,arrowwidth=2,arrowinset=0.2](128,38)(128,56)
    \Line[arrow,arrowpos=1,arrowlength=5,arrowwidth=2,arrowinset=0.2](54,17)(71,7)
    \Line[arrow,arrowpos=1,arrowlength=5,arrowwidth=2,arrowinset=0.2](122,77)(105,87)
    \Line[arrow,arrowpos=1,arrowlength=5,arrowwidth=2,arrowinset=0.2](105,7)(122,17)
    \Line[arrow,arrowpos=1,arrowlength=5,arrowwidth=2,arrowinset=0.2](54,77)(71,87)
    \Text(42,24)[lb]{\Black{$\partial_z\sH\sL_\Sigma^\CC$}}
    \Text(122,24)[lb]{\Black{$m\sH\sL_\Sigma^\CC$}}
    \Text(122,65)[lb]{\Black{$m\sH\sL_\Sigma^\CC$}}
    \Text(42,65)[lb]{\Black{$\partial_z\sH\sL_\Sigma^\CC$}}
    \Text(32,43)[lb]{\Black{$\pi_{\partial_z}$}}
    \Text(133,43)[lb]{\Black{$\pi_m$}}
    \Text(40,84)[lb]{\Black{$\int_0\dd z$}}
    \Text(118,84)[lb]{\Black{$m^{-1}$}}
    \Text(69,-5)[lb]{\Black{$m\partial_z\sH\sL_\Sigma^\CC$}}
    \Text(78,88)[lb]{\Black{$\sH\sL_\Sigma^\CC$}}
    \Text(52,2)[lb]{\Black{$m$}}
    \Text(117,0)[lb]{\Black{$\int_0\dd z$}}
  \end{picture}
}
\end{center}
\caption{The inductive construction of hyperlogarithms with a commutative hexagon.}
\label{fig:hlog}
\end{figure}
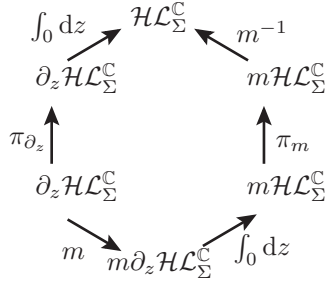

\begin{thm}\label{thmhexm}
The diagram in Figure \ref{fig:hlog} commutes.
\end{thm}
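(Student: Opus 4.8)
The plan is to verify commutativity pointwise. Fix $f\in\partial_z\sH\sL_\Sigma^\CC$; by Theorem \ref{Brthm}(2) there is a unique primitive $F\in\sH\sL_\Sigma^\CC$ with $\partial_zF=f$ and $F(0)=0$. I claim that both composites of Figure \ref{fig:hlog} from the lower-left vertex $\partial_z\sH\sL_\Sigma^\CC$ to the top vertex $\sH\sL_\Sigma^\CC$ send $f$ to
$$G:=F-\sum_{a\in\Sigma}(\res_af)\,L_a .$$
For the path up the left edge and across the top this is immediate: $\pi_{\partial_z}f=f-\sum_a(\res_af)/(z-a)$ still lies in $\partial_z\sH\sL_\Sigma^\CC$ (since $1/(z-a)=\partial_zL_a$), the function $G$ is a primitive of it because $\partial_zL_a=1/(z-a)$, and $G(0)=F(0)-\sum_a(\res_af)L_a(0)=0$ by the regularization $L_a(0)=0$; hence $\int_0\dd z\,\pi_{\partial_z}f=G$.

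For the long path I would proceed in three moves. First, $[\partial_z,m]=0$ (from \eqref{pzma} together with the fact that $\partial_z$ acts only on the right tensor factor) gives $mf=\partial_z(mF)$, so, applying $\int_0\dd z$ on the right factor, $\int_0\dd z\,(mf)=\sum_a a\otimes\big(m_aF-\eval_0 m_aF\big)$. Applying $\pi_m$ replaces the $a$-component by $a\otimes\big(m_aF-\eval_a m_aF\big)$, and by \eqref{evalm} one has $\eval_a m_a=\res_a\partial_z$, so $\pi_m\int_0\dd z\,(mf)=\sum_a a\otimes(m_aF-\res_af)$. Second — and this is the key identity — the one-letter word $a$ admits only the trivial decomposition $a=uavx$ with $u=v=x=e$, so \eqref{monomaLw} collapses to $m_bL_a(z)=\delta_{a,b}$; consequently $m$ sends $G$ to exactly $\sum_b b\otimes(m_bF-\res_bf)$. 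Third, since $m^{-1}$ is well-defined by Proposition \ref{propseqm} and returns the preimage normalized to vanish at $0$, we get $m^{-1}\big(\pi_m\int_0\dd z\,(mf)\big)=G-\eval_0G=G$, using again $L_a(0)=F(0)=0$. Thus both composites agree and the hexagon commutes.

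A slightly slicker, equivalent route would avoid choosing $F$: both composites land in $\sH\sL_\Sigma^\CC$, an element of which is determined by its $\partial_z$-image and its value at $0$ (Theorem \ref{Brthm}(2)); both outputs vanish at $0$ by construction of $\int_0\dd z$ and $m^{-1}$; and a short chase using $\partial_z\pi_m=\partial_z$, $\partial_z\int_0\dd z=\id$, $mm^{-1}=\id$, the identification of $\ker m$ with $\sO_\Sigma$ (Proposition \ref{propseqm}), and $\eval_a m_a=\res_a\partial_z$ shows that both have $\partial_z$-image $\pi_{\partial_z}f$. Either way the computation is essentially mechanical once $m_bL_a=\delta_{a,b}$, $\eval_a m_a=\res_a\partial_z$ and $[\partial_z,m]=0$ are in hand; the only real obstacle is the bookkeeping of well-definedness — checking that each arrow maps into the asserted space (e.g.\ that $m_a$ preserves $\sH\sL_\Sigma^\CC$ and simple-pole coefficients, and that $\pi_m$, $\pi_{\partial_z}$ land where claimed) — and confirming that all the evaluations $L_w(a)$ implicit in $F$ and in the residues are the shuffle-regularized ones of Section \ref{sectreg}, so that \eqref{evalm} applies verbatim.
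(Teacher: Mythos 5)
Your proof is correct, and it is organized differently from the paper's. You verify commutativity by computing both composites explicitly and exhibiting the common value $G=F-\sum_{a\in\Sigma}(\res_af)L_a$, where $F$ is the primitive of $f$ normalized by $F(0)=0$; the decisive inputs are $[\partial_z,m]=0$ (\ref{pzma}), the relation $\eval_am_a=\res_a\partial_z$ (\ref{evalm}), the regularized values $L_a(0)=F(0)=0$, and the weight-one evaluation $m_bL_a=\delta_{a,b}$, which is the exact special case of (\ref{monomaLw}) that the paper instead handles through (\ref{mamod}) together with Proposition \ref{propseqm}. The paper argues indirectly: setting $f_1=\int_0\pi_{\partial_z}f\,\dd z$ and $f_2=m^{-1}\pi_m\int_0 mf\,\dd z$, it uses (\ref{mpi}) and Theorem \ref{Brthm} (2) to show $\partial_zm(f_1-f_2)=0$, hence $f_1-f_2=\sum_ac_aL_a+d$, and then kills $d$ by evaluation at $0$ and each $c_b$ by applying $\eval_bm_b$, exploiting that $\partial_zf_1$ is residue-free and $mf_2$ lies in the image of $\pi_m$ — essentially the "slicker route" you sketch at the end. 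What your direct computation buys is an explicit closed formula for the hexagon's output (both paths produce $F-\sum_a(\res_af)L_a$), which makes transparent exactly how the projections compensate the dropped residues; what the paper's difference argument buys is brevity and independence of any choice of primitive, at the cost of not displaying the answer. The bookkeeping points you flag (that $m_a$ preserves constant coefficients and simple-pole form, that $m^{-1}$ is well defined on $m\sH\sL_\Sigma^\CC$ because the kernel there is $\CC$, and that all evaluations are the shuffle-regularized ones) are indeed the only well-definedness checks needed, and they are consistent with the paper's setup in Section \ref{comhex1}.
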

\begin{proof}
Let $f\in\partial_z\sH\sL_\Sigma^\CC$ and $f_1=\int_0\pi_{\partial_z} f\dd z$, $f_2=m^{-1}\pi_m\int_0 mf\dd z$. We need to show that $f_1=f_2$.

By (\ref{pzma}) and (\ref{mpi}) we have $\partial_zm(f_1-f_2)=m\pi_{\partial_z}f-\partial_z\pi_m\int_0mf\dd z=mf-mf=0$.
By Theorem \ref{Brthm} (2) we get $m(f_1-f_2)\in\Sigma\otimes\CC$. From (\ref{mamod}) and Proposition \ref{propseqm} we get
$$
f_1(z)-f_2(z)=\sum_{a\in\Sigma}c_aL_a(z)+d\quad\text{with }c_a,d\in\CC.
$$
Because $f_1(0)=f_2(0)=0$ we have $d=0$. We choose $b\in\Sigma$ and apply $\eval_bm_b$ to the above equation,
$$
[m_b(f_1-f_2)](b)=c_b.
$$
By (\ref{evalm}) we have $(m_bf_1)(b)=\res_b\partial_zf_1=0$ because $\partial_zf_1=\pi_{\partial_z}f$ is in the image of $\pi_{\partial_z}$.
Likewise $(m_bf_2)(b)=0$ because $mf_2$ is in the image of $\pi_m$. We find $c_b=0$ for all $b\in\Sigma$ and hence $f_1=f_2$.
\end{proof}

The following examples illustrate the role of $\pi_{\partial_z}$ and $\pi_m$.
\begin{ex}\label{1zex}
Let $f(z)=z^{-1}\in\partial_z\sH\sL_{\{0\}}^\CC$. Then $\int_0f(z)\dd z=L_0(z)=\log(z)$ whereas $mf(z)=0$. The hexagon commutes because $\pi_{\partial_z} z^{-1}=0$.
\end{ex}
\begin{ex}\label{Labex}
Let $f(z)=L_a(z)/(z-b)\in\partial_z\sH\sL_{\{a,b\}}^\CC$ for $a\neq b\in\CC$. Then $\pi_{\partial_z} f(z)=[L_a(z)-L_a(b)]/(z-b)$ and
$$
\int_0\pi_{\partial_z}f(z)\dd z=L_{ab}(z)-L_a(b)L_b(z)\in\sH\sL_{\{a,b\}}^\CC.
$$
On the other hand we have $mf(z)=a\otimes (z-b)^{-1}$ and $\int_0mf(z)\dd z=a\otimes L_b(z)$. With (\ref{monomaLw}), we get
$$
mL_{ab}(z)=a\otimes L_b(z)-a\otimes L_b(a)+b\otimes L_a(b).
$$
Therefore
$$
m^{-1}a\otimes L_b(z)=L_{ab}(z)+L_b(a)L_a(z)-L_a(b)L_b(z).
$$

Direct application of $m^{-1}$ to $a\otimes L_b(z)$ does not reproduce $L_{ab}(z)-L_a(b)L_b(z)$.
With $\pi_m a\otimes L_b(z)=a\otimes L_b(z)-a\otimes L_b(a)$ and $m^{-1}a\otimes L_b(a)=L_b(a)L_a(z)$ the spurious term $L_b(a)L_a(z)$ cancels.
\end{ex}

\section{Generalized hyperlogarithms}\label{sectgh}
Let $\FF=\overline{\FF}\subseteq\CC$ be a quadratically closed number field, i.e.\
\begin{equation}\label{Fdef}
x\in\FF\subseteq\CC\Rightarrow\overline{x},\pm\sqrt{x}\in\FF.
\end{equation}
The smallest field $\FF$ are the constructible numbers, the largest field is $\FF=\CC$.

\begin{remark}
For applications in massless pQFT the fields $\QQ(\ii)$ and $\QQ(\ee^{\pi\ii/3})$ possibly suffice \cite{BSmod,motg2,Sc2}.
We expect that pQFT uses only a small subspace of the GSVHs defined in Section \ref{sectgsvh}.
\end{remark}

To describe single-valued hyperlogarithms on the punctured complex plane we need to give up analyticity.
Because the differentials $\partial_z$ and $\partial_\zz$ commute, in most situations one has the option to either consider $z$ and $\zz$ as complex conjugates
or as independent variables. Depending on the context we shift from one picture to the other. We use double arguments $z,\zz$ in functions if we consider $z$ and $\zz$ as independent variables.

\begin{remark}
With minimal modifications it is possible to develop a very similar theory of GSVHs where $z$ and $\zz$ are replaced by any linear combination of $z$ and $\zz$
(or more general functions of $z,\zz$). One may, e.g., use real and imaginary parts of $z$. For the (rather special) Gegenbauer method in \cite{gfe,gf}
one needs to use the modulus and the argument of $z$. In many situations, however, the variables $z$ and $\zz$ are most efficient.
In pQFT, e.g., one has to invert the Laplacian $\partial_z\partial_\zz$ which factorizes in $z$ and $\zz$.
\end{remark}

In contrast to the concept of single-valuedness (Section \ref{sectsv}), the definition of generalized hyperlogarithms is of technical nature.
In pQFT there exists no fundamental property that requires generalized hyperlogarithms. They are merely the most general functions that are well understood and can be handled efficiently.
For small graphs most graphical functions can be expressed in terms of generalized hyperlogarithms \cite{gfe,gf}.

\subsection{Fractional linear transformations}
Let
\begin{equation}\label{FLTdef}
\FLT_\FF=\Big\{\beta(z)=\frac{az+b}{cz+d},\quad a,b,c,d\in\FF\Big\}
\end{equation}
be the set of fractional linear transformations (FLTs) with coefficients in $\FF$.
The FLT $\beta$ is invertible if $ad-bc\neq0$. Then $\beta$ is an automorphism of $\CC\cup\{\infty\}$ (a M\"obius transformation). Because
$$
\partial_z\frac{az+b}{cz+d}=\frac{ad-bc}{(cz+d)^2},
$$
$ad-bc=0$ is equivalent to constant $\beta$. For any $\beta(z)=(az+b)/(cz+d)$ with $ad-bc\neq0$ the inverse of $\beta$ is (we do {\em not} use $\beta^{-1}$ for the reciprocal of $\beta$)
$$
\beta^{-1}(z)=\frac{dz-b}{-cz+a}.
$$

\begin{lem}\label{partialFLT}
For any variables $z,\zz$ and $\beta(z)=(az+b)/(cz+d)\in\FLT_\FF$ with $ad-bc\neq0$ we have
$$
\partial_z\log(\zz-\beta(z))=\frac{1}{z-\beta^{-1}(\zz)}-\frac{1}{z+d/c},
$$
where the second term on the right hand side is absent if $c=0$.
\end{lem}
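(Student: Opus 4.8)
The plan is to verify the identity by a direct computation, reducing $\log(\zz-\beta(z))$, viewed as a function of $z$, to a difference of logarithms of polynomials of degree at most one in $z$, whose logarithmic derivatives are exactly the two simple fractions on the right-hand side. First I would clear the denominator of $\beta$: writing $\beta(z)=(az+b)/(cz+d)$ gives
$$
\zz-\beta(z)=\frac{\zz(cz+d)-(az+b)}{cz+d}=\frac{(c\zz-a)z+(d\zz-b)}{cz+d},
$$
so that $\log(\zz-\beta(z))=\log\big((c\zz-a)z+(d\zz-b)\big)-\log(cz+d)$ and therefore
$$
\partial_z\log(\zz-\beta(z))=\frac{c\zz-a}{(c\zz-a)z+(d\zz-b)}-\frac{c}{cz+d}.
$$

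Next I would identify the two resulting terms with the right-hand side of the lemma. The second term equals $1/(z+d/c)$ when $c\neq0$, and it vanishes when $c=0$ because its numerator is $c$; this matches the stated convention that the term $1/(z+d/c)$ is absent for $c=0$. For the first term I would use the explicit inverse $\beta^{-1}(z)=(dz-b)/(-cz+a)$ recorded just before the lemma to compute
$$
z-\beta^{-1}(\zz)=z-\frac{d\zz-b}{-c\zz+a}=\frac{(-c\zz+a)z-(d\zz-b)}{-c\zz+a}=\frac{(c\zz-a)z+(d\zz-b)}{c\zz-a};
$$
taking reciprocals shows that $1/(z-\beta^{-1}(\zz))$ is precisely the first term above, and combining the two identifications proves the claim.

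A remark on the hypothesis: the condition $ad-bc\neq0$ ensures that $\beta$ is nonconstant and that $\beta^{-1}$, whose determinant is again $ad-bc$, is a well-defined nonconstant FLT, so that $z-\beta^{-1}(\zz)$ is not identically zero and the right-hand side is meaningful; the degenerate case $c=0$ is covered by the same computation, the term $\log(cz+d)=\log d$ then being constant in $z$. There is essentially no obstacle here, since every step is an identity of rational functions in $z$ (with $\zz$ a parameter), valid away from the obvious poles; the only things to watch are the $c=0$ bookkeeping and the sign in $-c\zz+a$ versus $c\zz-a$ when inverting $\beta$.
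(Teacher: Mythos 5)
Your proof is correct and is essentially the paper's argument: the paper writes the single factorization $\zz-\beta(z)=\frac{(c\zz-a)(z-\beta^{-1}(\zz))}{cz+d}$ and applies $\partial_z\log$, which is exactly the identity you obtain in two steps (clearing the denominator and then recognizing $(c\zz-a)z+(d\zz-b)=(c\zz-a)(z-\beta^{-1}(\zz))$). Your $c=0$ bookkeeping matches the paper's convention, so nothing further is needed.
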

\begin{proof}
Because $ad-bc\neq0$ either $a\neq0$ or $c\neq0$. We get
$$
\zz-\beta(z)=\frac{(c\zz-a)(z-\beta^{-1}(\zz))}{cz+d}.
$$
Taking $\partial_z\log$ gives the result.
\end{proof}
In general, the product of two FLTs is not a sum of FLTs (it may have squares). Here, we need the converse property that the difference of two FLTs is the product of two FLTs.

\begin{lem}\label{betalem1}
Let $\beta_1,\beta_2\in\FLT_\FF$.
\begin{enumerate}
\item There exist $\beta_3,\beta_4\in\FLT_\FF$ such that
$$
\beta_1(z)-\beta_2(z)=\beta_3(z)\cdot\beta_4(z)
$$
\item We have
$$
\det\left(\begin{array}{cc}\partial_z\log(z-\beta_1(\zz))&\partial_\zz\log(z-\beta_1(\zz))\\
\partial_z\log(z-\beta_2(\zz))&\partial_\zz\log(z-\beta_2(\zz))\end{array}\right)=
[\partial_\zz\log(\beta_1(\zz)-\beta_2(\zz))]\left[\partial_z\log\frac{z-\beta_1(\zz)}{z-\beta_2(\zz)}\right].
$$
\end{enumerate}
\end{lem}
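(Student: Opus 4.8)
The plan is to treat the two parts independently. Part (1) will rest entirely on the quadratic closure of $\FF$, and part (2) is a direct logarithmic-derivative computation.

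For part (1), I would write $\beta_i(z)=(a_iz+b_i)/(c_iz+d_i)$ with coefficients in $\FF$ and put the difference over the common denominator $(c_1z+d_1)(c_2z+d_2)$, so that
$$
\beta_1(z)-\beta_2(z)=\frac{(a_1z+b_1)(c_2z+d_2)-(a_2z+b_2)(c_1z+d_1)}{(c_1z+d_1)(c_2z+d_2)}.
$$
The numerator $P(z)$ is a polynomial over $\FF$ of degree at most $2$. Because $\FF$ is quadratically closed, $P$ splits into linear factors over $\FF$: when $\deg P=2$ the quadratic formula gives $P(z)=\lambda(z-r_1)(z-r_2)$ with $\lambda\in\FF^\times$ and $r_1,r_2\in\FF$ (the discriminant lies in $\FF$, hence so does its square root, and $2a_1c_2\cdots$ type leading coefficients are invertible in $\FF$); when $\deg P\le1$ one has $P(z)=\lambda(z-r_1)$, or $P$ constant, or $P\equiv0$. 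I would then set $\beta_3(z)=\lambda(z-r_1)/(c_1z+d_1)$ and $\beta_4(z)=(z-r_2)/(c_2z+d_2)$ in the degree-$2$ case, and in the lower-degree cases distribute the single remaining linear factor or constant against one denominator while keeping the other as $1/(c_2z+d_2)\in\FLT_\FF$ (and $\beta_3=0,\beta_4=1$ if $\beta_1=\beta_2$). Each $\beta_j$ manifestly lies in $\FLT_\FF$, and $\beta_1-\beta_2=\beta_3\beta_4$.

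For part (2), I would abbreviate $u_i=z-\beta_i(\zz)$ and write $\beta_i'=\partial_\zz\beta_i$. Since $u_i$ is affine in $z$ with $\partial_zu_i=1$ and $\partial_\zz u_i=-\beta_i'(\zz)$, we get $\partial_z\log u_i=1/u_i$ and $\partial_\zz\log u_i=-\beta_i'(\zz)/u_i$. Expanding the determinant then gives
$$
\det\begin{pmatrix}1/u_1&-\beta_1'(\zz)/u_1\\ 1/u_2&-\beta_2'(\zz)/u_2\end{pmatrix}=\frac{\beta_1'(\zz)-\beta_2'(\zz)}{u_1u_2}.
$$
On the right-hand side of the claimed identity, $\partial_\zz\log(\beta_1(\zz)-\beta_2(\zz))=(\beta_1'(\zz)-\beta_2'(\zz))/(\beta_1(\zz)-\beta_2(\zz))$ and $\partial_z\log(u_1/u_2)=1/u_1-1/u_2=(u_2-u_1)/(u_1u_2)$; using the key cancellation $u_2-u_1=\beta_1(\zz)-\beta_2(\zz)$, the product of these two factors is again $(\beta_1'(\zz)-\beta_2'(\zz))/(u_1u_2)$, matching the left-hand side.

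There is no serious obstacle here. If anything, the only point needing care is the case analysis on $\deg P$ in part (1), together with the remark that the quadratic closure of $\FF$ — precisely the hypothesis $\pm\sqrt{x}\in\FF$ from (\ref{Fdef}) — is exactly what keeps the roots $r_1,r_2$ in $\FF$; part (2) is purely formal. I would also note in passing that part (1) is the tool that allows one, when needed downstream, to re-express the factor $u_2-u_1=\beta_1(\zz)-\beta_2(\zz)$ appearing in part (2) as a product of two FLTs in $\zz$.
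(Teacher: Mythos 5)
Your proposal is correct and follows essentially the same route as the paper: put the difference over the common denominator $(c_1z+d_1)(c_2z+d_2)$, use quadratic closure of $\FF$ to split the degree-$\leq2$ numerator into linear factors distributed between two FLTs, and verify (2) by the explicit logarithmic-derivative computation (which the paper leaves as "an explicit calculation"). Your extra case analysis on $\deg P$ and the cancellation $u_2-u_1=\beta_1(\zz)-\beta_2(\zz)$ are exactly the details the paper's proof implicitly relies on.
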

\begin{proof}
For $a_1,b_1,c_1,d_1,a_2,b_2,c_2,d_2\in\FF$ we have
$$
\frac{a_1z+b_1}{c_1z+d_1}-\frac{a_2z+b_2}{c_2z+d_2}=\frac{(a_1c_2-a_2c_1)z^2 + (a_1d_2-a_2d_1+b_1c_2-b_2c_1)z+b_1d_2-b_2d_1}{(c_1z+d_1)(c_2z+d_2)}.
$$
Because $\FF$ is quadratically closed, the numerator factorizes in $\FF$ and the product may be expressed as $\beta_3(z)\beta_4(z)$
for a suitable choice of $\beta_3(z)$ and $\beta_4(z)$. Statement (2) is an explicit calculation.
\end{proof}
Statement (1) is the sole reason for our restriction to quadratically closed fields.

\subsection{Iterated integrals with parameters}\label{sectitintpar}
Let
$$
\sH\sL_\FF=\bigcup_{\Sigma\subset\FF}\sH\sL_\Sigma.
$$
Likewise $\sH\sL_\FF^\CC$ is the $\CC$-algebra of hyperlogarithms in $\sH\sL_\FF$ with constant coefficients in $\CC$.

Any iterated integral with letters in $\CC(z)$ is in $\sH\sL_\CC^\CC$. A bootstrap algorithm for the conversion to constant letters is given by differentiating the iterated integral with
respect to $z$ using {\bf I8}. This lowers the weight of the iterated integral. By induction the derivative is in $\sH\sL_\CC^\CC$. A priori, the primitive is in $\sH\sL_\CC$.
Using {\bf I8} it is not hard to see that the primitive is in $\sH\sL_\CC^\CC$. This algorithm was suggested by F. Brown \cite{BrH1,BrH2} and later implemented by E. Panzer \cite{Panzer:HyperInt}.

To emphasize the dependence on the variable $z$ (or $\zz$) we will give $\sH\sL_\FF$ and $\FLT_\FF$ an argument.
So, $\sH\sL_\Sigma(z)$ denotes the $\CC$-algebra of hyperlogarithms in $z$ and $\FLT_\FF(z)$ is the set of FLTs over $\FF$ in $z$.

\begin{lem}\label{limitlem}
We consider $z$ and $\zz$ as independent variables. Let $f(z,\zz)=L_{\beta_1(z)\ldots\beta_n(z)}(\zz)$ be a hyperlogarithm in $\zz$ with letters in $\FLT_\FF(z)$.
For any $a\in\FF$ we get $\lim_{z\to a}f(z,\zz)\in\sH\sL_\FF^\CC(\zz)$, where the explicit result may depend on the direction in which $z$ approaches $a$.
\end{lem}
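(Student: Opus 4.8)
The plan is to induct on the weight $n=|w|$, peeling off the last letter of $w=\beta_1(z)\ldots\beta_n(z)$ and differentiating with respect to $\zz$. Throughout, $\lim_{z\to a}$ is understood as the regularized limit in the sense of Section \ref{sectreg}: one expands $f(z,\zz)$ near $z=a$ into a finite sum $\sum_{\ell=0}^n\sum_{m\geq0}c_{\ell,m}(\zz)[\log(z-a)]^\ell(z-a)^m$ and discards the $\log(z-a)$-divergence; the residual dependence on the direction of approach stems from comparing the tangent vector of $z\to a$ with the base point $0^+$ of the iterated integral whenever a letter $\beta_i(z)$ runs into $0$, $\infty$, another letter, or the endpoint $\zz$. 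If the limiting letters $\beta_1(a),\ldots,\beta_n(a)$ happen to be pairwise distinct, finite and nonzero, then no singularity collides and, by homotopy invariance and continuity of iterated integrals ({\bf I1}, {\bf I2}), the limit is simply $L_{\beta_1(a)\ldots\beta_n(a)}(\zz)\in\sH\sL_\FF^\CC(\zz)$, since each $\beta_i(a)\in\FF$; so the content lies in the degenerate case, which the induction treats uniformly.

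For the inductive step write $f=L_{w'\beta_n(z)}(\zz)$ with $|w'|=n-1$. By (\ref{pLw}),
$$
\partial_\zz f(z,\zz)=\frac{1}{\zz-\beta_n(z)}\,L_{w'}(\zz),
$$
and $\beta_n(a)\in\FF\cup\{\infty\}$ is an explicitly known point because $\beta_n\in\FLT_\FF$ and $a\in\FF$. By the induction hypothesis $g(\zz):=\lim_{z\to a}L_{w'}(\zz)$ lies in $\sH\sL_\FF^\CC(\zz)$; write $g=\sum_u c_u L_u(\zz)$ with $c_u\in\CC$ and words $u$ over a finite subset of $\FF$. If $\beta_n(a)\in\FF$, then $\lim_{z\to a}\partial_\zz f=\tfrac{1}{\zz-\beta_n(a)}g(\zz)=\partial_\zz h$ with $h:=\sum_u c_u L_{u\beta_n(a)}(\zz)\in\sH\sL_\FF^\CC(\zz)$; if $\beta_n(a)=\infty$, then $\tfrac{1}{\zz-\beta_n(z)}\to0$ locally uniformly in $\zz$ and $\lim_{z\to a}\partial_\zz f=0=\partial_\zz h$ with $h:=0$. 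In both cases $\partial_\zz\bigl(\lim_{z\to a}f-h\bigr)=0$. Granting that $F(\zz):=\lim_{z\to a}f(z,\zz)$ is a well-defined (regularized) limit and that $\partial_\zz$ commutes with it, $F-h$ is then a function of $\zz$ alone annihilated by $\partial_\zz$, hence a constant; finiteness of the regularized limit $F$ forces this constant to be some $c\in\CC$, so $F=h+c\in\sH\sL_\FF^\CC(\zz)$, completing the induction.

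The hard part is the analytic input just invoked: that $f(z,\zz)$, as a function of $z$ near $a$, has an expansion in $(z-a)$ and $\log(z-a)$ with \emph{no} negative powers of $(z-a)$ and with $\sH\sL_\FF^\CC(\zz)$-valued coefficients, so that the regularized limit exists and is compatible with $\partial_\zz$. I would establish this by a parallel induction on weight: differentiating $f$ in $z$ via {\bf I8} and using Lemma \ref{betalem1}(1) and Lemma \ref{partialFLT} to rewrite $\partial_z f$ as a finite $\FF$-linear combination of one-forms $\dd z/(z-\text{pt})$ — rational in $z$, with at worst simple poles at $z=a$ coming precisely from letters that collide with $0$, $\infty$, another letter, or the (generically non-special) endpoint $\zz$ — times generalized hyperlogarithms of weight $<n$ in $\zz$. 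Integrating back in $z$ contributes at most one factor $\log(z-a)$ per collision; iterating gives the expansion $\sum_{\ell,m}c_{\ell,m}(\zz)[\log(z-a)]^\ell(z-a)^m$ with the coefficients of the type controlled by the induction, in analogy with Lemma \ref{explem}. The collision of a letter with the base point $0^+$, and the accompanying direction-dependent $\log$ of the tangent-vector ratio, is handled by the shuffle regularization of Section \ref{sectreg} together with path composition {\bf I5}. I expect the bookkeeping of these collision constants — and the verification that they produce only elements of $\CC$, as the stated direction-dependence requires — to be the most delicate point.
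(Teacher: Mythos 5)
Your proposal is correct and follows essentially the same route as the paper: induction on the weight, differentiating in $\zz$ via (\ref{pLw}), applying the induction hypothesis to the weight-$(n-1)$ factor, splitting the cases $\beta_n(a)\in\FF$ and $\beta_n(a)=\infty$, and justifying the existence and $\partial_\zz$-compatibility of the regularized limit by Brown's bootstrap ({\bf I8} with Lemmas \ref{partialFLT} and \ref{betalem1}) together with the log-expansion of Lemma \ref{explem}. The only cosmetic difference is that the paper fixes the $\zz$-integration constant via $f(z,0)=0$, while you absorb it into an additive constant $c\in\CC$, which is equally fine since constants lie in $\sH\sL_\FF^\CC(\zz)$.
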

\begin{proof}
We use induction over $n$. For $n=0$ we have $f=1$ and the claim is trivial. If $n\geq1$ then $f(z,0)=0$ and
$$
\partial_\zz f(z,\zz)=\frac{1}{\zz-\beta_n(z)}L_{\beta_1(z)\ldots\beta_{n-1}(z)}(\zz).
$$
By Brown's algorithm, the hyperlogarithm on the right hand side can be converted into a hyperlogarithm in $z$ (with $\zz$-dependent letters).
By Lemma \ref{explem} it admits a log expansion at $z=a$. By induction the constant term is in $\sH\sL_\FF^\CC(\zz)$. The factor $(\zz-\beta_n(z))^{-1}$ has an expansion at $z=a$
with constant term $(\zz-\beta_n(a))^{-1}$ if $\beta_n(a)\neq\infty$. In this case $\beta_n(a)\in\FF$ and integration with respect to $\zz$ gives $f\in\sH\sL_\FF^\CC(\zz)$ plus terms that
vanish in the limit $z\to a$. If $\beta_n(a)=\infty$, the constant term vanishes and $\lim_{z\to a}f(z,\zz)=0$.
\end{proof}

\begin{ex}\label{exLza}
Consider $f(z,\zz)=L_{z-a}(\zz)$ for $a\in\FF$. We get the regularized limit (see Section \ref{sectreg})
$$
\lim_{z\to a}f(z,\zz)=\lim_{z\to a}\log(1-\zz/(z-a))=\lim_{z\to a}\log(-\zz/(z-a))=\log(\zz)+c=L_0(\zz)+c
$$
for some constant $c\in\CC$ whose value depends on the direction in which $z$ approaches $a$.
\end{ex}

\begin{remark}
Note that due to possible singularities, taking limits in hyperlogarithms cannot be performed by substitution (in general). As an example consider
\begin{align*}
&\lim_{z\to0}L_{z0}(\zz)=\lim_{z\to0}L_{10}(\zz/z)=-\lim_{z\to0}\Li_2(\zz/z)=\lim_{z\to0}\frac{[\log(z/\zz)\pm\ii\pi]^2}{2}-\zeta(2)=\frac{[\log(\zz)\mp\ii\pi]^2}{2}-\zeta(2)\\
&\neq\;L_{00}(\zz)=\frac{(\log\zz)^2}{2},
\end{align*}
where the sign ambiguity is related to monodromy. The subtle step in Brown's algorithm is to accurately take the limits $z\to0$ that provide the integration constants \cite{Panzer:HyperInt}.
\end{remark}

\begin{prop}\label{ccprop}
Let $f(z,\zz)=L_{\beta_1(z)\ldots\beta_n(z)}(\zz)$ be a hyperlogarithm in $\zz$ with letters in $\FLT_\FF(z)$. Then
$f$ is a linear combination of hyperlogarithms in $z$ with letters in $\FLT_\FF(\zz)$ and coefficients in $\sH\sL_\FF^\CC(\zz)$.
\end{prop}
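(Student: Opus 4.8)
The plan is to prove Proposition~\ref{ccprop} by induction on the weight $n$, differentiating with respect to $\zz$ and then inverting, using Brown's algorithm together with Lemma~\ref{partialFLT} and Lemma~\ref{betalem1}~(2). For $n=0$ the function $f=1$ is trivially of the required form. For $n\geq1$ write $f(z,\zz)=L_{\beta_1(z)\ldots\beta_n(z)}(\zz)$ and compute, by \eqref{pLw},
$$
\partial_\zz f(z,\zz)=\frac{1}{\zz-\beta_n(z)}\,L_{\beta_1(z)\ldots\beta_{n-1}(z)}(\zz).
$$
By induction the second factor is a linear combination of hyperlogarithms in $z$ with letters in $\FLT_\FF(\zz)$ and coefficients in $\sH\sL_\FF^\CC(\zz)$; and by Lemma~\ref{partialFLT} (with the roles of $z$ and $\zz$ exchanged), the prefactor $(\zz-\beta_n(z))^{-1}$ is, up to a sign, $\partial_z\log(z-\beta_n^{-1}(\zz))+(\text{a term }1/(z+d/c))$ — in particular it is itself a combination of $\sH\sL$-letters in $z$ with parameters $\beta_n^{-1}(\zz)$ and the constant $-d/c$, all of which lie in $\FLT_\FF(\zz)$ because $\FF=\overline\FF$ is closed under inversion of coefficients. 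Hence $\partial_\zz f$ is already a linear combination of hyperlogarithms in $z$ with letters in $\FLT_\FF(\zz)$ and coefficients in $\sH\sL_\FF^\CC(\zz)$, but these coefficients now also depend on $\zz$ only through $\sH\sL_\FF^\CC(\zz)$, so $\partial_\zz f$ is of weight $n$ in the "$z$ picture" with coefficients of weight $n-1$ in $\zz$.

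The next step is to integrate back in $\zz$. By Theorem~\ref{Brthm}~(2) applied in the variable $z$ (treating $\zz$ as a parameter), $\partial_\zz f$ has a primitive $g$ with respect to $\zz$ which again lies in $\sH\sL_\FF^\CC(\zz)\otimes\sH\sL_\FF(z)$; concretely one integrates term by term, using that a letter $1/(z-\beta(\zz))$ integrated against a $\zz$-coefficient is handled by integration by parts (identity after \eqref{pLw} in the proof of Theorem~\ref{Brthm}~(2)) or, when the letter $1/(z-\beta_n^{-1}(\zz))$ itself carries the $\zz$-dependence, by Brown's bootstrap algorithm: differentiate the candidate iterated integral in $\zz$ using {\bf I8}, observe the weight drops, close the induction, and fix the integration constant by Lemma~\ref{limitlem}. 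This produces a function $g(z,\zz)$ of the asserted form with $\partial_\zz g=\partial_\zz f$. Finally $f-g$ is $\zz$-independent, i.e.\ $f-g=h(z)$ with $h\in\sH\sL_\FF^\CC(z)\subseteq\sH\sL_\FF(z)$ (take the regularized limit $\zz\to 0$, where $f$ vanishes and $g$ has a log-Laurent expansion by Lemma~\ref{explem}), so $f=g+h$ is a linear combination of hyperlogarithms in $z$ with letters in $\FLT_\FF(\zz)$ and coefficients in $\sH\sL_\FF^\CC(\zz)$, as claimed.

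The main obstacle is the bookkeeping in Brown's algorithm when the $\zz$-dependence sits inside the letters of the $z$-hyperlogarithm rather than only in the coefficients: one must check that converting $1/(\zz-\beta_n(z))\cdot L_{\ldots}(z\text{-with-}\zz\text{-letters})$ to constant $z$-letters keeps all new letters inside $\FLT_\FF(\zz)$ and all new coefficients inside $\sH\sL_\FF^\CC(\zz)$, and that the integration constants supplied by Lemma~\ref{limitlem} are of this type. The key algebraic input making this work is Lemma~\ref{betalem1}: the logarithmic derivative $\partial_z\log\frac{z-\beta_1(\zz)}{z-\beta_2(\zz)}$ appearing in {\bf I8} is, by part~(2), exactly $1/(z-\beta_1(\zz))-1/(z-\beta_2(\zz))$ up to the Jacobian factor $\partial_\zz\log(\beta_1(\zz)-\beta_2(\zz))$, which by part~(1) factors into FLTs in $\zz$ and hence defines an element of $\partial_\zz\sH\sL_\FF^\CC(\zz)$ rather than introducing spurious quadratic denominators. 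Closure of $\FF$ under $x\mapsto\overline x,\pm\sqrt x$ guarantees that every letter and coefficient generated along the way stays within $\FLT_\FF$ and $\sH\sL_\FF^\CC$, so the induction does not escape the stated function spaces.
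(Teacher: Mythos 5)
Your proposal differentiates in the wrong variable, and this creates a genuine gap at the re-integration step. After writing $\partial_\zz f=(\zz-\beta_n(z))^{-1}L_{\beta_1(z)\ldots\beta_{n-1}(z)}(\zz)$ and applying the induction hypothesis, you are left with an expression in the ``$z$-picture'' (hyperlogarithms in $z$ with $\zz$-dependent letters, times coefficients depending on $\zz$ both through hyperlogarithms and through rational factors such as $(c\zz-a)^{-1}$ coming from $\zz-\beta_n(z)=(c\zz-a)(z-\beta_n^{-1}(\zz))/(cz+d)$), and you must now produce a \emph{$\zz$-primitive} of it. Theorem \ref{Brthm} (2) cannot do this: applied ``in the variable $z$'' it only yields $z$-primitives inside $\sH\sL_\Sigma$ for a fixed alphabet, and it says nothing about integrating in $\zz$ a function whose $z$-letters depend on $\zz$. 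Your fallback (``Brown's bootstrap: differentiate the candidate in $\zz$, close the induction'') is circular, because a weight-$n$ $\zz$-primitive of such a mixed object is exactly the kind of object the proposition is about: to integrate it in $\zz$ you would first have to convert it back into a hyperlogarithm in $\zz$ with $z$-dependent letters, i.e.\ invoke the proposition (or its conjugate) at the same weight $n$. There is also a concrete misreading of Lemma \ref{partialFLT}: it gives $\partial_z\log(\zz-\beta_n(z))=(z-\beta_n^{-1}(\zz))^{-1}-(z+d/c)^{-1}$, which differs from $(\zz-\beta_n(z))^{-1}$ by the factor $-\partial_z\beta_n(z)$, so the prefactor is not ``up to a sign'' a sum of letters; and the rational coefficients it produces are not in $\sH\sL_\FF^\CC(\zz)$, so even your intermediate claim about $\partial_\zz f$ is not of the stated form.

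The way to avoid all of this is to differentiate with respect to $z$, the variable in which the target representation lives. By {\bf I8}, $\partial_zf$ is a sum of weight-$(n-1)$ hyperlogarithms in $\zz$ (handled by induction) times factors $\partial_z\log(\beta_{k+1}(z)-\beta_k(z))$; for $k<n$ Lemma \ref{betalem1} (1) turns these into simple poles in $z$ at points of $\FF$, and for $k=n$ Lemma \ref{partialFLT} turns $\partial_z\log(\zz-\beta_n(z))$ into simple poles in $z$ at $\beta_n^{-1}(\zz)\in\FLT_\FF(\zz)$ and at a constant. Then the final integration is in $z$ and is immediate from (\ref{pLw}), the resulting hyperlogarithms vanish at $z=0$, and the integration constant $\lim_{z\to0}f(z,\zz)$ lies in $\sH\sL_\FF^\CC(\zz)$ by Lemma \ref{limitlem}. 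Your ingredients (induction on weight, Lemmas \ref{partialFLT}, \ref{betalem1}, \ref{limitlem}) are the right ones, but they must be assembled around $\partial_z$, not $\partial_\zz$, for the argument to close.
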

\begin{proof}
We consider $z$ and $\zz$ as independent variables and use induction over $n$ (following Brown's algorithm). For $n=0$ the claim is trivial.

By differentiation with respect to $z$ we get from {\bf I8} a sum of hyperlogarithms in $\zz$ of weight $n-1$ with factors $\pm\partial_z\log(\beta_{k+1}-\beta_k)$
for $k=0,\ldots,n$ (where $\beta_0=0$ and $\beta_{n+1}=\zz$). By induction the hyperlogarithms are linear combinations of hyperlogarithms in $z$ with letters in $\FLT_\FF(\zz)$
and coefficients in $\sH\sL_\FF^\CC(\zz)$.

With Lemma \ref{betalem1} (1) we express $\beta_{k+1}-\beta_k$ for $k<n$ as product of two FLTs over $\FF$.
Upon differentiation the log becomes a sum of simple poles in $z$ at values in $\FF$.

The term $k=n$ is only non-zero if $\beta_n$ is not constant. In this case we get from Lemma \ref{partialFLT} that the coefficient is a sum of simple poles with
values in $\FLT_\FF(\zz)$.

By (\ref{pLw}) both cases have primitives which are linear combinations of hyperlogarithms in $z$ with letters in $\FLT_\FF(\zz)$.
These hyperlogarithms vanish in the limit $z\to0$. To determine $f$ we need to add the integration constant (in $z$) which is $\lim_{z\to0}f(z,\zz)$.
By Lemma \ref{limitlem} this limit is in $\sH\sL_\FF^\CC(\zz)$
\end{proof}

\begin{ex}
Consider $f(z,\zz)=L_{z-a}(\zz)$, $a\in\FF$, from Example \ref{exLza}. We have
$$
f(z,\zz)=\log\Big(1-\frac{\zz}{z-a}\Big)=\log\Big[\Big(1-\frac{z}{\zz+a}\Big)\Big(1+\frac{\zz}{a}\Big)\Big(1-\frac{z}{a}\Big)^{-1}\Big]=L_{\zz+a}(z)+L_{-a}(\zz)-L_a(z).
$$
Differentiation with respect to $z$ gives $\partial_zf(z,\zz)=(z-\zz-a)^{-1}-(z-a)^{-1}$. Upon integration we get the first and the third term. The second term is $\lim_{z\to0}f(z,\zz)$.
\end{ex}
To obtain an explicit result one can use the path convention at the beginning of Section \ref{sectmono}. Otherwise the result depends on the sheets where the hyperlogarithms are evaluated.
This will become insignificant when we pass to single-valued functions.

\subsection{Definition of generalized hyperlogarithms}\label{sectdefGH}
Let $\zz$ be the complex conjugate of $z$. Let
$$
\Sigma(\zz)=\{\beta_1(\zz),\beta_2(\zz),\ldots,\beta_N(\zz)\}\subset\FLT_\FF(\zz)
$$
be a finite set of FLTs over $\FF$ in the variable $\zz$ with subset of constants $\Sigma(\zz)\cap\FF$.
We define the ring of regular functions on $\CC\backslash\Sigma(\zz):=\CC\backslash\{z=\beta_i(\zz),i=1,\ldots,N\}$ as the ring generated
by $\sO_{\Sigma(\zz)}(z)$ and $\sO_{\overline{\Sigma(\zz)\cap\FF}}(\zz)$, see (\ref{OSigma}),
\begin{equation}\label{OSigmadef}
\sO_\Sigma(z,\zz)=\CC[z,(z-\beta(\zz))^{-1}_{\beta\in\Sigma(\zz)},\zz,(\zz-\bb)^{-1}_{b\in\Sigma(\zz)\cap\FF}].
\end{equation}
We also define
\begin{equation}\label{OFdef}
\sO_\FF(z)=\bigcup_{\Sigma\subset\FF}\sO_\Sigma(z)\quad\text{and}\quad\sO_\FF(z,\zz)=\bigcup_{\Sigma(\zz)\subset\FLT_\FF(\zz)}\sO_\Sigma(z,\zz).
\end{equation}

\begin{ex}\label{bilinex}
Reciprocals of bilinear forms in $z$ and $\zz$ are in $\sO_\FF(z,\zz)$. For $a,b,c,d\in\FF$, $a\neq0$,
$$
\frac{1}{az\zz+bz+c\zz+d}=\frac{1}{a}\cdot\frac{1}{\zz+b/a}\cdot\frac{1}{z+(c\zz+d)/(a\zz+b)}\in\sO_{\{-\overline{b}/\aaa,-(c\zz+d)/(a\zz+b)\}}(z,\zz).
$$
\end{ex}

\begin{remark}\label{sOrk}
By partial fraction decomposition first in $z$ and then (the coefficients) in $\zz$ every $f\in\sO_\Sigma(z,\zz)$ has a unique representation as
$$
f(z,\zz)=\sum_{\beta(\zz)\in\Sigma(\zz),b\in\FF,m,\mm}c^{\beta,b}_{m,\mm}(z-\beta(\zz))^m(\zz-\overline{b})^\mm,
$$
where the sum is finite with $m,\mm\in\ZZ$ and $m,\mm<0$ if $\beta\neq0,b\neq0$, respectively. Because partial fraction decomposition may generate (spurious) singularities in $\zz$
we cannot assume that $b\in\Sigma(\zz)\cap\FF$. By Lemma \ref{betalem1} (1), however, we have $b\in\FF$.
\end{remark}

\begin{defn}\label{GHdef}
Let ($\Sigma^\ast$ is the sets of words in $\Sigma$)
\begin{equation}\label{GHeq}
\sG\sH_\Sigma=\langle L_v(\zz)L_{w(\zz)}(z),v\in\overline{\Sigma(\zz)\cap\FF}^\ast,w(\zz)\in\Sigma(\zz)^\ast\rangle_{\sO_\Sigma(z,\zz)}
\end{equation}
be the space of generalized hyperlogarithms on $\CC\backslash\Sigma(\zz)$. We use the notation $\sG\sH^\CC_\Sigma$ for the set of generalized hyperlogarithms with
coefficients in $\CC$. The sets $\sG\sH_\FF$ and $\sG\sH^\CC_\FF$ are the unions of $\sG\sH_\Sigma$ and $\sG\sH^\CC_\Sigma$ over all finite $\Sigma(\zz)\subset\FLT_\FF(\zz)$ (respectively).

By Theorem \ref{GHthm} (1) there exists a unique representation of any $f\in\sG\sH_\Sigma$ as a linear combination of hyperlogarithms
$L_v(\zz)L_{w(\zz)}(z)$ with non-zero canceled fractions as coefficients. The weight of $f$ is the maximum total weight $|v|+|w(\zz)|$ in this representation.
\end{defn}
We suppress the $\zz$ dependence of $\Sigma(\zz)$ in subscripts. In $\sO_\Sigma(z,\zz)$ we denote both arguments $z$ and $\zz$ to avoid confusion with $\sO_\Sigma(z)$.

\subsection{Properties of generalized hyperlogarithms}
In analogy to Theorem \ref{Brthm} and Proposition \ref{propseqm} we obtain the following theorem.
\begin{thm}\label{GHthm}
Let $\Sigma(\zz)\subset\FLT_\FF$ be finite.
\begin{enumerate}
\item $\sG\sH_\Sigma$ is a free $\sO_\Sigma(z,\zz)$-module which is closed under multiplication. In particular, $\sG\sH_\Sigma^\CC$ is a $\CC$-algebra.
\item The sequence
$$
0\longrightarrow\sH\sL_\FF(\zz)\longrightarrow\sG\sH_\FF\stackrel{\partial_z}{\longrightarrow}\sG\sH_\FF\longrightarrow0
$$
is exact. I.e.\ the kernel of $\partial_z$ in $\sG\sH_\FF$ is $\sH\sL_\FF(\zz)$ and every $f\in\sG\sH_\FF$ has a primitive $F\in\sG\sH_\FF$
with $\partial_zF=f$.
\item $\sG\sH_\Sigma$ is differentially simple. I.e.\ for every $0\neq f\in\sG\sH_\Sigma$ there exists a differential operator $D$ such that $Df=1$.
\item $\sG\sH_\FF$ and $\sG\sH_\FF^\CC$ are stable under complex conjugation,
\begin{equation}\label{cc}
\overline{\sG\sH_\FF}=\sG\sH_\FF,\quad\overline{\sG\sH_\FF^\CC}=\sG\sH_\FF^\CC.
\end{equation}
\item $\sG\sH_\FF$ and $\sG\sH_\FF^\CC$ are stable under transformations in $\FLT_\FF$: For any $f\in\sG\sH_\FF$, $g\in\sG\sH_\FF^\CC$, and $\beta\in\FLT_\FF$ we have
\begin{equation}\label{flt}
f(\beta(z))\in\sG\sH_\FF\quad\text{and}\quad g(\beta(z))\in\sG\sH_\FF^\CC.
\end{equation}
\end{enumerate}
\end{thm}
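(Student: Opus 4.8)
The plan is to prove the five statements in the order (1), (2), (3), (4), (5), following the template set by Theorem~\ref{Brthm} and Proposition~\ref{propseqm}, but now with the bigraded structure coming from the two variables $z$ and $\zz$. The central observation is that a generalized hyperlogarithm factorizes: a generator $L_v(\zz)L_{w(\zz)}(z)$ is a hyperlogarithm in $z$ (with letters in $\FLT_\FF(\zz)$ and coefficients in $\sH\sL_\FF^\CC(\zz)\cdot\CC[\zz,(\zz-\bb)^{-1}]$) of weight $|w(\zz)|$, whose coefficient ring is itself, essentially, $\sH\sL_\FF(\zz)$. So for fixed $\zz$-data the space $\sG\sH_\Sigma$ is nothing but $\sH\sL_{\Sigma(\zz)}(z)$ in the sense of Section~\ref{secthyp}, but with the base ring $\sO_\Sigma(z)$ enlarged to include $\zz$-dependent functions in $\sO_\FF(\zz)\otimes\sH\sL_\FF^\CC(\zz)$. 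Almost every claim will then follow by ``applying Section~\ref{secthyp} in the $z$-direction over this enlarged base.''

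\textbf{Statements (1) and (3).} For (1), I would run the argument of Theorem~\ref{Brthm}(1) verbatim in the $z$-variable: set up $\eval$ on $\Sigma(\zz)^\ast\otimes(\text{base})$, use the partial order $\prec$ from Definition~\ref{deforder}, and kill a hypothetical minimal relation by multiplying by the inverse of a maximal-weight coefficient and differentiating. The one genuinely new point is that the ``base ring'' — the ring of possible coefficients — must be a \emph{differential integral domain on which $\partial_z$ acts with kernel exactly the $\zz$-functions}; that role is played by $\sO_\FF(z,\zz)\cdot\sH\sL_\FF^\CC(\zz)$, and its integrality needs the factorization Proposition~\ref{ccprop} (to put $L_v(\zz)$-coefficients and $\FLT_\FF(\zz)$-letters on a common footing) together with the fact that $\sH\sL_\FF(\zz)$ is an integral domain by Theorem~\ref{Brthm}(1). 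Closure under multiplication is the shuffle product {\bf I6} applied separately in $z$ and in $\zz$, using Lemma~\ref{betalem1}(1) to re-expand products of differences of FLTs as products of FLTs. Statement (3) is then the exact analogue of Theorem~\ref{Brthm}(3): clear denominators with a suitable $\partial_z^m Q$, drop the weight, induct.

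\textbf{Statement (2).} The kernel of $\partial_z$ on $\sG\sH_\FF$ is $\sH\sL_\FF(\zz)$ — this is the analogue of Lemma~\ref{lemker}, proved by the same minimal-counterexample/residue argument, the residue now being $\res^{\sO}_{\beta(\zz)}$ in the $z$-variable (well-defined since the poles $z=\beta(\zz)$ are simple and distinct after partial fractions). For surjectivity of $\partial_z$, I induct on the $z$-weight: by linearity and partial fractions it suffices to integrate $(z-\beta(\zz))^k L_{w(\zz)}(z)\cdot(\text{coeff in }\zz)$ with respect to $z$; for $k=-1$ the primitive is $L_{w(\zz)\beta(\zz)}(z)$, and for $k\neq-1$ integration by parts reduces the weight exactly as in Theorem~\ref{Brthm}(2). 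The only subtlety is that $\partial_z$ acting on a $\zz$-dependent coefficient $L_v(\zz)$ gives zero, so no new integration constants are produced there; the ``constants'' in the $z$-integration are precisely elements of $\sH\sL_\FF(\zz)$, which is what the exact sequence records.

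\textbf{Statements (4) and (5), and the main obstacle.} Complex conjugation swaps $z\leftrightarrow\zz$, hence sends a generator $L_v(\zz)L_{w(\zz)}(z)$ to $\overline{L_v}(z)\,\overline{L_{w(\zz)}}(\zz)$; here $\overline{L_v}$ has letters in $\overline{\overline{\Sigma(\zz)\cap\FF}}\subset\FF$ (fine, since $\overline\FF=\FF$), but $\overline{L_{w(\zz)}}(\zz)$ is a hyperlogarithm in $\zz$ whose letters are $\overline{\beta_i(\zz)}$ — \emph{functions of $z$} — which is exactly the non-obvious case. This is where Proposition~\ref{ccprop} does the real work: it lets us rewrite such a hyperlogarithm in $\zz$ with $\FLT_\FF(z)$-letters as a linear combination of hyperlogarithms in $z$ with $\FLT_\FF(\zz)$-letters and $\sH\sL_\FF^\CC(\zz)$-coefficients, i.e.\ back inside $\sG\sH_\FF$. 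So (4) reduces to Proposition~\ref{ccprop} plus the observation that conjugating a coefficient in $\sO_\FF(z,\zz)$ stays in $\sO_\FF(z,\zz)$ (using Remark~\ref{sOrk} and Lemma~\ref{betalem1}(1) to control the $\FF$-valued denominators that appear). For (5), an FLT $z\mapsto\beta(z)$ transforms $L_{w(z')}(\beta(z))$ by the M\"obius version of {\bf I7} mentioned after the identities, turning letters $\beta_i(\zz)$ into $\beta_i(\zz)$ precomposed with $\beta^{-1}$ — still FLTs over $\FF$ — while the bilinear denominators $z-\beta_i(\zz)$ transform into new bilinear forms whose partial-fraction decomposition again has $\FF$-coefficients by Lemma~\ref{betalem1}(1); so $g(\beta(z))\in\sG\sH_\FF^\CC$, and the non-constant-coefficient case $f(\beta(z))\in\sG\sH_\FF$ follows since $\beta$ maps $\sO_\FF(z,\zz)$ into itself. \textbf{The hard part} is not any single estimate but keeping the bookkeeping honest: one must verify that every operation (conjugation, FLT, the $\zz$-to-$z$ rewriting of Proposition~\ref{ccprop}) preserves both the $\FLT_\FF$-shape of the letters \emph{and} the ``coefficients in $\sO_\FF(z,\zz)$ with $\FF$-valued denominators'' condition, and that the two gradings (by $z$-weight and $\zz$-weight) behave additively so that the inductions in (1)--(3) actually terminate; the quadratic closure of $\FF$ via Lemma~\ref{betalem1}(1) is used at essentially every such step.
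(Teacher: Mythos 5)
Your overall route is the same as the paper's: treat $z$ and $\zz$ as independent, get freeness and closure under multiplication for (1) by applying Theorem \ref{Brthm} (1) in the $z$-direction and then again to the $\zz$-dependent coefficients, obtain primitives for (2) by the inductive construction of Theorem \ref{Brthm} (2) after partial fraction decomposition in $z$ (where Lemma \ref{betalem1} (1), i.e.\ quadratic closure of $\FF$, is exactly what keeps the new constant singularities inside $\FF$, at the price of enlarging $\Sigma(\zz)$), identify the kernel of $\partial_z$ as $\sH\sL_\FF(\zz)$, and let Proposition \ref{ccprop} do the work for (4). For (5) the paper does not transform the iterated integral directly: it applies Proposition \ref{ccprop} with substituted arguments $(z,\zz)\mapsto(\zz,\beta(z))$, converts the resulting coefficients with Brown's algorithm, and then invokes statement (4). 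Your M\"obius--{\bf I7} shortcut is the same idea in spirit, but as stated it glosses over the re-basing of the iterated integral from the transformed basepoint $\beta^{-1}(0)$ back to $0$: the boundary terms produced by path composition are $\zz$-dependent and must be recognized as elements of $\sH\sL_\FF^\CC(\zz)$ (this is again Lemma \ref{limitlem}/Brown's algorithm), and divergent words need the regularization caveat attached to {\bf I7}.

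The one step that, as written, would fail is (3). If you clear denominators and apply only $D_0=\partial_z^mQ$, the top-weight coefficients become polynomials in $\zz$, not constants, and since the kernel of $\partial_z$ in $\sG\sH_\FF$ is $\sH\sL_\FF(\zz)$ (your own statement (2)), your induction can terminate at a nonzero, nonconstant element of $\sH\sL_\FF(\zz)$, on which no further $z$-differentiations and multiplications will ever produce $1$. You need the $\zz$-direction as well: the paper takes $D_0=\partial_z^m\partial_\zz^{\mm}Q$ with $(m,\mm)$ the bidegree of the top-weight polynomial coefficients and then branches on whether $\partial_zD_0f$ or $\partial_\zz D_0f$ is nonzero; equivalently, you could repair your version by finishing with the one-variable differential simplicity of Theorem \ref{Brthm} (3) applied in the variable $\zz$ to the residual anti-holomorphic function. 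With that repair, and the re-basing remark for (5), your argument coincides with the paper's proof.
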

\begin{proof}
We consider $z$ and $\zz$ as independent complex variables. For (1) we apply Theorem \ref{Brthm} (1) to the hyperlogarithms in $z$.
The coefficients are hyperlogarithms in $\zz$ on which we apply Theorem \ref{Brthm} (1) again. This proves that $\sG\sH_\Sigma$ is free.
With (\ref{prod}) it is clear that $\sG\sH_\Sigma$ is closed under multiplication.

We fix $\zz$. By Theorem \ref{Brthm} (2) the kernel of $\partial_z$ on $\sG\sH_\FF$ are functions which do not depend on $z$. By statement (1) the space of these functions
is $\sH\sL_\FF(\zz)$. To prove the existence of primitives in $\sG\sH_\FF$ we use the explicit inductive construction in the proof of Theorem \ref{Brthm} (2).
To do so we need partial fraction decomposition with respect to $z$ in $\sO_\Sigma(z,\zz)$ which can be achieved by repeatedly using the formula
$$
\frac{1}{(z-\beta_i(\zz))(z-\beta_j(\zz))}=\frac{1}{(\beta_i(\zz)-\beta_j(\zz))(z-\beta_i(\zz))}-\frac{1}{(\beta_i(\zz)-\beta_j(\zz))(z-\beta_j(\zz))}.
$$
By Lemma \ref{betalem1} (1) it is clear that both terms on the right hand side are in $\sO_{\Sigma'}(z,\zz)$ for some $\Sigma'(\zz)\supseteq\Sigma(\zz)$
which may contain new (constant) singularities due to zeros of $\beta_i(\zz)-\beta_j(\zz)$.

For (3) we follow the steps in the proof of Theorem \ref{Brthm} (3).
Now, multiplication with the common denominator $Q$ ensures that the maximum weight part of $Qf$ is a polynomial in $z$ and $\zz$ with maximum degree $(m,\mm)$ in $(z,\zz)$.
We define $D_0=\partial_z^m\partial_\zz^\mm Q$ and find that $D_0f$ has weight $n$ but $\partial_zD_0f$ and $\partial_\zz D_0f$ have weight $\leq n-1$.
If $D_0f=c\in\CC^\times$ then $D=D_0/c$. Otherwise $\partial_zD_0f\neq0$ or $\partial_\zz D_0f\neq0$. If $\partial_zD_0f\neq0$ (or $\partial_\zz D_0f\neq0$)
then (by induction) there exists a differential operator $D_1$ such that $D=D_1\partial_zD_0$ (or $D=D_1\partial_\zz D_0$) fulfills $Df=1$.

For (4) we observe that (see Example \ref{bilinex}) the complex conjugate of $\sO_\Sigma(z,\zz)$ can be written as $\sO_{\Sigma'}(z,\zz)$ for some new set $\Sigma'(\zz)$.
So, it suffices to prove the second identity in (\ref{cc}). The complex conjugate of $L_v(\zz)L_{w(\zz)}(z)$ is $L_{\overline{v}}(z)L_{\overline{w}(z)}(\zz)$.
The first factor is in $\sG\sH_\FF^\CC$. By Proposition \ref{ccprop} we can write the second factor in terms of hyperlogarithms in $z$ with letters in $\FLT_\FF(\zz)$ and
coefficients in $\sH\sL_\FF^\CC(\zz)$. Hence, also the second factor is in $\sG\sH_\FF^\CC$. By statement (1) the product is in $\sG\sH_\FF^\CC$.

To show (5) we first observe that $\sO_\FF(z,\zz)$ is stable under transformations in $\FLT_\FF$. A generator $1/(z-\beta_i(\zz))$ in $\sO_\FF(z,\zz)$
is transformed to $1/(\beta(z)-\gamma(\zz))$ with $\gamma=\beta_i\circ\overline{\beta}\in\FLT_\FF$.
The result has a denominator which is bilinear in $z$ and $\zz$. By Example \ref{bilinex}, it is in $\sO_\FF(z,\zz)$.

By linearity it suffices to show that $g(\beta(z))\in\sG\sH_\FF^\CC$ for $g(z)=L_v(\zz)L_{w(\zz)}(z)$, $v\in\overline{\Sigma(\zz)\cap\FF}^\ast$, $w\in\Sigma^\ast$, see (\ref{GHeq}).
We use Proposition \ref{ccprop} for $\beta_1\ldots\beta_n\mapsto w\circ\overline{\beta}\in\FLT_\FF^\ast$ and $(z,\zz)\mapsto(\zz,\beta(z))$ to obtain that
$L_{w(\overline{\beta(z)})}(\beta(z))$ is a linear combination of hyperlogarithms in $\zz$ with letters in $\FLT_\FF(\beta(z))\subseteq\FLT_\FF(z)$ and coefficients in
$\sH\sL_\FF^\CC(\beta(z))$. By Brown's algorithm (see Section \ref{sectitintpar}) the coefficiens are in $\sH\sL_\FF^\CC(z)$. Likewise $L_v(\overline{\beta(z)})\in\sH\sL_\FF^\CC(\zz)$.
The result shuffles to a function in $\overline{\sG\sH_\FF^\CC}=\sG\sH_\FF^\CC$ (see statement (4)).
\end{proof}

\begin{remark}
With statement (4) in the above theorem we see that the complex conjugate of statement (2) holds for $\partial_\zz$. In particular, $\sG\sH_\FF$ is closed under taking anti-primitives.
\end{remark}

\begin{ex}\label{HLex}
Every product of hyperlogarithms in $z$ and in $\zz$ with letters in $\FF$ is in $\sG\sH_\FF$,
$$
\sH\sL_\FF(z)\sH\sL_\FF(\zz)\subset\sG\sH_\FF.
$$
\end{ex}

\begin{ex}
Extending Example \ref{bilinex} to generalized hyperlogarithms, we observe that every iterated integral in forms with denominators which are bilinear in $z$ and $\zz$ over $\FF$
is in $\sG\sH_\FF$. This is clear from using Brown's algorithm as in the proof of Proposition \ref{ccprop}.
\end{ex}

In general, it is not easy to see if a given integral is in $\sG\sH_\FF$.

\subsection{Singular decomposition}
Any function $f\in\sG\sH_\FF$ has a log-Laurent expansion at $z=0$ whose coefficients have log-Laurent expansions at $\zz=0$, see (\ref{ex0}). In general, with every order in
$z$ the pole order in $\zz$ increases. It is also possible to first expand $f$ in $\zz$ and the coefficients in $z$. In this case one may get increasingly negative orders in $z$.
For GSVHs we are interested in functions where both expansions are equal. This is equivalent to the existence of a global limit for pole orders in $z$ and $\zz$.

\begin{defn}\label{reg0def}
A function $f$ admits a (simultaneous) log-Laurent expansion at $z=\zz=0$ if
\begin{equation}\label{ex00}
f(z,\zz)=\sum_{\ell,\ll=0}^L\sum_{m,\mm=M}^\infty c_{\ell,\ll,m,\mm}(\log z)^\ell(\log\zz)^\ll z^m\zz^\mm,\quad 0\leq L,M\in\ZZ,\;c_{\ell,\ll,m,\mm}\in\CC,
\end{equation}
in a neighborhood of $z=\zz=0$. The $\CC$-algebra of functions with log-Laurent expansions at $0$ is $\sR^0$. If $L=M=0$ we say that $f$ is $\CC$-analytic at 0, see Definition \ref{svdef}.
\end{defn}
One may consider $\sR^0$ as space of regular functions at 0. 
Note that the definition of $\sR^0$ does not depend on the sheets on which the logarithms are evaluated. By (\ref{ex0}) we get $\sH\sL(z)\sH\sL(\zz)\subset\sR^0$.

\begin{ex}
The function $\log(z-\zz)=L_\zz(z)+L_0(\zz)\pm\ii\pi$ is not in $\sR^0$.
\end{ex}
In this section we derive a unique decomposition of a function $f\in\sG\sH_\Sigma$ into a regular and a singular part such that $f$ is regular at 0 if and only if its singular part vanishes.
We need the decomposition to prove Theorem \ref{techthm} which, in turn, is essential to prove the main theorem for GSVHs, Theorem \ref{Gthm}.

\begin{lem}\label{singlem1}
Let $w(\zz)$ be a word with letters in $\FLT_\FF(\zz)$ and $\delta_0^{\mathrm{l}}w(\zz)=0$. Then (the empty sum is zero)
\begin{equation}\label{singpropeq}
L_{0^{\{L\}}w(\zz)}(z)=\sum_{\ell=0}^L\Big(g_\ell(z,\zz)\log^\ell(z)+\sum_{\genfrac{}{}{0pt}{}{w=u\beta v}{\beta\neq0=\beta(0)}}f_\ell^{\beta v}(\zz)L_{0^{\{\ell\}}\beta(\zz)v(\zz)}(z)\Big),
\end{equation}
where $u,v\in\FLT_\FF^\ast$ and $\beta\in\FLT_\FF$. The functions $g_\ell(z,\zz)\in\sG\sH_\FF^\CC$ are $\CC$-analytic at $0$ with $g_\ell(0,\zz)=0$.
The functions $f_\ell^{\beta v}(\zz)\in\sH\sL_\FF^\CC(\zz)$ are anti-holomorphic at $\zz=0$.
\end{lem}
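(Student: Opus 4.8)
The plan is to treat $z$ and $\zz$ as independent variables and to induct on the weight $|w|$, proving \eqref{singpropeq} for all $L\ge0$ at once. Conceptually, $L_{0^{\{L\}}w(\zz)}(z)$ carries two sources of non-regular behaviour at $z=\zz=0$: the explicit prefix $0^{\{L\}}$, producing powers of $\log z$, and the ``bad'' letters $\beta$ (those with $\beta\neq0=\beta(0)$), which run onto the puncture $z=0$ as $\zz\to0$. The identity \eqref{singpropeq} is exactly the assertion that these can be disentangled, with all $\zz$-singularity absorbed into the reference functions $L_{0^{\{\ell\}}\beta(\zz)v(\zz)}(z)$.

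A preliminary sub-lemma is needed: if $u(\zz)$ has $\delta_0^{\mathrm{l}}u=0$ and \emph{no} bad letter, then $L_{u(\zz)}(z)\in\sG\sH_\FF^\CC$ is $\CC$-analytic at $0$ with $L_{u(\zz)}(0)=0$. This follows by induction on $|u|$ from \eqref{pLw}: for the last letter $a$ one has $a$ constant $\neq0$, or $a$ the constant $0$ (where $L_{u_0}(0)=0$ tames $L_{u_0}(z)/z$), or $a$ non-constant with $a(0)\neq0$, and in every case $(z-a(\zz))^{-1}L_{u_0}(z)$ is $\CC$-analytic at $0$; integrating back with vanishing constant (by \eqref{mono00}) preserves $\CC$-analyticity. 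Combined with \eqref{0} and \eqref{L0aw}, this gives the base cases $|w|\le1$ --- when $w$ is a single bad letter the left-hand side \emph{is} a reference function --- and more generally all words without bad letters.

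For the inductive step write $w=w_0\gamma$. By \eqref{pLw}, $\partial_zL_{0^{\{L\}}w(\zz)}(z)=(z-\gamma(\zz))^{-1}L_{0^{\{L\}}w_0(\zz)}(z)$; I would insert the inductive hypothesis for $L_{0^{\{L\}}w_0(\zz)}(z)$, multiply by $(z-\gamma(\zz))^{-1}$, integrate from $0$ (constant $0$ by \eqref{mono00}), and verify term by term. A reference summand $f_\ell^{\beta v}(\zz)L_{0^{\{\ell\}}\beta v}(z)$ integrates by \eqref{pLw} to $f_\ell^{\beta v}(\zz)L_{0^{\{\ell\}}\beta v\gamma}(z)$, again a reference summand for the splitting $w=u\beta(v\gamma)$ with the same $\ell\le L$ --- the only input is that appending a letter keeps a constant-coefficient generalized hyperlogarithm inside $\sG\sH_\FF^\CC$. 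A summand $g_\ell(z,\zz)\log^\ell z$ with $\gamma$ \emph{not} bad: then $(z-\gamma(\zz))^{-1}g_\ell(z,\zz)$ is $\CC$-analytic at $0$ (using $g_\ell(0,\zz)=0$ if $\gamma$ is the constant $0$); writing $g_\ell(z,\zz)\log^\ell z$ as a $\CC$-combination of products $L_v(\zz)L_W(z)$ and using $\int_0^z(t-\gamma(\zz))^{-1}L_W(t)\,\dd t=L_{W\gamma(\zz)}(z)$ puts the integral in $\sG\sH_\FF^\CC$, and pulling leading zeros out by \eqref{L0aw} presents it as $\sum_{j\le\ell}g'_j(z,\zz)\log^j z$ with $g'_j\in\sG\sH_\FF^\CC$ forced (by $\CC$-analyticity of the whole) to be $\CC$-analytic at $0$ with $g'_j(0,\zz)=0$.

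The crux, and the step I expect to be the main obstacle, is when $\gamma$ is itself a bad letter, so $(z-\gamma(\zz))^{-1}g_\ell(z,\zz)$ is genuinely non-regular at $z=\zz=0$. Here I would split off the new singularity by the finite difference
\[
\frac{g_\ell(t,\zz)}{t-\gamma(\zz)}=\frac{g_\ell(\gamma(\zz),\zz)}{t-\gamma(\zz)}+\widetilde g_\ell(t,\zz),\qquad
\widetilde g_\ell(t,\zz)=\frac{g_\ell(t,\zz)-g_\ell(\gamma(\zz),\zz)}{t-\gamma(\zz)} .
\]
Multiplying by $\log^\ell t$ and integrating, the first term gives $\ell!\,g_\ell(\gamma(\zz),\zz)\,L_{0^{\{\ell\}}\gamma}(z)$, a reference summand for $w=w_0\cdot\gamma\cdot e$ with $\ell\le L$; its coefficient vanishes at $\zz=0$ (hence is anti-holomorphic there) because $g_\ell$ is $\CC$-analytic at $0$ and $\gamma(0)=0$, and it lies in $\sH\sL_\FF^\CC(\zz)$ by the parameter results of Section~\ref{sectitintpar} (cf.\ Lemma~\ref{limitlem}, Proposition~\ref{ccprop}) applied to the restriction $t=\gamma(\zz)$. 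For the second term, $\widetilde g_\ell$ is again $\CC$-analytic at $0$ --- its $(t,\zz)$-Taylor coefficients are geometric-type sums in $\gamma(\zz)$, which is $\zz$-analytic there --- while expanding $g_\ell$ in hyperlogarithms and using $\bigl(L_{Y}(t)-L_{Y}(\gamma(\zz))\bigr)/(t-\gamma(\zz))=\partial_t\bigl(L_{Y\gamma(\zz)}(t)-L_{Y}(\gamma(\zz))L_{\gamma(\zz)}(t)\bigr)$ keeps $\int_0^z\widetilde g_\ell(t,\zz)\log^\ell t\,\dd t$ inside $\sG\sH_\FF^\CC$; integration by parts (each $\log t$ clearing against the $1/t$ produced by $\partial_t$) then writes it as $\sum_{j\le\ell}g'_j(z,\zz)\log^j z$ of the required type. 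Reassembling and fixing the vanishing integration constant completes the induction. The points this plan glosses --- and where the real work lies --- are keeping every intermediate object a \emph{constant-coefficient} generalized hyperlogarithm (in $\sG\sH_\FF^\CC$, not merely $\sG\sH_\FF$), for which one leans on stability of $\sG\sH_\FF^\CC$ under appending letters and under extracting leading-$\log$ coefficients, together with the $\CC$-analyticity estimates for the finite-difference quotients.
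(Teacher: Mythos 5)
Your plan is essentially the paper's own proof: induction on the weight, differentiating with \textbf{I8}/(\ref{pLw}), case analysis on the last letter, the finite-difference split $g_\ell(z,\zz)=[g_\ell(z,\zz)-g_\ell(\gamma(\zz),\zz)]+g_\ell(\gamma(\zz),\zz)$ for a bad letter (the paper's (\ref{singproppfeq2})), anti-holomorphy and membership $g_\ell(\gamma(\zz),\zz)\in\sH\sL_\FF^\CC(\zz)$ via Brown's algorithm (Lemma \ref{betalem1}, Proposition \ref{ccprop}), iterated integration by parts to absorb the $\log^\ell$ factors, and vanishing integration constants. The only soft spot is the phrase ``forced by $\CC$-analyticity of the whole'' in the non-bad case (the integral is a $\log$-polynomial with $\CC$-analytic coefficients, not itself $\CC$-analytic); the integration-by-parts mechanism you use in the bad-letter case---and which the paper uses throughout---is the clean way to produce those coefficients.
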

\begin{proof}
The proof is by induction over the weight $n$ of $w(\zz)$. The case $n=0$ is trivial by (\ref{0}).

Assume that $n\geq1$ and $w(\zz)=x(\zz)\beta_n(\zz)$ with $x\in\FLT_\FF^\ast$ and $\beta_n\in\FLT_\FF$. By induction we obtain
\begin{equation}\label{singproppfeq}
\partial_zL_{0^{\{L\}}x(\zz)\beta_n(\zz)}(z)=\frac{1}{z-\beta_n(\zz)}\Big(\sum_{\ell=0}^Lg_\ell(z,\zz)\log^\ell(z)+
\sum_{\genfrac{}{}{0pt}{}{x=u\beta v}{\beta\neq0=\beta(0)}}f_\ell^{\beta v}(\zz)L_{0^{\{\ell\}}\beta(\zz)v(\zz)}(z)\Big).
\end{equation}
If $\beta_n(\zz)=0$ we use integration by parts $\ell+1$ times in the first term of the sum over $\ell$, iteratively integrating $g_\ell(z,\zz)/z$
(which is $\CC$-analytic at $0$ because $g_\ell(0,\zz)=0$) and differentiating $\log^\ell(z)$. With (\ref{int0def}) we obtain a set of new functions
$\tilde g_\ell\in\sG\sH_\FF^\CC$ with $\tilde g_\ell(0,\zz)=0$.
Integration of the second term merely adds a letter 0 to $v(\zz)$. The integration constant is zero because both sides of (\ref{singpropeq}) vanish at $z=0$
(and any value of $\zz$ in the neighborhood of 0). This gives the result for $\beta_n(\zz)=0$.

If $\beta_n(0)\neq0$ then $1/(z-\beta_n(\zz))$ is $\CC$-analytic at $0$ and the result follows like in the case $\beta_n(\zz)=0$.

We are left with the case $\beta_n(\zz)\neq0=\beta_n(0)$ which implies $\beta_n(\zz)=a\zz/(b\zz+1)$ for some constants $a,b\in\FF$, $a\neq0$. In (\ref{singproppfeq}) we write
\begin{equation}\label{singproppfeq2}
g_\ell(z,\zz)=[g_\ell(z,\zz)-g_\ell(\beta_n(\zz),\zz)]+g_\ell(\beta_n(\zz),\zz).
\end{equation}
Because $g_\ell(z,\zz)$ is $\CC$-analytic, $g_\ell(\beta_n(\zz),\zz)$ is anti-holomorphic at $\zz=0$. By $g_\ell\in\sG\sH_\FF^\CC$ we have
$$
g_\ell(\beta_n(\zz),\zz)=\sum_{v,w(\zz)}c_{v,w(\zz)}L_v(\zz)L_{w(\zz)}(\beta_n(\zz)),\quad c_{v,w(\zz)}\in\CC.
$$
With Brown's algorithm and Lemma \ref{betalem1} (1) we obtain $L_{w(\zz)}(\beta_n(\zz))\in\sH\sL_\FF^\CC(\zz)$.
Therefore $g_\ell(\beta_n(\zz),\zz)\in\sH\sL_\FF^\CC(\zz)$. This implies $g_\ell(z,\zz)-g_\ell(\beta_n(\zz),\zz)\in\sG\sH_\FF^\CC$.

By the theory of complex functions (or by explicit calculation) we see that
$$
g_\ell(z,\zz)-g_\ell(\beta_n(\zz),\zz)=[z-\beta_n(\zz)]h_\ell(z,\zz)
$$
with functions $h_\ell$ which are $\CC$-analytic at $0$. We follow the argument of the case $\beta_n(\zz)=0$ to see that integrating the first term of (\ref{singproppfeq2})
gives rise to a set of new functions $\tilde g_\ell\in\sG\sH_\FF^\CC$ with $\tilde g_\ell(0,\zz)=0$.
Integrating the second term of (\ref{singproppfeq2}) gives $\ell!g_\ell(\beta_n(\zz),\zz)L_{0^{\{\ell\}}\beta_n(\zz)}(z)$ which is the term $v=e$ in the second sum of (\ref{singpropeq}).

The terms $v\neq e$ are trivially obtained from integrating the second sum in (\ref{singproppfeq}).
\end{proof}

Let $\sR^0(\zz)$ be the space of anti-analytic functions in $\sR^0$. Complementary to Definition \ref{reg0def} we define singular functions at 0.

\begin{defn}\label{sing0def}
For any finite set $\Sigma(\zz)\subset\FLT_\FF(\zz)$ we define the $\CC$-algebra $\sS^0_\Sigma$ of singular functions at 0 as the $\sR^0(\zz)$-span of hyperlogarithms
$L_{0^{\{\ell\}}\beta(\zz)v(\zz)}(z)$ with $\beta(\zz)\neq0=\beta(0)$. So, every $f\in\sS^0_\Sigma$ has a unique representation as
\begin{equation}\label{sing0eq}
f(z,\zz)=\sum_{\ell=0}^L\sum_{\genfrac{}{}{0pt}{}{\beta v}{\beta\neq0=\beta(0)}}f_\ell^{\beta v}(\zz)L_{0^{\{\ell\}}\beta(\zz)v(\zz)}(z),
\quad\text{with }L\geq0,\;\beta\in\Sigma,\;v\in\Sigma^\ast,\;f_\ell^{\beta v}(\zz)\in\sR^0(\zz).
\end{equation}
\end{defn}

We consider the $\sO_\Sigma(z,\zz)$ modules $\sO_\Sigma(z,\zz)\sR^0$ and $\sO_\Sigma(z,\zz)\sS^0_\Sigma$.
Note that a function $f\in\sO_\Sigma(z,\zz)\sR^0$ does not in general have an expansion (\ref{ex00}). After the multiplication with a common denominator $Q$, however, we get
$Qf\in\CC[z,\zz]\sR^0=\sR^0$.

The module $\sO_\Sigma(z,\zz)\sR^0$ is stable under differentiation with respect to $z$ (because $\partial_z\sO_\Sigma(z,\zz)\subset\sO_\Sigma(z,\zz)$ and $\partial_z\sR^0\subset\sR^0$)
whereas $\sO_\Sigma(z,\zz)\sS^0_\Sigma$ is not, $\partial_z\sS^0_\Sigma\ni\partial_zL_{0^{\{\ell\}}\beta(\zz)}(z)=L_{0^{\{\ell\}}}(z)/(z-\beta(\zz))\notin\sO_\Sigma(z,\zz)\sS^0_\Sigma$.

\begin{prop}\label{singprop3}
For any finite set $\Sigma(\zz)\subset\FLT_\FF(\zz)$ we have
\begin{equation}\label{singreg}
\sO_\FF(z,\zz)\sS^0_\Sigma\cap\sO_\FF(z,\zz)\sR^0\cap\sG\sH_\FF=\{0\}.
\end{equation}
\end{prop}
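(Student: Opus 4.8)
The plan is a proof by contradiction, based on the unique representation of Theorem \ref{GHthm}(1) together with an analysis of the joint log--Laurent behaviour at $z=\zz=0$. Suppose $0\neq f$ lies in the triple intersection. Enlarging $\Sigma(\zz)$ we may assume $f\in\sG\sH_\Sigma$ and that all fractional linear transformations occurring in the finitely many $\sO_\FF(z,\zz)$-coefficients of the two ambient representations of $f$ already lie in $\Sigma(\zz)$. Since each of $\sO_\FF(z,\zz)\sR^0$, $\sO_\FF(z,\zz)\sS^0_\Sigma$ and $\sG\sH_\Sigma$ is stable under multiplication by $\sO_\FF(z,\zz)$, we may clear denominators and assume that in the canonical form $f=\sum_{v,w(\zz)}c_{v,w(\zz)}L_v(\zz)L_{w(\zz)}(z)$ all coefficients $c_{v,w(\zz)}$ are polynomials in $z,\zz$.

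Call $\beta\in\Sigma(\zz)$ a \emph{collision letter} if $\beta\neq0$ but $\beta(0)=0$; equivalently $\beta(\zz)=a\zz/(b\zz+1)$ with $a\in\FF^\times$. Using Theorem \ref{GHthm}(1) I split $f=f_{\mathrm{tame}}+f_{\mathrm{coll}}$ according to whether the word $w(\zz)$ contains a collision letter. If $w(\zz)$ has no collision letter, the singular part in the decomposition of Lemma \ref{singlem1} (applied after extracting leading zeros with (\ref{L0aw})) is empty, so $L_{w(\zz)}(z)=\sum_\ell g_\ell(z,\zz)\log^\ell z$ with $g_\ell\in\sG\sH_\FF^\CC$ that are $\CC$-analytic at $0$; with $L_v(\zz)\in\sR^0(\zz)$ and polynomial coefficients this gives $f_{\mathrm{tame}}\in\sR^0\subseteq\sO_\FF(z,\zz)\sR^0$. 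Hence $f_{\mathrm{coll}}=f-f_{\mathrm{tame}}$ again lies in all three spaces, and one reduces to the case where every word of $f$ carries a collision letter (the residual possibility $f=f_{\mathrm{tame}}\in\sR^0$ is disposed of below in the same way). Since $f\in\sO_\FF(z,\zz)\sR^0$, there is a nonzero polynomial $D$ with $Df\in\sR^0$; in particular the joint expansion of $Df$ at $z=\zz=0$ contains \emph{no} negative powers of $z$ or $\zz$ at all.

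The contradiction is then extracted from the coefficients of $z^n\zz^{-n}(\log z)^\ell$ (the "corner" monomials) in this joint expansion. On the one hand, by (\ref{zeta}) and (\ref{L0aw}) a collision hyperlogarithm expands through polylogarithms $\Li_k(z/\beta(\zz))=\sum_{j\geq1}z^j/(j^k\beta(\zz)^j)$ and $\beta(\zz)^{-j}=\sum_{i=0}^j\binom{j}{i}b^ia^{-j}\zz^{i-j}$, so its corner coefficient at order $n$ is a nonzero rational multiple of $a^{-n}/n^k$, a sequence in $n$ that satisfies no linear recurrence. On the other hand, the negative powers of $\zz$ in any element of $\sO_\FF(z,\zz)\sR^0$ are produced solely by its rational $\sO_\FF(z,\zz)$-factors, whose corner coefficients do satisfy a linear recurrence in $n$; and because $\sR^0$ forbids negative powers entirely, $Df\in\sR^0$ forces the total negative-power corner coefficients to vanish. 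By the $\sO_\Sigma(z,\zz)$-linear independence of the hyperlogarithms $L_v(\zz)L_{w(\zz)}(z)$ (Theorem \ref{GHthm}(1)) -- which separates distinct collision letters through the invariant $a$ and through the residual word -- no cancellation can make the arithmetically independent contributions $a^{-n}/n^k$ disappear. Hence the collision part of $f$ vanishes and $f=0$, a contradiction.

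The hardest part is exactly this last step: converting the informal "transcendental versus rational" dichotomy into a rigorous non-cancellation argument. In effect one must show that the map sending an element of $\sG\sH_\Sigma$ to the family of its corner coefficients $[z^n\zz^{-n}(\log z)^\ell(\log\zz)^{\ll}]$ separates the collision hyperlogarithms from everything in $\sO_\FF(z,\zz)\sR^0$. This is a two-variable refinement of the expansion Lemma \ref{explem} at the punctures $z=\beta_i(\zz)$, which collide at the origin as $\zz\to0$; Theorem \ref{GHthm}(1), the explicit series (\ref{zeta}) and (\ref{L0aw}), and the behaviour of $\sG\sH_\FF$-functions under clearing polynomial denominators are the tools that make it work.
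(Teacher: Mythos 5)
Your overall strategy---expand at $z=\zz=0$, isolate the coefficients carrying negative powers of $\zz$, and derive a contradiction from the arithmetic shape of the collision contributions---points in the same direction as the paper, which likewise extracts the $\zz$-poles produced by letters $\beta(\zz)=a\zz/(b\zz+1)$ and kills them. But there is a genuine gap exactly where you admit the difficulty lies: the non-cancellation step is asserted, not proved, and the tools you invoke do not establish it. Linear independence of the $L_v(\zz)L_{w(\zz)}(z)$ over $\sO_\Sigma(z,\zz)$ (Theorem \ref{GHthm}(1)) is a statement about the functions, not about individual expansion coefficients, so it does not prevent the corner coefficients of different terms from cancelling. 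Moreover, for a general word $0^{\{\ell\}}\beta(\zz)v(\zz)$ the corner coefficients are not of the form $a^{-n}/n^k$: by (\ref{zeta}) they are nested harmonic-type sums in $n$, dressed with binomial factors $\binom{n}{i}b^i$ from $\beta(\zz)^{-n}$ when $b\neq0$, and convolved with the expansion of $L_v(\zz)$ and with the polynomial prefactors; several words sharing the same $a$ can a priori conspire to cancel. Ruling this out is essentially the content of the proposition, and the ``satisfies no linear recurrence'' dichotomy is not a proof (and is false for recurrences with polynomial coefficients: $a^{-n}/n^k$ is P-recursive). You also leave the residual case $f=f_{\mathrm{tame}}$ open: if no word carries a collision letter, the corner argument yields no contradiction, yet you still have to show $f=0$.

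What is missing is the reduction that makes the corner analysis finite. The paper takes a minimal counterexample with respect to the order $\prec$ of Definition \ref{deforder}, uses stability of the intersection under $\partial_z$, $\partial_\zz$ and multiplication by polynomials, and invokes differential simplicity (Theorem \ref{GHthm}(3)) to normalize one maximal-weight coefficient to $1$; minimality then collapses $f$ to a sum $\sum_{\ell,\beta}f_{0^{\{\ell\}}\beta}(z,\zz)\,L_{0^{\{\ell\}}\beta(\zz)}(z)$ whose top-$\log$ coefficients are constants $f_\beta$. Only after this reduction is the coefficient of $(\log z)^L$ explicitly $Q\sum_\beta f_\beta L_{\beta(\zz)}(z)/L!$ with $L_{\beta(\zz)}(z)=\log(1-z(b\zz+1)/a\zz)$, and the vanishing of its $z^m\zz^{\mm-m}$ coefficients for large $m$ becomes a concrete interpolation problem solved by two Vandermonde systems in the parameters $a$ and $b$---finite linear algebra instead of an asymptotic or transcendence argument. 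Without this reduction (or a worked-out substitute, i.e.\ an actual linear-independence theorem for the corner-coefficient sequences of arbitrary words with arbitrary $L_v(\zz)$ and polynomial prefactors), your argument does not close.
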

\begin{proof}
Consider $f(z,\zz)=\sum_wf_w(z,\zz)L_{w(\zz)}(z)\in\sX:=\sO_\FF(z,\zz)\sS^0_\Sigma\cap\sO_\FF(z,\zz)\sR^0\cap\sG\sH_\FF$, where $w(\zz)=0^{\{\ell\}}\beta(\zz)v(\zz)$
with $\beta(\zz)\neq0=\beta(0)$ (by $f\in\sO_\FF(z,\zz)\sS^0_\Sigma$) and $f_w\in\sO_\FF(z,\zz)\sH\sL_\FF^\CC(\zz)$ (by $f\in\sG\sH_\FF$).
We assume that $0\neq f\in\sX$ is a minimal counter-example with respect to `$\prec$' in Definition \ref{deforder} for the weight $|w(\zz)|$.

We consider $\partial_zf$ and $\partial_\zz f$ which are in $\sO_\FF(z,\zz)\sR^0\cap\sG\sH_\FF$. The differential operators act on $f_w(z,\zz)$ and on $L_{w(\zz)}(z)$.
In the latter case we get from {\bf I8} a sum of hyperlogarithms with one letter of $w(\zz)$ removed times factors in $\sO_\FF(z,\zz)$.
By Lemma \ref{singlem1} the hyperlogarithms split into terms in $\sR^0\cap\sG\sH_\FF^\CC$ and terms in $\sS^0_\Sigma\cap\sG\sH_\FF^\CC$.
We subtract all terms of the first type from $\partial_zf$ (or $\partial_\zz f$) to obtain an expression in $\sR^0\cap\sG\sH_\FF^\CC$ which also is in $\sS^0_\Sigma$.
The expression is in $\sX$ and it has the coefficients $\partial_zf_w$ (or $\partial_\zz f_w$) in the highest weight part.

For any polynomial $0\neq P\in\CC[z,\zz]$ we get $0\neq Pf\in\sX$.
From a given function in $\sX$ we can hence construct new functions in $\sX$ where the coefficients of highest weight are multiplied with polynomials and repeatedly differentiated
with $\partial_z$ and $\partial_\zz$. From (the proof of) Theorem \ref{GHthm} (3) we construct a differential operator $D$ with $Df_v(z,\zz)=1$ for a fixed
$v(\zz)$ of maximum weight. We may hence assume $f_v=1$ without restriction.

By induction, the differential operators $\partial_z$ and $\partial_\zz$ map $f$ to polynomials in $\log(z)$ with coefficients in $\sO_\FF(z,\zz)\sH\sL_\FF^\CC(\zz)$,
see (\ref{singpropeq}). Integrating $\partial_zf$ provides hyperlogarithms with words $0^{\{\ell\}}\beta(\zz)$. Because $f\in\sO_\FF(z,\zz)\sS^0_\Sigma$ we get
$$
f(z,\zz)=\sum_{\ell=0}^L\sum_{\genfrac{}{}{0pt}{}{\beta(\zz)\in\Sigma(\zz)}{\beta\neq0=\beta(0)}}f_{0^{\{\ell\}}\beta}(z,\zz)L_{0^{\{\ell\}}\beta(\zz)}(z).
$$
Because $\partial_zf$ and $\partial_\zz f$ are polynomials in $\log(z)$ we get $f_{0^{\{L\}}\beta}(z,\zz)=:f_\beta\in\CC$. We assume that $L$ is minimal, i.e.\ at least one
$f_\beta\neq0$.

We multiply $f\in\sO_\Sigma(z,\zz)\sR^0$ with the common denominator $Q(z,\zz)=\sum_{k+\kk\leq d}c_{k,\kk}z^k\zz^\kk$, $c_{k,\kk}\in\CC$, so that $Qf\in\sR^0$.
Using (\ref{L0aw}) we calculate the coefficient of $(\log z)^L$ in the log-Laurent expansion of $Qf$,
$$
c_L(z,\zz)=\frac{Q(z,\zz)}{L!}\sum_{\genfrac{}{}{0pt}{}{\beta(\zz)\in\Sigma(\zz)}{\beta\neq0=\beta(0)}}f_\beta L_{\beta(\zz)}(z).
$$
The FLTs $\beta(\zz)$ are of the form $a\zz/(b\zz+1)$ with $a,b\in\FF$, $a\neq0$. We expand $Q(z,\zz)$ and $L_{\beta(\zz)}(z)=\log(1-z(b\zz+1)/a\zz)$ first in $z$ and then in $\zz$ to obtain
$$
c_L(z,\zz)=-\frac{1}{L!}\sum_{\genfrac{}{}{0pt}{}{k,\kk\geq0}{k+\kk\leq d}}\sum_{k_1=1}^\infty\sum_{i=0}^{k_1}\sum_{\beta(\zz)=\frac{a\zz}{b\zz+1}\in\Sigma(\zz)}
c_{k,\kk}f_\beta z^{k+k_1}\zz^{\kk-k_1+i}\frac{\genfrac(){0pt}{}{k_1}{i}b^i}{k_1a^{k_1}}.
$$
Large values of $k_1$ give poles in $\zz$. We read off the $z^m\zz^{\mm-m}$ coefficient of $c_L$,
\begin{equation}\label{eqc}
c_{L,m,\mm-m}=-\frac{1}{L!}\sum_{\genfrac{}{}{0pt}{}{k,\kk\geq0}{k+\kk\leq\min\{d,\mm\}}}\sum_{\beta(\zz)=\frac{a\zz}{b\zz+1}\in\Sigma(\zz)}
c_{k,\kk}f_\beta\frac{\genfrac(){0pt}{}{m-k}{\mm-k-\kk}b^{\mm-k-\kk}}{(m-k)a^{m-k}},\quad\text{if }m>\mm.
\end{equation}
Because $Qf\in\sR^0$ we get $c_{L,m,\mm-m}=0$ for fixed $\mm$ and sufficiently large $m$. This leads to a system of linear equations for $c_{k,\kk}f_\beta$.
With an infinite number of values for $m$ we may consider $m$ as a variable (by interpolation). For $k+\kk=\mm$ the summand has poles $a^{k-m}/(m-k)$.
For all other values of $k,\kk$ the summands are holomorphic in $m$. By linear independence every pole term has to vanish separately. If $\mm\leq d$,
$$
\sum_{\beta(\zz)=\frac{a\zz}{b\zz+1}\in\Sigma(\zz)}c_{k,\mm-k}f_\beta\frac{1}{(m-k)a^{m-k}}\quad\text{for all }k=0,\ldots,\mm.
$$
Because at least one $c_{k,\mm-k}\neq0$ we get $\sum_\beta f_\beta\,a^{k-m}=0$ which is a Vandermonde system in the parameter $a$. The system has trivial kernel and we obtain
$$
\sum_{b:\beta(\zz)=\frac{a\zz}{b\zz+1}\in\Sigma(\zz)}f_\beta=0\quad\text{for all }a.
$$
We conclude that there exists at least one $\beta(\zz)=\frac{a\zz}{b\zz+1}\in\Sigma(\zz)$ with $b\neq0$ and $f_\beta\neq0$.
Now, we also consider $\mm$ as a variable. We multiply (\ref{eqc}) with $(m-\mm)!(\mm-d)!/(m-1)!$ and obtain for $\mm>d$,
$$
0=\sum_{\genfrac{}{}{0pt}{}{k,\kk\geq0}{k+\kk\leq d}}\sum_{\genfrac{}{}{0pt}{}{\beta(\zz)=\frac{a\zz}{b\zz+1}\in\Sigma(\zz)}{b\neq0}}
\frac{c_{k,\kk}f_\beta\,a^{k-m}b^{\mm-k-\kk}}{(m-k,k)(m-\mm+1,\kk)(\mm-d+1,d-k-\kk)},
$$
where we used the Pochhammer symbol $(x,n)=(x+n-1)!/(x-1)!$. Note that terms with $b=0$ vanish.
We prove by induction over $n\leq d$ that $c_{k,\kk}=0$ for all $k+\kk\leq n$. For $n=d$ this is a contradiction to $Q\neq0$.

For $n=0$ we consider the pole $1/\mm$ which only exists in the term $k=\kk=0$. By linear independence of pole terms we get $\sum_{\beta:b\neq0}c_{0,0}f_\beta a^{-m}b^\mm/(\mm-d+1,d)=0$.
After multiplication with $(\mm-d+1,d)$ this is a Vandermonde system for $a$ and $b$ in $m$ and $\mm$, respectively.
Because there exists a $\beta$ with $b\neq0$ and $f_\beta\neq0$ we get $c_{0,0}=0$.

For general $n\leq d$ we need to consider poles in $\mm$ and in $m$. To do this we use a partial fraction basis in $\mm$ whose coefficients are $b^{\mm-k-\kk}$ times a function in $m$
(see Remark \ref{sOrk}). Consider the pole $1/(\mm-n)$ which has contributions from terms with $k+\kk\leq n$. By induction we can restrict ourselves to terms with $k+\kk=n$ and obtain
for the coefficients
\begin{align*}
0&=\sum_{k=0}^n\sum_{\beta:b\neq0}\frac{c_{k,n-k}f_\beta\,a^{k-m}b^{\mm-n}}{(m-k,k)(m-n+1,n-k)(n-d+1,d-n)}\\
&=\frac{(m-n+1,n-1)}{(n-d+1,d-n)}\sum_{k=0}^n\sum_{\beta:b\neq0}\frac{c_{k,n-k}f_\beta\,a^{k-m}b^{\mm-n}}{m-k}.
\end{align*}
The individual terms in the last sum vanish as they have different poles in $m$. From the Vandermonde system we get $c_{k,n-k}f_\beta=0$ for all $\beta$ with $b\neq0$ and all $k=0,\ldots,n$. The claim follows because at least one $f_\beta\neq0$.
\end{proof}

\begin{remark}
The stronger $\sO_\FF(z,\zz)\sS^0_\Sigma\cap\sO_\FF(z,\zz)\sR^0=\{0\}$ is also true. Here, we only need (\ref{singreg}) which is easier to prove.
\end{remark}

\begin{thm}\label{decompthm}
Let $\Sigma\subset\FLT_\FF$ be finite and consider the point $z=0$. Every $f\in\sG\sH_\Sigma$ has a unique decomposition $f(z)=f_\mathrm{r}(z)+f_\mathrm{s}(z)$ into a regular part
$f_\mathrm{r}\in\sO_\Sigma(z,\zz)\sR^0\cap\sG\sH_\FF$ and a singular part $f_\mathrm{s}\in\sO_\Sigma(z,\zz)\sS^0_\Sigma\cap\sG\sH_\FF$.
If $f\in\sG\sH_\Sigma^\CC$ then $f_\mathrm{r}\in\sR^0\cap\sG\sH_\FF^\CC$ and $f_\mathrm{s}\in\sS^0_\Sigma\cap\sG\sH_\FF^\CC$.
\end{thm}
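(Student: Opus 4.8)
The plan is to derive uniqueness from Proposition~\ref{singprop3} and existence by splitting, one hyperlogarithm at a time, with Lemma~\ref{singlem1}. For \textbf{uniqueness}: if $f=f_{\mathrm r}+f_{\mathrm s}=f_{\mathrm r}'+f_{\mathrm s}'$ are two decompositions of the required shape, then $f_{\mathrm r}-f_{\mathrm r}'=f_{\mathrm s}'-f_{\mathrm s}$ lies in $\sO_\FF(z,\zz)\sR^0\cap\sO_\FF(z,\zz)\sS^0_\Sigma\cap\sG\sH_\FF$, which is $\{0\}$ by Proposition~\ref{singprop3}; hence $f_{\mathrm r}=f_{\mathrm r}'$ and $f_{\mathrm s}=f_{\mathrm s}'$, and running the same argument inside $\sG\sH_\FF^\CC$ gives uniqueness of the refined decomposition.

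For \textbf{existence} I would first use Theorem~\ref{GHthm}~(1) to write $f=\sum_{v,w}\phi_{v,w}(z,\zz)\,L_v(\zz)L_{w(\zz)}(z)$ with $\phi_{v,w}\in\sO_\Sigma(z,\zz)$, $v\in\overline{\Sigma(\zz)\cap\FF}^\ast$ and $w(\zz)\in\Sigma(\zz)^\ast$, and reduce by linearity to a single summand. Grouping the left-hand factors into $\psi_w:=\sum_v\phi_{v,w}L_v(\zz)$, which lies in the $\sO_\Sigma(z,\zz)$-module generated by $\sR^0(\zz)$ since each $L_v(\zz)$ has the log-Taylor expansion~(\ref{ex0}) at $\zz=0$, the problem becomes to decompose $\psi_w(z,\zz)\,L_{w(\zz)}(z)$. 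I would then peel off the leading letters of $w(\zz)$ equal to the constant $0$, writing $w(\zz)=0^{\{L\}}w'(\zz)$ with $\delta_0^{\mathrm l}w'(\zz)=0$. If $w'=e$ then $L_{w(\zz)}(z)=\tfrac{1}{L!}\log^L(z)\in\sR^0$ by~(\ref{0}), so the whole term is regular. If $w'\neq e$, Lemma~\ref{singlem1} supplies $L_{w(\zz)}(z)=R_w+S_w$ with $R_w=\sum_{\ell=0}^{L}g_\ell(z,\zz)\log^\ell(z)$, where the $g_\ell\in\sG\sH_\FF^\CC$ are $\CC$-analytic at $0$ so that $R_w\in\sR^0\cap\sG\sH_\FF^\CC$, and $S_w=\sum_{\ell,\beta v'}f_\ell^{\beta v'}(\zz)\,L_{0^{\{\ell\}}\beta(\zz)v'(\zz)}(z)$ with $\beta\neq0=\beta(0)$ and $f_\ell^{\beta v'}\in\sR^0(\zz)$, so that $S_w\in\sS^0_\Sigma\cap\sG\sH_\FF^\CC$ by Definition~\ref{sing0def}; this is exactly the split demanded by Definition~\ref{reg0def}.

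The \textbf{reassembly} step then sets $f_{\mathrm r}=\sum_w\psi_w R_w$ and $f_{\mathrm s}=\sum_w\psi_w S_w$, and I would check the closure properties: $\psi_w R_w\in\sO_\Sigma(z,\zz)\sR^0$ because $\sR^0$ is a $\CC$-algebra containing $\sR^0(\zz)$, and $\psi_w S_w\in\sO_\Sigma(z,\zz)\sS^0_\Sigma$ because $\sS^0_\Sigma$, being an $\sR^0(\zz)$-span, is stable under multiplication by $\sR^0(\zz)$. Both sums lie in $\sG\sH_\FF$ since every constituent is a hyperlogarithm --- over an alphabet that may be enlarged by the partial-fraction and integration-by-parts manipulations inside Lemma~\ref{singlem1}, cf.\ Lemma~\ref{betalem1}~(1) --- multiplied by an element of $\sO_\FF(z,\zz)$; note that the $\sO_\Sigma(z,\zz)$ on the regular and singular sides keeps its original alphabet because it comes only from the $\phi_{v,w}$, while the enlargements are absorbed into $\sR^0$ and $\sR^0(\zz)$. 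Finally, if $f\in\sG\sH_\Sigma^\CC$ then every $\phi_{v,w}\in\CC$, so $\psi_w\in\sH\sL_\FF^\CC(\zz)\subseteq\sR^0(\zz)\cap\sG\sH_\FF^\CC$, and the same computation yields $f_{\mathrm r}\in\sR^0\cap\sG\sH_\FF^\CC$ and $f_{\mathrm s}\in\sS^0_\Sigma\cap\sG\sH_\FF^\CC$.

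I do not expect a genuinely new difficulty at this stage: the analytic content is already carried by Lemma~\ref{singlem1} --- the delicate point there being the splitting $g_\ell(z,\zz)=[g_\ell(z,\zz)-g_\ell(\beta_n(\zz),\zz)]+g_\ell(\beta_n(\zz),\zz)$ that converts a primitive into an anti-primitive --- and by Proposition~\ref{singprop3}, whose Vandermonde and pole-counting argument is the real work. The one thing to be careful about is the interplay of the coefficient ring $\sO_\Sigma(z,\zz)$ with $\sR^0$ and $\sS^0_\Sigma$: a factor in $\sO_\Sigma(z,\zz)$ may carry a pole at $z=0$ when $0\in\Sigma(\zz)$, so the decomposition must be read at the level of $\sO_\Sigma(z,\zz)$-modules rather than of functions that are literally regular at $0$. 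This is precisely why uniqueness has to be phrased through Proposition~\ref{singprop3}, and it is the reason the two halves of the statement genuinely rely on each other.
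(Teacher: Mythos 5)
Your proposal is correct and follows essentially the same route as the paper: existence by reducing via Theorem \ref{GHthm} (1) and linearity to single terms $\phi\,L_v(\zz)L_{w(\zz)}(z)$, splitting $L_{w(\zz)}(z)$ with Lemma \ref{singlem1} (equation (\ref{singpropeq})) and absorbing $\phi\,L_v(\zz)$ using $L_v(\zz)\in\sR^0(\zz)$, and uniqueness by feeding the difference of two decompositions into Proposition \ref{singprop3}. Your extra remarks on peeling off leading zeros and on reading the decomposition at the level of $\sO_\Sigma(z,\zz)$-modules are consistent with, and only slightly more explicit than, the paper's argument.
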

\begin{proof}
We first show the existence of the decomposition. By linearity it suffices to show the existence for the individual terms $f(z)=\phi(z)L_v(\zz)L_{w(\zz)}(z)$ in (\ref{GHeq}),
$\phi\in\sO_\Sigma(z,\zz)$, $v\in\overline{\Sigma(\zz)\cap\FF}^\ast$, $w(\zz)\in\Sigma(\zz)^\ast$. If $f\in\sG\sH_\Sigma^\CC$ then $\phi\in\CC$.
We use (\ref{singpropeq}) for $L_{w(\zz)}(z)$ and shuffle $\phi(z)L_v(\zz)$ with the result. Because $L_v(\zz)\in\sR^0(\zz)$, see (\ref{ex0}), we find
that the terms with $\log^\ell(z)$ in (\ref{singpropeq}) are in $\sO_\Sigma(z,\zz)\sR^0\cap\sG\sH_\FF$ and in $\sR^0\cap\sG\sH_\FF^\CC$ if $f\in\sG\sH_\Sigma^\CC$.
Likewise we get that the other terms are in $\sO_\Sigma(z,\zz)\sS_\Sigma^0\cap\sG\sH_\FF$ or in $\sS_\Sigma^0\cap\sG\sH_\FF^\CC$, respectively.

For uniqueness we assume that $f(z)=f_\mathrm{s}(z)+f_\mathrm{r}(z)=g_\mathrm{s}(z)+g_\mathrm{r}(z)$ are two decompositions. We get
$f_\mathrm{s}(z)-g_\mathrm{s}(z)=-f_\mathrm{r}(z)+g_\mathrm{r}(z)\in\sO_\Sigma(z,\zz)\sS^0_\Sigma\cap\sO_\Sigma(z,\zz)\sR^0\cap\sG\sH_\FF$.
By Proposition \ref{singprop3} we get $f_\mathrm{s}(z)=g_\mathrm{s}(z)$ and $f_\mathrm{r}(z)=g_\mathrm{r}(z)$.
\end{proof}
The above theorem implies that $f\in\sG\sH_\Sigma^\CC$ is in $\sR^0$ if and only if its singular part vanishes.

\section{Single-valued functions}\label{sectsv}
In the previous section we defined functions which may have one-dimensional singular loci in $\CC$ (such as $\log(z-\zz)$).
For single-valuedness we now consider a finite set $\Sigma\subset\CC$ of point-like singularities.

In the context of pQFT, single-valuedness is a fundamental property of all graphical functions (\cite{gf} with a proof in \cite{par}).
Graphical functions have single-valued log-Laurent expansions at their singular points $0$, $1$, and $\infty$ (conjectured in \cite{par,gf} with a partial proof in \cite{gfe}).

\begin{defn}\label{svdef}
Let $\Sigma$ be a finite set of points in $\CC$. A function $f$ on $\CC\backslash\Sigma$ has a single-valued log-Laurent expansion at $a\in\Sigma$ if
\begin{equation}\label{aexpansion}
f(z)=\sum_{\ell=0}^{L_a}\sum_{m,\mm=M_a}^\infty c_{\ell,m,\mm}^a[\log(z-a)(\zz-\aaa)]^\ell(z-a)^m(\zz-\aaa)^\mm
\end{equation}
in some neighborhood of $a$. A function $f$ has a single-valued log-Laurent expansion at $\infty$ if
\begin{equation}\label{inftyexpansion}
f(z)=\sum_{\ell=0}^{L_\infty}\sum_{m,\mm=-\infty}^{M_\infty}c_{\ell,m,\mm}^\infty[\log z\zz]^\ell z^m\zz^\mm
\end{equation}
in some neighborhood of $\infty$. We say that $f$ is $\CC$-analytic (i.e.\ $\CC$-valued real-analytic) at $a\in\CC\cup\{\infty\}$ if $L_a=M_a=0$.

The space of $\CC$-analytic functions on $\CC\backslash\Sigma$ with single-valued log-Laurent expansions at $\Sigma\cup\{\infty\}$ is $\sS\sV_\Sigma$. We also define
$$
\sS\sV_\CC=\bigcup_{\Sigma\subset\CC}\sS\sV_\Sigma.
$$
\end{defn}

\begin{remark}\label{svrk}\mbox{}
\begin{enumerate}
\item If $f$ has a single-valued log-Laurent expansion at $a$ then $f$ has trivial monodromy at $a$.
\item Because single-valuedness is defined locally it is clear that $\sS\sV_\Sigma$ is a bi-differential $\CC$-algebra.
\item In general, one expects that every $f\in\sS\sV_\Sigma$ has a primitive in $\sS\sV_\Sigma$. Because integration has a non-local character, it is not easy to construct these
primitives in general.
\item The space $\sS\sV_\CC$ is invariant under complex conjugation, $\overline{\sS\sV_\CC}=\sS\sV_\CC$.
\item The space $\sS\sV_\CC$ is invariant under linear transformations $z\mapsto az+b$ for $a,b\in\CC$ and under the inversion $z\mapsto1/z$. These transformations generate $\FLT_\CC$;
for any $f\in\sS\sV_\CC$ and $\beta\in\FLT_\CC$ we have $f(\beta(z))\in\sS\sV_\CC$.
\end{enumerate}
\end{remark}

\begin{ex}\label{exsv1}
From the expansion formulae (\ref{exa}), (\ref{exinfty}) for hyperlogarithms in $\sH\sL_\Sigma$ we deduce that every single-valued linear combination of products
of hyperlogarithms in $z$ and $\zz$ has single-valued log-Laurent expansions at $\Sigma\cup\{\infty\}$. These single-valued hyperlogarithms were constructed and studied by F. Brown in
\cite{BrSVMP,BrSVMPII}. Examples of low weights are ($D$ is the Bloch-Wigner dilogarithm (\ref{BWD}))
$$
\log(z)+\log(\zz)\in\sS\sV_{\{0\}},\quad\log(1-z)+\log(1-\zz)\in\sS\sV_{\{1\}},\quad D(z)\in\sS\sV_{\{0,1\}}.
$$
\end{ex}

\begin{ex}\label{counterex1}
The function $f(z)=\log(z-\zz+1)\in\sG\sH_{\{\zz-1\}}$ is $\CC$-analytic in $\CC$. It fails to have a single-valued log-Laurent expansion at infinity because $f(z)=\log(z)+\log(1-(\zz-1)/z)$
has increasing powers of $\zz$ in the expansion at $z=\infty$. Therefore $f\notin\sS\sV_\CC$.
\end{ex}

\begin{ex}\label{exsv2}
The function $f(z)=\log(z\zz+1)$ is in $\sS\sV_\emptyset$. It is real-analytic in $\CC$ and at infinity,
$$
\log(z\zz+1)=\log(z\zz)+\sum_{m=-\infty}^{-1}\frac{(-1)^m}{m}(z\zz)^m\quad\text{for }|z|>1.
$$
On the other hand, $\log(z\zz-1)$ is singular on the unit circle and hence not in $\sS\sV_\CC$.
\end{ex}

Generalizing the previous example (where $\beta(\zz)=-1/\zz$) we define for any field $\FF\subseteq\CC$
\begin{equation}\label{FLTempty}
\FLT_\FF^\emptyset(\zz)=\Big\{\beta(\zz)=\frac{a\zz+b}{c\zz+d}\in\FLT_\FF, c\neq0,\;\{z=\beta(\zz),z\in\CC\}=\emptyset\Big\}.
\end{equation}
Exclusion of the case $c=0$ avoids the situation of Example \ref{counterex1}.

\begin{ex}\label{exsv2a}
The function $f(z)=\log(az\zz+bz+c\zz+d)$ is in $\sS\sV_\emptyset$ if $-(c\zz+d)/(a\zz+b)\in\FLT_\CC^\emptyset(\zz)$.
By definition of $\FLT_\CC^\emptyset(\zz)$ it is clear that $f$ is $\CC$-analytic in $\CC$. Because $a\neq0$, $f$ has a single-valued log-Laurent expansion at infinity as in Example \ref{exsv2}.
\end{ex}
More interesting (and more important in pQFT, see Remark \ref{constrrk} (2)) are single-valued functions with letters which are not in $\FLT_\CC^\emptyset(\zz)$.

\begin{ex}\label{exsv3}
We will see in Example \ref{Dzzex} that there exists a single-valued primitive of $f(z)=D(z)/(z-\zz)$ (see also \cite{Duhr}).
Note that the denominator of $f$ vanishes on the real line. This singularity is lifted by the numerator, so that $f\in\sS\sV_{\{0,1\}}$.
\end{ex}

\begin{ex}
Graphical functions in even dimensions $\geq4$ (conjecturally) are in $\sS\sV_{\{0,1\}}$ \cite{gfe}.
\end{ex}

\begin{ex}\label{zm2zzex}
The function $1/(z-2\zz)$ is $\CC$-analytic in $\CC\backslash\{0\}$. It has trivial monodromy at 0 but fails to have a single-valued log-Laurant expansion at $0$.
Therefore $1/(z-2\zz)\notin\sS\sV_\CC$
\end{ex}

We define the projection onto the anti-residue-free part $\pi_{\partial_\zz}$ as the complex conjugate of (\ref{pip}),
$$
\pi_{\partial_\zz}f(z)=f(z)-\sum_{a\in\Sigma}\frac{\overline{\res}_af}{\zz-\aaa},
$$
where $\overline{\res}_af$ is the anti-residue of $f$ at $a$, i.e.\ the coefficient $c^a_{0,0,-1}$ in (\ref{aexpansion}) (whereas $\res_af=c^a_{0,-1,0}$).

In analogy to (\ref{evalm}) we have on $\sS\sV_\Sigma$
\begin{equation}\label{respar}
\res_a\partial_z=\overline{\res}_a\partial_\zz
\end{equation}
for any $a\in\Sigma$ because both sides are the coefficient $c^a_{1,0,0}$ in (\ref{aexpansion}).

\section{Generalized single-valued hyperlogarithms}\label{sectgsvh}
Generalized single-valued hyperlogarithms are generalized hyperlogarithms which are single-valued in the sense of Definition \ref{svdef}.

\subsection{Definitions and first results}
\begin{defn}\label{Gdef}
Let $\FF=\overline{\FF}$ be a quadratically closed number field, see (\ref{Fdef}). Let $\Sigma(\zz)\subset\FLT_\FF(\zz)$ be finite, see (\ref{FLTdef}). Then
\begin{equation}\label{gdef}
\sG_\Sigma=\sG\sH_\Sigma\cap\sS\sV_\CC
\end{equation}
is the space of generalized single-valued hyperlogarithms (GSVHs) on $\CC\backslash(\Sigma(\zz)\cap\FF)$, see Remark \ref{GSVHrk} (2).
Likewise, $\sG_\Sigma^\CC=\sG\sH_\Sigma^\CC\cap\sS\sV_\CC$ is the $\CC$-algebra of GSVHs with constant coefficients. We also define
$$
\sG_\FF=\bigcup_{\Sigma(\zz)\subset\FLT_\FF(\zz)}\sG_\Sigma,\quad\sG_\FF^\CC=\bigcup_{\Sigma(\zz)\subset\FLT_\FF(\zz)}\sG_\Sigma^\CC.
$$
The weight of $f\in\sG_\Sigma$ is the weight of $f\in\sG\sH_\Sigma$ in Definition \ref{GHdef}.
\end{defn}

\begin{ex}
Examples of GSVHs are in Examples \ref{exsv1}, \ref{exsv2}, \ref{exsv2a}, and \ref{exsv3}.
Graphical functions are not GSVHs in general because they can fail to be generalized hyperlogarithms.
\end{ex}

By Theorem \ref{GHthm} and Remark \ref{svrk} it is clear that $\sG_\FF$ is a bi-differential algebra which is closed under complex conjugation and transformations in $\FLT_\FF$.

\begin{remark}\label{GSVHrk}\mbox{}
\begin{enumerate}
\item In our context, constants have weight 0. One can set up a motivic version of GSVHs where, e.g., $\zeta(3)$ has weight 3. In this context one may add the motivic weight of
constants to the weight of GSVHs which lifts the weight to a grading on GSVHs. We will not pursue this here. The weight in the above definition is a filtration.
\item We will prove in Theorem \ref{GHSigmathm} that single-valued functions in $\sG\sH_\Sigma$ can only have singularities at $\Sigma(\zz)\cap\FF$ (and at infinity).
We have
$$
\sG_\Sigma=\sG\sH_\Sigma\cap\sS\sV_{\Sigma\cap\FF}.
$$
It is possible that a function in $\sG_\Sigma$ with minimal set $\Sigma(\zz)$ is $\CC$-analytic at a point in $\Sigma(\zz)\cap\FF$, see Example \ref{exexept}.
\end{enumerate}
\end{remark}

We will prove that for any finite set $\Sigma_0\subset\CC$ of singularities the space $\sG_\FF\cap\sS\sV_{\Sigma_0}$ is closed under taking (anti-)primitives (Theorem \ref{Gthm}).
To do this we will use the commutative hexagon in Figure \ref{fig:gsvh}
where we use the notation $\partial_z\sG_\FF^\CC$, $\partial_\zz\sG_\FF^\CC$, $\partial_z\partial_\zz\sG_\FF^\CC$ for the set of functions obtained by differentiating
$\sG_\FF^\CC$. The technical difficulty in the proof is to show Corollary \ref{partialcor} which states that an expression in $\partial_z\sG_\Sigma$ with linearly independent coefficients
has single-valued individual terms. We need a series of auxiliary results.

\begin{lem}\label{gsvhlemma}
Let $\beta\in\FLT_\FF\backslash\FF$ and $a\in\FF$ with $a=\beta(\aaa)$. Let $f(z,\zz)$ be $\CC$-analytic at $a$.
Then, there exists a function $g(z,\zz)$ which is $\CC$-analytic at $a$ such that
$$
\frac{f(z,\zz)}{z-\beta(\zz)}=\frac{f(\beta(\zz),\zz)}{z-\beta(\zz)}+g(z,\zz).
$$
\end{lem}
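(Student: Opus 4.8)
The plan is to produce $g$ explicitly as
$$
g(z,\zz)=\frac{f(z,\zz)-f(\beta(\zz),\zz)}{z-\beta(\zz)}
$$
and to show that the apparent singularity along the curve $z=\beta(\zz)$ is removable, so that $g$ is $\CC$-analytic at $a$ in the sense of Definition \ref{svdef}. Equivalently, after clearing the denominator the claim becomes the holomorphic division statement
$$
f(z,\zz)-f(\beta(\zz),\zz)=(z-\beta(\zz))\,g(z,\zz)
$$
with $g$ regular at $(a,\aaa)$, and the whole content of the lemma is that this $g$ exists as a regular function.

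As in the proof of Theorem \ref{GHthm} I would treat $z$ and $\zz$ as independent complex variables, so that $\CC$-analyticity of $f$ at $a$ means holomorphy in a bidisc $\{|z-a|<\rho,\ |\zz-\aaa|<\rho\}$ for some $\rho>0$. Since $a\in\FF\subset\CC$ and $\beta\in\FLT_\FF\setminus\FF$ with $\beta(\aaa)=a\ne\infty$, the transformation $\beta$ is holomorphic near $\zz=\aaa$ and takes $\aaa$ to $a$; hence there is $\delta\le\rho$ with $|\beta(\zz)-a|<\rho/2$ whenever $|\zz-\aaa|<\delta$. For such $\zz$ the function $z\mapsto f(z,\zz)$ is holomorphic on $|z-a|<\rho$, and since the closed disc $|z-\beta(\zz)|\le\rho/2$ lies inside that region it has a Taylor expansion about $z=\beta(\zz)$,
$$
f(z,\zz)=\sum_{m\ge0}a_m(\zz)\,(z-\beta(\zz))^m,\qquad a_m(\zz)=\frac{1}{m!}(\partial_z^mf)(\beta(\zz),\zz),
$$
convergent for $|z-\beta(\zz)|<\rho/2$, $|\zz-\aaa|<\delta$. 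Each coefficient $a_m$ is holomorphic in $\zz$ near $\aaa$, being the composite of the holomorphic function $\partial_z^mf$ with the holomorphic map $\zz\mapsto(\beta(\zz),\zz)$, and $a_0(\zz)=f(\beta(\zz),\zz)$. Consequently
$$
f(z,\zz)-f(\beta(\zz),\zz)=(z-\beta(\zz))\sum_{m\ge1}a_m(\zz)\,(z-\beta(\zz))^{m-1},
$$
so $g(z,\zz)=\sum_{m\ge1}a_m(\zz)(z-\beta(\zz))^{m-1}$ is holomorphic on a neighbourhood of $(a,\aaa)$; restricting to a bidisc about $(a,\aaa)$ yields the required $\CC$-analytic $g$.

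I expect the only delicate point to be the uniformity in $\zz$: one must ensure that the Taylor radius of $f(\cdot,\zz)$ about $z=\beta(\zz)$ stays bounded below and that the coefficients stay holomorphic as $\zz$ varies near $\aaa$, which is exactly what the choice of $\delta$ (keeping $\beta(\zz)$ strictly inside the disc of holomorphy of $f$) arranges. A cleaner packaging of the same point is to invoke the general holomorphic division lemma: $z-\beta(\zz)$ has $\partial_z(z-\beta(\zz))=1\ne0$, so it is part of a holomorphic coordinate system near $(a,\aaa)$, and the holomorphic germ $f(z,\zz)-f(\beta(\zz),\zz)$ vanishes identically on $\{z=\beta(\zz)\}$, hence is divisible by $z-\beta(\zz)$ in the local ring of holomorphic functions. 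Either way, the hypothesis $a=\beta(\aaa)$ is precisely what places the expansion point $(a,\aaa)$ on the curve $z=\beta(\zz)$, which is what makes the division possible; I anticipate no further obstacle.
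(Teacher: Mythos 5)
Your proof is correct and follows essentially the same route as the paper: treat $z,\zz$ as independent variables, Taylor expand $f$ in $z$ about $z=\beta(\zz)$ with coefficients holomorphic in $\zz$ near $\aaa$, and factor out $z-\beta(\zz)$, the hypothesis $\beta(\aaa)=a\neq\infty$ guaranteeing regularity of the quotient at $(a,\aaa)$. Your extra care about the uniform lower bound on the Taylor radius, and the alternative packaging via holomorphic division, are fine refinements of the paper's argument rather than a different method.
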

\begin{proof}
Fix $\zz$ in a neighborhood of $\aaa$ such that $f$ is holomorphic at $z=\beta(\zz)$ (we have $\beta(\aaa)=a$), $f(z,\zz)=\sum_{k=0}^\infty f_k(\zz)(z-\beta(\zz))^k$ with anti-holomorphic functions $f_k$. This gives
$$
g(z,\zz)=\sum_{k=0}^\infty f_{k+1}(\zz)[z-a-(\beta(\zz)-\beta(\aaa))]^k.
$$
Because $\beta(\aaa)\neq\infty$ we get that $\beta(\zz)-\beta(\aaa)$ expands into a Taylor series in $\zz-\aaa$ with low degree one.
The above series expands into a Taylor series at $(z,\zz)=(a,\aaa)$.
\end{proof}

\begin{ex}\label{wt2ex0}
Consider $f(z)=\log(z\zz)/(z-1/\zz)$. Because $\log(z\zz)\in\sG_{\{0\}}$ we obtain from Lemma \ref{gsvhlemma} that $f$ is $\CC$-analytic at $a$ for $|a|=1$ and hence
at $\CC\backslash\{0\}$. Because $(z-1/\zz)^{-1}$ is $\CC$-analytic at $0$ and at $\infty$ we get $f\in\sG_{\{0,\zz^{-1}\}}$.
\end{ex}

\begin{lem}\label{gsvhlem1}
Let $\beta_i\in\FLT_\FF\backslash\FF$ and $a\in\FF$ with $a=\beta_i(\aaa)$ for $i=1,\ldots,n$ and $\beta_i\neq\beta_j$ for $i\neq j$.
Let $f_i\in\sG_\FF$. Then $\sum_if_i(z,\zz)/(z-\beta_i(\zz))$ has a single-valued log-Laurent expansion (\ref{aexpansion}) at $a$
if and only if $f_i(\beta_i(\zz),\zz)=0$ for all $i$ in a neighborhood of $\zz=\aaa$.
\end{lem}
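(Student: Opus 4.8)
The plan is to reduce the claim to the uniqueness of the singular decomposition at the point $a$ (Theorem \ref{decompthm}), after first moving the point $a$ to the origin and the letters $\beta_i(\zz)$ to FLTs fixing $0$. First I would apply a transformation in $\FLT_\FF$ of the form $z\mapsto z+a$ (using that $\sG_\FF$ is stable under $\FLT_\FF$, Theorem \ref{GHthm} (5), and that $\sS\sV_\CC$ is likewise stable, Remark \ref{svrk} (5)), so that without loss of generality $a=0$ and each $\beta_i\in\FLT_\FF\backslash\FF$ satisfies $\beta_i(0)=0$, i.e.\ $\beta_i(\zz)=a_i\zz/(b_i\zz+1)$ with $a_i\in\FF^\times$. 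The statement to prove becomes: $\sum_i f_i(z,\zz)/(z-\beta_i(\zz))$ admits a single-valued log-Laurent expansion at the origin if and only if each $f_i(\beta_i(\zz),\zz)$ vanishes near $\zz=0$.

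For the ``if'' direction, assume $f_i(\beta_i(\zz),\zz)=0$ near $\zz=0$. Applying Lemma \ref{gsvhlemma} to each term with $f$ replaced by $f_i$, I get $f_i(z,\zz)/(z-\beta_i(\zz))=g_i(z,\zz)$ with $g_i$ $\CC$-analytic at $0$; here I must first note that $f_i\in\sG_\FF$ is $\CC$-analytic at $0$ in the sense required by Lemma \ref{gsvhlemma}, which holds because $0$ lies in the singular locus only through the letters $\beta_i$ and $f_i$ is single-valued there. The sum $\sum_i g_i$ is then $\CC$-analytic at $0$, in particular it has a (trivial, $L_0=M_0=0$) single-valued log-Laurent expansion. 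For the ``only if'' direction, suppose $F:=\sum_i f_i(z,\zz)/(z-\beta_i(\zz))$ has a single-valued log-Laurent expansion at $0$, hence lies in $\sO_\Sigma(z,\zz)\sR^0$ for a suitable $\Sigma(\zz)\ni\beta_i$. Writing each $f_i=f_i(\beta_i(\zz),\zz)+[f_i(z,\zz)-f_i(\beta_i(\zz),\zz)]$ and using Lemma \ref{gsvhlemma} again, I obtain
$$
F(z,\zz)=\sum_i\frac{f_i(\beta_i(\zz),\zz)}{z-\beta_i(\zz)}+g(z,\zz),
$$
where $g\in\sO_\Sigma(z,\zz)\sR^0\cap\sG\sH_\FF$ and, crucially, each $f_i(\beta_i(\zz),\zz)\in\sH\sL_\FF^\CC(\zz)\subset\sR^0(\zz)$ is anti-holomorphic at $\zz=0$ (this is the point of Lemma \ref{gsvhlemma}'s proof: $\beta_i(\aaa)\neq\infty$ and Brown's algorithm with Lemma \ref{betalem1} (1) puts $L_{w(\zz)}(\beta_i(\zz))$ into $\sH\sL_\FF^\CC(\zz)$). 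Since $(z-\beta_i(\zz))^{-1}=L_{\beta_i(\zz)}(z)$-type data with $\beta_i\neq0=\beta_i(0)$, the remaining sum $\sum_i f_i(\beta_i(\zz),\zz)/(z-\beta_i(\zz))$ lies in $\sO_\Sigma(z,\zz)\sS^0_\Sigma$ by Definition \ref{sing0def} (after absorbing the $\CC$-analytic parts), while it also equals $F-g\in\sO_\Sigma(z,\zz)\sR^0\cap\sG\sH_\FF$. By Proposition \ref{singprop3} this intersection is $\{0\}$, so $\sum_i f_i(\beta_i(\zz),\zz)/(z-\beta_i(\zz))=0$; since the $\beta_i$ are distinct non-constant FLTs, the residues $f_i(\beta_i(\zz),\zz)$ at $z=\beta_i(\zz)$ must each vanish, giving the claim.

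The main obstacle I anticipate is the bookkeeping in the ``only if'' direction: making precise that $\sum_i f_i(\beta_i(\zz),\zz)/(z-\beta_i(\zz))$ really lies in $\sO_\Sigma(z,\zz)\sS^0_\Sigma$ and not merely in some larger space. One needs the decomposition $f_i(\beta_i(\zz),\zz)\in\sR^0(\zz)$ to split off its purely anti-holomorphic part versus any anti-meromorphic tail (poles in $\zz$), but here $\beta_i(\zz)=a_i\zz/(b_i\zz+1)$ is anti-holomorphic at $\zz=0$ and $f_i$ is single-valued, so $f_i(\beta_i(\zz),\zz)$ is genuinely anti-holomorphic at $0$ — no tail appears — and the term sits in $\sS^0_\Sigma$ with $L=0$. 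A secondary subtlety is that Proposition \ref{singprop3} is stated for the triple intersection with $\sG\sH_\FF$, so I must confirm $\sum_i f_i(\beta_i(\zz),\zz)/(z-\beta_i(\zz))\in\sG\sH_\FF$, which is immediate since each factor is (a product of a generalized hyperlogarithm coefficient with a simple FLT-pole). Once these membership checks are in place, the uniqueness of the singular decomposition does all the work and separating the distinct poles $z=\beta_i(\zz)$ is a routine partial-fraction argument using Lemma \ref{betalem1} (1).
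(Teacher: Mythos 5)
Your normalization $z\mapsto z+a$ is fine, but there is a genuine gap at the very first step of both directions: you apply Lemma \ref{gsvhlemma} directly to $f_i$, asserting that $f_i\in\sG_\FF$ is $\CC$-analytic at $a$ ``because $f_i$ is single-valued there''. Single-valuedness does not give $\CC$-analyticity: a GSVH may have a single-valued log-Laurent expansion (\ref{aexpansion}) at $a$ with $L_a>0$ or $M_a<0$ (powers of $\log(z-a)(\zz-\aaa)$ and poles), and the lemma is stated precisely for such $f_i$ — its conclusion is also only a single-valued log-Laurent expansion of the sum, not $\CC$-analyticity. The paper's proof must therefore expand each $f_i$ by (\ref{aexpansion}) and apply Lemma \ref{gsvhlemma} only to the $\CC$-analytic coefficients $f_{\ell,i}$, and in the ``if'' direction it needs an additional leading-logarithm argument to deduce from $f_i(\beta_i(\zz),\zz)=0$ that \emph{every} coefficient satisfies $f_{\ell,i}(\beta_i(\zz),\zz)=0$. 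Your argument covers only the special case of $\CC$-analytic $f_i$ and silently skips this step.

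The ``only if'' direction also fails at its key step. First, $\sum_i f_i(\beta_i(\zz),\zz)/(z-\beta_i(\zz))$ is \emph{not} in $\sO_\Sigma(z,\zz)\sS^0_\Sigma$: by Definition \ref{sing0def}, $\sS^0_\Sigma$ is spanned over $\sR^0(\zz)$ by genuine hyperlogarithms $L_{0^{\{\ell\}}\beta(\zz)v(\zz)}(z)$ of weight $\geq1$ in $z$, and a simple pole, being the derivative of $L_{\beta(\zz)}(z)$ rather than such a hyperlogarithm, is not a member (``$L_{\beta}$-type data'' is not membership). Second, membership in $\sO_\Sigma(z,\zz)\sR^0$ is essentially vacuous here: $f_i(\beta_i(\zz),\zz)/(z-\beta_i(\zz))$ lies in $\sO_\Sigma(z,\zz)\sR^0$ automatically, since $(z-\beta_i(\zz))^{-1}\in\sO_\Sigma(z,\zz)$ and the numerator is anti-holomorphic; as the paper notes after Definition \ref{sing0def}, elements of $\sO_\Sigma(z,\zz)\sR^0$ need not admit expansions (\ref{ex00}), so this condition carries no single-valuedness information and Proposition \ref{singprop3} cannot be invoked the way you do to force the sum to vanish. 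This is exactly why the paper passes to a $z$-primitive $H$, whose leading term is an honest hyperlogarithm $L_{0^{\{L'\}}(\beta_i(\zz+\aaa)-a)}(z)\in\sS^0_\Sigma$, applies Theorem \ref{decompthm} to conclude $H\notin\sR^0$, and then argues that the resulting poles of increasing order in $\zz$ are preserved under differentiation, contradicting the assumed single-valued expansion of the original sum. Without the passage to a primitive and the transfer back by differentiation, your argument does not close.
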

\begin{proof}
By (\ref{aexpansion}) we have
\begin{equation}\label{eqgsvh1}
f_i(z,\zz)=\sum_{\ell=0}^L\frac{[\log(z-a)(\zz-\aaa)]^\ell}{[(z-a)(\zz-\aaa)]^{-M}}f_{\ell,i}(z,\zz),
\end{equation}
with $\CC$-analytic functions $f_{\ell,i}$ at $a$ and $L,M\in\ZZ$. Because $f_i\in\sG_\FF$ we can use (\ref{GHeq}) to see that the coefficients of $\log^\ell(\zz-\aaa)$ in the
expansion of $f_i$ are in $\sG\sH_\FF$. Hence $f_{\ell,i}\in\sG\sH_\FF$. Using Lemma \ref{gsvhlemma} for $f_{\ell,i}(z,\zz)$ gives
\begin{equation}\label{eqpflem1}
\sum_{i=1}^n\frac{f_i(z,\zz)}{z-\beta_i(\zz)}-g(z,\zz)=\sum_{\ell=0}^L\frac{[\log(z-a)(\zz-\aaa)]^\ell}{[(z-a)(\zz-\aaa)]^{-M}}
\sum_{i=1}^n\frac{f_{\ell,i}(\beta_i(\zz),\zz)}{z-\beta_i(\zz)}
\end{equation}
with a function $g$ which has a single-valued log-Laurent expansion at $a$.

Assume $f_1(\beta_1(\zz),\zz)=\ldots=f_n(\beta_n(\zz),\zz)=0$ but there exists an $i$ such that $f_{\ell,i}(\beta_i(\zz),\zz)\neq0$ for some $\ell$.
Let $L'$ be the maximum of these $\ell$. From (\ref{eqgsvh1}) we get ($a=\beta_i(\aaa)$)
$$
0=\sum_{\ell=0}^{L'}\frac{[\log(\beta_i(\zz)-\beta_i(\aaa))(\zz-\aaa)]^\ell}{[(\beta_i(\zz)-\beta_i(\aaa))(\zz-\aaa)]^{-M}}f_{\ell,i}(\beta_i(\zz),\zz).
$$
Because the $f_{\ell,i}$ are $\CC$-analytic at $a$ we get that the $f_{\ell,i}(\beta_i(\zz),\zz)$ are anti-holomorphic at $\zz=\aaa$.
The coefficient of $\log^{L'}(\zz-\aaa)$ on the right hand side is zero which implies $f_{L',i}(\beta_i(\zz),\zz)=0$. The contradiction gives $f_{\ell,i}(\beta_i(\zz),\zz)=0$ for all
$\ell,i$ and the right hand side of (\ref{eqpflem1}) vanishes.

If there exists an $i$ such that $f_i(\beta(\zz),\zz)\neq0$ then $f_{\ell,i}(\beta_i(\zz),\zz)\neq0$ for some $\ell$. Let $L'$ be the maximum of these $\ell$.
To use Theorem \ref{decompthm} we shift $(z,\zz)\mapsto(z+a,\zz+\aaa)$ in (\ref{eqpflem1}). By partial fraction decomposition in $z$ we obtain a primitive $H$ of the right hand side.
From (\ref{0}) we get
$$
H(z,\zz)=L'![(\beta_i(\zz+\aaa)-a)\zz]^Mf_{L',i}(\beta_i(\zz+\aaa),\zz+\aaa))L_{0^{\{L'\}}(\beta_i(\zz+\aaa)-a)}(z)+\ldots.
$$
Because $f_{\ell,i}\in\sG\sH_\FF$ we get $H\in\sG\sH_\FF$. The above term of $H$ is in $\sS^0_\Sigma$ for $\Sigma(\zz)={\{0,\beta_i(\zz+\aaa)-a\}}$.

In Theorem \ref{decompthm} we get $H_{\mathrm{s}}\neq0$. Hence $H\notin\sR^0\subset\sO_\Sigma(z,\zz)\sR^0$. The expansion of $H$ in $z$ generates poles in $\zz$ of increasing order
(see the proof of Proposition \ref{singprop3}). This structure is preserved upon differentiation (this is trivial for the maximum power of $\log(z\zz)$)).
Because $g(z+a,\zz+\aaa)\in\sR^0$ in (\ref{eqpflem1}) this implies that
$\sum_if_i(z+a,\zz+\aaa)/(z+a-\beta_i(\zz+\aaa))\notin\sR^0$. So, $\sum_if_i(z,\zz)/(z-\beta_i(\zz))$ has no single-valued log-Laurent expansion at $a$.
\end{proof}

\begin{lem}\label{gsvhlem2}
Let $\beta_i(\zz)=a_i\zz+b_i$ with $a_i,b_i\in\FF$, $a_i\neq0$, for $i=1,\ldots,n$, and $\beta_i\neq\beta_j$ for $i\neq j$.
Let $f_i\in\sG_\FF$. Then $\sum_if_i(z,\zz)/(z-\beta_i(\zz))$ has a single-valued log-Laurent expansion (\ref{inftyexpansion})
at $\infty$ if and only if $f_i(\beta_i(\zz),\zz)=0$ for all $i$ in a neighborhood of $\zz=\infty$.
\end{lem}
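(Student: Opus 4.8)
\emph{Proof plan.} The statement is the version at $\infty$ of Lemma~\ref{gsvhlem1}, and I would obtain it by inverting the complex plane. The map $z\mapsto1/z$ (which automatically sends $\zz\mapsto1/\zz$) lies in $\FLT_\FF$, so by Theorem~\ref{GHthm}~(5) and Remark~\ref{svrk}~(5) the substitution $(z,\zz)\mapsto(1/z,1/\zz)$ maps $\sG_\FF$ bijectively onto itself. It exchanges the points $0$ and $\infty$, and since $\log z\zz=-\log(1/z)(1/\zz)$ and $z^m\zz^\mm=(1/z)^{-m}(1/\zz)^{-\mm}$ it carries the expansion~(\ref{inftyexpansion}) at $\infty$ into the expansion~(\ref{aexpansion}) at $0$ and conversely (a change of summation variable). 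Thus a function $F\in\sG\sH_\FF$ has a single-valued log-Laurent expansion at $\infty$ if and only if $\hat F(z,\zz):=F(1/z,1/\zz)$ has one at $0$.

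\emph{Transforming the sum.} Applying this to $F(z,\zz)=\sum_i f_i(z,\zz)/(z-\beta_i(\zz))$ and using $a_i\neq0$, a short computation gives
$$\frac1z-\beta_i\Big(\frac1\zz\Big)=\frac{\zz-a_iz-b_iz\zz}{z\zz}=-\frac{(a_i+b_i\zz)\,(z-\gamma_i(\zz))}{z\zz},\qquad\gamma_i(\zz)=\frac{\zz}{b_i\zz+a_i},$$
so that $\hat F(z,\zz)=\sum_i \hat f_i(z,\zz)/(z-\gamma_i(\zz))$ with $\hat f_i(z,\zz)=-z\zz\,f_i(1/z,1/\zz)/(a_i+b_i\zz)$. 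Here $\gamma_i$ is a non-constant FLT over $\FF$ with $\gamma_i(0)=0$, hence $\gamma_i\in\FLT_\FF\setminus\FF$ and $\gamma_i(\overline{0})=0$; and $\beta_i\neq\beta_j$ forces $\gamma_i\neq\gamma_j$. Moreover $f_i(1/z,1/\zz)\in\sG_\FF$ by the first paragraph, and the rational factor $-z\zz/(a_i+b_i\zz)$ lies in $\sO_\FF(z,\zz)$ and (being $\CC$-analytic away from at most one point of $\CC$, where it has a pole in $\zz$, and admitting the obvious single-valued expansions there and at $\infty$) in $\sS\sV_\CC$; by Theorem~\ref{GHthm}~(1) therefore $\hat f_i\in\sG\sH_\FF\cap\sS\sV_\CC=\sG_\FF$.

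\emph{Conclusion.} By the first two paragraphs, $\sum_i f_i/(z-\beta_i(\zz))$ has a single-valued log-Laurent expansion at $\infty$ iff $\sum_i\hat f_i/(z-\gamma_i(\zz))$ has one at $0$. Lemma~\ref{gsvhlem1}, applied with $a=0$ and the data $\gamma_i\in\FLT_\FF\setminus\FF$, $\hat f_i\in\sG_\FF$ (all hypotheses verified above), says this holds iff $\hat f_i(\gamma_i(\zz),\zz)=0$ near $\zz=0$ for every $i$. Finally $1/\gamma_i(\zz)=b_i+a_i/\zz=\beta_i(1/\zz)$, whence $\hat f_i(\gamma_i(\zz),\zz)=-\big(\zz^2/(a_i+b_i\zz)^2\big)\,f_i(\beta_i(1/\zz),1/\zz)$; since the rational prefactor is not identically $0$, this vanishes near $\zz=0$ exactly when, under the same substitution $\zz\leftrightarrow1/\zz$, $f_i(\beta_i(\zz),\zz)=0$ near $\zz=\infty$, which is the assertion.

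\emph{Main point of care.} All the real content sits in Lemma~\ref{gsvhlem1}; the rest is bookkeeping. The one thing that must be checked carefully is that the inversion genuinely interchanges the two notions of single-valued log-Laurent expansion (at $0$ and at $\infty$) and keeps the auxiliary functions $\hat f_i$ inside $\sG_\FF$, together with the verification that the transformed letters $\gamma_i$ are non-constant, pairwise distinct FLTs over $\FF$ through the origin, so that Lemma~\ref{gsvhlem1} applies verbatim.
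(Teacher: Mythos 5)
Your proof is correct and follows essentially the same route as the paper: invert via $(z,\zz)\mapsto(1/z,1/\zz)$, rewrite the denominators to get the letters $\gamma_i(\zz)=\zz/(b_i\zz+a_i)$ vanishing at $0$, and apply Lemma \ref{gsvhlem1} at $a=0$, ending with the same identity $\hat f_i(\gamma_i(\zz),\zz)=-\zz^2 f_i(\beta_i(1/\zz),1/\zz)/(a_i+b_i\zz)^2$. The extra care you take in checking that the inversion exchanges the expansions at $0$ and $\infty$ and that $\hat f_i\in\sG_\FF$ is exactly what the paper's terser citation of Theorem \ref{GHthm} (5) is standing in for.
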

\begin{proof}
Let $g(z,\zz)=\sum_if_i(1/z,1/\zz)/(1/z-\beta_i(1/\zz))$. We need to study $g$ at $z=\zz=0$. We get
$$
g(z,\zz)=\sum_{i=1}^n\frac{f_i(1/z,1/\zz)}{1/z-a_i/\zz-b_i}=\sum_{i=1}^n\frac{-z\zz f_i(1/z,1/\zz)}{(b_i\zz+a_i)(z-\zz/(b_i\zz+a_i))}.
$$
Consider $\tilde f_i=-z\zz f_i(1/z,1/\zz)/(b_i\zz+a_i)$ and $\tilde\beta_i(\zz)=\zz/(b_i\zz+a_i)$.
We have $\tilde f_i\in\sG_\FF$ by Theorem \ref{GHthm} (5) and $\tilde\beta_i(0)=0$ because $a_i\neq0$.
With Lemma \ref{gsvhlem1} we obtain that $g$ has a single-valued log-Laurent expansion at 0 if and only if
$$
-\frac{\tilde\beta_i(\zz)\zz f_i(1/\tilde\beta_i(\zz),1/\zz)}{b_i\zz+a_i}=-\frac{\zz^2 f_i(a_i/\zz+b_i,1/\zz)}{(b_i\zz+a_i)^2}=0
$$
for all $i$ in a neighborhood of $\zz=0$. This is equivalent to $f_i(a_i\zz+b_i,\zz)=0$ for all $i$ in a neighborhood of $\zz=\infty$.
\end{proof}

\begin{ex}\label{Dzzex0}
Consider $f(z)=D(z)/(z-\zz)$, see Example \ref{exsv3}. By Example \ref{exsv1} we have $D\in\sG_{\{0,1\}}$. If $D=D(z,\zz)$ is considered as a function in $z,\zz$ then $D(\zz,\zz)=0$
in the neighborhood of any $\zz=\aaa\in\RR\cup\{\infty\}$. From  Lemma \ref{gsvhlemma} we get that $f$ is $\CC$-analytic at $\RR\backslash\{0,1\}$ and hence in $\CC\backslash\{0,1\}$.
By Lemmas \ref{gsvhlem1} and \ref{gsvhlem2} the function $f$ has single-valued log-Laurent expansions at $z=0,1,\infty$. So, $f\in\sG_{\{0,1,\zz\}}$.
\end{ex}

\begin{thm}\label{techthm}
Let $\Sigma(\zz)\subset\FLT_\FF(\zz)$ be finite. Let $\sB_\Sigma$ be a $\CC$-basis of $\sO_\Sigma(z,\zz)$, see (\ref{OSigmadef}).
Every $f\in\sG_\Sigma$ has a unique $\sB_\Sigma$ representation as
\begin{equation}\label{basisexp}
f(z)=\sum_{\phi\in\sB_\Sigma}\phi(z)h_\phi(z)\quad\text{with}\quad h_\phi\in\sG_\Sigma^\CC.
\end{equation}
\end{thm}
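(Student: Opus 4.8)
\emph{Uniqueness and basis independence.} By Theorem~\ref{GHthm}(1) the space $\sG\sH_\Sigma$ is a free $\sO_\Sigma(z,\zz)$-module on the hyperlogarithms $L_v(\zz)L_{w(\zz)}(z)$; combining a $\CC$-basis of the $\CC$-algebra $\sG\sH_\Sigma^\CC$ with the $\CC$-basis $\sB_\Sigma$ of $\sO_\Sigma(z,\zz)$ yields a $\CC$-basis of $\sG\sH_\Sigma$, so every $f\in\sG\sH_\Sigma$ has a unique representation $f=\sum_{\phi\in\sB_\Sigma}\phi\,h_\phi$ with $h_\phi\in\sG\sH_\Sigma^\CC$, and the $h_\phi$ are determined by $f$ before any single-valuedness is imposed. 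A change of the basis $\sB_\Sigma$ replaces the tuple $(h_\phi)_\phi$ by a fixed $\CC$-linear image of itself, and $\sG_\Sigma^\CC$ is a $\CC$-vector space, so it is enough to prove $h_\phi\in\sS\sV_\CC$ (equivalently $h_\phi\in\sG_\Sigma^\CC$) for one convenient choice of $\sB_\Sigma$; I take the monomial basis $\{(z-\beta(\zz))^m(\zz-\bb)^\mm\}$ of Remark~\ref{sOrk}, on which $\partial_z$ and $\partial_\zz$ act transparently through (\ref{pLw}) and its conjugate.

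\emph{Existence: the induction.} Since $h_\phi\in\sG\sH_\Sigma^\CC$ already, the task is that each $h_\phi\in\sS\sV_\CC$. I argue by induction on the weight of $f$ with respect to the refined order $\prec$ of Definition~\ref{deforder}, weight $0$ being trivial because then $h_\phi\in\CC$. For the step, recall that $\sS\sV_\CC$ is a bi-differential $\CC$-algebra (Remark~\ref{svrk}(2)) and that $\sG\sH_\FF$ is stable under $\partial_z$ and $\partial_\zz$ (Theorem~\ref{GHthm}(2) and the remark following it), so $\partial_zf$ and $\partial_\zz f$ are GSVHs again. When one of them is strictly smaller than $f$ in the order $\prec$ the induction hypothesis applies to it; in the remaining case the only obstruction is that the rational factors $\phi$ regenerate top-weight terms under differentiation, and this is removed by first passing to $Df$ for a differential operator $D$ of the form used in the proofs of Theorems~\ref{Brthm}(3) and \ref{GHthm}(3) (multiply by the common denominator of the top-weight coefficients, then differentiate in $z$ and in $\zz$), which produces a GSVH of strictly smaller $\prec$-weight. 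In all cases it therefore suffices to treat the pure top-weight part $g=\sum_\phi\phi\,h_\phi^{(n)}$ of $f$, where $h_\phi^{(n)}$ is the weight-$n$ part of $h_\phi$: once $g$ is known to be single-valued, $f-g$ is a GSVH of strictly smaller weight, its $\sB_\Sigma$-coefficients $h_\phi-h_\phi^{(n)}$ are single-valued by the induction hypothesis, and adding back the $h_\phi^{(n)}$ finishes the proof.

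\emph{Existence: the local analysis.} It remains to show, for $g=\sum_\phi\phi\,h_\phi^{(n)}\in\sS\sV_\CC$ with all $h_\phi^{(n)}\in\sG\sH_\Sigma^\CC$ of pure weight $n$, that each $h_\phi^{(n)}$ lies in $\sS\sV_\CC$. Because $g\in\sS\sV_{\Sigma_0}$ for some \emph{finite} $\Sigma_0\subset\CC$, the function $g$ is $\CC$-analytic on $\CC\setminus\Sigma_0$; in particular, along each of the finitely many curves $z=\beta(\zz)$ with $\beta\in\Sigma(\zz)\setminus\FF$, as well as at the constant letters $\Sigma(\zz)\cap\FF$ and at $\infty$, the poles of the $\phi$'s and the logarithmic branch points of the $h_\phi^{(n)}$ must cancel. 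Fixing such a $\beta_0$ and grouping the $\sB_\Sigma$-representation of $g$ by the order of pole along $z=\beta_0(\zz)$, one expands in $z-\beta_0(\zz)$ and, using freeness of $\sG\sH_\Sigma$ over $\sO_\Sigma(z,\zz)$, reads off coefficient by coefficient what $\CC$-analyticity of $g$ forces: for simple poles this is exactly the criterion of Lemmas~\ref{gsvhlem1} and \ref{gsvhlem2}, giving $h_\phi^{(n)}(\beta_0(\zz),\zz)=0$ on the curve for every $\phi$ polar there; for higher pole orders the same conclusion, together with the vanishing of the residual singular part, follows from this after shifting $z=\beta_0(\zz)$ to $z=0$ and invoking the singular decomposition Theorem~\ref{decompthm} and Proposition~\ref{singprop3} (which exclude a nonzero component in $\sS^0_\Sigma$). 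Removing along $z=\beta_0(\zz)$ precisely the $\phi$'s that are polar there leaves a GSVH with strictly fewer singular curves, so an inner induction on $|\Sigma(\zz)|$ — with the behaviour at the constant letters and at $\infty$ handled the same way — strips all curve singularities from the individual $h_\phi^{(n)}$ and shows them $\CC$-analytic off $\Sigma(\zz)\cap\FF$; the monodromy triviality and the single-valued log-Laurent expansions at $\Sigma(\zz)\cap\FF$ and at $\infty$ that $g$ enjoys then descend to each $h_\phi^{(n)}$, so $h_\phi^{(n)}\in\sS\sV_\CC$. The integration constants that appear whenever one integrates back in $z$ lie in $\ker\partial_z=\sH\sL_\FF(\zz)$ and are cut down to $\CC$ by the same single-valuedness, so they do not interfere.

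\emph{Main obstacle.} The crux is this local analysis along the curves $z=\beta(\zz)$: deducing, from single-valuedness of the \emph{sum} $\sum_\phi\phi\,h_\phi$, that the individual coefficient functions vanish to the correct order on each singular curve and carry no leftover singular part in the sense of Theorem~\ref{decompthm}. Controlling pole orders larger than one and their interaction with the powers of $\log$ in the expansions (\ref{aexpansion}) — rather than the bookkeeping of integration constants, which is routine — is the delicate point, and it is precisely what Lemmas~\ref{gsvhlem1}, \ref{gsvhlem2}, Theorem~\ref{decompthm} and Proposition~\ref{singprop3} were prepared for.
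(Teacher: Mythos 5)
Your uniqueness argument matches the paper (freeness from Theorem \ref{GHthm}(1)), but the existence part has genuine gaps. The central one is the reduction to the pure top-weight part: you assume that $g=\sum_\phi\phi\,h_\phi^{(n)}$ is single-valued, but the weight filtration does not respect single-valuedness --- in general the top-weight part of a GSVH is \emph{not} in $\sS\sV_\CC$; the lower-weight terms with constant coefficients (e.g.\ the $\sL_a(b)$-terms in Example \ref{Labexa}) are exactly what restores trivial monodromy, so this reduction begs the question. Your differentiation/induction scheme cannot supply it either: knowing the statement for $\partial_zf$ and $\partial_\zz f$ does not return it for $f$, both because single-valuedness of the two derivatives does not imply single-valuedness of the function (compare $\log(z/\zz)$, whose derivatives $1/z$ and $-1/\zz$ are single-valued), and because ``integrating back'' inside the class is precisely the single-valued integration of Theorem \ref{hexthm}, which is proved \emph{after} and by means of Theorem \ref{techthm} (via Proposition \ref{techprop} and Corollary \ref{partialcor}); invoking it here is circular. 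Finally, your local analysis applies Lemmas \ref{gsvhlem1} and \ref{gsvhlem2} to the coefficients $h_\phi^{(n)}$, but those lemmas require their inputs to lie in $\sG_\FF$, i.e.\ to be single-valued already, which is what is to be proved; and since the $h_\phi$ have constant coefficients, the issue in this theorem is their possible multivaluedness (monodromy, non-single-valued log-Laurent expansions), not cancellation of poles along the curves $z=\beta(\zz)$, which is the concern of the later Proposition \ref{techprop}.

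For comparison, the paper's proof avoids any weight truncation and any integration: it first uses the singular decomposition (Theorem \ref{decompthm}, resting on Proposition \ref{singprop3}) together with freeness to conclude $h_\phi\in\sR^0$; then, because the $\phi$ are single-valued rational functions, $\sM_0f=f$ and linear independence force $\sM_0h_\phi=h_\phi$, and a short argument on the expansion in terms $(\log z\zz)^\ell(\log\zz)^{\ll}z^m\zz^\mm$ shows that trivial monodromy plus membership in $\sR^0$ already gives a single-valued log-Laurent expansion of $h_\phi$ at $0$; stability of $\sG_\FF$ and $\sS\sV_\CC$ under translations and inversion then transports this to every $a\in\CC$ and to $\infty$, and compactness of $\CC\cup\{\infty\}$ yields finitely many singularities, hence $h_\phi\in\sS\sV_\CC$. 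The monodromy-at-$0$ argument combined with the singular decomposition is the key idea missing from your proposal; without it (or a substitute establishing absence of monodromy and single-valued expansions for each $h_\phi$ at every point) the proof does not go through.
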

\begin{proof}
Existence and uniqueness of a $\sB_\Sigma$ representation with $h_\phi\in\sG\sH_\Sigma^\CC$ follow from Theorem \ref{GHthm} (1). We need to show that $h_\phi\in\sG_\Sigma^\CC$.

We first show that $h_\phi\in\sR^0$. From $f\in\sG_\Sigma$ and Theorem \ref{decompthm} we get that in the decomposition of $f$ we have
$0=f_{\mathrm{s}}=\sum_\phi\phi h_{\phi,\mathrm{s}}$ with $h_\phi=h_{\phi,\mathrm{r}}+h_{\phi,\mathrm{s}}$ and $h_{\phi,\mathrm{r}}, h_{\phi,\mathrm{s}}\in\sG\sH_\FF^\CC$.
From Theorem \ref{GHthm} (1) we get $h_{\phi,\mathrm{s}}=0$ by linear independence of the $\phi(z)$. Therefore $h_\phi=h_{\phi,\mathrm{r}}\in\sR^0$.

Consider the monodromy $\sM_0$ at 0, see Section \ref{sectmono}. Because $f\in\sS\sV_\CC$ we have $f=\sM_0f=\sum_\phi\phi\sM_0h_\phi$ with $\sM_0h_\phi\in\sG\sH_\Sigma^\CC$.
By linear independence of the $\phi$ we get $\sM_0h_\phi=h_\phi$. We rearrange the (multivalued) log-Laurent expansion of $h_\phi$ at $z=0$ so that it has the terms
$(\log z\zz)^\ell(\log\zz)^\ll z^m\zz^\mm$. We get $\sM_0(\log z\zz)^\ell(\log\zz)^\ll=(\log z\zz)^\ell(\log\zz)^\ll-2\pi\ii\ll(\log\zz)^{\ll-1}$ plus terms with
lower powers in $\log\zz$. If there exist terms with $\ll\geq1$ we consider the two leading powers in $\log\zz$ yielding $\ll=0$.
By contradiction, $h_\phi$ has a single-valued log-Laurent expansion at 0.

For general $a\in\CC$ we use $f(z+a)\in\sG_\FF$ which follows from Theorem \ref{GHthm} (5) and Remark \ref{svrk} (5). We get that $h_\phi(z+a)$ has a single-valued log-Laurent expansion at 0.
Equivalently, $h_\phi(z)$ has a single-valued log-Laurent expansion at $a$. Considering $f(1/z)$ we find that $h_\phi(1/z)$ has a single-valued log-Laurent expansion at 0.
Therefore $h_\phi(z)$ has a single-valued log-Laurent expansion at $\infty$.

In the neighborhood of the expansion points single-valued log-Laurent expansions are $\CC$-analytic. By compactness of $\CC\cup\{\infty\}$ it follows that $h_\phi$ only has a finite number
of singularities. Hence $h_\phi\in\sS\sV_\CC$. This implies $h_\phi\in\sG_\Sigma^\CC=\sG\sH_\Sigma^\CC\cap\sS\sV_\CC$.
\end{proof}

\begin{remark}
It is possible to define $\delta_\beta^{\mathrm{dR}}=\res_\beta\partial_z$ on $\sG_\Sigma^\CC$ for any $\beta(\zz)\in\Sigma(\zz)$ in analogy to (\ref{drdef}).
Theorem \ref{techthm} implies that $\delta_\beta^{\mathrm{dR}}$ is an endomorphism of $\sG_\Sigma^\CC$.
To prove this property is a core difficulty in this article. In Corollary \ref{partialcor} we prove the stronger result that $(z-\beta(\zz))^{-1}\delta_\beta^{\mathrm{dR}}$
maps $\sG_\Sigma^\CC$ into $\sG_\Sigma$.
\end{remark}

\begin{thm}\label{GHSigmathm}
let $\Sigma(\zz)\subset\FLT_\FF(\zz)$ be finite. Then $\sG_\Sigma=\sG\sH_\Sigma\cap\sS\sV_{\Sigma\cap\FF}$ and $\sG_\Sigma^\CC=\sG\sH_\Sigma^\CC\cap\sS\sV_{\Sigma\cap\FF}$.
If $f\in\sG_\Sigma^\CC$ then $M_a=0$ in the single-valued log-Laurent expansions (\ref{aexpansion}), (\ref{inftyexpansion}) at $a\in(\Sigma(\zz)\cap\FF)\cup\{\infty\}$.
\end{thm}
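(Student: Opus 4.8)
The two inclusions $\sG\sH_\Sigma\cap\sS\sV_{\Sigma\cap\FF}\subseteq\sG_\Sigma$ and $\sG\sH_\Sigma^\CC\cap\sS\sV_{\Sigma\cap\FF}\subseteq\sG_\Sigma^\CC$ are immediate from $\sS\sV_{\Sigma\cap\FF}\subseteq\sS\sV_\CC$ and Definition \ref{Gdef}. The content of the theorem is the reverse inclusion: every $f\in\sG_\Sigma=\sG\sH_\Sigma\cap\sS\sV_\CC$ is $\CC$-analytic on $\CC\setminus(\Sigma(\zz)\cap\FF)$, together with the statement $M_a=0$ for $f\in\sG_\Sigma^\CC$. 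Note that the single-valued log-Laurent expansions required at $(\Sigma(\zz)\cap\FF)\cup\{\infty\}$ are already furnished by $f\in\sS\sV_\CC$ (at a point where $f$ happens to be $\CC$-analytic the expansion is the one with $L_a=M_a=0$), so only the location of the singular set is at stake. Using stability of $\sG\sH_\Sigma$ and $\sS\sV_\CC$ under $z\mapsto z+a$ and $z\mapsto1/z$ (Theorem \ref{GHthm}(5) and Remark \ref{svrk}(5)), it suffices to prove $\CC$-analyticity at the single point $z=0$ under the hypothesis $0\notin\Sigma(\zz)\cap\FF$, and to prove $M_a=0$ at $z=0$ and, after an inversion, at $\infty$.

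I would first settle the constant-coefficient case $f\in\sG_\Sigma^\CC$. Since $f\in\sS\sV_\CC$, its single-valued log-Laurent expansion (\ref{aexpansion}) at $0$ is, after expanding $[\log(z\zz)]^\ell$, a simultaneous log-Laurent expansion in the sense of (\ref{ex00}); hence $f\in\sR^0$, and by the uniqueness of the decomposition in Theorem \ref{decompthm} the singular part $f_{\mathrm s}$ of $f$ at $0$ vanishes. Writing $f=\sum c_{v,w}L_v(\zz)L_{w(\zz)}(z)$ as in (\ref{GHeq}) and using $0\notin\Sigma(\zz)\cap\FF$ (so no word $w(\zz)$ carries the letter $0$), Lemma \ref{singlem1} with $L=0$ splits each $L_{w(\zz)}(z)$ into a part $g_0(z,\zz)$ that is $\CC$-analytic at $0$ and a part in $\sS^0_\Sigma$; the $\sS^0_\Sigma$-contributions cancel because $f_{\mathrm s}=0$. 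As every $L_v(\zz)$ is anti-holomorphic at $\zz=0$, $f$ is a finite sum of products of $\CC$-analytic functions and hence $\CC$-analytic at $0$, so $M_0=0$. Running the same argument (after a translation) at an $a\in\Sigma(\zz)\cap\FF$ shows that, although $\log$-terms may survive there, no true Laurent part does, i.e.\ $M_a=0$; applying it to $f(1/z)$ gives $M_\infty=0$. This proves the $M_a=0$ assertion and the $\CC$-coefficient identity.

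For a general $f\in\sG_\Sigma$ I would invoke Theorem \ref{techthm} to write $f=\sum_{\phi\in\sB_\Sigma}\phi\,h_\phi$ with $h_\phi\in\sG_\Sigma^\CC$. By the previous step each $h_\phi$ is $\CC$-analytic off $(\Sigma(\zz)\cap\FF)\cup\{\infty\}$, while every $\phi\in\sO_\Sigma(z,\zz)$ is, by (\ref{OSigmadef}), $\CC$-analytic away from the curves $\{z=\beta(\zz)\}$, $\beta\in\Sigma(\zz)$, and the points $\Sigma(\zz)\cap\FF$ (the spurious $\zz$-poles of Remark \ref{sOrk} add nothing, since an actual element of $\sO_\Sigma(z,\zz)$ has only these denominators). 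Hence $f$ is $\CC$-analytic outside $\bigcup_{\beta}\{z=\beta(\zz)\}\cup(\Sigma(\zz)\cap\FF)\cup\{\infty\}$, and it remains to show $\CC$-analyticity at any $a$ lying on a curve $\{z=\beta(\zz)\}$ for a non-constant $\beta$, with $a\notin\Sigma(\zz)\cap\FF$. Near such an $a$ I would partial-fraction the coefficients $\phi$ in $z$ to obtain, modulo a summand $\CC$-analytic at $a$, a representation
$$
f(z,\zz)=\sum_i\sum_{k\geq1}\frac{F_{i,k}(z,\zz)}{(z-\beta_i(\zz))^k},
$$
where $\beta_1,\dots,\beta_n$ are the non-constant letters of $\Sigma(\zz)$ with $\beta_i(\aaa)=a$ and the $F_{i,k}\in\sG_\FF$ arise as products of rational functions of $\zz$ with the $h_\phi$. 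Because $f$ has a single-valued log-Laurent expansion at $a$, a combination of Lemma \ref{gsvhlemma} with Lemma \ref{gsvhlem1} (for finite $a$) and Lemma \ref{gsvhlem2} (for $a=\infty$, after $z\mapsto1/z$) forces $F_{i,k}(\beta_i(\zz),\zz)=0$, which renders each summand, and hence $f$, $\CC$-analytic at $a$. Finally $\CC$-analyticity on $\CC\setminus(\Sigma(\zz)\cap\FF)$ together with compactness of $\CC\cup\{\infty\}$ gives $f\in\sS\sV_{\Sigma\cap\FF}$, so $\sG_\Sigma=\sG\sH_\Sigma\cap\sS\sV_{\Sigma\cap\FF}$ and $\sG_\Sigma^\CC=\sG\sH_\Sigma^\CC\cap\sS\sV_{\Sigma\cap\FF}$.

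The main obstacle is the last reduction. Lemma \ref{gsvhlem1} is stated only for simple poles $(z-\beta_i(\zz))^{-1}$ with single-valued numerators, whereas the partial-fraction step above produces higher-order poles and, at points where several curves $\{z=\beta_i(\zz)\}$ intersect, numerators whose naive residues acquire spurious poles in $\zz$. The delicate work will be to run an induction on pole order alongside the weight induction — keeping cross terms $1/[(z-\beta_i(\zz))(z-\beta_j(\zz))]$ together rather than splitting them, or first showing that single-valuedness already excludes poles of order $\geq2$ along a curve — using as a template the proof of Lemma \ref{gsvhlem1}, which (after the shift to Theorem \ref{decompthm}) tracks how the $z$-expansion of a surviving numerator generates $\zz$-poles of unbounded order.
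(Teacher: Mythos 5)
Your constant-coefficient argument (at $0$: $f\in\sS\sV_\CC\Rightarrow f\in\sR^0\Rightarrow f_{\mathrm{s}}=0$, then Lemma \ref{singlem1} with $L=0$) is a legitimate alternative to the paper's route, but the reduction to the point $0$ is a genuine gap. Theorem \ref{GHthm} (5) only asserts $f(z+a)\in\sG\sH_\FF^\CC$; it gives no control over which alphabet $\Sigma'(\zz)$ the translated (or inverted) function requires, and in particular does not guarantee $0\notin\Sigma'(\zz)\cap\FF$ whenever $a\notin\Sigma(\zz)\cap\FF$: rewriting a letter such as $\zz$ after a real translation produces coefficients $\log\zz$ in the canonical representation (\ref{GHeq}), i.e.\ may force the constant letter $0$ into $\Sigma'$, and by Remark \ref{GSVHrk} (2) and Example \ref{exexept} $\CC$-analyticity at a point does not allow you to remove that point from a minimal alphabet. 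Exactly the hypotheses your argument at $0$ relies on ($\delta_0^{\mathrm{l}}w(\zz)=0$ and $L_v(\zz)$ anti-holomorphic at $0$) could then fail. The paper avoids any such reduction by working directly at an arbitrary $a\notin\Sigma(\zz)\cap\FF$ with an induction over the weight: $\partial_zf$ decomposes into single-valued terms $g_\beta(z)/(z-\beta(\zz))$ (Corollary \ref{partialcor}, Theorem \ref{techthm}), Lemma \ref{gsvhlem1} gives $g_\beta(\beta(\zz),\zz)=0$ when $a=\beta(\aaa)$, Lemma \ref{gsvhlemma} then makes $\partial_zf$ $\CC$-analytic at $a$, and integrating back one uses $M_a=0$ to exclude the remaining pole term in $\CC[1/(\zz-\aaa)]$.

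The larger gap is the one you flag yourself: the general-coefficient case is not proved. Your partial-fraction route really does break on higher-order poles (vanishing of $F_{i,k}$ on the curve is not sufficient for $F_{i,k}/(z-\beta_i(\zz))^k$ with $k\geq2$) and on the spurious $\zz$-poles of the numerators, and no induction repairing this is supplied. The missing idea is that none of this is needed: since $f\in\sS\sV_\CC$ has, by definition, only point-like singularities, the only question at $a\notin\Sigma(\zz)\cap\FF$ is whether the single-valued log-Laurent expansion (\ref{aexpansion}) of $f$ at $a$ is in fact a Taylor series. Write $f=\sum_\phi\phi\,h_\phi$ by Theorem \ref{techthm}, with $h_\phi\in\sG_\Sigma^\CC$ hence $\CC$-analytic at $a$ by the constant-coefficient part, and treat $z,\zz$ as independent: because $a\notin\Sigma(\zz)\cap\FF$, no denominator $z-\beta(\zz)$ vanishes identically at $z=a$ and no $\zz$-pole of $\phi$ sits at $\aaa$, so $f$ is holomorphic in $z$ at $z=a$ for generic fixed $\zz$ near $\aaa$ and anti-holomorphic in $\zz$ at $\aaa$ for generic $z$. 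These two facts force every logarithm and every negative power in (\ref{aexpansion}) to vanish, i.e.\ $f$ is $\CC$-analytic at $a$; no vanishing of numerators along curves and no pole-order induction enters. That is the paper's argument, and it is the step your proposal still needs.
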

\begin{proof}
Any function in $\sG_\FF^\CC$ has single-valued log-Laurent expansions with $M_a=0$. For the expansion in $z$ this follows from (\ref{GHeq}), Lemma \ref{explem}, and
(\ref{exinfty}). By complex conjugation, Theorem \ref{GHthm} (4), this also holds for the expansion in $\zz$. So, we only need to prove the first statement.

We first show that $\sG_\Sigma^\CC=\sG\sH_\Sigma^\CC\cap\sS\sV_{\Sigma\cap\FF}$. We trivially have $\sG_\Sigma^\CC\supseteq\sG\sH_\Sigma^\CC\cap\sS\sV_{\Sigma\cap\FF}$.
For $\sG_\Sigma^\CC\subseteq\sG\sH_\Sigma^\CC\cap\sS\sV_{\Sigma\cap\FF}$ we fix $a\in\CC\backslash(\Sigma(\zz)\cap\FF)$. Let $f\in\sG_\Sigma^\CC$ have weight $n$.
We prove by induction over $n$ that $f$ is $\CC$-analytic at $a$. For $n=0$ the function $f$ is constant and the claim is trivial. If $n\geq1$ then
$\partial_zf(z)=g(z)/(z-\beta(\zz))\in\sG_\Sigma$ with $\beta(\zz)\in\Sigma(\zz)$ and $g\in\sG_\Sigma^\CC$ with weight $n-1$. By induction $g$ is $\CC$-analytic at $a$.

For $\beta$ we have two possibilities: If $a\neq\beta(\aaa)$ then $(z-\beta(\zz))^{-1}$ is $\CC$-analytic at $a$. Therefore $\partial_zf$ is $\CC$-analytic at $a$.
Otherwise $a=\beta(\aaa)\notin\Sigma(\zz)\cap\FF$. Therefore $\beta\notin\FF$. We use Lemma \ref{gsvhlem1} yielding $g(\beta(\zz),\zz)=0$ in a neighborhood of $\zz=\aaa$.
From Lemma \ref{gsvhlemma} we get that $\partial_zf$ is  $\CC$-analytic at $a$.

Consider the single-valued log-Laurent expansion of $f\in\sG_\Sigma^\CC$ at $a$. By explicit differentiation with respect to $z$ we see that a $\CC$-analytic $\partial_zf$ comes
from a $\CC$-analytic $f$ up to a pole term in $\CC[1/(\zz-\aaa)]$. The pole term is zero because $M_a=0$. Hence $f\in\sG\sH_\Sigma^\CC\cap\sS\sV_{\Sigma\cap\FF}$.

Now, let $f\in\sG_\Sigma$ and fix $a\notin\Sigma(\zz)\cap\FF$. We need to show that $f$ is $\CC$-analytic at $a$.
Consider a basis of $\sO_\Sigma(z,\zz)$ with each element in $\sO_\Sigma(z,\zz)$. From Theorem \ref{techthm} we get $h_\phi\in\sG_\Sigma^\CC$.
Hence, $h_\phi\in\sS\sV_{\Sigma\cap\FF}$. Therefore $h_\phi$ is $\CC$-analytic at $a$.

We consider $z$ and $\zz$ as independent variables. The function $\phi\in\sO_\Sigma(z,\zz)$ has poles at $\{z=\beta(\zz),\beta(\zz)\in\Sigma(\zz)\}$ and at $\{\zz=\bb,b\in\Sigma(\zz)\cap\FF\}$.
Because $a\neq\beta(\zz)$, $\phi$ is holomorphic at $z=a$. Hence $f$ is holomorphic at $a$. Likewise, $\phi$ is anti-holomorphic at $\zz=\aaa$ because $z\neq\beta(\aaa)$ and $\aaa\neq\bb$
for all $b\in\Sigma(\zz)\cap\FF$. So, $f$ is anti-holomorphic at $\zz=\aaa$. Because $f$ has a single-valued log-Laurent expansion at $a$ we get that $f$ is $\CC$-analytic at $a$.
\end{proof}

\begin{defn}\label{O0def}
Let $\Sigma(\zz)\subset\FLT_\FF(\zz)$ be finite. With $\FLT_\FF^\emptyset(\zz)$ as in (\ref{FLTempty}) we define
\begin{equation}\label{OSigma0def}
\sO_\Sigma^\emptyset(z,\zz)=\CC[z,\zz,((z-\beta(\zz))^{-1})_{\beta(\zz)\in\Sigma(\zz)\cap\FLT_\FF^\emptyset(\zz)},((z-b)^{-1},(\zz-\bb)^{-1})_{b\in\Sigma(\zz)\cap\FF}].
\end{equation}
\end{defn}

\begin{lem}\label{wt0lem}
The weight zero piece of $\sG_\Sigma$ is $\sO_\Sigma^\emptyset(z,\zz)$: $\sG_\Sigma\cap\sO_\Sigma(z,\zz)=\sO_\Sigma^\emptyset(z,\zz)$.
\end{lem}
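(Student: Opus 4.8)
\emph{Easy inclusion.} First I would show $\sO_\Sigma^\emptyset(z,\zz)\subseteq\sG_\Sigma\cap\sO_\Sigma(z,\zz)$. Since the weight zero part of $\sG\sH_\Sigma$ is $\sO_\Sigma(z,\zz)$ and $\sG_\Sigma=\sG\sH_\Sigma\cap\sS\sV_\CC$ by (\ref{gdef}), and $\sS\sV_\CC$ is a $\CC$\nobreakdash-algebra (Remark \ref{svrk}(2)), it is enough to check that each generator in (\ref{OSigma0def}) occurs among the generators in (\ref{OSigmadef}) --- which is clear --- and lies in $\sS\sV_\CC$. The generators $z,\zz$ are entire, and $(z-b)^{-1},(\zz-\bb)^{-1}$ with $b\in\Sigma(\zz)\cap\FF$ are (anti\nobreakdash-)holomorphic off $b$ with obvious single-valued log\nobreakdash-Laurent expansions at $b$ and at $\infty$. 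For $\beta(\zz)=(a\zz+b)/(c\zz+d)\in\Sigma(\zz)\cap\FLT_\FF^\emptyset(\zz)$ one has $c\neq0$ and $z-\beta(\zz)=(cz\zz+dz-a\zz-b)/(c\zz+d)$, whose numerator has empty zero locus on $\CC$ by (\ref{FLTempty}); hence $(z-\beta(\zz))^{-1}$ is $\CC$\nobreakdash-analytic on $\CC$, and since $c\neq0$ we have $z-\beta(\zz)\sim z$ as $z\to\infty$, so $(z-\beta(\zz))^{-1}$ has a log\nobreakdash-free Laurent expansion in $1/z,1/\zz$ near $\infty$, i.e.\ $(z-\beta(\zz))^{-1}\in\sS\sV_\emptyset$.

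\emph{Reverse inclusion, setup.} For the converse I would take $\phi\in\sG_\Sigma\cap\sO_\Sigma(z,\zz)=\sO_\Sigma(z,\zz)\cap\sS\sV_\CC$. By Theorem \ref{GHSigmathm} we have $\phi\in\sS\sV_{\Sigma(\zz)\cap\FF}$, so $\phi$ is $\CC$\nobreakdash-analytic on $\CC\setminus(\Sigma(\zz)\cap\FF)$ and has single-valued log\nobreakdash-Laurent expansions with $M_a=0$ at every $a\in(\Sigma(\zz)\cap\FF)\cup\{\infty\}$. Next, using Remark \ref{sOrk} and grouping the terms by the transformation in the denominator, I would write the unique partial fraction decomposition
$$
\phi(z,\zz)=\sum_{\beta}\sum_{m}(z-\beta(\zz))^m\,r_{\beta,m}(\zz),
$$
a finite sum over $\beta\in\Sigma(\zz)\cup\{0\}$ (with $m<0$ if $\beta\neq0$, $m\geq0$ if $\beta=0$), where each $r_{\beta,m}$ is a rational function of $\zz$ whose poles lie in $\FF$ by Lemma \ref{betalem1}(1). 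The goal is to prove that every appearing non-constant $\beta$ lies in $\FLT_\FF^\emptyset(\zz)$ and that every pole of every $r_{\beta,m}$ lies in $\overline{\Sigma(\zz)\cap\FF}$; given this, every surviving term is a product of generators of $\sO_\Sigma^\emptyset(z,\zz)$, so $\phi\in\sO_\Sigma^\emptyset(z,\zz)$.

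\emph{Non-constant denominators.} Fix a non-constant $\beta(\zz)=(a\zz+b)/(c\zz+d)$ that appears and consider its real trace $C_\beta=\{z\in\CC:z=\beta(\zz)\}$. If $\dim C_\beta=1$, then $\phi$ is unbounded along $C_\beta$ away from a finite set --- no cancellation is possible, since a one-dimensional $C_\beta$ determines $\beta$ --- contradicting that $\phi$ is $\CC$\nobreakdash-analytic off a finite set. If $C_\beta$ is finite and non-empty, then $\beta\circ\overline{\beta}\neq\id$ (an anti-conformal involution has a circular or empty fixed locus), so every $z_0\in C_\beta$ is a root of a quadratic over $\FF$ and hence $z_0\in\FF$ with $z_0=\beta(\overline{z_0})$, $\overline{z_0}\in\FF$; reducing the pole along $z=\beta(\zz)$ to first order by multiplying $\phi$ by a power of $z-\beta(\zz)$ (which keeps it in $\sS\sV_\CC$), Lemma \ref{gsvhlem1} at $a=z_0$ forces that pole to be absent --- equivalently, a direct log\nobreakdash-Laurent expansion of $\phi$ at $z_0$ (as in the proof of Lemma \ref{gsvhlem1}, with a Vandermonde argument on the leading coefficients, using that distinct FLTs agreeing at $\overline{z_0}$ have distinct derivatives there) produces terms $(z-z_0)^m(\zz-\overline{z_0})^\mm$ with $m$ unbounded below, contradicting (\ref{aexpansion}). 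If instead $C_\beta=\emptyset$ but $c=0$, i.e.\ $\beta(\zz)=\alpha\zz+\gamma$ with $\alpha\neq0$, the same reduction together with Lemma \ref{gsvhlem2} at $\infty$ (compare Example \ref{counterex1}) gives a contradiction. Hence every appearing non-constant $\beta$ has $c\neq0$ and $C_\beta=\emptyset$, i.e.\ $\beta\in\FLT_\FF^\emptyset(\zz)$.

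\emph{Spurious $\zz$\nobreakdash-poles and conclusion.} By the previous step the only divisors $\{z=\beta(\zz)\}$ meeting $\CC$ come from constant $\beta\in\Sigma(\zz)\cap\FF$. Suppose some $r_{\beta,m}$ had a pole at $\overline{b_0}$ with $b_0\in\FF\setminus(\Sigma(\zz)\cap\FF)$. Near $(b_0,\overline{b_0})$ each factor $(z-\beta(\zz))^m$ of the decomposition is (anti\nobreakdash-)holomorphic --- for $\beta\in\FLT_\FF^\emptyset(\zz)$ because $C_\beta=\emptyset$ forces $\beta(\overline{b_0})\neq b_0$, and for constant $\beta$ because $\beta\neq b_0$ --- so the principal part of $\phi$ at $\zz=\overline{b_0}$ is the sum over $\beta,m$ of $(z-\beta(\zz))^m$ times the principal part of $r_{\beta,m}$ at $\overline{b_0}$; since $\phi$ is $\CC$\nobreakdash-analytic at $z=b_0$ this vanishes, and linear independence of the $(z-\beta(\zz))^m$ (uniqueness of the decomposition) forces each such principal part to vanish --- a contradiction. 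Thus all poles of all $r_{\beta,m}$ lie in $\overline{\Sigma(\zz)\cap\FF}$, and $\phi\in\sO_\Sigma^\emptyset(z,\zz)$. The hard part is the disentangling underlying the last two paragraphs: a general element of $\sO_\Sigma(z,\zz)$ carries several singular divisors $\{z=\beta_i(\zz)\}$ and $\{\zz=\overline{b}\}$ simultaneously, and reducing a higher-order pole along $z=\beta(\zz)$ to the simple-pole situation of Lemmas \ref{gsvhlem1} and \ref{gsvhlem2} needs care; uniqueness of the partial fraction decomposition (Remark \ref{sOrk}) lets one treat the divisors one at a time, and quadratic-closedness of $\FF$ (Lemma \ref{betalem1}(1) and the fixed-point computation) guarantees that the points at which single-valuedness is tested lie in $\FF$, so the earlier lemmas apply.
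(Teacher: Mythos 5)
Your easy inclusion is fine, and two of your observations are genuinely good: that an isolated point of $\{z=\beta(\zz)\}$ is a root of a quadratic over $\FF$ and hence lies in $\FF$ (this is exactly what makes the hypothesis $a\in\FF$ of Lemma \ref{gsvhlem1} available), and the unboundedness argument when the locus is one-dimensional. But the hard direction has two genuine gaps. The worst is your final step: the claim that every pole of every $r_{\beta,m}$ lies in $\overline{\Sigma(\zz)\cap\FF}$ is false, and the argument for it is fallacious. Take $\Sigma(\zz)=\{\beta_1,\beta_2\}$ with $\beta_1(\zz)=-1/\zz$ and $\beta_2(\zz)=(\zz-1)/(\zz+1)$, both in $\FLT_\FF^\emptyset(\zz)$; then $\phi=(z-\beta_1(\zz))^{-1}(z-\beta_2(\zz))^{-1}\in\sO_\Sigma^\emptyset(z,\zz)\subseteq\sG_\Sigma\cap\sO_\Sigma(z,\zz)$, yet its partial fraction coefficients $\pm(\beta_1(\zz)-\beta_2(\zz))^{-1}=\mp\zz(\zz+1)/(\zz^2+1)$ have poles at $\zz=\pm\ii\notin\overline{\Sigma(\zz)\cap\FF}=\emptyset$. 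Your linear-independence step breaks exactly there: spurious poles of the coefficients occur precisely at points $\overline{b_0}$ where two letters take the same value, $\beta_1(\overline{b_0})=\beta_2(\overline{b_0})$, so the principal parts of the individual terms are not independent at $\zz=\overline{b_0}$; uniqueness of the decomposition is a statement over $\CC(\zz)$ and gives no pointwise independence. Fortunately this step is also unnecessary: what one needs is only that the \emph{canceled-fraction} denominator of $\phi$ contains no irreducible bilinear factor attached to a letter outside $\FLT_\FF^\emptyset(\zz)\cup\FF$; since $\phi\in\sO_\Sigma(z,\zz)$, that already forces $\phi\in\sO_\Sigma^\emptyset(z,\zz)$, spurious poles of partial-fraction coefficients notwithstanding.

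The second gap is the reduction to Lemmas \ref{gsvhlem1} and \ref{gsvhlem2}. Multiplying $\phi$ by a power of the single factor $z-\beta(\zz)$ reduces that pole to first order but leaves the poles along all other letters untouched; if another letter $\beta'$ with $\beta'(\aaa)=a$ (or, at $\infty$, a second affine letter---all affine letters interfere at $\infty$) occurs to order $\geq2$, the resulting function is not of the form $\sum_i f_i(z,\zz)/(z-\beta_i(\zz))$ with $f_i\in\sG_\FF$ plus something with a single-valued log-Laurent expansion at $a$, so the lemmas do not apply, and your parenthetical ``direct expansion with a Vandermonde argument'' is a sketch of a new lemma rather than a proof. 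The paper sidesteps this entirely by clearing the whole remaining denominator at once: write $\phi=P/((z-\beta(\zz))Q)$ as a canceled fraction with $P,Q$ polynomial, observe $\phi\,Q=P/(z-\beta(\zz))\in\sG_\Sigma$ has polynomial numerator and a single simple polar divisor, and apply Lemma \ref{gsvhlem1} (or Lemma \ref{gsvhlem2} for affine $\beta$) to get $P(\beta(\zz),\zz)\equiv0$, so the factor cancels---contradicting canceledness. With that one trick both of your problematic paragraphs become unnecessary; as written, your proof does not go through.
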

\begin{proof}
We have $\sO_\Sigma^\emptyset(z,\zz)\subseteq\sG_\Sigma\cap\sO_\Sigma(z,\zz)$ in analogy to Example \ref{exsv2a}.

Assume that a canceled fraction $f\in\sG_\Sigma\cap\sO_\Sigma(z,\zz)$ has a factor $z-\beta(\zz)$ with $\beta(\zz)\in\Sigma(\zz)\backslash(\FLT_\FF^\emptyset(\zz)\cup\FF)$ in the denominator,
$$
f(z,\zz)=\frac{P(z,\zz)}{(z-\beta(\zz))Q(z,\zz)},
$$
with polynomials $P,Q\in\CC[z,\zz]$. We have $P(z,\zz)/(z-\beta(\zz))\in\sG_\Sigma$. By definition of $\FLT_\FF^\emptyset$ there either exists
an $a\in\CC$ with $a=\beta(\aaa)$ or $\beta$ is linear but not constant. In any case we get from Lemmas \ref{gsvhlem1} and \ref{gsvhlem2} that $P(\beta(\zz),\zz)=0$
in some neighborhood of $\zz=\aaa$ or $\infty$ (respectively). Therefore, the factor $z-\beta(\zz)$ cancels in $f$. By contradiction we obtain $f\in\sO_\Sigma^\emptyset(z,\zz)$.
\end{proof}

\begin{defn}\label{pGdef}
Let $\Sigma(\zz)\subset\FLT_\FF(\zz)$ be finite. Then
\begin{equation}\label{pzghdef}
\partial_z\sG\sH_\Sigma^\CC=\langle(z-\beta(\zz))^{-1},\,\beta(\zz)\in\Sigma(\zz)\rangle_{\sG\sH_\Sigma^\CC}
\end{equation}
is the space of generalized hyperlogarithms with simple pole coefficients. In analogy to (\ref{gdef}) we define $\partial_z\sG_\Sigma^\CC=\partial_z\sG\sH_\Sigma^\CC\cap\sS\sV_\CC$.
Moreover, $\partial_\zz\partial_z\sG_\Sigma^\CC=\partial_\zz(\partial_z\sG_\Sigma^\CC)$ and
$$
\partial_z\sG_\FF^\CC=\bigcup_{\Sigma(\zz)\subset\FLT_\FF(\zz)}\partial_z\sG_\Sigma^\CC,\quad\partial_\zz\sG_\FF^\CC=\overline{\partial_z\sG_\FF^\CC},
\quad\text{and}\quad\partial_\zz\partial_z\sG_\FF^\CC=\bigcup_{\Sigma(\zz)\subset\FLT_\FF(\zz)}\partial_\zz\partial_z\sG_\Sigma^\CC.
$$
\end{defn}
Clearly, $\partial_z(\sG_\FF^\CC)\subseteq\partial_z\sG_\FF^\CC$.
We also have $\partial_\zz(\sG_\FF^\CC)\subseteq\partial_\zz\sG_\FF^\CC$, see {\bf I8} in Section \ref{sectitint} and Lemmas \ref{partialFLT} and \ref{betalem1}.
We will show that equality holds (Corollary \ref{existencecor}) which also implies that $\partial_\zz\partial_z\sG_\FF^\CC=\partial_z(\partial_\zz\sG_\FF^\CC)$.

\begin{prop}\label{techprop}
Let $\Sigma(\zz)\subset\FLT_\FF(\zz)$ be finite and let $\sB_\Sigma$ be a $\CC$-basis of $\langle(z-\beta(\zz))^{-1},\beta(\zz)\in\Sigma(\zz)\rangle_{\sO_\Sigma^\emptyset(z,\zz)}$.
If $f\in\sO_\Sigma^\emptyset(z,\zz)\partial_z\sG\sH_\Sigma^\CC\cap\sS\sV_\CC\subseteq\sG_\Sigma$,
$$
f(z)=\sum_{\phi\in\sB_\Sigma}\phi(z)h_\phi(z),\quad h_\phi\in\sG\sH_\Sigma^\CC,
$$
then $\phi h_\phi\in\sS\sV_\CC$.
\end{prop}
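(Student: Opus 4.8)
The plan is to imitate the proof of Theorem~\ref{techthm}, carrying the product $\phi h_\phi$ along instead of $h_\phi$, with one new ingredient: the analysis of the behaviour of a sum $\sum g_i(z,\zz)/(z-\beta_i(\zz))$ along the curves $z=\beta_i(\zz)$ supplied by Lemmas~\ref{gsvhlemma}, \ref{gsvhlem1} and~\ref{gsvhlem2}. Since $\sG\sH_\Sigma$ is a free $\sO_\Sigma(z,\zz)$-module (Theorem~\ref{GHthm} (1)), once $\sB_\Sigma$ is fixed the representation $f=\sum_\phi\phi h_\phi$ with $h_\phi\in\sG\sH_\Sigma^\CC$ is unique. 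Every $\phi\in\sB_\Sigma$ is a $\CC$-combination of products of generators of $\sO_\Sigma^\emptyset(z,\zz)$ with a single reciprocal $(z-\beta(\zz))^{-1}$, $\beta\in\Sigma(\zz)$; all generators of $\sO_\Sigma^\emptyset(z,\zz)$ are single-valued and $\CC$-analytic off the finite set $(\Sigma(\zz)\cap\FF)\cup\{\infty\}$ (the $\FLT_\FF^\emptyset$-letters producing no singularity in $\CC$), and $(z-\beta(\zz))^{-1}$ is single-valued as a meromorphic function of $z$; hence each $\phi$ is fixed by every monodromy operator $\sM_a$, $a\in\CC$, and by the monodromy at $\infty$.

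First I would show that each $h_\phi$ is itself single-valued, following the proof of Theorem~\ref{techthm} closely. Because $f\in\sS\sV_\CC$, its single-valued log-Laurent expansion at $z=0$ lies in $\sO_\Sigma(z,\zz)\sR^0$, so in the decomposition of Theorem~\ref{decompthm} one has $f_{\mathrm s}=0$; decomposing each $\phi h_\phi$ at $0$ and using $\CC$-linear independence of the $\phi$ together with the freeness of Theorem~\ref{GHthm} (1) to peel off the singular parts, I obtain $h_\phi\in\sR^0$. Applying $\sM_0$, using uniqueness of the representation and $\sM_0\phi=\phi$, gives $\sM_0h_\phi=h_\phi$, and the ``two leading powers of $\log\zz$'' argument of Theorem~\ref{techthm} then upgrades this to a single-valued log-Laurent expansion of $h_\phi$ at $0$. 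Translating ($f(z+a)\in\sG_\FF$ by Theorem~\ref{GHthm} (5) and Remark~\ref{svrk} (5)) and inverting ($f(1/z)$) yields the same at every $a\in\CC$ and at $\infty$, so by compactness of $\CC\cup\{\infty\}$ each $h_\phi\in\sS\sV_\CC$; thus $h_\phi\in\sG_\Sigma^\CC$, and by Theorem~\ref{GHSigmathm} it is $\CC$-analytic along every curve $z=\beta(\zz)$ with $\beta\in\Sigma(\zz)\setminus\FF$.

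Next I would control the curve singularities of the individual terms. If $\phi\in\sO_\Sigma^\emptyset(z,\zz)$ (Lemma~\ref{wt0lem}) it is single-valued with finitely many point singularities, so $\phi h_\phi\in\sS\sV_\CC$. Otherwise $\phi$ has a simple pole along a curve $z=\beta(\zz)$ with $\beta\in\Sigma(\zz)\setminus(\FLT_\FF^\emptyset(\zz)\cup\FF)$, and since $\phi$ and $h_\phi$ are $\CC$-analytic there, the only possible obstruction to $\phi h_\phi\in\sS\sV_\CC$ is a non-vanishing residue of $\phi h_\phi$ along that curve. Grouping the basis elements by their curve letter and applying Lemmas~\ref{gsvhlem1} and~\ref{gsvhlem2} to $f$ (the curve letters are pairwise distinct, and the remaining terms contribute $\CC$-analytically near a generic point of $z=\beta(\zz)$) yields, for each curve letter $\beta$, the identity $\sum_\phi\psi_\phi(\beta(\zz),\zz)\,h_\phi(\beta(\zz),\zz)\equiv 0$, where $\psi_\phi$ denotes the numerator of the $(z-\beta(\zz))^{-1}$-part of $\phi$. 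To pass from this single identity to the vanishing of each $h_\phi(\beta(\zz),\zz)$ I would restrict to the curve ---where the $h_\phi(\beta(\zz),\zz)$ become genuine hyperlogarithms in $\zz$ by Brown's algorithm and Lemma~\ref{betalem1} (1)--- and combine the residual $\CC$-independence of the $\psi_\phi|_{z=\beta(\zz)}$ with the freeness of $\sH\sL_\FF^\CC(\zz)$ and the Vandermonde-type interpolation used in the proof of Proposition~\ref{singprop3}. This forces $\psi_\phi h_\phi$ to vanish along $z=\beta(\zz)$, hence $\phi h_\phi\in\sS\sV_\CC$.

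The hard part will be exactly this last step: single-valuedness of $f$ delivers only one vanishing condition along each curve $z=\beta(\zz)$ ---the residue there--- while the conclusion asks for one condition per basis element carrying the letter $\beta$. Disentangling these requires using that only simple $z$-poles can occur for such a letter, the precise structure of a basis $\sB_\Sigma$ of $\langle(z-\beta(\zz))^{-1}\rangle_{\sO_\Sigma^\emptyset(z,\zz)}$, and the fact that the curve evaluations $h_\phi(\beta(\zz),\zz)$ are honest hyperlogarithms in $\zz$, and it is at this point that the argument is genuinely delicate.
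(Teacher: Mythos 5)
Your opening steps do track the paper: single-valuedness of the coefficients $h_\phi$ is exactly Theorem \ref{techthm} (which the paper simply cites), and the reduction of the problem to the terms carrying a curve pole, handled via Lemmas \ref{gsvhlem1} and \ref{gsvhlem2} and finished by compactness of $\CC\cup\{\infty\}$, is also the paper's route. The gap is the step you yourself flag as delicate, and it is not a technicality. From the single identity $\sum_\phi\psi_\phi(\beta(\zz),\zz)\,h_\phi(\beta(\zz),\zz)=0$ one cannot, for a general $\CC$-basis, conclude that the individual terms vanish: restriction to the curve $z=\beta(\zz)$ destroys the $\CC$-linear independence of the numerators. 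Concretely, for $\Sigma(\zz)=\{2\zz\}$ one has $1=\frac{z}{z-2\zz}+\frac{-2\zz}{z-2\zz}$; the numerators $z$ and $-2\zz$ are $\CC$-independent, yet on $z=2\zz$ they cancel, and neither summand is in $\sS\sV_\CC$ (cf.\ Example \ref{zm2zzex}). So the ``residual $\CC$-independence of the $\psi_\phi|_{z=\beta(\zz)}$'' on which your separation step rests is simply false in general, and no Vandermonde-type interpolation can restore it; moreover a general basis element of $\sB_\Sigma$ may mix several curve letters, so even your grouping ``by curve letter'' already presupposes a letter-adapted basis.

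This is precisely why the paper's proof starts by passing to the partial-fraction basis of Remark \ref{sOrk} (the opening sentence about stability under basis change). In that basis the terms attached to a fixed letter $\beta$ have numerators $g^{\beta,b}_{-1,\mm}(z,\zz)(\zz-\bb)^\mm$ with $g^{\beta,b}_{-1,\mm}\in\sG_\Sigma^\CC$ (by Theorem \ref{techthm}); after restricting to $z=\beta(\zz)$, Brown's algorithm puts $g^{\beta,b}_{-1,\mm}(\beta(\zz),\zz)$ into $\sH\sL_\FF^\CC(\zz)$, and the separation is then just the freeness of $\sH\sL_\FF(\zz)$ over $\sO_\FF(\zz)$, Theorem \ref{Brthm} (1), applied to the distinct factors $(\zz-\bb)^\mm$: each $g^{\beta,b}_{-1,\mm}(\beta(\zz),\zz)$ vanishes near $\zz=\aaa$, and a second application of Lemma \ref{gsvhlem1} with $n=1$ (resp.\ Lemma \ref{gsvhlem2} at $\infty$) makes each individual term single-valued. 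The Vandermonde argument of Proposition \ref{singprop3} plays no role here; it enters only through Theorem \ref{techthm}. Your proposal neither performs the reduction to such a letter-adapted basis nor supplies a substitute for it, so the argument is incomplete exactly at its decisive point. A smaller inaccuracy: the obstruction controlled by Lemmas \ref{gsvhlem1} and \ref{gsvhlem2} is not a ``residue along the curve'' at a generic point but the failure of a single-valued log-Laurent expansion at the special points $a=\beta(\aaa)\in\FF$ (or at $\infty$ for linear $\beta$), the criterion being the vanishing of the numerator restricted to the curve near $\zz=\aaa$.
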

\begin{proof}
The statement of the theorem is stable under basis transformations. Therefore, we can assume without restriction that $\sB_\Sigma$ is the basis of partial fractions in
$z$ and $\zz$ defined in Remark \ref{sOrk}.

Let $f(z)=\sum_{\beta,b,m,\mm}g^{\beta,b}_{m,\mm}(z,\zz)(z-\beta(\zz))^m(\zz-\bb)^\mm\in\sO_\Sigma^\emptyset(z,\zz)\partial_z\sG\sH_\Sigma^\CC\cap\sS\sV_\CC$
where the sum is finite with $g^{\beta,b}_{m,\mm}\in\sG\sH_\Sigma^\CC$, $\beta(\zz)\in\Sigma(\zz)$, $b\in\FF$, $m,\mm\in\ZZ$.
From Theorem \ref{techthm} we get $g^{\beta,b}_{m,\mm}\in\sG_\Sigma^\CC$.

Fix $a\in\CC$. If $a\neq\beta\in\FF$ or $\beta(\aaa)\neq a$ then $(z-\beta(\zz))^{-1}$ is $\CC$-analytic at $a$.
In this case $g^{\beta,b}_{m,\mm}(z,\zz)(z-\beta(\zz))^m(\zz-\bb)^\mm$ inherits the single-valued log-Laurent expansion from $g^{\beta,b}_{m,\mm}$.
This also holds in the case $\beta=a$.

By subtracting these terms from the sum we obtain a reduced sum $\sum_{\beta,b,\mm}g^{\beta,b}_{-1,\mm}(z,\zz)(z-\beta(\zz))^{-1}(\zz-\bb)^\mm$ where all $\beta$ have the property
$\beta\notin\FF$ and $\beta(\aaa)=a$. We have $m=-1$ because $(z-\beta(\zz))^{-1}\notin\sO_\Sigma^\emptyset(z,\zz)$. The reduced sum has a single-valued log-Laurent expansion at $a$
(because $f$ and all subtracted terms have such an expansion).
Lemma \ref{gsvhlem1} for $n=1$ and $f_1(z,\zz)=\sum_{b,\mm}g^{\beta,b}_{-1,\mm}(z,\zz)(\zz-\bb)^\mm$ gives $f_1(\beta(\zz),\zz)=0$ in some neighborhood of $\zz=\aaa$.
We have $g^{\beta,b}_{-1,\mm}(\beta(\zz),\zz)\in\sH\sL_\FF^\CC(\zz)$ and by linear independence (Theorem \ref{Brthm} (1)) we get $g^{\beta,b}_{-1,\mm}(\beta(\zz),\zz)=0$
for all $b$, $\mm$.
Lemma \ref{gsvhlem1} for $n=1$ and $f_1(z,\zz)=g^{\beta,b}_{-1,\mm}(z,\zz)(\zz-\bb)^\mm$ gives that $g^{\beta,b}_{-1,\mm}(z,\zz)(z-\beta(\zz))^{-1}(\zz-\bb)^\mm$ has a single-valued log-Laurent expansion at $a$.

At infinity we observe that terms with $\beta\in\FF$ or $\beta(\infty)\neq\infty$ trivially have single-valued log-Laurent expansions at $\infty$.
We reduce to a sum $\sum_{\beta,b,\mm}g^{\beta,b}_{-1,\mm}(z,\zz)(z-\beta(\zz))^{-1}(\zz-\bb)^\mm$ with linear $\beta\notin\FF$ (again $(z-\beta(\zz))^{-1}\notin\sO_\Sigma^\emptyset(z,\zz)$).
We use Lemma \ref{gsvhlem2} to prove that $g^{\beta,b}_{-1,\mm}(z,\zz)(z-\beta(\zz))^{-1}(\zz-\bb)^\mm$ has a single-valued log-Laurent expansion at $\infty$.

By compactness of $\CC\cup\{\infty\}$, every term $g^{\beta,b}_{-1,\mm}(z,\zz)(z-\beta(\zz))^{-1}(\zz-\bb)^\mm$ has a finite number of singular points and hence is in $\sS\sV_\CC$.
\end{proof}

\begin{ex}\label{wt2ex1}
Proposition \ref{techprop} does not generalize to all functions $f\in\sG_\Sigma$.
Consider $f(z)=\log(z\zz)/(z-1/\zz)$ (see Example \ref{wt2ex0}). We have $f\in\partial_z\sG_{\{0,\zz^{-1}\}}^\CC$ and
$$
\partial_zf(z)=\frac{\zz}{z-1/\zz}-\frac{\zz}{z}-\frac{\log(z\zz)}{(z-1/\zz)^2}\in\sG_{\{0,\zz^{-1}\}}.
$$
Expansion at $z=0$ shows that the terms $\zz/(z-1/\zz)$ and $-\log(z\zz)/(z-1/\zz)^2$ are not in $\sS\sV_\CC$.
\end{ex}

\begin{cor}\label{partialcor}
Let $\Sigma(\zz)\subset\FLT_\FF(\zz)$ be finite. Every individual term in the $\sG\sH_\Sigma^\CC$ span of $\partial_z\sG_\Sigma^\CC$ (see (\ref{pzghdef})) is single-valued
(i.e.\ in $\partial_z\sG_\Sigma^\CC$).
\end{cor}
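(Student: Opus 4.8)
The plan is to read off Corollary \ref{partialcor} directly from Proposition \ref{techprop}. Let $f\in\partial_z\sG_\Sigma^\CC$. Then $f\in\sS\sV_\CC$ and, by the defining span (\ref{pzghdef}), $f=\sum_{\beta(\zz)\in\Sigma(\zz)}(z-\beta(\zz))^{-1}h_\beta$ with $h_\beta\in\sG\sH_\Sigma^\CC$; the claim amounts to showing that each summand $(z-\beta(\zz))^{-1}h_\beta$ is single-valued, for it then lies in $\partial_z\sG\sH_\Sigma^\CC\cap\sS\sV_\CC=\partial_z\sG_\Sigma^\CC$.

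First I would check that $f$ is admissible in Proposition \ref{techprop}. Since $1\in\sO_\Sigma^\emptyset(z,\zz)$ we have $\partial_z\sG\sH_\Sigma^\CC\subseteq\sO_\Sigma^\emptyset(z,\zz)\,\partial_z\sG\sH_\Sigma^\CC$, hence $f\in\sO_\Sigma^\emptyset(z,\zz)\,\partial_z\sG\sH_\Sigma^\CC\cap\sS\sV_\CC$. Fixing the partial-fraction basis $\sB_\Sigma$ of $\langle(z-\beta(\zz))^{-1},\,\beta(\zz)\in\Sigma(\zz)\rangle_{\sO_\Sigma^\emptyset(z,\zz)}$ as in Remark \ref{sOrk}, Proposition \ref{techprop} writes $f=\sum_{\phi\in\sB_\Sigma}\phi\,h_\phi$ with $h_\phi\in\sG\sH_\Sigma^\CC$ and concludes $\phi\,h_\phi\in\sS\sV_\CC$ for every $\phi$.

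The remaining step is to match this $\sB_\Sigma$-expansion with the $\sG\sH_\Sigma^\CC$-span representation of $f$ coming from (\ref{pzghdef}). Here I would invoke Theorem \ref{GHthm} (1): $\sG\sH_\Sigma$ is a free $\sO_\Sigma(z,\zz)$-module on the products $L_v(\zz)L_{w(\zz)}(z)$. Expanding each $h_\beta$ on this basis and using that the coefficients of $h_\beta$ are constants, the coefficient of a fixed $L_v(\zz)L_{w(\zz)}(z)$ in $f$ is a $\CC$-linear combination of the generators $(z-\beta(\zz))^{-1}$ only; hence each such coefficient is already in reduced partial-fraction form, so in the (unique, by freeness) $\sB_\Sigma$-expansion of $f$ the sole basis elements that occur are $\phi=(z-\beta(\zz))^{-1}$, $\beta(\zz)\in\Sigma(\zz)$, with $h_\phi=h_\beta$. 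This simultaneously makes the decomposition $f=\sum_\beta(z-\beta(\zz))^{-1}h_\beta$ canonical --- so its summands are exactly the ``individual terms'' of the statement --- and, via Proposition \ref{techprop}, shows $(z-\beta(\zz))^{-1}h_\beta\in\sS\sV_\CC$, which is the corollary. As a bonus, Theorem \ref{techthm} applied to $f\in\sG_\Sigma$ gives $h_\beta\in\sG_\Sigma^\CC$ and identifies it with $\delta_\beta^{\mathrm{dR}}f=\res_\beta\partial_z f$ up to the partial-fraction convention, which is the stronger reformulation announced before the corollary: $(z-\beta(\zz))^{-1}\delta_\beta^{\mathrm{dR}}$ maps $\sG_\Sigma^\CC$ into $\sG_\Sigma$.

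I do not expect a genuine obstacle at this point: all the substantive work has been done in Proposition \ref{techprop} and the results feeding it (Lemmas \ref{gsvhlem1}, \ref{gsvhlem2} and Theorems \ref{decompthm}, \ref{techthm}). The only delicate point is the bookkeeping in the previous paragraph --- verifying that passing from the $\sG\sH_\Sigma^\CC$-span description of $\partial_z\sG\sH_\Sigma^\CC$ in (\ref{pzghdef}) to the $\sO_\Sigma^\emptyset(z,\zz)$-basis description used by Proposition \ref{techprop} introduces no denominators other than the $(z-\beta(\zz))^{-1}$ --- and this is immediate from the freeness in Theorem \ref{GHthm} (1) together with the pole-freeness, in both $z$ and $\zz$, of the constant coefficients of elements of $\sG\sH_\Sigma^\CC$.
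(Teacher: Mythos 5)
Your proposal is correct and follows essentially the same route as the paper: the paper's proof simply extends $\{(z-\beta(\zz))^{-1},\,\beta(\zz)\in\Sigma(\zz)\}$ to a $\CC$-basis of $\langle(z-\beta(\zz))^{-1},\beta(\zz)\in\Sigma(\zz)\rangle_{\sO_\Sigma^\emptyset(z,\zz)}$ and applies Proposition \ref{techprop}, which is exactly your argument with the bookkeeping (admissibility via $1\in\sO_\Sigma^\emptyset(z,\zz)$, uniqueness via the freeness in Theorem \ref{GHthm} (1)) spelled out.
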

\begin{proof}
Extend $\{(z-\beta(\zz))^{-1}\}$ to a $\CC$-basis of $\langle(z-\beta(\zz))^{-1},\beta(\zz)\in\Sigma(\zz)\rangle_{\sO_\Sigma^\emptyset(z,\zz)}$ and use Proposition \ref{techprop}.
\end{proof}

\subsection{The commutative hexagon}
\begin{lem}\label{kerlem}
The kernels of $\partial_z$ and $\partial_\zz$ in $\sG_\FF$ are $\sO_\FF(\zz)$ and $\sO_\FF(z)$, respectively.
In particular, the kernels of $\partial_z$ and $\partial_\zz$ in $\sG_\FF^\CC$ are $\CC$.
\end{lem}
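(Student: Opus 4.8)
The plan is to reduce to the hyperlogarithm case via Theorem \ref{GHthm} and then to eliminate positive weights by a monodromy argument. By Theorem \ref{GHthm} (2) the kernel of $\partial_z$ on $\sG\sH_\FF$ is $\sH\sL_\FF(\zz)$, so $\ker(\partial_z\colon\sG_\FF\to\sG_\FF)=\sH\sL_\FF(\zz)\cap\sS\sV_\CC$, and it suffices to show $\sH\sL_\FF(\zz)\cap\sS\sV_\CC=\sO_\FF(\zz)$. The inclusion $\sO_\FF(\zz)\subseteq\sH\sL_\FF(\zz)\cap\sS\sV_\CC$ is the easy direction: $\sO_\FF(\zz)\subseteq\sO_\FF(z,\zz)\subseteq\sG\sH_\FF$, and a rational function of $\zz$ with poles in $\FF$ lies in $\sS\sV_\CC$ because at each pole $z=\bb$ it has a single-valued, log-free Laurent expansion in $\zz-b$ (so $L_a=0$ in (\ref{aexpansion})), is $\CC$-analytic elsewhere, and has a finite-order pole at infinity; alternatively this is contained in Lemma \ref{wt0lem} since $\sO_\FF(\zz)\subseteq\bigcup_\Sigma\sO_\Sigma^\emptyset(z,\zz)$.

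For the reverse inclusion let $0\neq f\in\sH\sL_\FF(\zz)\cap\sS\sV_\CC$ and pass to $\overline f$, which by Theorem \ref{GHthm} (4) (using $\overline{L_w(\zz)}=L_{\overline w}(z)$) lies in $\sH\sL_\FF(z)$ and is still single-valued by Remark \ref{svrk} (4). Thus it is enough to prove that a single-valued $\overline f=\sum_w\phi_w(z)L_w(z)\in\sH\sL_\Sigma(z)$, with the unique representation of Theorem \ref{Brthm} (1) for some finite $\Sigma\subset\FF$, has weight $0$. Suppose its weight $n$ is $\geq1$. For each $a\in\Sigma$, single-valuedness gives $\sM_a\overline f=\overline f$; since $\sM_a=\exp(2\pi\ii m_a)$ with $m_a$ strictly weight-lowering by (\ref{monomaLw}), the weight-$(n-1)$ component of $0=(\sM_a-1)\overline f$ equals $2\pi\ii$ times the weight-$(n-1)$ component of $m_a\overline f$, which by (\ref{mamod}) is $\sum_{w=aw',\,|w|=n}\phi_w(z)L_{w'}(z)$. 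By linear independence (Theorem \ref{Brthm} (1)) all these $\phi_w$ vanish, and since every non-empty word appearing starts with a letter of $\Sigma$, running over $a\in\Sigma$ forces $\phi_w=0$ for every word $w$ of weight $n$, contradicting maximality of $n$. Hence $n=0$ and $\overline f=\phi_e\in\sO_\FF(z)$, so $f\in\sO_\FF(\zz)$.

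The statement for $\partial_\zz$ then follows by complex conjugation: $\overline{\sG_\FF}=\sG_\FF$ by Theorem \ref{GHthm} (4) and Remark \ref{svrk} (4), and $\overline{\partial_z g}=\partial_\zz\overline g$, so $\ker\partial_\zz=\overline{\ker\partial_z}=\overline{\sO_\FF(\zz)}=\sO_\FF(z)$. For the constant-coefficient version, $\ker(\partial_z\colon\sG_\FF^\CC\to\sG_\FF^\CC)=\sG_\FF^\CC\cap\sO_\FF(\zz)$; an element here, being rational in $\zz$, has by uniqueness of the $\sG\sH_\Sigma$-representation (Theorem \ref{GHthm} (1)) only its empty-word term, whose coefficient is the rational function itself and must therefore lie in $\CC$. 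Hence the kernel is $\CC$, and likewise for $\partial_\zz$.

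The main obstacle is the reverse inclusion $\sH\sL_\FF(\zz)\cap\sS\sV_\CC\subseteq\sO_\FF(\zz)$, i.e.\ ruling out single-valued one-variable hyperlogarithms of positive weight; the monodromy computation above is the crux, and the delicate points are that $m_a$ genuinely lowers the weight (so that top-weight data survives in the linear term of $\sM_a-1$) and that the uniqueness and linear-independence statements of Theorem \ref{Brthm} are being applied in the correct variable after conjugation.
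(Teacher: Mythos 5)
Your proof is correct and takes essentially the same route as the paper: reduce via Theorem \ref{GHthm} (2) to showing $\sH\sL_\FF(\zz)\cap\sS\sV_\CC=\sO_\FF(\zz)$, and then use that single-valuedness forces trivial monodromy. The only cosmetic difference is that the paper dispatches this last step by citing Proposition \ref{propseqm}, whereas you re-derive the needed special case directly from $\sM_a=\exp(2\pi\ii\,m_a)$, (\ref{mamod}) and the freeness statement of Theorem \ref{Brthm} (1) (after passing to the conjugate variable), which is precisely the argument underlying that proposition.
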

\begin{proof}
Let $f\in\sG_\FF$ with $\partial_zf=0$. Then, by Theorem \ref{GHthm} (2), we have $f\in\sH\sL_\FF(\zz)$.
Because $f\in\sS\sV_\CC$ we get from Proposition \ref{propseqm} that $f\in\sO_\FF(\zz)$. For $f\in\sG_\FF^\CC$ we get $f\in\sO_\FF(\zz)\cap\sG_\FF^\CC=\CC$.
The result for the kernel of $\partial_\zz$ follows by complex conjugation.
\end{proof}

\begin{defn}\label{intsvdef}
Let $f\in\partial_z\sG_\FF^\CC$. Then
$$
\int_{\mathrm{sv}}f(z)\,\dd z=F(z)\in\sG_\FF^\CC
$$
if $\partial_zF=f$ and $F(0)=0$ (with regularization, see Section \ref{sectreg}).
For $g\in\partial_\zz\partial_z\sG_\FF^\CC$ we say that a function $G\in\partial_\zz\sG_\FF^\CC$ is a single-valued primitive of $g$, i.e.\ $G=\int_{\mathrm{sv}}g\,\dd z$,
if $\partial_zG=g$. We define the complex conjugate $\int_{\mathrm{sv}}\dd\zz$ such that complex conjugation commutes with integration.
\end{defn}

Note that by Lemma \ref{kerlem} $\int_{\mathrm{sv}}\dd z$ is unique on $\partial_z\sG_\FF^\CC$ whereas it is multivalued on $\partial_\zz\partial_z\sG_\FF^\CC$.
Single-valued primitives on $\partial_\zz\partial_z\sG_\FF^\CC$ differ by anti-analytic functions in $\partial_\zz\sG_\FF^\CC$ which, by Proposition \ref{propseqm},
are in $\sO_\FF(\zz)$.

We need the commutativity of the hexagon in Figure \ref{fig:gsvh} to prove the existence of $\int_{\mathrm{sv}}\dd z$ (which will be extended to $\sG_\FF$ in Definition \ref{intsvGdef}).

\begin{thm}\label{hexthm}
The hexagon in Figure \ref{fig:gsvh} commutes. In particular, for any $f\in\partial_z\sG_\FF^\CC$ there exists a unique single-valued primitive $\int_{\mathrm{sv}}f\dd z\in\sG_\FF^\CC$.
\end{thm}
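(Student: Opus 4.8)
The plan is to reproduce the proof of Theorem~\ref{thmhexm} for the analytic hexagon of Figure~\ref{fig:hlog}, now with $\partial_\zz$ playing the role of the infinitesimal monodromy $m$, while simultaneously constructing the maps $\int_{\mathrm{sv}}\dd z$ and $\int_{\mathrm{sv}}\dd\zz$ by induction on the weight. Uniqueness of the primitive is immediate: if $\partial_zF_1=\partial_zF_2=f$ with $F_1,F_2\in\sG_\FF^\CC$ and $F_1(0)=F_2(0)=0$, then $F_1-F_2$ lies in the kernel of $\partial_z$ on $\sG_\FF^\CC$, which is $\CC$ by Lemma~\ref{kerlem}, so the normalization at $0$ forces $F_1=F_2$.

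For the inductive construction, write $f\in\partial_z\sG_\FF^\CC$ of weight $n$ as $f=\pi_{\partial_z}f+\sum_a\res_af\,(z-a)^{-1}$ with $\res_af\in\CC$. Since $\partial_z\log\bigl((z-a)(\zz-\aaa)\bigr)=(z-a)^{-1}$, it suffices to build $\int_{\mathrm{sv}}\pi_{\partial_z}f\,\dd z$, and since the removed residues are constants one has $\partial_\zz\pi_{\partial_z}f=\partial_\zz f$. One checks that $\partial_\zz f$, viewed in $\partial_\zz\partial_z\sG_\FF^\CC$, has weight $<n$: $\partial_\zz$ acting on a $\zz$-letter of $f$ lowers the weight, and the double-pole terms $\beta'(\zz)(z-\beta(\zz))^{-2}h_\beta$ produced by $\partial_\zz$ on a denominator reduce, after one integration by parts in $z$ combined with Lemmas~\ref{partialFLT} and~\ref{betalem1}, to terms of strictly lower weight. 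Hence by induction the bottom arrow produces $g:=\int_{\mathrm{sv}}\partial_\zz f\,\dd z\in\partial_\zz\sG_\FF^\CC$, and applying $\pi_{\partial_\zz}$ and then $\int_{\mathrm{sv}}\dd\zz$ (available by the conjugate instance of the induction) gives a candidate $F:=\int_{\mathrm{sv}}\dd\zz\,\pi_{\partial_\zz}g$; adding back $\sum_a\res_af\log\bigl((z-a)(\zz-\aaa)\bigr)$ then defines $\int_{\mathrm{sv}}f\,\dd z$. The content of the step is to verify $F\in\sG_\FF^\CC$: one expands $F$, $g$ and $\partial_\zz f$ in the partial-fraction basis of Remark~\ref{sOrk} and, term by term, uses Corollary~\ref{partialcor} and Proposition~\ref{techprop} (single-valuedness of the individual summands), Theorem~\ref{techthm}, the regular/singular decomposition of Theorem~\ref{decompthm}, and Theorem~\ref{GHSigmathm} to conclude that no singular part in the sense of Definition~\ref{sing0def} survives and that $M_a=0$ at every $a\in\FF\cup\{\infty\}$. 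The base cases $n\le 1$ are done by hand, using Lemma~\ref{wt0lem} for the weight-zero piece.

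It remains to prove commutativity, i.e.\ that $f_1:=\int_{\mathrm{sv}}\pi_{\partial_z}f\,\dd z$ equals $f_2:=F$. Using $\partial_\zz\pi_{\partial_z}=\partial_\zz$ and $\partial_z\pi_{\partial_\zz}=\partial_z$ (the removed (anti-)residues are constants; this is the analogue of~\eqref{mpi}) together with $[\partial_z,\partial_\zz]=0$, one computes
\begin{equation*}
\partial_z\partial_\zz f_1=\partial_\zz\pi_{\partial_z}f=\partial_\zz f=\partial_z g=\partial_z\pi_{\partial_\zz}g=\partial_z\partial_\zz f_2,
\end{equation*}
so $\partial_z\partial_\zz(f_1-f_2)=0$. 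By Lemma~\ref{kerlem} this makes $\partial_z(f_1-f_2)\in\sO_\FF(z)$ and $\partial_\zz(f_1-f_2)\in\sO_\FF(\zz)$; together with the constraint $M_a=0$ of Theorem~\ref{GHSigmathm}, which excludes polynomial and polar contributions from a GSVH in $\sG_\FF^\CC$, this forces $f_1-f_2=\sum_{b\in\FF}d_b\log\bigl((z-b)(\zz-\bb)\bigr)+d$ with finitely many $d_b\in\CC$ and a constant $d\in\CC$ --- the exact analogue of the term $\sum_ac_aL_a(z)+d$ in Theorem~\ref{thmhexm}. Now $\partial_zf_1=\pi_{\partial_z}f$ is residue free and $\partial_\zz f_2=\pi_{\partial_\zz}g$ is anti-residue free, so by~\eqref{respar} both $\res_b\partial_z(f_1-f_2)$ and $\overline{\res}_b\partial_\zz(f_1-f_2)$ vanish for every $b$; since $\res_b\partial_z\log\bigl((z-b')(\zz-\overline{b'})\bigr)=\delta_{b,b'}$ this gives $d_b=0$, and then $f_1(0)=f_2(0)=0$ (with the regularization of Section~\ref{sectreg}) gives $d=0$. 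Hence $f_1=f_2$ and the hexagon commutes.

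The step I expect to be the main obstacle is the verification, in the middle paragraph, that the right-edge composition does not leave $\sG_\FF^\CC$ --- that the intermediate functions are genuinely single-valued with vanishing $M_a$ at all finite points and at infinity, and that $\partial_\zz f$ really drops the weight so that the simultaneous induction closes. This is precisely the purpose served by Corollary~\ref{partialcor}, Proposition~\ref{techprop}, Theorem~\ref{techthm} and the regular/singular decomposition of Theorem~\ref{decompthm}; once these are in hand the remaining bookkeeping is routine.
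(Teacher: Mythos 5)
Your overall skeleton (reduce to $f=h(z)/(z-\beta(\zz))$ via Corollary~\ref{partialcor}, produce the bottom-edge primitive by integration by parts plus Lemmas~\ref{partialFLT} and~\ref{betalem1} with a weight drop, use Proposition~\ref{techprop} to keep individual terms single-valued, and close with a residue argument based on~(\ref{respar}) mirroring Theorem~\ref{thmhexm}) matches the paper's strategy, and your final computation killing the $d_b$ coefficients is sound. But there is a genuine gap at the step you yourself flag as the main obstacle, and the citations you list do not fill it. The heart of the paper's proof is showing that the right-edge composition is single-valued \emph{and} is a genuine $z$-primitive of $\pi_{\partial_z}f$. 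The paper does this by first taking a \emph{multivalued} primitive $F_1\in\sG\sH_\FF^\CC$ of $\pi_{\partial_z}f$ (Theorem~\ref{GHthm}~(2)), correcting it by an anti-analytic $F_2(\zz)$ so that $F=F_1-F_2$ satisfies simultaneously $\partial_zF=\pi_{\partial_z}f$ and $\partial_\zz F=\pi_{\partial_\zz}G$, and then proving single-valuedness of $F$ point by point: comparing $F$ with an explicitly integrated single-valued expansion $\tilde F$, observing the difference is anti-analytic up to a term $c_{-1}\log(\zz-\aaa)$, and killing $c_{-1}$ with $\res_a\partial_z=\overline{\res}_a\partial_\zz$ together with residue/anti-residue freeness, finally invoking compactness. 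Your middle paragraph replaces this with ``expand term by term and use Theorems~\ref{techthm}, \ref{decompthm}, \ref{GHSigmathm}''; those results are inputs but cannot do this job — Theorem~\ref{techthm} applies to functions already known to lie in $\sG_\Sigma$, and Theorem~\ref{decompthm} lives at the $\sG\sH$ level — so the single-valuedness of the new primitive, which is the actual content of the theorem, is asserted rather than proven.

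There is also a structural problem in your commutativity paragraph: you compare $f_1:=\int_{\mathrm{sv}}\pi_{\partial_z}f\,\dd z$ with $f_2:=F$, but the existence of $f_1$ in $\sG_\FF^\CC$ is precisely what the theorem asserts, and your only construction of a single-valued primitive is $F$ itself. Either $f_1$ is by definition $F$, in which case the paragraph proves nothing and the crucial identity $\partial_zF=\pi_{\partial_z}f$ is never established, or you are assuming an unproved existence. Your chain of equalities only yields $\partial_\zz(\partial_zF-\pi_{\partial_z}f)=0$, i.e.\ $\partial_zF=\pi_{\partial_z}f+\phi(z)$ with $\phi\in\sO_\FF(z)$ by Lemma~\ref{kerlem}; to finish one must additionally argue $\phi=0$ (e.g.\ no poles by~(\ref{respar}) and residue-freeness, decay at infinity from $M_\infty=0$), an argument absent from the proposal. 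In the paper this issue never arises because $F$ is built from an honest multivalued $z$-primitive $F_1$, so $\partial_zF=\pi_{\partial_z}f$ holds by construction, and the commutativity of the hexagon is then a byproduct of the same construction rather than a separate comparison of two independently existing primitives.
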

\begin{proof}
We first show that the hexagon is well-defined. Let $f\in\partial_\zz\partial_z\sG_\FF^\CC$. The primitive $\int_{\mathrm{sv}}f\dd z$
is defined up to an anti-analytic rational function $g\in\sO_\FF(\zz)\cap\partial_\zz\sG_\FF^\CC$. We get $g(\zz)=\sum_{a\in\FF}c_a/(\zz-\aaa)$ with $c_a\in\CC$.
This implies $\pi_{\partial_\zz}g=0$ which ensures that the hexagon is well-defined.

Now, consider $f\in\partial_z\sG_\FF^\CC$. We use induction over the weight $n$ of $f$ in the set $\{-\infty,0,1,2,\ldots\}$.

If $n=-\infty$ then $f=0$ and $\int_{\mathrm{sv}}\pi_{\partial_z}f\dd z=0$. Likewise, we have $G=\int_{\mathrm{sv}}\partial_\zz f\dd z=0$ and $\int_{\mathrm{sv}}\pi_{\partial_\zz}G\,\dd\zz=0$.

Now, assume that $n\geq0$. By linearity and Corollary \ref{partialcor} we may assume that
\begin{equation}\label{eqpf1}
f(z)=\frac{h(z)}{z-\beta(\zz)},\quad h\in\sG_\FF^\CC.
\end{equation}
In particular (see Definition \ref{GHdef}), $h$ is a $\CC$-linear combination of iterated integrals
$$
L_v(\zz)L_{w(\zz)}(z)=I(0,a_1\ldots a_\ell,\zz)I(0,\beta_1(\zz)\ldots\beta_m(\zz),z),
$$
where $\ell+m\leq n$, $a_i\in\FF$, $i=1,\ldots,\ell$, $\beta_k(\zz)\in\FLT_\FF(\zz)$, $k=1,\ldots,m$. By {\bf I8} in Section \ref{sectitint} we get
\begin{eqnarray*}
\partial_\zz\frac{L_v(\zz)L_{w(\zz)}(z)}{z-\beta(\zz)}&=&[\partial_\zz\partial_z\log(z-\beta(\zz))]L_v(\zz)L_{w(\zz)}(z)\\
&&+\,[\partial_z\log(z-\beta(\zz))]\Bigg[\frac{L_{v^\ell}(\zz)L_{w(\zz)}(z)}{\zz-a_\ell}+
L_v(\zz)\sum_{k=1}^m\Big(\partial_\zz\log\frac{\beta_{k+1}-\beta_k}{\beta_k-\beta_{k-1}}\Big)L_{w(\zz)^k}(z)\Bigg],
\end{eqnarray*}
where $\beta_0=0$ and $\beta_{m+1}=z$. The superscripts in $v^\ell$ and $w(\zz)^k$ refer to letters which have to be removed from the words $v$ and $w(\zz)$, respectively.
For the empty word $e$ we define $L_{e^i}=0$. Using Lemma \ref{betalem1} (2) we write the term $[\partial_z\log(z-\beta(\zz))]L_v(\zz)[\partial_\zz\log(z-\beta_m(\zz))]L_{w(\zz)^m}(z)$
in the above sum as
$$
\Bigg[[\partial_\zz\log(z-\beta(\zz))][\partial_z\log(z-\beta_m(\zz))]+[\partial_\zz\log(\beta(\zz)-\beta_m(\zz))]\Big[\partial_z\log\frac{z-\beta(\zz)}{z-\beta_m(\zz)}\Big]\Bigg]
L_v(\zz)L_{w(\zz)^m}(z).
$$
We get
\begin{equation}\label{eqpf2}
\partial_\zz\frac{L_v(\zz)L_{w(\zz)}(z)}{z-\beta(\zz)}=\partial_z[(\partial_\zz\log(z-\beta(\zz)))L_v(\zz)L_{w(\zz)}(z)]+g_{v,w(\zz)}(z)
\end{equation}
with
\begin{eqnarray}\label{geq}
g_{v,w(\zz)}(z)&=&\sum_{k=1}^m\left(\partial_\zz\log\left.\frac{\beta_{k+1}(\zz)-\beta_k(\zz)}{\beta_k(\zz)-\beta_{k-1}(\zz)}\right|_{\beta_{m+1}=\beta}\right)
\frac{L_v(\zz)L_{w(\zz)^k}(z)}{z-\beta(\zz)}+[\partial_\zz\log(\zz-a_\ell)]\frac{L_{v^\ell}(\zz)I_{w(\zz)}(z)}{z-\beta(\zz)}\nonumber\\
&&\quad-\,[\partial_\zz\log(\beta(\zz)-\beta_m(\zz))]\frac{L_v(\zz)L_{w(\zz)^m}(z)}{z-\beta_m(\zz)}.
\end{eqnarray}
We return to the function $f$ by summing over the words $v$ and $w(\zz)$ with their coefficients in $\CC$. Equation (\ref{eqpf2}) becomes
\begin{equation}\label{eqpf3}
\partial_\zz f(z)=\partial_z[(\partial_\zz\log(z-\beta(\zz)))h(z)]+g(z)=\partial_z[-f(z)\partial_\zz\beta(\zz)]+g(z),
\end{equation}
where $g$ is given by the sum over the $g_{v,w(\zz)}$. Because $\partial_zf,\partial_\zz f\in\sS\sV_\CC$ we get $g\in\sG_\FF$. By (\ref{geq}) we get
$g\in\sO_\FF(\zz)\partial_z\sG\sH^\CC_\FF$. Hence $g\in\sO_\Sigma^\emptyset(z,\zz)\partial_z\sG\sH^\CC_\Sigma$ for some $\Sigma(\zz)\subset\FLT_\FF(\zz)$.
We use Proposition \ref{techprop} and obtain that in the partial fraction basis of Remark \ref{sOrk}
$$
g(z,\zz)=\sum_{a\in\FF}\frac{g_a(z,\zz)}{\zz-\aaa},\quad g_a\in\sS\sV_\CC,
$$
where we used Lemma \ref{betalem1} (1) to restrict the (finite) sum to values in $\FF$. From (\ref{geq}) we get $g_a\in\partial_z\sG\sH_\FF^\CC$, hence $g_a\in\partial_z\sG_\FF^\CC$.

The weight of the $g_a$ is less than $n$. We use induction to obtain primitives $G_a\in\sG_\FF^\CC$ for the $g_a$. Altogether we get
$$
\partial_\zz f(z)=\partial_z G(z)\quad\text{with}\quad G(z)=[\partial_\zz\log(z-\beta(\zz))]h(z)+\sum_{a\in\FF}\frac{G_a(z)}{\zz-\aaa}.
$$
The first term in $G$ is in $\sS\sV_\CC$, see (\ref{eqpf3}). By Lemma \ref{partialFLT} it is in $\partial_\zz\sG\sH_\FF^\CC$ and hence it is in $\partial_\zz\sG_\FF^\CC$.
The second term is in $\partial_\zz\sG_\FF^\CC$ because $G_a\in\sG_\FF^\CC$. Therefore $\int_{\mathrm{sv}}\partial_\zz f\dd z=G$.

Now, we construct a single-valued primitive of $\pi_{\partial_z}f$. Consider $z$ and $\zz$ as independent variables and determine a primitive $F_1$
of $\pi_{\partial_z}f$ in $\sG\sH_\FF^\CC$ by multivalued integration with respect to $z$ using (\ref{pLw}). With $G_1=\partial_\zz F_1\in\partial_\zz\sG\sH_\FF^\CC$ we have
$$
\partial_z G_1(z,\zz)=\partial_z\partial_\zz F_1(z,\zz)=\partial_\zz\pi_{\partial_z}f(z,\zz)=\partial_\zz f(z,\zz)=\partial_zG(z,\zz)=\partial_z\pi_{\partial_\zz}G(z,\zz).
$$
By Theorem \ref{GHthm} (2) we get
$$
G_1(z,\zz)=\pi_{\partial_\zz}G(z,\zz)+G_2(\zz),\quad G_2(\zz)\in\partial_\zz\sG\sH_\FF^\CC\cap\sH\sL_\FF(\zz).
$$
Multivalued integration with respect to $\zz$ provides an anti-primitive $F_2(\zz)\in\sH\sL_\FF^\CC(\zz)$ of $G_2(\zz)$. We define
$$
F(z,\zz)=F_1(z,\zz)-F_2(\zz)\in\sG\sH_\FF^\CC
$$
and get
$$
\partial_z F(z,\zz)=\pi_{\partial_z}f(z,\zz)\quad\text{and}\quad\partial_\zz F(z,\zz)=G_1(z,\zz)-G_2(\zz)=\pi_{\partial_\zz}G(z,\zz)=\pi_{\partial_\zz}\int_{\mathrm{sv}}\partial_\zz f(z,\zz)\,\dd z.
$$

We have to show that $F$ is single valued.
Fix a point $a$ in $\CC\cup\{\infty\}$. The function $\partial_zF=\pi_{\partial_z}f\in\sS\sV_\CC$ has a single-valued log-Laurent expansion (\ref{aexpansion}), (\ref{inftyexpansion})
at $a$. By explicit integration $\partial_zF$ has a primitive $\tilde F$ with a single-valued log-Laurent expansion at $a$.
The function $F\in\sG\sH_\FF^\CC$ has a (possibly multivalued) log-Laurent expansion at $(z,\zz)=(a,\aaa)$ which has to coincide with $\tilde F$ up to an anti-analytic function.
Because $\partial_\zz (F-\tilde F)$ is single-valued at $a$ we get for $a\in\CC$,
$$
\partial_\zz[F(z)-\tilde F(z)]=\sum_{\mm=M_a}^\infty c_\mm\,(\zz-\aaa)^\mm,
$$
with an analogous formula for $a=\infty$. By explicit integration we get
\begin{equation}\label{mainpf1}
F(z)=\tilde F(z)+\sum_{\mm\neq-1}\frac{c_\mm}{\mm+1}(\zz-\aaa)^{\mm+1}+c_{-1}\log(\zz-\aaa)+d
\end{equation}
for some constant $d\in\CC$. Because $\partial_\zz F$ has no anti-residues we get
$$
\overline{\res}_a\partial_\zz\tilde F=-c_{-1}.
$$
By (\ref{respar}) this implies $\res_a\partial_z\tilde F=-c_{-1}$. If we apply $\res_a\partial_z$ to both sides of (\ref{mainpf1}) and use that $\partial_z F$ is residue-free
we obtain $0=-c_{-1}$. From (\ref{mainpf1}) and the single-valuedness of $\tilde F$ we get that $F$ is single-valued at $a$.

Because $\CC\cup\{\infty\}$ is compact $F$ can only have a finite number of singularities, hence $F\in\sS\sV_\CC$. Then ($F(0)$ is the regularized limit of $F(z)$ at $z=0$)
$$
\int_{\mathrm{sv}}\pi_{\partial_z}f(z)\,\dd z=\int_{\mathrm{sv}}\pi_{\partial_\zz}G(z)\,\dd\zz=F(z)-F(0)\in\sG_\FF^\CC
$$
is the (unique) single-valued primitive of $\pi_{\partial_z}f$ and anti-primitive of $\pi_{\partial_\zz}G$.
\end{proof}

Note that the proof of the existence of single-valued primitives is fully constructive, see Section \ref{sectimplementation}.
By symmetry under complex conjugation (Theorem \ref{GHthm} (4)) we also get single-valued anti-primitives for functions in $\partial_\zz\sG_\FF^\CC$.
In general, the set of letters $\Sigma(\zz)$ is not stable under taking single-valued (anti\nobreakdash-)primitives. It may be augmented by constants related to
factorizing differences of FLTs in Lemma \ref{betalem1} (1).

\begin{cor}\label{existencecor}
Every $f\in\partial_z\sG_\FF^\CC$ ($\overline{f}\in\partial_\zz\sG_\FF^\CC$) has a single-valued (anti-)primitive $\int_{\mathrm{sv}}f\dd z$ ($\int_{\mathrm{sv}}\overline{f}\dd\zz$).
In particular,
\begin{equation}\label{existenceeq}
\partial_z\sG_\FF^\CC=\partial_z(\sG_\FF^\CC),\quad\partial_\zz\sG_\FF^\CC=\partial_\zz(\sG_\FF^\CC),\quad
\partial_\zz\partial_z\sG_\FF^\CC=\partial_\zz\partial_z(\sG_\FF^\CC)=\partial_z(\partial_\zz\sG_\FF^\CC).
\end{equation}
\end{cor}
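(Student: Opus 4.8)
The plan is to derive the whole corollary from Theorem~\ref{hexthm}; everything else is formal bookkeeping with complex conjugation and the commutativity of $\partial_z$ and $\partial_\zz$.

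First I would handle the existence of single-valued (anti-)primitives. Let $f\in\partial_z\sG_\FF^\CC$, so $f\in\partial_z\sG_\Sigma^\CC$ for some finite $\Sigma(\zz)\subset\FLT_\FF(\zz)$. Theorem~\ref{hexthm} gives a single-valued primitive $F=\int_{\mathrm{sv}}f\,\dd z\in\sG_\FF^\CC$ with $\partial_zF=f$ (Definition~\ref{intsvdef}). The only subtlety is that the hexagon may enlarge the alphabet --- the single-valued primitive need only lie in $\sG_{\Sigma'}^\CC$ for some $\Sigma'(\zz)\supseteq\Sigma(\zz)$ obtained by factoring differences of FLTs via Lemma~\ref{betalem1}~(1) --- but this is harmless since $\sG_\FF^\CC=\bigcup_{\Sigma(\zz)}\sG_\Sigma^\CC$. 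For the anti-primitive, I would note $\partial_\zz\sG_\FF^\CC=\overline{\partial_z\sG_\FF^\CC}$ (Definition~\ref{pGdef}), so that $\overline f\in\partial_\zz\sG_\FF^\CC$ means $f\in\partial_z\sG_\FF^\CC$; then $\overline F\in\sG_\FF^\CC$ by Theorem~\ref{GHthm}~(4) and Remark~\ref{svrk}~(4), and $\partial_\zz\overline F=\overline{\partial_zF}=\overline f$, which is consistent with the definition of $\int_{\mathrm{sv}}\dd\zz$ in Definition~\ref{intsvdef}. This gives $\overline F=\int_{\mathrm{sv}}\overline f\,\dd\zz$.

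Next I would deduce (\ref{existenceeq}). The inclusion $\partial_z(\sG_\FF^\CC)\subseteq\partial_z\sG_\FF^\CC$ is noted right after Definition~\ref{pGdef}, and the reverse inclusion is precisely the existence statement just established; hence $\partial_z\sG_\FF^\CC=\partial_z(\sG_\FF^\CC)$, and conjugating gives $\partial_\zz\sG_\FF^\CC=\partial_\zz(\sG_\FF^\CC)$. For the last chain, Definition~\ref{pGdef} reads $\partial_\zz\partial_z\sG_\FF^\CC=\partial_\zz(\partial_z\sG_\FF^\CC)$; substituting $\partial_z\sG_\FF^\CC=\partial_z(\sG_\FF^\CC)$, using that $\partial_z$ and $\partial_\zz$ commute on $\sG_\FF^\CC$, and finally invoking $\partial_\zz\sG_\FF^\CC=\partial_\zz(\sG_\FF^\CC)$ once more, one obtains
$$\partial_\zz\partial_z\sG_\FF^\CC=\partial_\zz\big(\partial_z(\sG_\FF^\CC)\big)=\partial_\zz\partial_z(\sG_\FF^\CC)=\partial_z\partial_\zz(\sG_\FF^\CC)=\partial_z\big(\partial_\zz(\sG_\FF^\CC)\big)=\partial_z(\partial_\zz\sG_\FF^\CC).$$

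The hard part is already done: the real content --- that a single-valued primitive of a function in $\sG_\FF^\CC$ remains in $\sG_\FF^\CC$ rather than only in $\sG\sH_\FF^\CC$, which is exactly what makes $\partial_z\sG_\FF^\CC=\partial_z(\sG_\FF^\CC)$ nontrivial --- is Theorem~\ref{hexthm}, which itself rests on Proposition~\ref{techprop} and Corollary~\ref{partialcor}. So at this stage the only point that needs a moment's attention is the alphabet bookkeeping under the union defining $\sG_\FF^\CC$, and there is no further obstacle.
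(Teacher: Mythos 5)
Your handling of the set identities (\ref{existenceeq}) and of anti-primitives by complex conjugation is fine and matches the paper's terse ``Equations (\ref{existenceeq}) follow.'' The problem is the existence statement itself. You quote the ``in particular'' clause of Theorem \ref{hexthm} as already providing $\int_{\mathrm{sv}}f\,\dd z\in\sG_\FF^\CC$ for every $f\in\partial_z\sG_\FF^\CC$, but what the proof of that theorem actually constructs is the single-valued primitive of the residue-free projection $\pi_{\partial_z}f$ only: the integration arrow of the hexagon is applied after $\pi_{\partial_z}$, and the proof ends with ``$\int_{\mathrm{sv}}\pi_{\partial_z}f\,\dd z=F(z)-F(0)$''. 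Reading the clause at face value makes the first sentence of the corollary vacuous and is circular with respect to what has actually been established; the corollary exists precisely to remove the projection. The step you are missing --- and it is the entire content of the paper's proof --- is the residue correction: since $f\in\sS\sV_{\Sigma_0}$ for some finite $\Sigma_0\subset\CC$, one sets $F(z)=\int_{\mathrm{sv}}\pi_{\partial_z}f(z)\,\dd z+\sum_{a\in\Sigma_0}(\res_af)\,[L_a(z)+L_\aaa(\zz)]$, where the single-valued logarithms $\sL_a(z)=L_a(z)+L_\aaa(\zz)\in\sG_\FF^\CC$ satisfy $\partial_z\sL_a(z)=1/(z-a)$ and $\sL_a(0)=0$; then $\partial_zF=f$ by (\ref{pip}), $F(0)=0$, and $F\in\sG_\FF^\CC$, so $F=\int_{\mathrm{sv}}f\,\dd z$ in the sense of Definition \ref{intsvdef}.

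The subtlety you do single out --- enlargement of the alphabet $\Sigma(\zz)$ under integration --- is real but harmless, exactly as you say, since $\sG_\FF^\CC$ is the union over all finite $\Sigma(\zz)$; it is not where the content of the corollary lies. With the residue-subtraction step inserted, the rest of your argument (conjugation for $\int_{\mathrm{sv}}\dd\zz$, and the formal chain giving $\partial_\zz\partial_z\sG_\FF^\CC=\partial_\zz\partial_z(\sG_\FF^\CC)=\partial_z(\partial_\zz\sG_\FF^\CC)$ from commutativity of the two derivatives) goes through as written.
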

\begin{proof}
Let $f\in\partial_z\sG_\Sigma^\CC$ for some finite $\Sigma(\zz)\subset\FLT_\FF(\zz)$. We have $f\in\sS\sV_{\Sigma_0}$ for some finite set $\Sigma_0\subset\CC$. Define
$$
F(z)=\int_{\mathrm{sv}}\pi_{\partial_z}f(z)\,\dd z+\sum_{a\in\Sigma_0}(\res_af)[L_a(z)+L_\aaa(\zz)].
$$
The first term exists by Theorem \ref{hexthm}. We get $F\in\sG_\FF^\CC$, $\partial_zF=f$ by (\ref{pip}), and $F(0)=0$.
Hence $F=\int_{\mathrm{sv}}f\dd z$. Equations (\ref{existenceeq}) follow.
\end{proof}

\subsection{Examples}
\begin{ex}\label{SVIex}
For $\Sigma\subset\FF$ we inductively define single-valued hyperlogarithms $\sL_w(z)$ for $w\in\Sigma^\ast$ by (see Definition \ref{GSVIdef} for an extension to GSVHs)
\begin{equation}\label{SVIeq}
\sL_e(z)=1,\quad\sL_{wa}(z)=\int_{\mathrm{sv}}\frac{\sL_w(z)}{z-a}\,\dd z,\quad w\in\Sigma^\ast,\;a\in\Sigma.
\end{equation}
For every word $w\in\Sigma^\ast$ we get that $\sL_w\in\sG_\Sigma^\CC$ is a sum of terms $L_{\overline{u}}(\zz)L_v(z)$ with $u,v\in\Sigma^\ast$ (see \cite{BrSVMP,BrSVMPII} and Example \ref{HLex}).
\end{ex}
As in the commutative hexagon in Figure \ref{fig:hlog} the subtraction of residues vital.

\begin{ex}\label{1zexa}
Let $f(z)=z^{-1}\in\partial_z\sG_{\{0\}}^\CC$ (see Example \ref{1zex}). Then $\int_{\mathrm{sv}}f(z)\,\dd z=\sL_0(z)=\log(z\zz)$ whereas $\partial_\zz f=0$.
The hexagon commutes because $\pi_{\partial_z}f=0$.
\end{ex}

\begin{ex}\label{Labexa}
Let $f(z)=\sL_a(z)/(z-b)\in\partial_z\sG_{\{a,b\}}^\CC$ for $a,b\in\FF$, $a\neq b$. Then $\pi_{\partial_z}f(z)=[\sL_a(z)-\sL_a(b)]/(z-b)=[L_a(z)+L_\aaa(\zz)-\sL_a(b)]/(z-b)$ and
$$
\int_{\mathrm{sv}}\pi_{\partial_z}f(z)\dd z=\sL_{ab}(z)-\sL_a(b)\sL_b(z)\in\sG_{\{a,b\}}^\CC.
$$
We use the commutative hexagon in Figure \ref{fig:gsvh} to obtain the representation (\ref{GHeq}) for $\sL_{ab}(z)$.
Multivalued integration of $\pi_{\partial_z}f(z)$ gives
\begin{equation}\label{ex1eq}
L_{ab}(z)+L_\aaa(\zz)L_b(z)-\sL_a(b)L_b(z)+g(\zz)
\end{equation}
for some anti-analytic function $g(\zz)$.

On the other hand we have $\partial_\zz f(z)=1/[(\zz-\aaa)(z-b)]$ and $\int_{\mathrm{sv}}\partial_\zz f(z)\,\dd z=\sL_b(z)/(\zz-\aaa)$.
Subtraction of the anti-residue gives $(\sL_b(z)-\sL_b(a))/(\zz-\aaa)$. Multivalued integration with respect to $\zz$ leads to
\begin{equation}\label{ex1eq2}
L_\aaa(\zz)L_b(z)+L_{\overline{b}\aaa}(\zz)-\sL_b(a)L_{\aaa}(\zz)+h(z)
\end{equation}
for some analytic function $h(z)$. The term $L_\aaa(\zz)L_b(z)$ has both, a $z$- and a $\zz$-dependence and is hence reproduced in (\ref{ex1eq}). This is a general feature of the
approach: the calculation of these terms in either of the integrations is redundant. From the anti-analytic terms in the above equation we read off $g(\zz)$ while the analytic
terms in (\ref{ex1eq}) provide $h(z)$,
$$
g(\zz)=L_{\overline{b}\aaa}(\zz)-\sL_b(a)L_{\aaa}(\zz),\qquad h(z)=L_{ab}(z)-\sL_a(b)L_b(z).
$$
The hexagon commutes. We get
$$
\sL_{ab}(z)=L_{ab}(z)+L_\aaa(\zz)L_b(z)+L_{\overline{b}\aaa}(\zz)+\sL_a(b)L_b(\zz)-\sL_b(a)L_{\aaa}(\zz)
$$
with $\sL_{ab}(0)=0$ (as required). We see that the projections $\pi_{\partial_z}$, $\pi_{\partial_\zz}$ are vital for the hexagon to commute.
Note that for the calculation of $\sL_{ab}$ it suffices to consider $z$ and $\zz$ as independent variables and extract $g(\zz)$ from the limit $z=0$ in (\ref{ex1eq2}), see (\ref{inteq}).
\end{ex}

The above examples are single-valued hyperlogarithms \cite{BrSVMP,BrSVMPII}. In the following examples we construct genuine GSVHs of weights 1, 2, and 3.

\begin{ex}\label{wt1ex}
Consider $f(z)=(z+1/\zz)^{-1}\in\partial_z\sG_{\{-\zz^{-1}\}}^\CC$, see Example \ref{exsv2}. Because $z\zz\geq0$ the function $f$ has no singularities in $\CC$. So, $\pi_{\partial_z}f=f$ and
$$
\int_{\mathrm{sv}}f(z)\,\dd z=\log(z+1/\zz)+g(\zz).
$$
Differentiation of $f$ with respect to $\zz$ yields $(z\zz+1)^{-2}$ which integrates (with $\int_{\mathrm{sv}}\dd z$) to
$(\zz+1/z)^{-1}\in\partial_\zz\sG_\FF^\CC$. The result has no anti-residue, so that
$$
\int_{\mathrm{sv}}f(z)\,\dd z=\log(\zz+1/z)+h(z).
$$
Comparison of both equations leads to
$$
\int_{\mathrm{sv}}\frac{1}{z\zz+1}\,\dd z=\frac{\log(z\zz+1)}{\zz}.
$$
\end{ex}

\begin{ex}\label{wt2ex}
Consider $f(z)=\log(z\zz)/(z-1/\zz)\in\partial_z\sG_{\{0,\zz^{-1}\}}^\CC$, see Example \ref{wt2ex0}, Section \ref{sectinvols}, and Example 4.11 in \cite{numfunct}.
The singularity at $z=1/\zz$ is lifted by the numerator. We have $\pi_{\partial_z}f=f$ and with $\log(z\zz)=\sL_0(z)=L_0(z)+L_0(\zz)$ we obtain
$$
\int_{\mathrm{sv}}f(z)\,\dd z=L_{0,\zz^{-1}}(z)+L_0(\zz)L_{\zz^{-1}}(z)+g(\zz).
$$
Differentiation of $f$ with respect to $\zz$ yields $-\log(z\zz)/(z\zz-1)^2+1/(z\zz-1)$. The sum is in $\sG_\FF$ although the individual terms are not.
We use integration by parts for the first term to obtain
\begin{equation}\label{wt2exeq1}
\int_{\mathrm{sv}}\left(-\frac{\log(z\zz)}{(z\zz-1)^2}+\frac{1}{z\zz-1}\right)\dd z=\frac{\log(z\zz)}{\zz-1/z}\in\partial_\zz\sG_\FF^\CC.
\end{equation}
The result has no anti-residues and we get
\begin{equation}\label{wt2exeq}
\int_{\mathrm{sv}} f(z)\,\dd z=L_{0,z^{-1}}(\zz)+L_0(z)L_{z^{-1}}(\zz)+h(z).
\end{equation}
The result is not in the $z,\zz$ basis of (\ref{GHeq}). With the role of $z$ and $\zz$ interchanged, we seem to need a transformation into the $z,\zz$ basis before we
are able to extract $g(\zz)$ and $h(z)$. More efficiently, we use the redundancy of the procedure, set $z=0$ in (\ref{wt2exeq1}) (for general $\zz$), and determine $g(\zz)$ by
multivalued integration with respect to $\zz$, see (\ref{inteq}). Because $\lim_{z\to0}\log(z\zz)/(\zz-1/z)=0$ for all $\zz$, we get $g(\zz)=0$,
$$
\int_{\mathrm{sv}}\frac{\log(z\zz)}{z\zz-1}\,\dd z=\frac{L_{0,\zz^{-1}}(z)+L_0(\zz)L_{\zz^{-1}}(z)}{\zz}.
$$
\end{ex}

\begin{ex}\label{wt2exzz}
To show that we can also use the commutative hexagon to integrate with $\int_{\mathrm{sv}}\dd\zz$ in the $z,\zz$ basis, we now consider the complex conjugate of the previous example.
We take $f(z)=\log(z\zz)/(\zz-1/z)\in\partial_\zz\sG_\FF^\CC$. The complex conjugate commutative hexagon gives the complex conjugates of the previous equations.
We use the complex conjugate of (\ref{wt2exeq}) to obtain
$$
\int_{\mathrm{sv}}f(z)\,\dd\zz=L_{0,\zz^{-1}}(z)+L_0(\zz)L_{\zz^{-1}}(z)+h(\zz).
$$
The right hand side is in the desired $z,\zz$ basis and we determine $h(\zz)$ by taking the limit $z=0$ in $f(z)$ before the multivalued integration with respect to $\zz$.
Because $f(0,\zz)=0$ we obtain
$$
\int_{\mathrm{sv}}\frac{\log(z\zz)}{z\zz-1}\,\dd\zz=\frac{L_{0,\zz^{-1}}(z)+L_0(\zz)L_{\zz^{-1}}(z)}{z}.
$$
\end{ex}

\begin{ex}\label{Dzzex}
Consider $[\sL_{01}(z)-\sL_{10}(z)]/(z-\zz)=4\ii D(z)/(z-\zz)$, see Example \ref{Dzzex0} and Section \ref{sect01zz}.
The single-valued primitive of $D(z)/(z-\zz)\in\partial_z\sG_{\{0,1,\zz\}}^\CC$ was first constructed (using the Galois coaction) in \cite{Duhr}. The construction with the
commutative hexagon is analogous to the previous examples. We obtain
\begin{eqnarray*}
\int_{\mathrm{sv}}\frac{\sL_{01}(z)-\sL_{10}(z)}{z-\zz}\,\dd z&=&L_{01\zz}(z)-L_{10\zz}(z)+L_0(\zz)L_{1\zz}(z)-L_1(\zz)L_{0\zz}(z)\\
&&\quad+\,L_{10}(\zz)L_\zz(z)-L_{01}(\zz)L_\zz(z)+L_{100}(\zz)-L_{101}(\zz)\in\sG_{\{0,1,\zz\}}^\CC.
\end{eqnarray*}
This is the first (by weight) GSVH in the alphabet $0,1,\zz$ which is not a single-valued multiple polylogarithm.
The alphabet $0,1,\zz$ is of particular importance in pQFT because it is ubiquitous in dimensionally regularized amplitudes \cite{numfunct,7loops}.
\end{ex}

\subsection{The structure theorem for GSVHs}
We use the partial fraction basis in Remark \ref{sOrk} to extend single-valued integration to $\sG_\FF$.

\begin{defn}\label{intsvGdef}
Let $f=(z-\beta(\zz))^mh(z)\in\sG_\FF$, $\beta\in\FLT_\FF$, $h\in\sG_\FF^\CC$. If $m=-1$ then $f\in\partial_z\sG_\FF^\CC$ and $\int_{\mathrm{sv}}f\dd z$ is (uniquely) defined
in Definition \ref{intsvdef}. Otherwise we use integration by parts to inductively define
\begin{equation}\label{ibp}
\int_{\mathrm{sv}}(z-\beta(\zz))^mh(z)\,\dd z=\frac{(z-\beta(\zz))^{m+1}}{m+1}h(z)-\frac{1}{m+1}\int_{\mathrm{sv}}(z-\beta(\zz))^{m+1}\partial_z h(z)\,\dd z.
\end{equation}
We extend the definition to $\sG_\FF$ by linearity in $\sO_\FF(\zz)$ using the partial fraction basis in Remark \ref{sOrk}.
The complex conjugate integral $\int_{\mathrm{sv}}\dd\zz$ is defined such that complex conjugation commutes with integration.
\end{defn}

\begin{prop}\label{consisprop}
The single-valued integrals $\int_{\mathrm{sv}}\dd z$ and $\int_{\mathrm{sv}}\dd\zz$ in Definition \ref{intsvGdef} are well-defined endomorphisms of $\sG_\FF$
which are consistent with the single-valued integrals on $\partial_\zz\partial_z\sG_\FF^\CC$ in Definition \ref{intsvdef}.

For every $f\in\sG_\FF$ we get $\partial_z\int_{\mathrm{sv}}f\dd z=f$ and $\partial_\zz\int_{\mathrm{sv}}f\dd\zz=f$.
\end{prop}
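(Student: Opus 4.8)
The plan is to prove the three claims simultaneously by induction on the weight $n\in\{-\infty,0,1,2,\dots\}$ of $f\in\sG_\FF$, working throughout with $\int_{\mathrm{sv}}\dd z$; the statement for $\int_{\mathrm{sv}}\dd\zz$ then follows by the complex conjugation symmetry of Theorem \ref{GHthm}(4) and Remark \ref{svrk}(4), with which that integral was defined. By Theorem \ref{techthm} together with the uniqueness of the partial fraction basis of Remark \ref{sOrk}, $f$ has a unique expression as a $\CC$-linear combination of terms $(z-\beta(\zz))^m(\zz-\bb)^{\mm}h(z)$ with $h\in\sG_\FF^\CC$ and with $\beta=0$ whenever $m\geq 0$; this is precisely the data Definition \ref{intsvGdef} feeds into the construction, so no choice enters and it suffices to treat $f=(z-\beta(\zz))^m h$, $h\in\sG_\FF^\CC$. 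If $m=-1$ then $f\in\partial_z\sG\sH_\FF^\CC$ by (\ref{pzghdef}) and, being single-valued, $f\in\partial_z\sG_\FF^\CC$; Corollary \ref{existencecor} (built on the commutative hexagon, Theorem \ref{hexthm}) produces the unique $F=\int_{\mathrm{sv}}f\,\dd z\in\sG_\FF^\CC$ with $\partial_zF=f$ and $F(0)=0$, which is exactly the datum of Definition \ref{intsvdef}. If $m\neq-1$, the recursion (\ref{ibp}) applies $\int_{\mathrm{sv}}$ to $(z-\beta(\zz))^{m+1}\partial_z h$, a function of weight $\leq n-1$, so the recursion terminates and the induction hypothesis is available; differentiating the right-hand side of (\ref{ibp}) by $\partial_z$ and inserting $\partial_z\int_{\mathrm{sv}}(z-\beta(\zz))^{m+1}\partial_z h\,\dd z=(z-\beta(\zz))^{m+1}\partial_z h$ makes the two copies of $\frac{1}{m+1}(z-\beta(\zz))^{m+1}\partial_z h$ cancel, so $\partial_z$ of the right-hand side equals $(z-\beta(\zz))^m h=f$. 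This yields well-definedness of $\int_{\mathrm{sv}}\dd z$ as a $\CC$-linear self-map and the derivative identity, provided one knows the recursion does not leave $\sG_\FF$.

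That proviso is where the real work lies and is, I expect, the main obstacle: one must show that $\frac{(z-\beta(\zz))^{m+1}}{m+1}h$ and $(z-\beta(\zz))^{m+1}\partial_z h$ lie again in $\sG_\FF$, i.e.\ are single-valued with a point-like singular locus. For $m\geq 0$ this is clear, since then $(z-\beta(\zz))^{m+1}=z^{m+1}$ and multiplication by a polynomial in $z$ preserves $\sS\sV_\CC$. The delicate case is $m\leq-2$ with $\beta\notin\FLT_\FF^\emptyset(\zz)\cup\FF$, where $\{z=\beta(\zz)\}$ is a genuine curve in $\CC$. Here I would argue that single-valuedness of $f=(z-\beta(\zz))^m h$ — the requirement that $f$ be singular only at points, hence regular along $\{z=\beta(\zz)\}$ away from the finitely many $a=\beta(\aaa)$ in $\Sigma(\zz)\cap\FF$ — forces $h$ (which is $\CC$-analytic along the curve, with $M_a=0$ by Theorem \ref{GHSigmathm}) to vanish along $z=\beta(\zz)$ to order at least $-m$, in a neighbourhood of each such $\aaa$ and near $\zz=\infty$. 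This is the higher-pole-order refinement of Lemmas \ref{gsvhlem1} and \ref{gsvhlem2}, obtained by removing one order of pole at a time (at each stage the leading singular part is of the shape those lemmas control) or, after the translation $(z,\zz)\mapsto(z+a,\zz+\aaa)$, from the singular decomposition of Theorem \ref{decompthm}. Granting the vanishing, $(z-\beta(\zz))^{m+1}h$ has curve-singularity of order $\leq-m-1$ and $(z-\beta(\zz))^{m+1}\partial_z h$ none, so both lie in $\sG_\FF$; re-expanding the latter into the standard form of Remark \ref{sOrk} via Lemma \ref{betalem1}(1) and Example \ref{bilinex} — which may enlarge $\Sigma(\zz)$ by inert constant letters — and re-applying Theorem \ref{techthm} puts it in the shape needed for the next recursion step, with weight $\leq n-1$.

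Finally, consistency with Definition \ref{intsvdef}: on $\partial_z\sG_\FF^\CC$ every term is single-valued by Corollary \ref{partialcor}, Definition \ref{intsvGdef} treats it through the $m=-1$ branch, and the resulting $F\in\sG_\FF^\CC$ satisfies $\partial_zF=f$ and $F(0)=0$, hence equals the primitive of Definition \ref{intsvdef} by Lemma \ref{kerlem}. On $\partial_\zz\partial_z\sG_\FF^\CC$ one must in addition verify that the output $G$ delivered by Definition \ref{intsvGdef} lands in $\partial_\zz\sG_\FF^\CC$, so that it is one of the admissible single-valued primitives of Definition \ref{intsvdef}: any two such outputs differ by an element of $\ker\partial_z=\sO_\FF(\zz)$ (Lemma \ref{kerlem}), and by Corollary \ref{existencecor} (which gives $\partial_\zz\partial_z\sG_\FF^\CC=\partial_z(\partial_\zz\sG_\FF^\CC)$) there is a competitor in $\partial_\zz\sG_\FF^\CC$; matching the recursion step by step with the construction behind Theorem \ref{hexthm} — whose $\int_{\mathrm{sv}}\dd z$-output on $\partial_\zz\partial_z\sG_\FF^\CC$ was exhibited there explicitly inside $\partial_\zz\sG_\FF^\CC$ — shows the two coincide, and in particular that the $\sO_\FF(\zz)$-ambiguity accumulated along the way stays inside $\partial_\zz\sG_\FF^\CC$. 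With all this in place, $\partial_z\int_{\mathrm{sv}}f\,\dd z=f$ holds termwise in the induction and $\partial_\zz\int_{\mathrm{sv}}f\,\dd\zz=f$ follows by the conjugation symmetry built into Definition \ref{intsvGdef}.
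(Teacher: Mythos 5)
Your skeleton follows the paper's proof: reduce via the partial fraction basis of Remark \ref{sOrk} and Theorem \ref{techthm} to $f=(z-\beta(\zz))^m h$ with $h\in\sG_\FF^\CC$, handle $m=-1$ through the hexagon (Theorem \ref{hexthm}, Corollary \ref{existencecor}), iterate (\ref{ibp}) with a weight drop otherwise, obtain $\partial_zF=f$ by differentiating (\ref{ibp}), and get the $\dd\zz$ statements by conjugation. However, the step you yourself flag as ``the real work'' is left resting on an unproved auxiliary claim: you invoke a ``higher-pole-order refinement of Lemmas \ref{gsvhlem1} and \ref{gsvhlem2}'' to show that $h$ vanishes to order $\geq -m$ along $z=\beta(\zz)$, and you only sketch how such a refinement might be proved. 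It is in fact unnecessary: the first term of (\ref{ibp}) equals $f(z)(z-\beta(\zz))/(m+1)$ and the new integrand equals $((z-\beta(\zz))\partial_z-m)f(z)$, and both lie in $\sG_\FF$ simply because $f\in\sG_\FF$, $z-\beta(\zz)\in\sG_\FF$ (it has only point-like singularities), and $\sG_\FF$ is a differential algebra closed under multiplication. This is the paper's one-line argument; your detour is recoverable but, as written, not closed.

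The genuine gap is the consistency with Definition \ref{intsvdef} on $\partial_\zz\partial_z\sG_\FF^\CC$, i.e.\ the claim that the output $F$ of the integration-by-parts recursion lies in $\partial_\zz\sG_\FF^\CC$. You only assert that ``matching the recursion step by step with the construction behind Theorem \ref{hexthm} shows the two coincide''; no such matching is carried out, and the fallback via Lemma \ref{kerlem} does not suffice: $F$ and the competitor $G\in\partial_\zz\sG_\FF^\CC$ supplied by Corollary \ref{existencecor} differ a priori by an arbitrary element of $\ker\partial_z=\sO_\FF(\zz)$, and $\sO_\FF(\zz)\not\subseteq\partial_\zz\sG_\FF^\CC$ (only the simple-pole part $\sum_a c_a/(\zz-\aaa)$ lies there), so membership of $G$ does not transfer to $F$ without controlling that discrepancy. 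The paper closes exactly this point by an explicit computation: it writes $f=\partial_z\sum_\beta h_\beta(z)/(\zz-\beta(z))$ using Corollary \ref{existencecor}, applies Lemma \ref{partialFLT} twice to the resulting double-pole terms, and checks that after the integration by parts of (\ref{ibp}) the spurious $1/(\zz-a/c)$ pieces cancel against $\int_{\mathrm{sv}}g_\beta(z)/\big((\zz-a/c)(z-\gamma(\zz))\big)\,\dd z=h_\beta(z)/(\zz-a/c)$, so that the total output is precisely $\sum_\beta h_\beta(z)/(\zz-\beta(z))\in\partial_\zz\sG_\FF^\CC$. Without this (or an equivalent) verification, the consistency claim, and hence the proposition, is not proved.
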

\begin{proof}
We have $\int_{\mathrm{sv}}0\,\dd z=0$. Using linearity in the partial fraction basis of Remark \ref{sOrk} for $0\neq f\in\sG_\FF$ we may assume that $f(z)=(z-\beta(\zz))^mh(z)$,
with $\beta\in\FLT_\FF$, $m\in\ZZ$, $h\in\sG_\FF^\CC$. We can iterate (\ref{ibp}) because $(z-\beta(\zz))^{m+1}\partial_z h(z)=((z-\beta(\zz))\partial_z-m)f(z)\in\sG_\FF$.
With every iteration the weight of $h$ is reduced until $h=0$. This ensures that the integration algorithm terminates after a finite number of steps.

We show that $\int_{\mathrm{sv}}\dd z$ maps into $\sG_\FF$. For $m=-1$ this is proved in Theorem \ref{hexthm}.
For $m\neq-1$ it suffices to show that both terms on the right hand side of (\ref{ibp}) are single-valued.
The first term equals $f(z)(z-\beta(\zz))/(m+1)\in\sG_\FF$ while the second term is single-valued by induction over the weight of $h$.
So, for every $f\in\sG_\FF$ we get a well-defined function $F=\int_{\mathrm{sv}}f\dd z\in\sG_\FF$.

By induction over the weight of $f$ we get $\partial_zF=f$ because (\ref{ibp}) becomes an identity upon differentiation with respect to $z$.

Let $f\in\partial_\zz\partial_z\sG_\FF^\CC$ and $F=\int_{\mathrm{sv}}f\dd z$ according to Definition \ref{intsvGdef}. We need to show that $F\in\partial_\zz\sG_\FF^\CC$.
By Corollary \ref{existencecor} we have $f\in\partial_z(\partial_\zz\sG_\FF^\CC)$,
$$
f(z)=\partial_z\sum_{\beta\in\FLT_\FF}\frac{h_\beta(z)}{\zz-\beta(z)}=\sum_{\beta\in\FLT_\FF}\Big[[\partial_\zz\partial_z\log(\zz-\beta(z))]h_\beta(z)+\frac{\partial_zh_\beta(z)}{\zz-\beta(z)}\Big],
$$
where the sum is finite and $h_\beta\in\sG_\FF^\CC$.
From (\ref{GHeq}) we get $\partial_zh_\beta(z)=g_\beta(z)/(z-\gamma(\zz))$ with $g_\beta\in\sG_\FF^\CC$
and $\gamma\in\FLT_\FF$. The terms with $\beta\in\FF$ integrate to $F_0(z)=\sum_{\beta\in\FF}h_\beta(z)/(\zz-\beta)$. 
In the other terms $\beta(z)=(az+b)/(cz+d)$ is invertible and we get from Lemma \ref{partialFLT},
$$
\partial_z\frac{h_\beta(z)}{\zz-\beta(z)}=\frac{[\partial_\zz\beta^{-1}(\zz)]h_\beta(z)}{(z-\beta^{-1}(\zz))^2}+\frac{g_\beta(z)}{(\zz-\beta(z))(z-\gamma(\zz))}.
$$
Using Lemma \ref{partialFLT} with $(z,\beta)\mapsto(\zz,\beta^{-1})$ we get
$$
-\frac{\partial_\zz\beta^{-1}(\zz)}{z-\beta^{-1}(\zz)}=\frac{1}{\zz-\beta(z)}-\frac{1}{\zz-a/c},
$$
where the second term is absent if $c=0$. With this equation, integration by parts in (\ref{ibp}) gives for the integral over the terms with $\beta\notin\FF$,
$$
F_1(z)=\sum_{\beta\in\FLT_\FF\backslash\FF}\Big[\frac{h_\beta(z)}{\zz-\beta(z)}-\frac{h_\beta(z)}{\zz-a/c}+\int_{\mathrm{sv}}\frac{g_\beta(z)}{(\zz-a/c)(z-\gamma(\zz))}\,\dd z\Big]
=\sum_{\beta\in\FLT_\FF\backslash\FF}\frac{h_\beta(z)}{\zz-\beta(z)},
$$
where we used $\int_{\mathrm{sv}}\frac{g_\beta(z)}{z-\gamma(\zz)}\,\dd z=h_\beta(z)$ according to the case $m=-1$ in Definition \ref{intsvGdef}.
Altogether, $F(z)=F_0(z)+F_1(z)=\sum_{\beta\in\FLT_\FF}h_\beta/(\zz-\beta(z))\in\partial_\zz\sG_\FF^\CC$.

By complex conjugation Proposition \ref{consisprop} also holds for integration with respect to $\zz$.
\end{proof}

From Theorem \ref{hexthm} we obtain the following structure theorem for GSVHs.

\begin{thm}\label{Gthm}
Let $\FF=\overline{\FF}$ be a quadratically closed field, see (\ref{Fdef}).
Let $\Sigma(\zz)\subset\FLT_\FF(\zz)$ and $\Sigma_0\subset\FF$ be finite sets of letters and singular points, respectively.
\begin{enumerate}
\item Let $\sO_\Sigma^\emptyset(z,\zz)$ be the ring of regular functions, see Definition \ref{O0def}. Then
$\sG_\Sigma$ is a free $\sO_\Sigma^\emptyset(z,\zz)$-module which is closed under multiplication. In particular, $\sG_\Sigma^\CC$ is a $\CC$-algebra.
\item The sequence
$$
0\longrightarrow\sO_{\Sigma_0}(\zz)\longrightarrow\sG_\FF\cap\sS\sV_{\Sigma_0}\stackrel{\partial_z}{\longrightarrow}\sG_\FF\cap\sS\sV_{\Sigma_0}\longrightarrow0
$$
is exact. I.e.\ the kernel of $\partial_z$ in $\sG_\FF\cap\sS\sV_{\Sigma_0}$ is $\sO_{\Sigma_0}(\zz)$ and every $f\in\sG_\FF\cap\sS\sV_{\Sigma_0}$ has a primitive
$F\in\sG_\FF\cap\sS\sV_{\Sigma_0}$ with $\partial_zF=f$.
\item $\sG_\Sigma$ is differentially simple. I.e.\ for every $0\neq f\in\sG_\Sigma$ there exists a differential operator $D$ such that $Df=1$.
\item $\sG_\FF$ and $\sG_\FF^\CC$ are stable under complex conjugation,
\begin{equation}\label{Gcc}
\overline{\sG_\FF}=\sG_\FF,\quad\overline{\sG_\FF^\CC}=\sG_\FF^\CC.
\end{equation}
In particular, the sequence
$$
0\longrightarrow\sO_{\Sigma_0}(z)\longrightarrow\sG_\FF\cap\sS\sV_{\Sigma_0}\stackrel{\partial_\zz}{\longrightarrow}\sG_\FF\cap\sS\sV_{\Sigma_0}\longrightarrow0
$$
is exact.
\item $\sG_\FF$ and $\sG_\FF^\CC$ are stable under transformations in $\FLT_\FF$. I.e.\ for any $f\in\sG_\FF$, $g\in\sG_\FF^\CC$, and $\beta\in\FLT_\FF$,
\begin{equation}\label{Gflt}
f(\beta(z))\in\sG_\FF,\quad\text{and}\quad g(\beta(z))\in\sG_\FF^\CC.
\end{equation}
\item Every $f\in\sG_\Sigma$ is in $\sS\sV_{\Sigma\cap\FF}$. If $f\in\sG_\Sigma^\CC$, the single-valued log-Laurent expansions (\ref{aexpansion}), (\ref{inftyexpansion}) at
$a\in(\Sigma(\zz)\cap\FF)\cup\{\infty\}$ have $M_a=0$.
\end{enumerate}
\end{thm}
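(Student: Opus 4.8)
The plan is to assemble the six assertions from the machinery already developed in Sections \ref{sectgh}--\ref{sectgsvh}; by this point the genuinely hard inputs --- single-valuedness of individual terms (Corollary \ref{partialcor}), the commutative hexagon (Theorem \ref{hexthm}), and the $\sB_\Sigma$-representation (Theorem \ref{techthm}) --- are in hand, so the argument is largely organizational, and I would treat the items in the order (6), (4), (5), (1), (3), (2). Assertion (6) is a restatement of Theorem \ref{GHSigmathm}, which already supplies $\sG_\Sigma=\sG\sH_\Sigma\cap\sS\sV_{\Sigma\cap\FF}$, the analogous identity for $\sG_\Sigma^\CC$, and $M_a=0$ at $(\Sigma(\zz)\cap\FF)\cup\{\infty\}$. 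For (4) and (5) I would use $\sG_\FF=\sG\sH_\FF\cap\sS\sV_\CC$ and intersect stability statements already proved: Theorem \ref{GHthm}(4),(5) for $\sG\sH_\FF$ and $\sG\sH_\FF^\CC$, together with Remark \ref{svrk}(4),(5) for $\sS\sV_\CC$; an intersection of stable subspaces is stable. The dual exact sequence in (4) is the complex conjugate of the one in (2): conjugation interchanges $\partial_z$ with $\partial_\zz$ and, since $\FF=\overline\FF$ and $\Sigma_0\subset\FF$, carries $\sO_{\Sigma_0}(\zz)$ to $\sO_{\Sigma_0}(z)$, so it follows once (2) is established.

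For (1), closure under multiplication is immediate from Theorem \ref{GHthm}(1) and Remark \ref{svrk}(2), so $\sG_\Sigma^\CC$ is a $\CC$-algebra; the $\sO_\Sigma^\emptyset(z,\zz)$-module structure is legitimate because $\sO_\Sigma^\emptyset(z,\zz)\subseteq\sG_\Sigma$ by Lemma \ref{wt0lem}. Freeness I would extract from Theorem \ref{techthm}: every $f\in\sG_\Sigma$ has a unique expansion $f=\sum_{\phi\in\sB_\Sigma}\phi h_\phi$, $h_\phi\in\sG_\Sigma^\CC$, in a $\CC$-basis $\sB_\Sigma$ of $\sO_\Sigma(z,\zz)$; reorganizing this over the subring $\sO_\Sigma^\emptyset(z,\zz)$ --- and using Lemmas \ref{gsvhlem1} and \ref{gsvhlem2}, exactly as in the proof of Lemma \ref{wt0lem}, to pin down the shape of the coefficients attached to the ``bad'' factors $z-\beta(\zz)$ with $\beta\notin\FLT_\FF^\emptyset(\zz)\cup\FF$ --- yields a free $\sO_\Sigma^\emptyset(z,\zz)$-basis, where freeness of $\sG\sH_\Sigma$ over $\sO_\Sigma(z,\zz)$ (Theorem \ref{GHthm}(1)) forces $\sO_\Sigma^\emptyset(z,\zz)$-linear independence. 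Assertion (3) then follows by copying the proof of Theorem \ref{GHthm}(3): for $0\ne f\in\sG_\Sigma$, multiply by a common denominator (in $\sO_\Sigma^\emptyset(z,\zz)$ by (1)), apply $\partial_z^m\partial_\zz^\mm$ to land on a nonzero constant or strictly lower the weight, and recurse using (2); since $\partial_z$, $\partial_\zz$, and multiplication by elements of $\sO_\Sigma^\emptyset(z,\zz)$ and of $\sG_\Sigma^\CC$ all preserve $\sG_\FF$, the resulting operator $D$ maps $\sG_\Sigma$ into $\sG_\FF$ with $Df=1\in\sG_\Sigma$.

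For (2), the kernel of $\partial_z$ on $\sG_\FF$ is $\sO_\FF(\zz)$ by Lemma \ref{kerlem}, and intersecting with $\sS\sV_{\Sigma_0}$ leaves exactly the $\zz$-rational functions whose (finitely many) poles sit over $\Sigma_0$, i.e.\ $\sO_{\Sigma_0}(\zz)$ since $\FF=\overline\FF$. Surjectivity I would prove by induction on the weight of $f\in\sG_\FF\cap\sS\sV_{\Sigma_0}$. Proposition \ref{consisprop} (built on Theorem \ref{hexthm}, reducing by integration by parts to the case $f=h/(z-\beta(\zz))$, $h\in\sG_\FF^\CC$) produces $F=\int_{\mathrm{sv}}f\,\dd z\in\sG_\FF\subseteq\sS\sV_\CC$ with $\partial_zF=f$; the remaining point is that $F$ is $\CC$-analytic at every $a\in(\CC\cup\{\infty\})\setminus\Sigma_0$. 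I would use the elementary local lemma: if $G\in\sS\sV_\CC$ and both $\partial_zG$ and $\partial_\zz G$ are $\CC$-analytic at $a$, then $G$ is $\CC$-analytic at $a$ --- read off the single-valued log-Laurent expansion of $G$ at $a$, noting that a factor $[\log((z-a)(\zz-\aaa))]^\ell$ with $\ell\ge1$, or a negative power of $z-a$ or of $\zz-\aaa$, would survive under $\partial_z$ or $\partial_\zz$. Now $\partial_zF=f$ is $\CC$-analytic at such an $a$; and by the construction in Theorem \ref{hexthm}, $\partial_\zz F=\pi_{\partial_\zz}G_0$ with $G_0=\int_{\mathrm{sv}}\partial_\zz f\,\dd z$, which by the inductive hypothesis applied to the strictly-lower-weight $\partial_\zz f\in\sG_\FF\cap\sS\sV_{\Sigma_0}$ lies in $\sS\sV_{\Sigma_0}$, so $\partial_\zz F$ is $\CC$-analytic at $a$; the local lemma then gives $F\in\sG_\FF\cap\sS\sV_{\Sigma_0}$, and exactness follows.

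I expect the main obstacle to be keeping the singular locus of the primitive inside $\Sigma_0$ in (2), rather than merely inside some finite set --- every other item being bookkeeping on top of earlier theorems. This needs the local analyticity lemma above together with the nontrivial input of Proposition \ref{consisprop} that $\int_{\mathrm{sv}}\dd z$ maps $\sG_\FF$ into itself, and the weight induction to break the apparent circularity between $\partial_\zz F$ and $F$. A secondary subtlety is (1): the ``bad-denominator'' single-valued combinations --- such as $\log(z\zz)/(z-1/\zz)$ of Example \ref{wt2ex0}, which is \emph{not} an $\sO_\Sigma^\emptyset(z,\zz)$-multiple of a constant-coefficient GSVH --- must be folded into the free basis, and it is again Lemmas \ref{gsvhlem1} and \ref{gsvhlem2} that bring them under control.
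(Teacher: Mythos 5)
Most of your plan coincides with the paper's proof: (6) is indeed just Theorem \ref{GHSigmathm}, (4) and (5) follow by intersecting Theorem \ref{GHthm}(4),(5) with Remark \ref{svrk}(4),(5), (3) reuses the operator from the proof of Theorem \ref{GHthm}(3), (1) combines Lemma \ref{wt0lem} with Theorem \ref{GHthm}(1) (your extra appeal to Theorem \ref{techthm} is harmless), and the kernel computation in (2) is the paper's. The genuine problem is your surjectivity argument in (2). The identity $\partial_\zz F=\pi_{\partial_\zz}\int_{\mathrm{sv}}\partial_\zz f\,\dd z$ is established in the proof of Theorem \ref{hexthm} only for $f\in\partial_z\sG_\FF^\CC$ and only for the primitive of $\pi_{\partial_z}f$, not for the primitive of $f$ itself, and not at all for a general $f\in\sG_\FF\cap\sS\sV_{\Sigma_0}$ whose primitive is built through the integration-by-parts recursion of Definition \ref{intsvGdef}. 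Already $f(z)=1/z$ breaks it: here $F=\int_{\mathrm{sv}}f\,\dd z=\sL_0(z)=\log(z\zz)$, so $\partial_\zz F=1/\zz$, while $\int_{\mathrm{sv}}\partial_\zz f\,\dd z=0$; the discrepancy comes from the residue terms $\sum_a(\res_af)\sL_a$ in Corollary \ref{existencecor}, and for coefficients in $\sO_\FF(z,\zz)$ (higher poles, $\zz$-dependent factors) no formula of this kind is proved anywhere. Two further steps do not hold as stated: $\partial_\zz f$ need not have strictly lower weight than $f$ when the coefficients are non-constant (the weight only counts the hyperlogarithmic part, and $\partial_\zz$ may hit the coefficient), so your induction does not obviously descend; and the inductive hypothesis only yields \emph{some} primitive of $\partial_\zz f$ in $\sS\sV_{\Sigma_0}$, not that the canonical $G_0=\int_{\mathrm{sv}}\partial_\zz f\,\dd z$ lies in $\sS\sV_{\Sigma_0}$ --- generically it does not, which is precisely why $\pi_{\partial_\zz}$ appears in the hexagon; closing this would need an extra argument that the difference is a rational function of $\zz$ with only simple poles outside $\Sigma_0$.

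The paper avoids all of this with a purely local, one-derivative argument that needs no induction and no control of $\partial_\zz F$: by Proposition \ref{consisprop} the primitive $\tilde F=\int_{\mathrm{sv}}f\,\dd z$ lies in $\sG_{\tilde\Sigma}$ for some finite $\tilde\Sigma(\zz)$, and by Theorem \ref{GHSigmathm} it is $\CC$-analytic outside $\tilde\Sigma(\zz)\cap\FF$; at each spurious point $a\in(\tilde\Sigma(\zz)\cap\FF)\setminus\Sigma_0$ the single-valued log-Laurent expansion of $\tilde F$, differentiated in $z$, must reproduce the $\CC$-analytic function $f$, which forces $\tilde F$ to be $\CC$-analytic at $a$ up to a pole part $G_a(\zz)\in\CC[(\zz-\aaa)^{-1}]$; subtracting the finitely many $G_a$ (which lie in the kernel of $\partial_z$) gives $F=\tilde F-\sum_aG_a\in\sG_\FF\cap\sS\sV_{\Sigma_0}$ with $\partial_zF=f$. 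Your two-derivative local lemma is correct and could in principle be used, but the input you feed it ($\CC$-analyticity of $\partial_\zz F$ away from $\Sigma_0$) is exactly what your argument fails to supply; I recommend replacing that part by the pole-subtraction argument above.
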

\begin{proof}
It is clear from Theorem \ref{GHthm} (1) and Remark \ref{svrk} (2) that $\sG_\Sigma^\CC$ is a $\CC$-algebra.
Lemma \ref{wt0lem} states that the weight zero part of $\sG_\Sigma$ is $\sO_\Sigma^\emptyset(z,\zz)$. Hence $\sG_\Sigma$ is a $\sO_\Sigma^\emptyset(z,\zz)$-module.
It is free by Theorem \ref{GHthm} (1).

By Theorem \ref{GHthm} (2) and Proposition \ref{propseqm} the kernel of $\partial_z$ in $\sG_\FF\cap\sS\sV_{\Sigma_0}$ is $\sO_\FF(\zz)\cap\sS\sV_{\Sigma_0}=\sO_{\Sigma_0}(\zz)$.
In Proposition \ref{consisprop} it was proved that every $f\in\sG_\FF\cap\sS\sV_{\Sigma_0}$ has a primitive $\tilde F\in\sG_{\tilde\Sigma}$ for some finite set
$\tilde\Sigma(\zz)\subset\FLT_\FF(\zz)$. By Theorem \ref{GHSigmathm} the function $\tilde F$ is $\CC$-analytic outside $\tilde\Sigma_0:=\tilde\Sigma(\zz)\cap\FF$.
If $\tilde\Sigma_0\subset\Sigma_0$ then $F:=\tilde F\in\sS\sV_{\Sigma_0}$. Otherwise let $a\in\tilde\Sigma_0\backslash\Sigma_0$.
Because $f\in\sG_{\Sigma_0}$, the single-valued log-Laurent expansion of $\tilde F$ at $a$ differentiates (with respect to $z$) to a $\CC$-analytic function at $a$.
By explicit differentiation of its expansion we see that $\tilde F$ is the sum of a $\CC$-analytic function at $a$ and a pole part $G_a(\zz)\in\CC[(\zz-\aaa)^{-1}]$.
We consider all points in $\tilde\Sigma_0\backslash\Sigma_0$ and set $F:=\tilde F-\sum_{a\in\tilde\Sigma_0\backslash\Sigma_0}G_a$.
We get $\partial_zF=\partial_z\tilde F=f$ and $F\in\sS\sV_{\Sigma_0}$.

To prove (3) we use the differential operator $D$ that was constructed in the proof of Theorem \ref{GHthm} (3).

Statement (4) follows from Theorem \ref{GHthm} (4) and Remark \ref{svrk} (4).

Statement (5) follows from Theorem \ref{GHthm} (5) and Remark \ref{svrk} (5).

Statement (6) is Theorem \ref{GHSigmathm}.
\end{proof}
Theorem \ref{Gthm} (5) and (6) suggest that GSVHs should be considered as objects on the Riemann sphere $\CC\cup\{\infty\}$.

\section{Single-valued integration}\label{sectimplementation}
The commutative hexagon in Figure \ref{fig:gsvh} establishes a bootstrap algorithm for the construction of GSVHs.
The crucial step is the single-valued integration in $\partial_z\partial_\zz\sG_\FF^\CC$ on the bottom right sector.
After using integration by parts there is always a weight-drop in this integration. The subtraction of residues with $\pi_{\partial_z}$ needs to be compensated
by adding single-valued logarithms. With the strategy of Examples \ref{wt2ex} and \ref{wt2exzz} we get the following inductive formulae for the single-valued integral
of a function $f\in\partial_z\sG_\Sigma^\CC$, $\Sigma(\zz)\subset\FLT(\zz)$, in the $z,\zz$ basis of (\ref{GHeq}) (in terms of the multivalued integrals
$\int_0\dd z$ and $\int_0\dd\zz$, see (\ref{int0def})),
\begin{eqnarray}\label{inteq}
\int_{\mathrm{sv}}f(z)\,\dd z&=&\int_0\pi_{\partial_z}f(z)\,\dd z+\int_0\Big(\pi_{\partial_\zz}\int_{\mathrm{sv}}\partial_\zz f(z)\,\dd z\Big)\Big|_{z=0}\dd\zz
+\sum_{a\in\Sigma(\zz)\cap\FF}(\res_af)\sL_a(z),\nonumber\\
\int_{\mathrm{sv}}f(z)\,\dd\zz&=&\int_0\Big(\pi_{\partial_\zz}f(z)\Big)\Big|_{z=0}\dd\zz+\int_0\pi_{\partial_z}\int_{\mathrm{sv}}\partial_zf(z)\,\dd\zz\,\dd z
+\sum_{a\in\Sigma(\zz)\cap\FF}(\overline{\res}_af)\sL_a(z),
\end{eqnarray}
where we consider $z$ and $\zz$ as independent variables. Both integrations in (\ref{inteq}) preserve the $z,\zz$ basis.

\subsection{Extension of GSVHs}\label{sectGSI}
Equation (\ref{inteq}) was used for first calculations with GSVHs in a predecessor of \cite{Shlog}. Because the algorithm was programmed in Maple which is not
ideal for handling large expressions it turned out that the following single-valued representation is more efficient.

We use (\ref{inteq}) to extend single-valued hyperlogarithms $\sL_w$ in Example \ref{SVIex} to words $w(\zz)\in\FLT_\FF(\zz)^\ast$. By linearity, GSVHs can be expressed as sums of
such extended single-valued hyperlogarithms (ESVHs).

\begin{defn}\label{GSVIdef}
Let $\beta(\zz)\in\FLT_\FF(\zz)$ and $w(\zz)\in\FLT_\FF(\zz)^\ast$. We inductively define $\sL_{w(\zz)}(z)$ by $\sL_e(z)=1$ for the empty word $e$ and (see (\ref{SVIeq}))
\begin{equation}
\sL_{w(\zz)\beta(\zz)}(z)=\int_{\mathrm{sv}}\frac{\sL_{w(\zz)}(z)}{z-\beta(\zz)}\,\dd z,
\end{equation}
where $\int_{\mathrm{sv}}\dd z$ is defined by the first identity in (\ref{inteq}) with zero (anti-)residues at non-constant points
($\res_{\beta}=\overline{\res}_{\beta}=0$ if $\beta(\zz)\notin\FF$). Integration by parts in the bottom right sector of the commutative hexagon is defined via (\ref{ibp}).
The space of ESVHs is the free $\sO_\Sigma(z,\zz)$ module
$$
\sE_\Sigma=\langle\sL_{w(\zz)}(z),\,w(\zz)\in\FLT_\FF^\ast(\zz)\rangle_{\sO_\Sigma(z,\zz)}.
$$
The $\CC$-algebra of ESVHs with constant coefficients is $\sE_\Sigma^\CC$. Moreover,
$$
\sE_\FF=\bigcup_{\Sigma(\zz)\subset\FLT_\FF(\zz)}\sE_\Sigma\quad\text{and}\quad\sE_\FF^\CC=\bigcup_{\Sigma(\zz)\subset\FLT_\FF(\zz)}\sE_\Sigma^\CC.
$$
\end{defn}
Note that ESVHs are not always single-valued in the sense of Definition \ref{svdef}. They may fail to have single-valued log-Laurent expansions and they may have non-trivial monodromies.

\begin{ex}\label{counterex2}
Consider the function $f(z)=1/(z-2\zz)$ which is $\CC$-analytic in $\CC\backslash\{0\}$ but fails to have a single-valued log-Laurent expansion at $0$, see Example \ref{zm2zzex}.
Using Definition \ref{GSVIdef} we obtain
$$
\sL_{2\zz}(z)=L_{2\zz}(z)+L_0(\zz)\in\sE_{\{2\zz\}}^\CC.
$$
Because $|z/(2\zz)|=1/2<1$ we find that $L_{2\zz}(z)=\log(1-z/(2\zz))$ has trivial monodromy at 0. Hence $\sM_0\sL_{2\zz}(z)=\sL_{2\zz}(z)-2\pi\ii$.
We have $\sL_{2\zz}(z)\notin\sG_{\{2\zz\}}^\CC$.
\end{ex}

\begin{ex}\label{Dzzex2}
For the weight three GSVH in the letters 0, 1, $\zz$ in Example \ref{Dzzex} we have
\begin{equation}\label{esvhex}
\int_{\mathrm{sv}}\frac{\sL_{01}(z)-\sL_{10}(z)}{z-\zz}\,\dd z=\sL_{01\zz}(z)-\sL_{10\zz}(z)\in\sG_{0,1,\zz}^\CC
\end{equation}
with
\begin{eqnarray*}
\sL_{01\zz}(z)&=&L_{01\zz}(z)+L_0(\zz)L_{1\zz}(z)+L_{10}(\zz)L_\zz(z)\\
&&\quad+\,2L_{100}(\zz)+2L_{010}(\zz)+2L_{001}(\zz)-L_{101}(\zz)-L_{110}(\zz)-\zeta(2)L_1(\zz),\\
\sL_{10\zz}(z)&=&L_{10\zz}(z)+L_1(\zz)L_{0\zz}(z)+L_{01}(\zz)L_\zz(z)\\
&&\quad+\,L_{100}(\zz)+2L_{010}(\zz)+2L_{001}(\zz)-L_{110}(\zz)-\zeta(2)L_1(\zz),\\
\end{eqnarray*}
where $\zeta(2)=\pi^2/6$ is the Riemann zeta function at 2. Neither $\sL_{01\zz}(z)$ nor $\sL_{10\zz}(z)$ are GSVHs because each one is singular on the real axis.
Note that many terms cancel in $\sL_{01\zz}(z)-\sL_{10\zz}(z)$.
\end{ex}

\begin{ex}\label{exempty}
For $\Sigma(\zz)\subset\FLT_\FF^\emptyset(\zz)$, see (\ref{FLTempty}), we have $\sE_\Sigma=\sG_\Sigma$. In particular, the dimension $d_n$ of GSVHs in $\Sigma(\zz)$ with weight $n$
is $|\Sigma(\zz)|^n$. This case also contains the class of single-valued hyperlogarithms defined in \cite{BrSVMP,BrSVMPII} where $\Sigma\subset\FF$.
\end{ex}

Note that the weight zero piece of $\sE_\Sigma$ is $\sO_\Sigma(z,\zz)$, see (\ref{OSigmadef}), in contrast to the weight zero piece $\sO_\Sigma^\emptyset(z,\zz)$ of $\sG_\Sigma$
(see Lemma \ref{wt0lem}). In analogy to Theorem \ref{Gthm} (2) we get the exact sequence
$$
0\longrightarrow\sO_\FF(\zz)\longrightarrow\sE_\FF\stackrel{\partial_z}{\longrightarrow}\sE_\FF\longrightarrow0,
$$
where the existence of primitives uses integration by parts as in Definition \ref{intsvGdef}. The kernel of $\partial_z$ in $\sE_\FF$ is given by anti-analytic functions
in the weight zero piece $\sO_\FF(z,\zz)$. This is $\sO_\FF(\zz)$.

In $\sE_\FF$ single-valued integration with respect to $z$ is trivial. We need the commutative hexagon in Figure \ref{fig:gsvh} to find anti-derivatives $\partial_\zz f$
and anti-primitives $\int_{\mathrm{sv}}f\dd\zz$ of functions $f\in\sE_\FF$. The bottleneck of most calculations in $\sE_\FF$ is the evaluation of ESVHs at certain points.
This evaluation, in general, demands the conversion into the $z,\zz$ basis (\ref{GHeq}). Only for constant letters there exists a significant shortcut:
The evaluation $\sL_w(a)$, $a\in\CC$, can be obtained from the multivalued evaluation $L_w(a)$ by the single-valued map which has a simple formula in the $f$-alphabet \cite{BrSV}.

Expressing GSVHs in terms of ESVHs as in (\ref{esvhex}) makes expressions much more compact. In the Maple implementation \cite{Shlog} this advantage of using ESVHs
clearly outweights the drawbacks.

\subsection{The $f$-alphabet}
A third option to present GSVHs (in addition to the $z,\zz$ basis and ESVHs) is the conversion into the $f$-alphabet.
One can use F. Brown's decomposition algorithm (for motivic numbers (periods) defined in \cite{BrownDecom}) to convert GSVHs into the $f$-alphabet \cite{Bcoact2}. This representation
has the simplest structure at the expense of more lengthy expressions. It has not yet been used and all results are experimental \cite{fhlog}.
However, in future implementations of GSVHs the $f$-alphabet should be considered an option.

\section{Construction of some GSVHs}\label{sectconstruction}
By Corollary \ref{partialcor}, the space $\sG_\FF^\CC$ is (inductively) spanned by GSVHs
\begin{equation}\label{GSVHint}
f(z)=\int_{\mathrm{sv}}\frac{h(z)}{z-\beta(\zz)}\,\dd z,
\end{equation}
with $\beta(\zz)\in\FLT_\FF(\zz)$ and $h(z)/(z-\beta(\zz))\in\partial_z\sG_\FF^\CC$. This leads to the following problem.
\begin{prob}
Fix $\beta\in\FLT_\FF$. Construct functions $h\in\sG_\FF^\CC$ such that $h(z)/(z-\beta(\zz))\in\partial_z\sG_\FF^\CC$.
\end{prob}
\begin{remark}\label{constrrk}\mbox{}
\begin{enumerate}
\item If $\beta\in\FLT_\FF^\emptyset$ then every $h\in\sG_\FF^\CC$ gives $h(z)/(z-\beta(\zz))\in\partial_z\sG_\FF^\CC$ (Example \ref{exempty}).
So, we may restrict ourselves to the case $\beta\notin\FLT_\FF^\emptyset$.
\item Graphical functions live in Euclidean space whereas QFT is built on Minkowski metric. For the transition from Euclidean to Minkowski metric one
needs to analytically continue $\zz$ away from the complex conjugate of $z$ such that $z$ and $\zz$ become independent real variables.
In this situation, letters in $\FLT_\FF^\emptyset$ quickly develop singularities (see, e.g., Example \ref{wt1ex}).
The situation is more stable for $\beta\notin\FLT_\FF^\emptyset$ where singularities are only attained on  higher monodromy sheets. Accordingly we observe in
graphical functions letters which are not in $\FLT_\FF^\emptyset$.
\end{enumerate}
\end{remark}

\subsection{Involutions}\label{sectinvols}
A special case arises when $z\mapsto\beta(\zz)$ is an involution.
\begin{defn}
Let $\overline{\beta}$ be $\beta$ with complex conjugated coefficients and
$$
\FLT_\FF^2=\{\beta\in\FLT_\FF, \beta\circ\overline{\beta}=\mathrm{id}\}.
$$
Equivalently, $\FLT_\FF^2$ is the subset of $\FLT_\FF$ for which the map $\tau:z\mapsto\beta(\zz)$ is an involution, $\tau^2=\mathrm{id}$.
\end{defn}

\begin{ex}
The identity $\mathrm{id}\in\FLT_\FF$ gives $\tau:z\mapsto\zz$, see Example \ref{Dzzex} and Section \ref{sect01zz}.

In Examples \ref{wt2ex} and \ref{wt2exzz} we have $z\mapsto1/z\in\FLT_\FF^2$.
\end{ex}

\begin{lem}\label{betalem2}
Every $\beta\in\FLT_\FF^2$ has a representation
$$
\beta(z)=\frac{az+b}{cz+\aaa},\quad\text{with}\quad\overline{b}=-b,\,\overline{c}=-c.
$$
Every such $\beta$ is uniquely represented by the projective point $(\Re a:\Im a:\Im b:\Im c)\in\PP\RR^3$.
\end{lem}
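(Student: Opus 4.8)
The plan is to recast the defining relation $\beta\circ\overline{\beta}=\id$ as an identity of $2\times2$ matrices and then normalise the coefficients. Represent $\beta(z)=\frac{az+b}{cz+d}$, $a,b,c,d\in\FF$, by $M=\begin{pmatrix}a&b\\c&d\end{pmatrix}$; note $\beta$ is invertible ($ad-bc\neq0$), since otherwise $\beta\circ\overline{\beta}$ would be constant. Then $\overline{\beta}$ is represented by $\overline{M}$ and $\beta\circ\overline{\beta}$ by the product $M\overline{M}$. As a fractional linear transformation equals the identity exactly when its matrix is a nonzero scalar, $\beta\circ\overline{\beta}=\id$ is equivalent to $M\overline{M}=\lambda I$ for some $\lambda\in\CC$; taking determinants gives $\lambda^2=\det(M)\,\overline{\det(M)}=|\det M|^2>0$, so $\lambda\in\RR^{\times}$. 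Solving $M\overline{M}=\lambda I$ for $\overline{M}$, i.e.\ $\overline{M}=\lambda\,(\det M)^{-1}\begin{pmatrix}d&-b\\-c&a\end{pmatrix}$, and setting $\mu:=\lambda/\det M$, I would read off
\[
\overline{a}=\mu d,\qquad \overline{d}=\mu a,\qquad \overline{b}=-\mu b,\qquad \overline{c}=-\mu c.
\]

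Next I would check that $\mu\in\FF$ and $|\mu|=1$. Because $\FF=\overline{\FF}$ we have $\overline{\det M}\in\FF$, so $\lambda^2=|\det M|^2\in\FF$; quadratic closure of $\FF$ then gives $\lambda=\pm\sqrt{\lambda^2}\in\FF$ and hence $\mu=\lambda/\det M\in\FF$. For $|\mu|=1$: conjugating $\overline{a}=\mu d$ and substituting $\overline{d}=\mu a$ yields $a=|\mu|^2 a$; if $a=0$ then $\mu\neq0$ forces $d=0$, whence $\det M\neq0$ forces $b,c\neq0$, and conjugating $\overline{b}=-\mu b$ gives $b=|\mu|^2 b$ with $b\neq0$. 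Either way $|\mu|^2=1$. Now rescale $M$: put $\nu=\sqrt{\mu}\in\FF$ and replace $M$ by $M'=\nu M$, which still has entries in $\FF$ and represents the same $\beta$. Then $M'\overline{M'}=|\nu|^2\lambda I$ and $\det M'=\nu^2\det M$, so the constant attached to $M'$ is $\mu'=|\nu|^2\lambda/(\nu^2\det M)=(\overline{\nu}/\nu)\,\mu=\mu^{-1}\mu=1$, using $\nu^2=\mu$ and $|\nu|^2=|\mu|=1$. With $\mu'=1$ the four relations above, for $M'=\begin{pmatrix}a&b\\c&d\end{pmatrix}$ (reusing the names), become $d=\overline{a}$ and $\overline{b}=-b$, $\overline{c}=-c$; this is the claimed normal form $\beta(z)=\frac{az+b}{cz+\aaa}$.

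For the projective statement I would first show that the normal form is unique up to a real scalar. If $\begin{pmatrix}a_1&b_1\\c_1&\overline{a_1}\end{pmatrix}$ and $\begin{pmatrix}a_2&b_2\\c_2&\overline{a_2}\end{pmatrix}$ both represent $\beta$, they differ by a scalar $t\in\CC^{\times}$; from $a_1=ta_2$ (hence $\overline{a_1}=\overline{t}\,\overline{a_2}$) together with $\overline{a_1}=t\,\overline{a_2}$ one gets $t=\overline{t}$ whenever $a_2\neq0$, and the remaining cases are handled the same way through $b_2$ or $c_2$ (they cannot all vanish, else the matrix is singular). Thus $t\in\RR^{\times}$, and then $b_1=i\,\Im b_1=t\,i\,\Im b_2=tb_2$ and likewise $c_1=tc_2$, so $(\Re a_1,\Im a_1,\Im b_1,\Im c_1)=t\,(\Re a_2,\Im a_2,\Im b_2,\Im c_2)$ and the point $(\Re a:\Im a:\Im b:\Im c)$ is independent of the chosen representative. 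It really is a point of $\PP\RR^3$, since these four real numbers cannot all be zero (otherwise $a=b=c=\overline{a}=0$, contradicting $\det M\neq0$). Injectivity of $\beta\mapsto(\Re a:\Im a:\Im b:\Im c)$ is then clear: if two elements of $\FLT_\FF^2$ have the same image, their normal-form coefficient vectors are proportional over $\RR$, hence over $\CC$, hence define the same transformation.

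The step I expect to be the crux is the normalisation: rescaling $M$ so that $\mu$ becomes $1$ while keeping all coefficients in $\FF$. This is exactly where the two hypotheses on $\FF$ are used — $\overline{\FF}=\FF$ to place $|\det M|^2$ and $\overline{\det M}$ in $\FF$, and quadratic closure to extract first $\sqrt{|\det M|^2}$ and then $\sqrt{\mu}$ inside $\FF$. The rest is routine $2\times2$ linear algebra together with the harmless case distinctions according to which of $a,b,c$ is nonzero.
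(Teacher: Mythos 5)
Your proof is correct and follows essentially the same route as the paper: identifying $\overline{\beta}$ with $\beta^{-1}$ up to a scalar (your $\mu$, the paper's $\alpha$), showing that scalar has modulus one, rescaling by its square root (where quadratic closure and $\overline{\FF}=\FF$ enter), and noting the residual real-scalar ambiguity for the projective statement. The only difference is presentational — you phrase it via $2\times2$ matrices and spell out the case distinctions and the uniqueness argument that the paper leaves terse.
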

\begin{proof}
For any $\beta(z)=(az+b)/(cz+d)\in\FLT_\FF^2(z)$ an inverse is given by $\beta^{-1}(z)=(dz-b)/(-cz+a)$ which has to equal $\overline{\beta}$ up to a non-zero common factor $\alpha$
in the coefficients. We get
$$
\aaa=d\alpha,\overline{b}=-b\alpha,\overline{c}=-c\alpha,\overline{d}=a\alpha.
$$
By complex conjugation we get $\alpha\overline{\alpha}=1$. If we multiply the coordinates with $\alpha^{-1/2}$ we obtain the desired representation of $\beta$.
The representation is unique up to a common factor $\eta$ with $\eta=\overline{\eta}\in\RR^\times$.
\end{proof}

\begin{prop}\label{invoprop}
Let $h\in\sG_\FF^\CC$ and $\beta\in\FLT_\FF^2$. Then
\begin{equation}\label{invof}
f(z)=\frac{h(z)-h(\beta(\zz))}{z-\beta(\zz)}
\end{equation}
is a symmetric function (under $\tau:z\mapsto\beta(\zz)$) in $\partial_z\sG_\FF^\CC$.
\end{prop}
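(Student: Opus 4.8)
The plan is to verify the three assertions in turn: $f\in\partial_z\sG\sH_\FF^\CC$, $f\in\sS\sV_\CC$ (which together give $f\in\partial_z\sG_\FF^\CC$), and the $\tau$-symmetry of $f$. Set $N(z,\zz)=h(z)-h(\beta(\zz))$. Here $h(\beta(\zz))$ is the composition of $h$ with the anti-holomorphic M\"obius map $\tau\colon z\mapsto\beta(\zz)$; writing $\tau=\beta\circ\kappa$ with $\kappa$ complex conjugation, we have $h\circ\beta\in\sG_\FF^\CC$ by Theorem \ref{Gthm} (5), and composing with $\kappa$ (which swaps the roles of $z$ and $\zz$, so that the conjugate slot receives $\overline{\beta(\zz)}=\overline\beta(z)=\beta^{-1}(z)$) again preserves $\sG_\FF^\CC$ by Theorem \ref{Gthm} (4) and Proposition \ref{ccprop}. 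Hence $N\in\sG_\FF^\CC$, and $f=N\cdot(z-\beta(\zz))^{-1}$ lies in $\partial_z\sG\sH_\Sigma^\CC$ for any finite $\Sigma(\zz)$ containing $\beta(\zz)$ and the letters occurring in $N$, see (\ref{pzghdef}). The symmetry will then be a one-line computation: applying $\tau$ sends the numerator to $h(\beta(\zz))-h\big(\beta(\beta^{-1}(z))\big)=h(\beta(\zz))-h(z)=-N$ and the denominator to $\beta(\zz)-z=-(z-\beta(\zz))$, so $f\circ\tau=f$; here I use that $\beta\in\FLT_\FF^2$ is invertible with $\overline\beta=\beta^{-1}$, whence also $\overline\beta\circ\beta=\mathrm{id}$.

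The key geometric fact is that $N$ vanishes identically on the fixed-point curve $C=\{z=\beta(\zz)\}$ of $\tau$. Indeed, restricting $N$ to $z=\beta(\zz)$ replaces the first term by $h(\beta(\zz),\zz)$ and the second by $h\big(\beta(\zz),\overline\beta(\beta(\zz))\big)=h(\beta(\zz),\zz)$, using $\overline\beta\circ\beta=\mathrm{id}$; the difference is $0$.

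For $f\in\sS\sV_\CC$ I would check single-valued log-Laurent expansions point by point on $\CC\cup\{\infty\}$. By Theorem \ref{GHSigmathm} the singular set $\Sigma_0$ of $N$ is a finite subset of $\FF$. At a point $a\in\CC$ with $a\neq\beta(\aaa)$ the factor $(z-\beta(\zz))^{-1}$ is $\CC$-analytic at $a$, so $f$ inherits the single-valued log-Laurent expansion of $N$ there ($\CC$-analytic if $a\notin\Sigma_0$). At $a=\beta(\aaa)\in\CC$ (necessarily $\beta\notin\FF$, $\beta$ being non-constant): if $a\notin\Sigma_0$, Lemma \ref{gsvhlemma} applied to the $\CC$-analytic $N$, together with $N(\beta(\zz),\zz)=0$, gives that $f$ is $\CC$-analytic at $a$; if $a\in\Sigma_0$, Lemma \ref{gsvhlem1} with $n=1$, $f_1=N$ applies, and $N(\beta(\zz),\zz)=0$ yields a single-valued log-Laurent expansion at $a$. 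At $\infty$ I would use the representation $\beta(z)=(az+b)/(cz+\aaa)$ of Lemma \ref{betalem2}: if $c\neq0$ then $\beta(\zz)\to a/c$ as $\zz\to\infty$, so $(z-\beta(\zz))^{-1}$ has a single-valued log-Laurent expansion at $\infty$ and $f$ inherits one from $N$; if $c=0$ then $\beta$ is affine and Lemma \ref{gsvhlem2} with $n=1$, $f_1=N$ applies (the hypothesis $N(\beta(\zz),\zz)=0$ again holding). Thus $f$ is $\CC$-analytic outside the finite set $\Sigma_0$, has single-valued log-Laurent expansions everywhere, and so $f\in\sS\sV_{\Sigma_0}\subset\sS\sV_\CC$; combined with $f\in\partial_z\sG\sH_\FF^\CC$ and the symmetry above, this completes the argument.

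The main obstacle is the behaviour at infinity: one must separate the genuinely M\"obius case $c\neq0$, where $(z-\beta(\zz))^{-1}$ is locally regular at $\infty$ and no lemma is needed, from the affine case $c=0$, where the fixed-point curve passes through $\infty$ and Lemma \ref{gsvhlem2} is essential; one also has to keep careful track of the induced substitution $\zz\mapsto\overline\beta(z)$ in the conjugate slot when verifying $h(\beta(\zz))\in\sG_\FF^\CC$. Everything difficult about single-valuedness along $C$ is already encapsulated in Lemmas \ref{gsvhlemma}, \ref{gsvhlem1} and \ref{gsvhlem2}, so past those the proof is assembly.
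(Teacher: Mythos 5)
Your proof is correct and follows essentially the same route as the paper: closure properties (Theorems \ref{GHthm}/\ref{Gthm} (4),(5)) for membership in $\partial_z\sG\sH_\FF^\CC$, the vanishing of $h(z)-h(\beta(\zz))$ on $\{z=\beta(\zz)\}$ from $\beta\circ\overline{\beta}=\mathrm{id}$, and then Lemmas \ref{gsvhlemma}, \ref{gsvhlem1}, \ref{gsvhlem2} plus compactness of $\CC\cup\{\infty\}$ for single-valuedness, with the symmetry read off directly from (\ref{invof}). You are merely more explicit than the paper about the case split at infinity ($c\neq0$ versus affine $\beta$) and the substitution $\zz\mapsto\overline{\beta}(z)$ in the conjugate slot, which is fine.
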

\begin{proof}
It is clear by Theorem \ref{GHthm} (4), (5) that $f\in\sG\sH_\FF$. The symmetry of $f$ is evident from (\ref{invof}). We need to show that $f\in\sS\sV_\CC$.

Fix $a\in\CC\cup\{\infty\}$. We have $h(z)-h(\beta(\zz))\in\sS\sV_\Sigma$ for some finite set $\Sigma$, see Remark \ref{svrk} (4), (5).
Because $\beta\in\FLT_\FF^2$ the function $h(z)-h(\beta(\zz))\in\sG_\FF$ vanishes at $z=\beta(\zz)$ for any (independent) values of $\zz$.

If $a\neq\beta(\aaa)$ then $1/(z-\beta(\zz))$ is $\CC$-analytic at $a$ and $f$ has a single-valued log-Laurent expansion at $a$.
By Lemmas \ref{gsvhlem1} and \ref{gsvhlem2} (note that $\beta\notin\FF$) this statement extends to $a=\beta(\aaa)$.
Because $\CC\cup\{\infty\}$ is compact, the function $f$ has a finite number of singular points. Hence $f\in\sS\sV_\CC$.
\end{proof}

\begin{ex}
For $h(z)=\sL_0(z)/2$, $\beta(z)=z^{-1}$ we obtain the function $\log(z\zz)/(z-1/\zz)$ in Example \ref{wt2ex}.

The complex conjugate of $h(z)=\sL_{01}(z)$ is $\sL_{10}(z)$. With $\beta=\mathrm{id}$ we get the integrand $(\sL_{01}(z)-\sL_{10}(z))/(z-\zz)$ in Example \ref{Dzzex}.
\end{ex}

\subsection{Exceptional GSVHs}\label{sectexcept}
In the notion of Section \ref{sectGSI}, the function $f(z)$ in (\ref{GSVHint}) is a linear combination of ESVHs which end in the letter $\beta(\zz)$.

If the zero locus of the denominator in the integrand of (\ref{GSVHint}) is empty, $h$ can be any GSVH in $\sG_\FF^\CC$.
If $\beta\in\FLT_\FF^2$ and $h(z)=[h_1(z)-h_1(\beta(\zz))]h_2(z)$ with $h_1,h_2\in\sG_\FF^\CC$, the construction of $f$ is a trivial extension of Proposition \ref{invoprop}.
If none of these cases is true then $f$ is an exceptional GSVH.

\begin{defn}\label{exceptdef}
Assume $f\in\sG_\FF^\CC$ as is (\ref{GSVHint}). If $\beta\notin\FLT_\FF^\emptyset$ and either $\beta\notin\FLT_\FF^2$ or there exist no $h_1,h_2\in\sG_\FF^\CC$
such that $h(z)=[h_1(z)-h_1(\beta(\zz))]h_2(z)$ then $f$ is an exceptional GSVH.
\end{defn}
Note that factorization is readily checked using an algebra basis of ESVHs in Lyndon words.

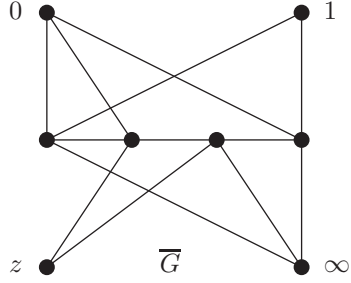
\begin{figure}
\begin{center}
\fcolorbox{white}{white}{
  \begin{picture}(160,121) (67,-11)
    \SetWidth{0.5}
    \SetColor{Black}
    \Line(176,89)(80,41)
    \Vertex(80,41){3}
    \Vertex(112,41){3}
    \Vertex(144,41){3}
    \Vertex(176,41){3}
    \Vertex(80,89){3}
    \Vertex(80,-7){3}
    \Vertex(176,89){3}
    \Vertex(176,-7){3}
    \Line(80,89)(80,41)
    \Line(80,89)(112,41)
    \Line(80,89)(176,41)
    \Line(176,89)(176,41)
    \Line(112,41)(80,-7)
    \Line(144,41)(80,-7)
    \Line(80,41)(176,-7)
    \Line(144,41)(176,-7)
    \Line(176,41)(176,-7)
    \Line(80,41)(176,41)
    \Text(66,87)[lb]{\Black{$0$}}
    \Text(185,87)[lb]{\Black{$1$}}
    \Text(66,-9)[lb]{\Black{$z$}}
    \Text(185,-9)[lb]{\Black{$\infty$}}
    \Text(123,-9)[lb]{\Black{$\overline{G}$}}
  \end{picture}
}
\end{center}
\caption{A graphical function which contains an exceptional GSVH.}
\label{fig:exceptionalGSVH}
\end{figure}

\begin{ex}\label{wt6except}
Consider the ESVH of weight six that emerges from integrating the square of the Bloch-Wigner dilogarithm (\ref{BWD}) twice with respect to $\int_{\mathrm{sv}}\dd z/(z-\zz)$,
\begin{align*}
f_6(z)=&\;3\sL_{0011\zz\zz}-3\sL_{0110\zz\zz}-3\sL_{1001\zz\zz}+3\sL_{1100\zz\zz}+\sL_{01010\zz}-\sL_{01100\zz}-3\sL_{01110\zz}\\
&\quad+\,2\sL_{10010\zz}-2\sL_{10100\zz}-3\sL_{11001\zz}+6\sL_{11100\zz}+6\zeta(3)\sL_{01\zz}.
\end{align*}
Expansions at $0,1,\infty$ to high orders suggest that $f_6\in\sG_{\{0,1,\zz\}}^\CC$, see Section \ref{sect01zz}. In this case $f_6$ is exceptional.
\end{ex}

\begin{ex}\label{exexept}
Consider the four-dimensional graphical function $f_G(z)$ for the graph $G$ whose completion $\overline{G}$ is depicted in Figure \ref{fig:exceptionalGSVH} \cite{gfe,gf}.
An explicit calculation with HyperlogProcedures \cite{Shlog} gives $f_G(z)\in\sO_\Sigma^\emptyset(z,\zz)\partial_z\sG\sH_\Sigma^\CC$ for
$\Sigma(\zz)=\{0,1,\zz,-\zz,1-\zz,\zz/(\zz-1)\,z-\zz,z+\zz\}$. Every graphical function is single-valued \cite{par}.
The function $f_G(z)$ has denominators $z-\zz$ and $z+\zz$. By Proposition \ref{techprop}, single-valued integration of the terms with denominator $z+\zz$ gives a GSVH of weight nine.
It is exceptional. The numerator $h$ in (\ref{GSVHint}) does not factorize. Hence $h_2=1$ in Definition \ref{exceptdef}. An explicit calculation with HyperlogProcedures shows that
$h(z)\neq-h(-\zz)$.
\end{ex}

\begin{quest}\mbox{}
\begin{enumerate}
\item Does there exist a general construction method for (some) exceptional GSVHs?
\item What is the smallest (by weight) exceptional GSVH?
\item Do GSVHs with letters $\notin\FLT_\FF^\emptyset(\zz)\cup\FLT_\FF^2(\zz)$ exist? If yes, what is the smallest (exceptional) example of such a GSVH?
\end{enumerate}
\end{quest}

\subsection{The alphabet 0, 1, $\zz$}\label{sect01zz}
Consider the alphabet $\Sigma(\zz)=\{0,1,\zz\}$. This mathematically interesting case is important in dimensionally regularized pQFT \cite{Duhr,numfunct,7loops}.

Because every GSVH in $\sG_{\{0,1,\zz\}}^\CC$ is a $\CC$-linear combination of ESVHs in $\{0,1,\zz\}$ we obtain finite dimensional vector-spaces at each weight
$$
d_n=\dim \sG_{\{0,1,\zz\}}^\CC(\text{weight}\,=n)\leq 3^n.
$$
The actual dimensions are much smaller than $3^n$. At weights one and two we only have single-valued multiple polylogarithms whereas at weight three
one genuine GSVH exists, see Examples \ref{Dzzex}, \ref{Dzzex2}.

\begin{table}
\begin{center}
\begin{tabular}{l|rrrrrrrrrr}
weight&0&1&2&3&4&5&6&7&8&9\\\hline
dimension $d_n$&1&2&4&9&21&52&134?&358?&986?&2781?\\
\# generators&0&2&1&3&6&16&38?&105?&284?&805?
\end{tabular}
\end{center}
\caption{(Conjectured) dimensions $d_n$ and numbers of generators in $\sG_{\{0,1,\zz\}}^\CC$ for weights $\leq9$.}\label{tab1}
\end{table}

\begin{thm}\label{thmdims}
The dimensions and numbers of generators in $\sG_{\{0,1,\zz\}}^\CC$ for weights $\leq5$ are as given in Table \ref{tab1}.
\end{thm}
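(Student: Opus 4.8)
The plan is to compute the graded dimensions $d_n=\dim\sG_{\{0,1,\zz\}}^\CC$ in weight $n$ inductively, using the exact sequence of Theorem \ref{Gthm}(2). Since $\Sigma(\zz)\cap\FF=\{0,1\}$, Theorem \ref{Gthm}(6) gives $\sG_{\{0,1,\zz\}}^\CC\subseteq\sS\sV_{\{0,1\}}$, so that sequence specialises to $0\to\CC\to\sG_{\{0,1,\zz\}}^\CC\stackrel{\partial_z}{\to}\partial_z\sG_{\{0,1,\zz\}}^\CC\to0$ and $\partial_z$ is injective on the weight-$n$ piece for $n\geq1$. By Corollary \ref{partialcor} and uniqueness of the partial-fraction decomposition in $z$ (Theorem \ref{GHthm}(1)), a weight-$n$ GSVH is the same datum as a triple $(h_0,h_1,h_\zz)$ of weight-$(n-1)$ GSVHs for which $h_0/z+h_1/(z-1)+h_\zz/(z-\zz)$ is single-valued. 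Because $1/z$ and $1/(z-1)$ are $\CC$-analytic away from $0,1$ and every element of $\sG_{\{0,1,\zz\}}^\CC$ has $M_a=0$ at $a\in\{0,1\}$, the first two summands are automatically single-valued, while by Lemmas \ref{gsvhlem1} and \ref{gsvhlem2} the third is single-valued exactly when $h_\zz$ vanishes on the diagonal $z=\zz$. This produces the exact recursion
$$d_n=2\,d_{n-1}+e_{n-1},\qquad e_m:=\dim\{\,h\in\sG_{\{0,1,\zz\}}^\CC(\mathrm{weight}=m):\ h|_{z=\zz}=0\,\}.$$

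I would then settle the low weights by hand. One has $d_0=1$ with no generators; $\sL_\zz$ is singular on $\RR$, so $d_1=2$ with generators $\sL_0,\sL_1$; at weight $2$ the shuffle relations leave the basis $\sL_0^2,\sL_1^2,\sL_0\sL_1,\ \sL_{01}-\sL_{10}$ (the last being $4\ii D$), so $d_2=4$ with the single new generator $D$. Since no nonzero constant or weight-one GSVH vanishes on $\RR$, $e_0=e_1=0$; and since the three products restrict to linearly independent real-analytic functions on $\RR\setminus\{0,1\}$ while $D|_\RR=0$ (Example \ref{Dzzex0}), $e_2=1$. The recursion reproduces $d_1=2$, $d_2=4$, $d_3=9$, with the genuine weight-three GSVH of Example \ref{Dzzex} supplying the one element beyond $2d_2$.

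The heart of the theorem is the evaluation $e_3=3$ and $e_4=10$, equivalently the ranks of the restriction-to-the-real-line map $\rho_m$ on the weight-$m$ piece of $\sG_{\{0,1,\zz\}}^\CC$, where I expect $\mathrm{rank}\,\rho_3=6$ and $\mathrm{rank}\,\rho_4=11$. Concretely, I would expand each ESVH $\sL_{w(\zz)}(z)$ with $w(\zz)\in\{0,1,\zz\}^{\leq4}$ in the $z,\zz$-basis of (\ref{GHeq}) via the hexagon of Figure \ref{fig:gsvh}, restrict to $z=\zz$ using the shuffle-regularised values for letters colliding with the argument, and reduce the resulting one-variable functions to a basis of real single-valued polylogarithms on $\RR\setminus\{0,1\}$; counting that basis in weights $\leq4$ gives the ranks, hence $e_3=9-6=3$, $e_4=21-11=10$, and then $d_4=2\cdot9+3=21$, $d_5=2\cdot21+10=52$. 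This is a finite but delicate bookkeeping of shuffle and regularisation relations among weight-$\leq4$ ESVHs on the diagonal; it is exactly the step that does not visibly extend past weight five, which is why Table \ref{tab1} is only conjectural for $n\geq6$ (there the dimensions are merely checked numerically by high-order expansions at $0,1,\infty$).

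For the generator counts I would use that $\sG_{\{0,1,\zz\}}^\CC$ is a free graded-commutative (polynomial) algebra — as follows from its description by shuffle products of hyperlogarithms — so that the dimensions and the numbers $g_n$ of weight-$n$ generators are linked by the Hilbert-series identity $\sum_{n\geq0}d_nt^n=\prod_{k\geq1}(1-t^k)^{-g_k}$. Inserting $d_0,\dots,d_5=1,2,4,9,21,52$ forces $g_1,\dots,g_5=2,1,3,6,16$, matching Table \ref{tab1}. The main obstacle throughout is the rank computation for $\rho_3$ and $\rho_4$: it demands complete control of the linear relations among weight-$\leq4$ ESVHs after restriction to the diagonal, and it is the only genuinely computational (rather than structural) part of the argument.
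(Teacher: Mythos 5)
Your structural reduction is reasonable and genuinely different from the paper's argument: the paper bounds $d_n$ from below by explicitly constructing GSVHs (via Proposition \ref{invoprop}, shuffle products, and one factorized numerator at weight five) and checking their independence in the free module $\sE_{\{0,1,\zz\}}$, and from above by imposing cancellation of the $\zz$-poles in the expansions of generic ESVH combinations at $0,1,\infty$ and computing the stabilized rank of that linear system ({\tt zzdims}). Your recursion $d_n=2d_{n-1}+e_{n-1}$, based on Corollary \ref{partialcor}, Lemmas \ref{gsvhlem1}--\ref{gsvhlem2} and the exact sequence, is a legitimate reorganization of the same information. But as a proof it has a genuine gap precisely where the theorem has content: the values $e_3=3$ and $e_4=10$ (equivalently your asserted ranks $6$ and $11$ of the restriction-to-diagonal maps $\rho_3$, $\rho_4$) are never established. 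You describe the bookkeeping you \emph{would} do -- expand ESVHs in the $z,\zz$ basis, restrict to $z=\zz$, reduce against a basis of one-variable functions -- but you exhibit neither the kernel elements (a lower bound for $e_m$) nor any argument that the kernel is no larger (the upper bound, which is the hard half and is exactly what the paper's pole-cancellation rank computation delivers). The numbers you insert are reverse-engineered from Table \ref{tab1}, so the proof of the dimensions $d_4=21$, $d_5=52$ is not actually given.

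Two further steps are used without justification. First, the surjectivity direction of your correspondence $F\leftrightarrow(h_0,h_1,h_\zz)$ needs that the single-valued primitive of $h_0/z+h_1/(z-1)+h_\zz/(z-\zz)$ lies again in the alphabet $\{0,1,\zz\}$; single-valued integration can in general enlarge $\Sigma(\zz)$ (see the remark after Theorem \ref{hexthm}), and the stability of this particular alphabet is an argument (partial-fraction stability, residues at $0,1$ only) that the paper only makes in the proof of Theorem \ref{thm01zz}; you should supply it. Second, the generator counts: you invoke that $\sG_{\{0,1,\zz\}}^\CC$ is a free polynomial algebra ``as follows from its description by shuffle products,'' but a subalgebra of a free shuffle algebra need not be free, so the Hilbert-series identity you use is unproved; what is actually needed (and what the paper implicitly does) is to compute the dimension of the decomposable part in each weight $\leq5$ and subtract, which again is a finite check you have not performed.
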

\begin{proof}
The numbers of generators follow from the dimensions $d_n$. We get a lower bound for $d_n$ by explicit construction of GSVHs in ${\{0,1,\zz\}}$
using Proposition \ref{invoprop}. Up to weight $n\leq3$ we obtain $2^n$ single-valued multiple polylogarithms plus $\sL_{01\zz}-\sL_{10\zz}$ in Example \ref{Dzzex2}.
By integration with denominators $z$ and $z-1$ the nine GSVHs at weight three give rise to 18 GSVHs of weight four.
The three missing GSVHs of weight four can be constructed with (\ref{invof}). The result is $\sL_{001\zz}-\sL_{100\zz}$, $\sL_{011\zz}-\sL_{110\zz}$, and
$2\sL_{01\zz\zz}-2\sL_{10\zz\zz}-\sL_{010\zz}+\sL_{101\zz}$ (calculations were done with HyperlogProcedures \cite{Shlog}).

At weight 5 we need $52-2\times21=10$ new GSVHs. Nine of them are direct applications of (\ref{invof}),
\begin{center}
\begin{tabular}{ll}
$\sL_{0001\zz}-\sL_{1000\zz}$,\hspace*{4.55cm}&$\sL_{0010\zz}-\sL_{0100\zz}$,\\[3.3pt]
$\sL_{0011\zz}-\sL_{1100\zz}+2\zeta(3)\sL_{1\zz}$,&$\sL_{0101\zz}-\sL_{1010\zz}-4\zeta(3)\sL_{1\zz}$,\\[3.3pt]
$\sL_{0111\zz}-\sL_{1110\zz}-2\zeta(3)\sL_{1\zz}$,&$\sL_{1011\zz}-\sL_{1101\zz}+6\zeta(3)\sL_{1\zz}$,\\[3.3pt]
\multicolumn2l{$2\sL_{001\zz\zz}-2\sL_{100\zz\zz}+\sL_{01\zz0\zz}-\sL_{10\zz0\zz}-\sL_{0010\zz}-\sL_{0100\zz}+\sL_{1001\zz}+\sL_{1010\zz}+2\zeta(3)\sL_{1\zz}$,}\\[3.3pt]
\multicolumn2l{$2\sL_{011\zz\zz}-2\sL_{110\zz\zz}+\sL_{01\zz1\zz}-\sL_{10\zz1\zz}-\sL_{0101\zz}-\sL_{0110\zz}+\sL_{1011\zz}+\sL_{1101\zz}+2\zeta(3)\sL_{1\zz}$,}\\[3.3pt]
\multicolumn2l{$4\sL_{01\zz\zz\zz}-4\sL_{10\zz\zz\zz}-2\sL_{010\zz\zz}+2\sL_{101\zz\zz}-\sL_{01\zz0\zz}-\sL_{01\zz1\zz}+\sL_{10\zz0\zz}+\sL_{10\zz1\zz}+\sL_{0101\zz}-\sL_{1010\zz}$.}
\end{tabular}
\end{center}
The tenth GSVH has the product $(\sL_{01\zz}-\sL_{10\zz})^2/4$ in the numerator (see Definition \ref{exceptdef}) yielding
$$
\sL_{0011\zz}-\sL_{0110\zz}-\sL_{1001\zz}+\sL_{1100\zz}.
$$
Because $\sE_{\{0,1,\zz\}}$ is a free $\sO_{\{0,1,\zz\}}(z,\zz)$ module (see Section \ref{sectGSI}), linear independence is easy to check.

For an upper bound consider the generic linear combination of all $(3^{n+1}-1)/2$ ESVHs in $\{0,1,\zz\}$ up to weight $n$.
We expand each ESVH at $z=0$, $z=1$, and $z=\infty$ in $z$ and $\zz$. In these expansions we consider the ESVHs as generalized hyperlogarithms (Section \ref{sectdefGH})
and expand first in $z$ and then the coefficient in $\zz$. For any word which has at least one letter $\zz$ we obtain increasing pole orders in $\zz$ (with increasing degree in $z$).
In linear combinations which are GSVHs all poles have to cancel (Theorem \ref{Gthm} (6)). This condition gives increasing systems of linear equations whose rank $R$ stabilizes at
$(3^{n+1}-1)/2-\sum_{k=0}^nd_k$. We (inductively) get $d_n\leq (3^{n+1}-1)/2-R-\sum_{k=0}^{n-1}d_k$. The procedure {\tt zzdims} in HyperlogProcedures \cite{Shlog} gives the desired result.
\end{proof}

In practice, the upper bound $(3^{n+1}-1)/2-R-\sum_{k=0}^{n-1}d_k$ for $d_n$ drops rapidly with the size of the linear system (the orders in $z$ and $\zz$) until it stabilizes.
The conjectured dimensions in Table \ref{tab1} are these stabilized upper bounds. At weight six we conjecturally encounter three exceptional GSVHs (see Example \ref{wt6except}).

For $n\geq2$ the dimensions in Table \ref{tab1} are approximated by the recursion
$$
d_n\approx 5d_{n-1}-6d_{n-2}
$$
with surplus elements in dimensions three and five and a missing element in dimension nine.
An ansatz for a generating function is
$$
\sum_{n=0}^\infty d_nx^n\stackrel{?}{=}\,\frac{1}{1-2x}+\frac{x^3+x^5-x^9+\ldots}{(1-2x)(1-3x)},
$$
where the first term counts the number of single-valued multiple polylogarithms in the letters 0 and 1.

The number contents of GSVHs in $\{0,1,\zz\}$ is given by multiple zeta values (MZVs).
\begin{defn}
The $\QQ$-algebra of MZVs is the $\QQ$-span of MZV sums
\begin{equation}\label{MZVdef}
\zeta(n_1,n_2,\ldots,n_r)=\sum_{1\leq k_1<k_2<\ldots<k_r}n_1^{-k_1}n_2^{-k_2}\cdots n_r^{-k_r},\quad n_i=\{1,2,3,\ldots\},\,n_r\geq2.
\end{equation}
The weight of the tuple $(n_1,n_2,\ldots,n_r)$ is $n_1+n_2+\ldots+n_r$. For any finite $\Sigma(\zz)\subset\FLT_\FF(\zz)$ we define
$\sG_\Sigma^{\mathrm{MZV}}\subset\sG_\Sigma^\CC$ as the $\QQ$-algebra of GSVHs in $\Sigma(\zz)$ with MZV coefficients. Likewise, for any finite set $\Sigma_0\subset\CC$,
\begin{align*}
\sO_\Sigma^{\mathrm{MZV}}(z,\zz)&=\mathrm{MZV}[z,((z-\beta(\zz))^{-1})_{\beta(\zz)\in\Sigma(\zz)},\zz,((\zz-\bb)^{-1})_{b\in\Sigma(\zz)\cap\FF}],\\
\sO_{\Sigma_0}^{\mathrm{MZV}}(z)&=\mathrm{MZV}[z,((z-a)^{-1})_{a\in\Sigma_0}]
\end{align*}
are the subrings of $\sO_\Sigma(z,\zz)$ and $\sO_{\Sigma_0}(z)$ which are defined over MZVs (respectively).
\end{defn}

\begin{lem}\label{lemf1}
For any $f\in\sG_{\{0,1,\zz\}}^{\mathrm{MZV}}$ the regularized limit $f(1)$ is an MZV.
\end{lem}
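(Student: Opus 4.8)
The plan is to induct on the weight $n$ of $f$, exploiting that the alphabet $\Sigma(\zz)=\{0,1,\zz\}$ is self-contained over $\QQ$: the letters $0,1,\mathrm{id}$ and their pairwise differences ($-1$, $-\zz$, $1-\zz$) are products of transformations in $\FLT_\QQ(\zz)$ whose zero loci already lie in $\{0,1\}$, so none of the partial-fraction decompositions and basis changes used in the proofs of Theorem \ref{GHthm}, Theorem \ref{techthm} and Proposition \ref{techprop} leave $\sG_{\{0,1,\zz\}}$ or introduce non-MZV constants. Hence $\partial_z f,\partial_\zz f\in\sG_{\{0,1,\zz\}}^{\mathrm{MZV}}$ and, by Corollary \ref{partialcor}, in a representation $\partial_z f=\sum_{\beta\in\{0,1,\zz\}}h_\beta(z)/(z-\beta(\zz))$ the coefficients $h_\beta\in\sG_{\{0,1,\zz\}}^{\mathrm{MZV}}$ have weight $\le n-1$. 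For $n=0$ the function $f$ lies in $\sO_{\{0,1,\zz\}}^\emptyset(z,\zz)$ with MZV coefficients, and the constant term of its single-valued log-Laurent expansion at $z=1$ is an MZV-linear combination of the values of the monomials $z^m\zz^{\mm}$ (each equal to $1$) and of pole monomials (each regularised to $0$), hence an MZV.

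For $n\ge1$ I would pass to the diagonal. By Theorem \ref{Gthm}(6) we have $f\in\sS\sV_{\{0,1\}}$, so $f$ is $\CC$-analytic on the open real segment $(0,1)\subset\CC\setminus\{0,1\}$; on this segment $\zz=z$, hence $x\mapsto f(x):=f(z,\zz)|_{z=\zz=x}$ is a $\CC$-valued real-analytic function on $(0,1)$ with $f'(x)=(\partial_z f+\partial_\zz f)|_{z=\zz=x}$. Restricting $h_\beta(z)/(z-\beta(\zz))$ to the diagonal: for $\beta=0,1$ one gets $h_\beta(x,x)/x$ and $h_\beta(x,x)/(x-1)$; for $\beta=\zz$, single-valuedness of $f$ forces $h_\zz(\zz,\zz)=0$ (Lemma \ref{gsvhlem1}), so the quotient is $\CC$-analytic across the diagonal with value $(\partial_z h_\zz)(x,x)$, the diagonal restriction of an element of $\partial_z\sG_{\{0,1,\zz\}}^{\mathrm{MZV}}$ of lower weight; the same holds for $\partial_\zz f$ after complex conjugation (Theorem \ref{Gthm}(4), using that $\overline{\sG_{\{0,1,\zz\}}^{\mathrm{MZV}}}=\sG_{\{0,1,\zz\}}^{\mathrm{MZV}}$, which follows from $\overline{\zz}=z$ together with $L_z(\zz)=L_\zz(z)+L_0(\zz)-L_0(z)$ up to a constant). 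Thus
\[
f'(x)=\frac{a_0(x)}{x}+\frac{a_1(x)}{x-1}+b(x),
\]
where $a_0,a_1,b$ are diagonal restrictions of GSVHs in $\{0,1,\zz\}$ with MZV coefficients of weight $<n$ (for $b$, of a derivative of such). Integrating along $(0,1)$ then expresses the regularised value $f(1)$ through the regularised value $f(0)$ and a shuffle-regularised iterated integral from $0$ to $1$ of the $a_i$ and $b$; the value $f(0)$ is treated symmetrically (the letters $0,\zz$ now collide at the base point instead of $1,\zz$ at $1$), using the weight-zero argument and $\sL_w(0)=0$.

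It then remains to show that the diagonal restriction of a weight-$m$ element of $\sG_{\{0,1,\zz\}}^{\mathrm{MZV}}$ is a $\QQ[x^{\pm1},(x-1)^{\pm1}]$-linear combination, with MZV coefficients, of ordinary hyperlogarithms $L_w(x)$ with $w\in\{0,1\}^\ast$ and weight $\le m$. Granting this by a parallel induction on weight (the recursion above lowers the weight of the integrand, and integrating a letter $0$ or $1$ adds one letter in $\{0,1\}$), the regularised value $f(1)$ is a $\QQ$- and MZV-linear combination of regularised values $L_w(1)$, $w\in\{0,1\}^\ast$, hence an MZV, since the MZV $\QQ$-algebra contains all such $L_w(1)$ and is closed under the shuffle regularisation of Section \ref{sectreg}. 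The main obstacle is precisely this last reduction in the presence of colliding letters: the individual ESVHs $\sL_{w(\zz)}(z)$ building $f$ are typically singular on $(0,1)$ (Example \ref{Dzzex2}), so the diagonal restriction and the limit $x\to1$ must be carried out for the single-valued combination as a whole. I expect this to require Theorem \ref{decompthm} (the singular decomposition, transported to the relevant point) to witness the cancellation of all spurious powers of $\log(z-1)(\zz-1)$ and of $\zz-1$, and Brown's limit algorithm (Section \ref{sectitintpar}) to confirm that the remaining constants stay inside the MZV algebra — which they do because at $z=1$ every letter $\beta(\zz)\in\{0,1,\zz\}$ specialises into $\{0,1\}$.
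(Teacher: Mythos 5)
There is a genuine gap. The heart of your argument is the final reduction: that the diagonal restrictions $a_0,a_1,b$ (and, at the end, the constants produced by regularised limits and by the basis conversions needed to take them) are MZV-linear combinations of hyperlogarithms $L_w(x)$ with $w\in\{0,1\}^\ast$, so that the regularised integral over $(0,1)$ lands in the MZV algebra. You do not prove this; you explicitly defer it, saying you ``expect'' it to follow from Theorem \ref{decompthm} and Brown's limit algorithm. But this deferred claim \emph{is} the content of the lemma (it is exactly the assertion that values of these objects at the collision points stay in the MZV algebra), so the induction as written is not closed. There are secondary problems as well: you invoke $\overline{\sG_{\{0,1,\zz\}}^{\mathrm{MZV}}}=\sG_{\{0,1,\zz\}}^{\mathrm{MZV}}$, justified only by a weight-one identity, whereas in the paper conjugation-stability is proved in Theorem \ref{thm01zz} by an induction that itself uses Lemma \ref{lemf1} (so leaning on it here risks circularity); and your weight-zero base case misdescribes $\sG_{\{0,1,\zz\}}^{\mathrm{MZV}}\subset\sG_{\{0,1,\zz\}}^{\CC}$, whose weight-zero piece consists of MZV constants, not of elements of $\sO^{\emptyset}_{\{0,1,\zz\}}(z,\zz)$.

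The paper's proof is much shorter and avoids all of this machinery: write $f$ in the $z,\zz$ basis (\ref{GHeq}), $f=\sum c_{v,w(\zz)}L_v(\zz)L_{w(\zz)}(z)$ with MZV coefficients, treat $z,\zz$ as independent, and use the single-valued log-Laurent expansion of $f$ at $1$ to compute the regularised value by iterated limits, first $\zz\to1$, then $z\to1$. The first limit is regular in $L_{w(\zz)}(z)$ (the point $\zz=1$ is neither endpoint of the iterated integral) and specialises the letters, $w(1)\in\{0,1\}^\ast$; the second limit then produces $L_{w(1)}(1)$, an MZV, while $L_v(\zz)\to L_v(1)$ is an MZV. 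You noticed the key fact that at the end (``every letter $\beta(\zz)$ specialises into $\{0,1\}$'') but did not exploit it by taking the $\zz$-limit first in the $z,\zz$ basis, which is what makes the argument one paragraph long instead of an open-ended induction.
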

\begin{proof}
We use the representation of $f$ in the $z,\zz$ basis (\ref{GHeq}),
\begin{equation}\label{eqfMZV}
f(z)=\sum_{v,w(\zz)}c_{v,w(\zz)}L_v(\zz)L_{w(\zz)}(z),
\end{equation}
where $v\in\{0,1\}^\ast$, $w(\zz)\in\{0,1,\zz\}^\ast$, and $c_{v,w(\zz)}$ is an MZV. We consider $z$ and $\zz$ as independent variables.
Because $f$ is a GSVH with a single-valued log-Laurent expansion at 1 we may determine $f(1)$ by first taking the limit $\zz\to1$ followed by the limit $z\to1$.
In $L_v(\zz)$ we obtain the regularized limit $L_v(1)$ which is an MZV. In $L_{w(\zz)}(z)$ the value $\zz=1$ differs from the lower bound $0$ and the upper bound $z$ of the iterated integral.
The limit $\zz\to1$ is regular yielding $L_{w(1)}(z)$. Because $w(1)\in\{0,1\}^\ast$ the regularized limit $z\to1$ gives the MZV $L_{w(1)}(1)$.
Therefore every summand in $f(1)$ is the product of three MZVs.
\end{proof}

\begin{thm}\label{thm01zz}
Theorem \ref{Gthm} restricts to $\sG_{\{0,1,\zz\}}(\mathrm{MZV})=\sO_{\{0,1,\zz\}}^{\mathrm{MZV}}(z,\zz)\sG_{\{0,1,\zz\}}^{\mathrm{MZV}}$, where statement (5)
restricts to the M\"obius transformations $z\to\{z,1-z,1/z,1/(1-z),z/(z-1),(z-1)/z\}$ that stabilize the singular points $0,1,\infty$. In particular, the sequence
\begin{equation}\label{eq01zz1}
0\longrightarrow\sO_{\{0,1\}}^{\mathrm{MZV}}(\zz)\longrightarrow\sG_{\{0,1,\zz\}}(\mathrm{MZV})\stackrel{\partial_z}{\longrightarrow}\sG_{\{0,1,\zz\}}(\mathrm{MZV})\longrightarrow0
\end{equation}
is exact. We have
\begin{equation}\label{eq01zz2}
\sG_{\{0,1,\zz\}}^{\mathrm{MZV}}\otimes_\QQ\CC\cong\sG_{\{0,1,\zz\}}^\CC.
\end{equation}
\end{thm}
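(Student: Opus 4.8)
The plan is to obtain Theorem~\ref{thm01zz} by showing that the structure theorem Theorem~\ref{Gthm} restricts verbatim to the sub-world of $\{0,1,\zz\}$-GSVHs with MZV coefficients. Every assertion rests on a single closure statement, to be proved by induction on the weight: \emph{if $f\in\partial_z\sG_{\{0,1,\zz\}}^\CC$ has all coefficients in $\mathrm{MZV}$ in the $z,\zz$-basis $(\ref{GHeq})$, then $\int_{\mathrm{sv}}f\,\dd z\in\sG_{\{0,1,\zz\}}^{\mathrm{MZV}}$, and likewise for $\int_{\mathrm{sv}}\dd\zz$.} Granting this, the first sentence of the theorem follows: $\sG_{\{0,1,\zz\}}^{\mathrm{MZV}}$ is closed under products (the structure constants coming from shuffling in $(\ref{GHeq})$ are integers), under $\partial_z$ and $\partial_\zz$ (which fix the coefficients, since $\partial_zL_{w\beta}(z)=L_w(z)/(z-\beta(\zz))$), and under single-valued (anti-)primitives by the closure statement; the module form $\sG_{\{0,1,\zz\}}(\mathrm{MZV})=\sO^{\mathrm{MZV}}_{\{0,1,\zz\}}(z,\zz)\,\sG_{\{0,1,\zz\}}^{\mathrm{MZV}}$ then follows from Lemma~\ref{wt0lem}. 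For the restriction of statement~(5): only the six M\"obius maps $z\mapsto z,1-z,1/z,1/(1-z),z/(z-1),(z-1)/z$ fix $\{0,1,\infty\}$, hence preserve both the alphabet $\{0,1,\zz\}$ and the singular locus, and they keep the coefficients in $\mathrm{MZV}$ because under such a substitution $L_v(\beta(\zz))$ and $L_{w(\zz)}(\beta(z))$ re-expand over $\sH\sL^\CC_{\{0,1\}}$ with MZV constants (Brown's algorithm, Section~\ref{sectitintpar}).

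The exact sequence~(\ref{eq01zz1}) is then the intersection of the exact sequence of Theorem~\ref{Gthm}(2), taken with $\Sigma_0=\{0,1\}$, with this MZV-structure: by Lemma~\ref{kerlem} the kernel of $\partial_z$ on $\sG_\FF\cap\sS\sV_{\{0,1\}}$ is $\sO_{\{0,1\}}(\zz)$, and intersecting with MZV coefficients gives $\sO^{\mathrm{MZV}}_{\{0,1\}}(\zz)$; surjectivity is Proposition~\ref{consisprop} refined by the closure statement, after subtracting the pole corrections $G_a(\zz)\in\mathrm{MZV}[(\zz-\aaa)^{-1}]$ (as in the proof of Theorem~\ref{Gthm}(2)) to remain in $\sS\sV_{\{0,1\}}$.

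For the isomorphism~(\ref{eq01zz2}) I would argue in two steps. \emph{Spanning}: by Corollary~\ref{partialcor} and the commutative hexagon of Figure~\ref{fig:gsvh}, every element of $\sG_{\{0,1,\zz\}}^\CC$ is a $\CC$-combination of integrals $\int_{\mathrm{sv}}h(z)/(z-\beta(\zz))\,\dd z$ with $\beta\in\{0,1,\zz\}$ and $h$ of smaller weight; by induction $h$ is a $\CC$-combination of elements of $\sG_{\{0,1,\zz\}}^{\mathrm{MZV}}$, and by the closure statement each such integral of an $\sG^{\mathrm{MZV}}$-element lies in $\sG_{\{0,1,\zz\}}^{\mathrm{MZV}}$; hence $\sG_{\{0,1,\zz\}}^\CC=\CC\cdot\sG_{\{0,1,\zz\}}^{\mathrm{MZV}}$. \emph{Freeness}: work inside the free $\sO_{\{0,1,\zz\}}(z,\zz)$-module $\sE_{\{0,1,\zz\}}$ of ESVHs (Definition~\ref{GSVIdef}); its basis $\{\sL_{w(\zz)}(z)\}$ has MZV coefficients in the $z,\zz$-basis (the same analysis of $(\ref{inteq})$ that underlies the closure statement, run from $\sL_e=1$), so $f\in\sG_{\{0,1,\zz\}}^{\mathrm{MZV}}$ is equivalent to $f$ having MZV coefficients in the ESVH basis, and single-valuedness of a $\CC$-combination of ESVHs is the vanishing of finitely many log-Laurent pole coefficients at $0,1,\infty$, which are $\mathrm{MZV}$-linear in those ESVH coefficients. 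Thus $\sG_{\{0,1,\zz\}}^{\mathrm{MZV}}=\sG_{\{0,1,\zz\}}^\CC\cap\sE^{\mathrm{MZV}}_{\{0,1,\zz\}}$ is a free $\mathrm{MZV}$-module with a basis $\sB$ whose elements have $\QQ$-coefficients in the ESVH basis (concretely, the explicit generators produced in the proof of Theorem~\ref{thmdims} and their higher-weight analogues, e.g. $\sL_{01\zz}-\sL_{10\zz}$, are such $\QQ$-combinations). Since $\sB$ is $\QQ$-linearly independent inside the free $\CC$-module $\sE^\CC_{\{0,1,\zz\}}$ it is $\CC$-linearly independent, and by the spanning step it $\CC$-spans $\sG_{\{0,1,\zz\}}^\CC$; therefore $\sG_{\{0,1,\zz\}}^\CC=\bigoplus_{b\in\sB}\CC\,b$ is the extension of scalars of $\sG_{\{0,1,\zz\}}^{\mathrm{MZV}}=\bigoplus_{b\in\sB}\mathrm{MZV}\,b$ along $\mathrm{MZV}\hookrightarrow\CC$, which is the content of~(\ref{eq01zz2}).

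The main obstacle is the closure statement — guaranteeing that the hexagon bootstrap never creates a non-MZV constant. Unwinding $(\ref{inteq})$, the new constants that can enter are: the regularized value $f(0)=0$; the (anti-)residues $\res_a f$ and $\overline{\res}_a f$ at $a\in\{0,1\}$, which are log-Laurent coefficients reducing to regularized limits of lower-weight $\{0,1,\zz\}$-GSVHs at $0$ and $1$ and hence are MZVs by Lemma~\ref{lemf1}; and the term $\bigl(\pi_{\partial_\zz}\int_{\mathrm{sv}}\partial_\zz f\,\dd z\bigr)\big|_{z=0}$, which forces a nested induction: one must already know the weight-lowered single-valued $z$-primitive has MZV coefficients before evaluating it at $z=0$ and integrating in $\zz$ via $\int_0\dd\zz$ (again an evaluation at $0$ and $1$, hence an MZV). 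Getting this induction to close simultaneously over the weight and over both arrows $\int_{\mathrm{sv}}\dd z$, $\int_{\mathrm{sv}}\dd\zz$ of the hexagon is the crux; the remaining ingredients are transcriptions of Theorem~\ref{Gthm} and of the proofs of Proposition~\ref{consisprop} and Theorem~\ref{thmdims}.
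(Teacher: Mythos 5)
Your treatment of the first part of the theorem is essentially the paper's own argument: stability of the alphabet $\{0,1,\zz\}$ under partial fraction decomposition, the fact that the (anti\nobreakdash-)residues entering the hexagon formula (\ref{inteq}) sit at $0$ and $1$ and are MZVs by Lemma \ref{lemf1}, and an induction over the weight to show that single-valued (anti\nobreakdash-)primitives stay MZV-defined; the exact sequence (\ref{eq01zz1}) then follows as you say. The paper is more careful on two points you compress: stability under complex conjugation (proved by the same induction, using the normalization $f(0)=0$ to kill the integration constant) and the M\"obius restriction, where the paper reduces to the two generators $z/(z-1)$ and $1-z$ and isolates the only nontrivial integration constant, $f(1)$, which is again an MZV by Lemma \ref{lemf1}; your appeal to ``Brown's algorithm with MZV constants'' is exactly that issue in disguise, so the inputs match even if the execution is sketchier.

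The proof you propose for (\ref{eq01zz2}), however, has genuine gaps. In the spanning step you decompose $h=\sum_i c_ih_i$ with $c_i\in\CC$, $h_i\in\sG_{\{0,1,\zz\}}^{\mathrm{MZV}}$, and apply your closure statement to each $\int_{\mathrm{sv}}h_i(z)/(z-\beta(\zz))\,\dd z$; but the closure statement requires its integrand to lie in $\partial_z\sG_{\{0,1,\zz\}}^\CC$, i.e.\ to be single-valued, and only the full sum $h/(z-\beta(\zz))$ is known to be single-valued (Corollary \ref{partialcor}) --- single-valuedness does not distribute over an arbitrary $\CC$-linear splitting of $h$. The individual extended integrals only give MZV-combinations of ESVHs, not elements of $\sG_{\{0,1,\zz\}}^{\mathrm{MZV}}$, so $\sG_{\{0,1,\zz\}}^\CC=\CC\cdot\sG_{\{0,1,\zz\}}^{\mathrm{MZV}}$ does not follow this way. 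The freeness step rests on a false premise: the weight-five generators in the proof of Theorem \ref{thmdims}, e.g.\ $\sL_{0011\zz}-\sL_{1100\zz}+2\zeta(3)\sL_{1\zz}$, carry MZV coefficients on lower-weight ESVHs, and since no GSVH of weight $\leq2$ contains $\sL_{1\zz}$ (which is not single-valued, cf.\ $d_2=4$ in Table \ref{tab1}), the $\zeta(3)\sL_{1\zz}$ tail cannot be traded away inside $\sG_{\{0,1,\zz\}}^{\mathrm{MZV}}$; so there is no MZV-basis consisting of $\QQ$-combinations of ESVHs. What is actually available --- and what the paper uses --- is that the single-valuedness conditions at $0,1,\infty$ are linear equations with MZV coefficients on the ESVH coefficients (the expansions at $1$ and $\infty$ being transported to $0$ by the M\"obius stability just established), together with the fact that the \emph{total} weight is a filtration, which yields rationality only of the coefficients of the top-weight ESVHs, with MZV coefficients below; this gives $\sG_{\{0,1,\zz\}}^\CC$ as the MZV-span (hence $\CC$-span) of $\sG_{\{0,1,\zz\}}^{\mathrm{MZV}}$, i.e.\ the surjectivity of $\sG_{\{0,1,\zz\}}^{\mathrm{MZV}}\otimes_\QQ\CC\to\sG_{\{0,1,\zz\}}^\CC$. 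Finally, your concluding identification is extension of scalars along $\mathrm{MZV}\hookrightarrow\CC$, i.e.\ a statement about $\otimes_{\mathrm{MZV}}$, which is not the assertion (\ref{eq01zz2}) with $\otimes_\QQ$; the missing surjectivity over $\QQ$ is precisely what the filtration/linear-system argument supplies.
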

\begin{proof}
For (\ref{eq01zz1}) it suffices to show that every $f\in\sG_{\{0,1,\zz\}}(\mathrm{MZV})$ has a primitive. The alphabet $0,1,\zz$ is stable under partial fraction decomposition because for $a=0,1$,
$$
\frac{1}{z-a}\frac{1}{z-\zz}=\frac{1}{\zz-a}\Big(\frac{1}{z-\zz}-\frac{1}{z-a}\Big),
$$
and $1/(\zz-a)\in\sO_{0,1,\zz}(z,\zz)$. Moreover, the projections $\pi_{\partial_z}$ and $\pi_{\partial_\zz}$ filter (anti-)resides at 0 and 1.
The (anti-)residue at 0 trivially maps into MZVs. The (anti-)residue at 1 maps into MZVs by Lemma \ref{lemf1}.

The previous arguments are symmetric under complex conjugation so that we can also use the commutative hexagon in Figure 1 to take anti-primitives in $\sG_{\{0,1,\zz\}}(\mathrm{MZV})$,
see (\ref{inteq}).

We use induction over the weight of $f$ to show that $\sG_{\{0,1,\zz\}}(\mathrm{MZV})$ is invariant under complex conjugation.
The weight zero piece $\sO^{\emptyset,\mathrm{MZV}}_{\{0,1,\zz\}}(z,\zz)$ of $\sG_{\{0,1,\zz\}}(\mathrm{MZV})$ is $\sO_{\{0,1\}}^{\mathrm{MZV}}(z,\zz)$ which is invariant under complex conjugation.
Assume $f\in\sG_{\{0,1,\zz\}}(\mathrm{MZV})$ has weight $n\geq1$. Because $\sO_{\{0,1,\zz\}}^{\mathrm{MZV}}(z,\zz)$ is invariant under complex conjugation we may assume without restriction that
$f\in\sG_{\{0,1,\zz\}}^{\mathrm{MZV}}$ with $f(0)=0$. Then $g=\partial_zf$ has weight $n-1$. By induction $g$ has a complex conjugate $\overline{g}$.
The anti-primitive $\overline{f}=\int_{\mathrm{sv}}\overline{g}\,\dd\zz\in\sG_{\{0,1,\zz\}}(\mathrm{MZV})$ possibly differs from the complex conjugate of $f$ by a constant.
Because $f(0)=\overline{f}(0)=0$ the constant is zero.

For (5), consider the M\"obius transformation $\phi(z)=z/(z-1)$. It is an endomorphism of $\sO_{\{0,1\}}^{\mathrm{MZV}}(z,\zz)$ and because it maps $1/(z-\zz)$ to $-(z-1)(\zz-1)/(z-\zz)$ it is also
an endomorphism of $\sO_{\{0,1,\zz\}}^{\mathrm{MZV}}(z,\zz)$. As in the case of complex conjugation we show by induction over the weight that $\sG_{\{0,1,\zz\}}(\mathrm{MZV})$
is invariant under $\phi$. The integration constant is trivial because $\phi(0)=0$.

Consider the M\"obius transformation $\tau(z)=1-z$ which is also an endomorphism of $\sO_{\{0,1,\zz\}}^{\mathrm{MZV}}(z,\zz)$.
If we use induction over the weight of $f\in\sG_{\{0,1,\zz\}}^{\mathrm{MZV}}$ with $f(0)=0$ we run into the difficulty that $f(\tau(0))=f(1)$. By Lemma \ref{lemf1} the value $f(1)$ is an
MZV and we obtain that $f(\tau(z))\in\sG_{\{0,1,\zz\}}^{\mathrm{MZV}}$.

Statement (5) is true because the set of M\"obius transformations that stabilize $0,1,\infty$ is generated by $\phi$ and $\tau$.

To prove (\ref{eq01zz2}) we represent $f\in\sG_{\{0,1,\zz\}}^\CC$ as a $\CC$-linear combination of ESVHs in $\{0,1,\zz\}$. By induction over the weight
(using the commutative hexagon, see Section \ref{sectGSI}) we see that ESVHs in $0,1,\zz$ are defined over MZVs (as in (\ref{eqfMZV})).
The space $\sG_{\{0,1,\zz\}}^\CC$ is given by those linear combinations of ESVHs which have single-valued log-Laurent expansions at $0$, $1$, and $\infty$, see the proof of Theorem \ref{thmdims}.
These expansions provide linear equations for the ESVHs in $\{0,1,\zz\}$ of weight $\leq n$. At 0 these equations have MZV-coefficients, see (\ref{zeta}).
By statement (5) the expansions at 1 and at $\infty$ can be mapped to expansions at 0 by M\"obius transformations which stabilize $\sG_{\{0,1,\zz\}}^{\mathrm{MZV}}$.
So, these expansion also give rise to linear equations with MZV-coefficients. For weight $\leq n$ the space $\sG_{\{0,1,\zz\}}^\CC$ is obtained from the $(3^{n+1}-1)/2$
ESVHs in $\{0,1,\zz\}$ of weight $\leq n$ by a finite number of these linear equations. Because the total weight (the weight of the MZV plus the weight of the ESVH) is a filtration,
the coefficients of ESVHs with weight $n$ are rational (MZVs of weight 0). The solution of the system provides $\sG_{\{0,1,\zz\}}^\CC$ at weight $n$ as MZV-span of ESVHs.
Therefore $\sG_{\{0,1,\zz\}}^{\mathrm{MZV}}\otimes_\QQ\CC$ surjects onto $\sG_{\{0,1,\zz\}}^\CC$. The kernel is zero because the map is an embedding.
\end{proof}
A detailed understanding of GSVHs in $0,1,\zz$ could lead to much faster calculations in this alphabet. However, for most computations at modest weights in pQFT the bottleneck is not handling these GSVHs.

\bibliographystyle{plain}

\begin{thebibliography}{99}
\bibitem{gfe} {\bf M. Borinsky, O. Schnetz}, {\it Graphical functions in even dimensions}, arXiv:2105.05015 [hep-th], submitted to Comm.\ in Number Theory and Physics (2021).
\bibitem{BrH1} {\bf F.C.S. Brown}, {\it The massless higher-loop two-point function}, Comm.\ Math.\ Phys.\ 287 925-958 (2009).
\bibitem{BrH2} {\bf F.C.S. Brown}, {\it On the periods of some Feynman integrals}, arXiv:0910.0114 [math.AG] (2009).
\bibitem{BrSVMP} {\bf F.C.S. Brown}, {\it Single-valued  multiple polylogarithms in one variable}, C.R. Acad.\ Sci.\ Paris, Ser.\ I 338, 527-532 (2004).
\bibitem{BrSVMPII} {\bf F.C.S. Brown}, {\it Single-valued  hyperlogarithms and unipotent differential equations}, unpublished notes available on the homepage of F. Brown (2004).
\bibitem{MMZ} {\bf F.C.S. Brown}, {\it Mixed Tate Motives over $\ZZ$}, Annals of Mathematics 175, No.\ 2, 949-976 (2012).
\bibitem{BrownDecom} {\bf F.C.S. Brown}, {\it On the decomposition of motivic multiple zeta values}, in Galois-Teichm\"uller theory and arithmetic geometry, Vol.\ 68 of
Adv.\ Studies in Pure Math.\ (Tokyo), Math.\ Soc.\ Japan, 31-58 (2012).
\bibitem{BrSV} {\bf F.C.S. Brown}, {\it Single-valued periods and multiple zeta values}, Forum of Mathematics, Sigma 2, e25 (2014).
\bibitem{Bcoact2} {\bf F.C.S. Brown}, {\it Notes on motivic periods}, Comm.\ in Number Theory and Physics 11, No.\ 3, 557-655 (2017).
\bibitem{BSmod} {\bf F.C.S. Brown, O. Schnetz}, {\it Modular forms in quantum field theory}, Comm.\ in Number Theory and Physics 7, No.\ 2, 293-325 (2013).
\bibitem{Duhr} {\bf F. Chavez, C. Duhr}, {\it Three-mass triangle integrals and single-valued polylogarithms}, JHEP 1211, 114-144 (2012).
\bibitem{Chen} {\bf K. Chen}, {\it Algebras of Iterated Path Integrals and Fundamental Groups}, Transactions of the American Mathematical Society, Vol.\ 156, 359-379 (1971).
\bibitem{SYM} {\bf J. Drummond, C. Duhr, P. Heslop, J. Pennington, V.A. Smirnov}, {\it Leading singularities and off-shell conformal integrals}, J. High Energ.\ 2013, No.\ 8,
133-190 (2013).
\bibitem{par} {\bf M. Golz, E. Panzer, O. Schnetz}, {\it Graphical functions in parametric space}, Lett.\ Math.\ Phys.\ 107, No.\ 6, 1177-1182 (2017).
\bibitem{Gon} {\bf A.B. Goncharov}, {\it Galois symmetries of fundamental groupoids and noncommutative geometry}, Duke Math.\ J. 128, No.\ 2, 209-284 (2005).
\bibitem{Panzer:HyperInt} {\bf E. Panzer}, {\it Algorithms for the symbolic integration of hyperlogarithms with applications to Feynman integrals}, Computer Physics Comm.\ 188,
148-166 (2015).
\bibitem{gf} {\bf O. Schnetz}, {\it Graphical functions and single-valued multiple polylogarithms}, Comm.\ in Number Theory and Physics 8, No.\ 4, 589-675 (2014).
\bibitem{numfunct} {\bf O. Schnetz}, {\it Numbers and Functions in Quantum Field Theory}, Phys.\ Rev.\ D 97, 085018 (2018).
\bibitem{Shlog} {\bf O. Schnetz}, {\tt HyperlogProcedures}, Version 0.5, Maple package available on the homepage of the author, {\tt https://www.math.fau.de/person/oliver-schnetz/} (2021).
\bibitem{motg2} {\bf O. Schnetz}, {\it The Galois coaction on the electron anomalous magnetic moment}, Comm.\ in Number Theory and Physics 12, No.\ 2, 335-354 (2018).
\bibitem{Sc2} {\bf O. Schnetz}, {\it Geometries in perturbative quantum field theory}, Comm.\ in Number Theory and Physics 15, No.\ 4, 743-791 (2021).
\bibitem{7loops} {\bf O. Schnetz}, {\it Seven loops $\phi^4$}, in preparation.
\bibitem{fhlog} {\bf O. Schnetz}, {\it $f$-hyperlogarithms}, virtual talk given in the seminar on motives and period integrals in Quantum field theory and String theory, IHES, Paris,
January 13 2021.
\bibitem{Zagierdilog} {\bf D. Zagier}, {\it The dilogarithm function}, Frontiers in Number Theory, Physics, and Geometry II, 3-65 (2007).
\end{thebibliography}
\renewcommand\refname{References}

\end{document}